\documentclass[11pt]{article}
\usepackage{fullpage}
\usepackage{amsmath,amssymb,amsthm,mathtools}
\usepackage{microtype}
\usepackage{enumitem}
\usepackage{booktabs}
\usepackage[ruled,vlined,linesnumbered]{algorithm2e}
\usepackage{float}
\usepackage{aliascnt}
\usepackage[colorlinks=true,linkcolor=blue,citecolor=blue,urlcolor=blue]{hyperref}
\usepackage[capitalize,noabbrev]{cleveref}
\crefname{enumi}{Part}{Parts}
\newtheorem{theorem}{Theorem}[section]
\newtheorem{question}{Question}[section]
\newtheorem*{theorem*}{Theorem}
\newaliascnt{lemma}{theorem}
\newtheorem{lemma}[lemma]{Lemma}
\aliascntresetthe{lemma}
\crefname{lemma}{Lemma}{Lemmas}
\Crefname{lemma}{Lemma}{Lemmas}
\crefname{question}{Question}{Questions}
\Crefname{question}{Question}{Questions}
\newaliascnt{proposition}{theorem}
\newtheorem{proposition}[proposition]{Proposition}
\aliascntresetthe{proposition}
\crefname{proposition}{Proposition}{Propositions}
\Crefname{proposition}{Proposition}{Propositions}
\newaliascnt{corollary}{theorem}
\newtheorem{corollary}[corollary]{Corollary}
\aliascntresetthe{corollary}
\crefname{corollary}{Corollary}{Corollaries}
\Crefname{corollary}{Corollary}{Corollaries}
\newaliascnt{fact}{theorem}
\newtheorem{fact}[fact]{Fact}
\aliascntresetthe{fact}
\crefname{fact}{Fact}{Facts}
\Crefname{fact}{Fact}{Facts}
\theoremstyle{definition}
\newaliascnt{definition}{theorem}
\newtheorem{definition}[definition]{Definition}
\aliascntresetthe{definition}
\crefname{definition}{Definition}{Definitions}
\Crefname{definition}{Definition}{Definitions}
\theoremstyle{remark}
\newaliascnt{remark}{theorem}

\aliascntresetthe{remark}
\crefname{remark}{Remark}{Remarks}
\Crefname{remark}{Remark}{Remarks}
\DeclarePairedDelimiter{\ket}{\lvert}{\rangle}
\DeclarePairedDelimiter{\bra}{\langle}{\rvert}

\newcommand{\ketbra}[2]{\ket{#1}\!\bra{#2}}
\newcommand{\kett}[1]{|#1\rangle\!\rangle}
\newcommand{\bbra}[1]{\langle\!\langle#1|}
\newcommand{\Co}{\mathbb C}
\newcommand{\R}{\mathbb R}
\newcommand{\E}{\operatorname{\mathbb{E}}}
\newcommand{\qchannel}{\mathsf{QChan}}
\newcommand{\tr}{\operatorname{tr}}
\newcommand{\rank}{\operatorname{rank}}
\newcommand{\supp}{\operatorname{supp}}
\newcommand{\ran}{\operatorname{ran}}
\newcommand{\id}{\operatorname{id}}
\newcommand{\Ent}{\operatorname{Ent}}
\renewcommand{\Re}{\operatorname{Re}}
\renewcommand{\Im}{\operatorname{Im}}
\newcommand{\dnorm}[1]{\left\|#1\right\|_\diamond}
\newcommand{\opnorm}[1]{\left\|#1\right\|_\infty}
\newcommand{\hsnorm}[1]{\left\|#1\right\|_2}
\newcommand{\trnorm}[1]{\left\|#1\right\|_1}
\newcommand{\sS}{\mathsf S}
\newcommand{\calA}{\mathcal A}
\newcommand{\calB}{\mathcal B}
\newcommand{\calD}{\mathcal D}
\newcommand{\calE}{\mathcal E}
\newcommand{\calF}{\mathcal F}
\newcommand{\calG}{\mathcal G}
\newcommand{\calH}{\mathcal H}
\newcommand{\calK}{\mathcal K}
\newcommand{\calL}{\mathcal L}

\newcommand{\calN}{\mathcal N}
\newcommand{\calS}{\mathcal S}
\newcommand{\calT}{\mathcal T}
\newcommand{\calV}{\mathcal V}
\newcommand{\calY}{\mathcal Y}
\newcommand{\kettbbra}[2]{\ensuremath{\kett{#1}\!\bbra{#2}}}

\newcommand{\TwoBatch}{\hyperref[alg:two-batch]{\textsc{TwoBatch}}}

\newcommand{\braket}[2]{\ensuremath{\langle {#1} \vert {#2} \rangle}}

\allowdisplaybreaks
\setlist[enumerate]{leftmargin=2em,itemsep=3pt,topsep=4pt}
\setlist[itemize]{leftmargin=1.6em,itemsep=3pt,topsep=4pt}
\title{Near-optimal incoherent tomography of low-rank quantum channels}
\author{
Kean Chen\thanks{University of Pennsylvania, USA. Email: \texttt{keanchen.gan@gmail.com}}\and
Aadil Oufkir\thanks{University Mohammed VI Polytechnic, Morocco. Email: \texttt{aadil.oufkir@gmail.com}}
}
\date{}
\begin{document}

\maketitle
\begin{abstract}
We study tomography for quantum channels  with input dimension $d_1$, output dimension $d_2$, and Kraus rank at most $r$, to within diamond norm error $\varepsilon$, using adaptive experiments that retain no quantum memory between channel queries.
\begin{itemize}
\item For quantum channels whose non-zero Choi eigenvalues are bounded below by $\Omega(d_1/r)$, we establish optimal query upper and lower bounds $\Theta(d_1d_2r^2/\epsilon^2)$. 
The upper bound is achieved by a nonadaptive algorithm that uses the estimator from \cite{SKKG22}, together with a new diamond-norm analysis.
The lower bound applies to arbitrary adaptive incoherent protocols and follows from a new local family of channels and a uniform one-query Fisher-information bound.

\item For general channels, we establish an upper bound $O(d_1d_2r^2\log(2d_1)/\epsilon^2)$, nearly matching the above lower bound $\Omega(d_1d_2r^2/\varepsilon^2)$.
To achieve this, we generalize the above nonadaptive algorithm by adapting the input state over $O(\log(2d_1))$ rounds with the Matrix Multiplicative Weight Update algorithm.
\end{itemize}
\end{abstract}

\newpage 

\section{Introduction}

Quantum channel learning is the task  of reconstructing an unknown quantum channel from experimental data. It provides an approximate  classical description of the input-output behavior of a physical device and is therefore useful for controlling and validating quantum devices, characterizing noise,  and comparing experiments with theory \cite{PCZ97,MRL08,Eisert2020Jul}. This task has been extensively studied since the early works of \cite{CN97,PCZ97}, through compressed-sensing methods \cite{KKEG19,KRT17}, shadow-tomography approaches \cite{HCP23,KTCT23}, and projected least-squares methods \cite{SKKG22}. 

In this work, we study the \textit{query complexity} of quantum channel tomography: given black-box access to a quantum channel $\calE : \calL(\Co^{d_1}) \to \calL(\Co^{d_2})$, how many queries to $\calE$ are necessary and sufficient to produce a classical description of a quantum channel $\widehat \calE$ satisfying $\dnorm{\widehat\calE - \calE} \le \epsilon$, with probability at least $2/3$? We measure the accuracy of the estimate in diamond norm, which is an operational natural metric since it quantifies the worst-case distinguishability of two channels when arbitrary reference systems,  input states and final measurements are allowed.

The query complexity of this problem depends crucially on the access model. In the coherent setting, a learning algorithm has quantum memory, may choose entangled input state, apply several copies of the channel $\calE$ coherently and perform entangled final measurement. In the  incoherent setting, a learning algorithm has no quantum memory and has to measure the output state after each use of the channel $\calE$. Classical information can be used in the design of new input and measurement. For full-rank state tomography and  channel tomography, coherent algorithms provably outperform incoherent ones. 
For tomography of a full-rank $d$-dimensional quantum state,
the coherent query complexity is $\Theta(d^2/\epsilon^2)$  \cite{ODW16,HHJWY17} and the incoherent one is $\Theta(d^3/\epsilon^2)$  \cite{HHJWY17,GKKT20,CHLLS23}. For full-rank channel tomography, a similar tradeoff occurs for both the diamond norm and the Choi trace norm: the coherent query complexity is $\Theta(d_1^2 d_2^2/\epsilon^2)$  \cite{MB25,CGOYZ26} and the incoherent one is $\Theta(d_1^3d_2^3/\epsilon^2)$  \cite{Ouf23,BGM26}. 

Usually, physical processes are far from full rank. A channel with small Choi rank $r$ has a Choi operator supported on a low-dimensional subspace and hence can be described using $O(d_1d_2r)$ parameters. In the coherent setting,  the query complexity of channel learning is mostly known: \cite{CGOYZ26} shows that   the query  complexity is $\Theta(d_1d_2r/\epsilon^2)$ whenever the dilation rate $\tau \coloneqq rd_2/d_1\ge 1+ \Omega(1)$ (away-from-boundary regime). At the boundary regime $\tau=1$, the complexity is  $\Theta(d_1d_2r/\epsilon)$, exhibiting  Heisenberg scaling \cite{CGOYZ26,HLSZY26}. 
In the incoherent setting, the rank-dependent query complexity was not fully characterized  beyond the special cases $r=1$ \cite{CGOYZ26} and $r=d_1d_2$ \cite{Ouf23,BGM26}.
This motivates the main question of this work.

\begin{question}\label{question}
    What is the  query complexity of quantum channel tomography under the diamond norm   in the incoherent model? 
\end{question}
    
In this paper, we answer this question (up to logarithmic factors). We prove that the query complexity in the incoherent setting is $\widetilde{\Theta}(d_1d_2r^2/\epsilon^2)$. In the away-from-boundary regime $\tau \ge 1+\Omega(1)$, this represents a factor $\widetilde{O}(r)$ overhead compared to the coherent query complexity. Moreover, the incoherent lower bound has classical $1/\epsilon^2$ scaling throughout the feasible parameter regime, including the boundary regime $\tau=1$.

\subsection{Main results}\label{sec:results}

Let $\qchannel_{d_1,d_2}^{r}$ denote the set of quantum channels
$\calE:\calL(\Co^{d_1})\to\calL(\Co^{d_2})$ with Kraus rank at most $r$.
An \emph{incoherent} protocol uses the channel once in each round, with an
arbitrary input entangled with a fresh reference, and immediately measures
or discards all quantum registers.  Subsequent inputs and measurements may
depend on the classical transcript, consisting of the measurement outcomes
and any private random seed, but no quantum information is retained between
rounds.  We count channel uses and impose no computational
restriction on the measurements or classical postprocessing.

Write $D\coloneqq d_1d_2$.  A channel of Kraus rank exactly $r$ exists if and only if
$r\le D$ and $d_1\le rd_2$.  We consider the
nontrivial case $d_2\ge2$ and therefore assume
\begin{equation}\label{eq:feasible}
 d_2\ge2,\qquad 1\le r\le D=d_1d_2,\qquad d_1\le rd_2.
\end{equation}
When $d_2=1$, the trace map is the only channel and no query is needed.

Our first result concerns quantum channels that have $\Omega(d_1/r)$-gapped Choi spectrum.  
\begin{theorem}[Upper bound for the gapped case, \cref{cor:gapped} restated] 
For every $\calE\in\qchannel_{d_1,d_2}^{r}$ such that the non-zero eigenvalues of its Choi operator are at least $\Omega(d_1/r)$, the nonadaptive incoherent protocol in \cref{alg:gapped} outputs a channel $\widehat\calE$
satisfying
$\Pr\left\{\dnorm{\widehat\calE-\calE}>\epsilon\right\}\le 1/3$
using at most 
\begin{equation*}
O\!\left(\frac{d_1d_2r^2}{\epsilon^2}\right)
\end{equation*}
queries.
\end{theorem}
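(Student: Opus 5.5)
The plan is to analyze the projected least-squares estimator of \cite{SKKG22} applied to the Choi operator $J = J(\calE) = \sum_{ij} \ketbra{i}{j}\otimes \calE(\ketbra{i}{j})$, normalized so that $\tr J = d_1$ and $\tr_2 J = I_{d_1}$. The algorithm is nonadaptive. In each round we prepare a maximally entangled (or, equivalently, a random product) input and apply a fixed informationally complete POVM on the $D$-dimensional output together with the reference, for instance a 2-design such as uniform Haar measurements. This produces a single-shot unbiased estimator $\widehat J_t = (D+1)\,\ketbra{v_t}{v_t}\cdot(\text{scale}) - (\text{const})\, I$ of $J$. Averaging over $N$ rounds gives $\bar J$.

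We then apply two projections. First, we project $\bar J$ onto rank-$r$ positive matrices by truncating to the top $r$ eigenvalues. Second, we correct the partial trace so that the output is a genuine channel.

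The first concrete step is an operator-norm concentration bound. Matrix Bernstein, used as in \cite{SKKG22} or \cite{GKKT20}, gives
\[
\opnorm{\bar J - J} \lesssim \sqrt{\frac{d_1 D \log D}{N}}.
\]
To remove the $\log$, I would instead use the sharper spectral bound available for 2-design measurements, or a net argument restricted to the rank-$2r$ relevant subspace. Together with the rank-$r$ truncation, this yields
\[
\trnorm{\widehat J - J} \lesssim r\,\opnorm{\bar J - J}.
\]
This alone is too weak: it would cost $r^2 D d_1/\epsilon^2$ queries for a trace-norm error of $\epsilon d_1$. The diamond norm is what must be controlled.

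The key new ingredient is the diamond-norm analysis. Recall that
\[
\dnorm{\Delta} = \max_{\rho} \trnorm{(\sqrt\rho\otimes I)\, J_\Delta\, (\sqrt\rho\otimes I)},
\]
with $\rho$ a density matrix on $\Co^{d_1}$. I would decompose the error $\Delta = \widehat J - J$ relative to the support projector $P$ of $J$ into the blocks $P\Delta P$, $P\Delta P^\perp$ (and its adjoint), and $P^\perp \Delta P^\perp$.
\begin{itemize}
\item Because of the gap assumption $\lambda_{\min}(J|_P) \ge c\, d_1/r$, the eigenvalue perturbation error $\opnorm{\bar J - J} \ll d_1/r$ keeps the truncated support close to $P$ (Davis--Kahan). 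So the $P^\perp\Delta P^\perp$ block is second order, of size $\opnorm{\Delta}^2 r / (d_1/r)$.
\item The off-diagonal blocks can be written as $\Delta_{\mathrm{off}} = P\Delta P^\perp \approx J\, X$ for a first-order perturbation $X$, where $J^{1/2}$ is weighed against the gap.
\item After sandwiching by $\sqrt\rho\otimes I$, the relevant quantity is of the form $\trnorm{(\sqrt\rho\otimes I)\, P (\bar J - J) P\, (\sqrt\rho\otimes I)}$. Since $\tr[(\rho\otimes I) J] = 1$, the sandwiched support has effective weight $1$ rather than $d_1$.
\end{itemize}
Hence the diamond error should be governed by the fluctuations of $\bar J$ compressed to an $r$-dimensional space and weighted by $\rho$. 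This gives error roughly $\sqrt{r \cdot r D / N}$ uniformly in $\rho$, that is $N = O(D r^2/\epsilon^2)$. Uniformity over $\rho$ will be handled by a net over the $O(d_1 r)$-dimensional family of compressions, or by a matrix Bernstein bound on the map $\rho \mapsto (\sqrt\rho\otimes I)\, P (\bar J - J) P\, (\sqrt\rho\otimes I)$.

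Finally, the projection back to the set of channels (enforcing $\tr_2 = I$) must be shown to cost at most a constant factor in diamond norm. For this I would use the standard correction
\[
\widehat J \mapsto (Y^{-1/2}\otimes I)\,\widehat J\,(Y^{-1/2}\otimes I), \qquad Y = \tr_2 \widehat J.
\]
Since $\opnorm{Y - I}$ is itself small by the same concentration estimates, and this correction changes the diamond norm by $O(\opnorm{Y - I})$, the cost is controlled.

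The main obstacle I expect is the uniform-in-$\rho$ bound on the first-order (off-diagonal) term. Naive bounds lose a factor $d_1$ or $r$ there. I would first try expressing that term as $\tr_2$-compatible and bounding it via the Kraus representation: writing $J = \sum_k \kett{K_k}\bbra{K_k}$ with $\|K_k\|_2^2 \gtrsim d_1/r$ by the gap assumption, the error becomes a perturbation of the Kraus operators, and the diamond norm is bounded by $2\big\|\sum_k \delta K_k^\dagger \delta K_k\big\|^{1/2}$, which is controlled by the operator norm of the estimated perturbations.
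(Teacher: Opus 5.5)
Your skeleton matches the paper's: Haar-basis snapshots of the Choi state, operator-norm concentration, blocks relative to the support projector $P$, the gap giving $P\preceq\frac{r}{cd_1}J$ so that the $P\Delta P$ block costs at most $\frac{r}{cd_1}\opnorm{\Delta}$, and a final projection onto channels. What is missing is the step the paper calls its technical heart.

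\textbf{The gap.} For every block that touches $P^\perp$, what enters the diamond norm is the input marginal. A PSD block $T$ contributes exactly $\opnorm{\tr_{\mathrm{out}}T}$, and without further information you only have $\opnorm{\tr_{\mathrm{out}}T}\le\tr T$.

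Your ``second-order'' estimate for $P^\perp\Delta P^\perp$ bounds only $\tr T\lesssim\opnorm{\Delta}^2r^2/d_1$. It looks harmless only because your concentration bound has the wrong scaling. Since $J=d_1X$ with $X$ the Choi state, $\opnorm{\bar J-J}=d_1\opnorm{\bar X-X}\asymp d_1\sqrt{D/N}$, not $\sqrt{d_1D/N}$. At $N\asymp Dr^2/\epsilon^2$ this gives $\opnorm{\Delta}\asymp d_1\epsilon/r$. Your bound on this block is then $\asymp d_1\epsilon^2$, which exceeds $\epsilon$ as soon as $\epsilon\gg1/d_1$.

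The off-diagonal blocks have the same defect: plain Cauchy--Schwarz loses a factor $\sqrt{d_1}$. Your Kraus fallback only restates the problem. Up to transpose, $\sum_k\delta K_k^\dagger\delta K_k$ is the output partial trace of $\sum_k\kettbbra{\delta K_k}{\delta K_k}$, and bounding it through the norms of the vectors $\kett{\delta K_k}$ loses the same factor. For the same reason, $\opnorm{Y-I}=\opnorm{\tr_{\mathrm{out}}(\widehat J-J)}$ is not small ``by the same concentration estimates''. It is controlled only after the diamond bound is established.

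\textbf{The missing idea: symmetry plus spreading.} Work in the normalization $X=C_\calE/d_1$ with $\opnorm{\Delta}\le\eta$. Haar-basis snapshots of $X$ have a law invariant under every unitary fixing $X$ (\cref{lem:haar-moments}(iv)), and the post-processing is unitarily equivariant. Hence the law of $\Delta$ is invariant under conjugation by $I_P\oplus V$ for every unitary $V$ on $\ran P^\perp$.

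You may therefore replace $T_1=P^\perp\Delta P^\perp$ and $T_2=P^\perp|\Delta|P^\perp$ by $VT_iV^\dagger$ with $V$ Haar-random. $T_1$ is PSD because the estimate is PSD and $P^\perp XP^\perp=0$. Both satisfy $\opnorm{T_i}\le\eta$ and $\tr T_i\le2r\eta$ by \cref{lem:rank}.

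\Cref{lem:random-complement} then gives, with high probability, $d_1\opnorm{\tr_{\mathrm B}(VT_iV^\dagger)}\lesssim r\eta\,\ell$, with $\ell$ logarithmic in the failure probability, instead of $d_1\tr T_i$. Its proof uses $d_1\le rd_2$, a Gaussian model for a Haar-random projection of rank at most $d_2$, and a net over the input sphere.

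The off-diagonal blocks follow from $\trnorm{LP^\perp\Delta PL}^2\le\tr(LP^\perp\Delta^2P^\perp L)\,\tr(LPL)$, with $L=I_{\mathrm B}\otimes\sqrt{d_1\rho^{\mathrm T}}$, together with $\Delta^2\preceq\eta|\Delta|$. This gives a contribution $\lesssim\eta\sqrt{r\kappa(\calE)\ell}$ with $\kappa(\calE)\le r/c$. Uniformity in $\rho$ is automatic, so no net over input states is needed.

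\textbf{Two further points.} This argument would also repair your variant, since rank-$r$ truncation is equivariant and yields PSD, trace-controlled complement blocks. As written, however, you analyze a different post-processing from \cref{alg:gapped}, which projects onto density operators in Hilbert--Schmidt norm and then onto channels in diamond norm. Also, a net restricted to a rank-$2r$ subspace cannot give the full operator-norm bound you need. The paper removes the $\log D$ with a net over the whole sphere of $\Co^D$ and the dimension-free subexponential moments of $\bra{x}Y\ket{x}$.
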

\cref{alg:gapped} uses the linear estimator of \cite{SKKG22} with Haar random measurements on the Choi state. We develop a new analysis of this protocol to handle the diamond distance. 

For general channels, the spectrum need not be gapped. We show that adapting the input state leads to a similar complexity up to a logarithmic factor. 
\begin{theorem}[Upper bound, \cref{thm-9151357} restated]\label{thm:sf-main}
For every $\calE\in\qchannel_{d_1,d_2}^{r}$, the adaptive
incoherent protocol in \cref{alg:adaptive} outputs a channel $\widehat\calE$
satisfying
$
 \Pr\left\{\dnorm{\widehat\calE-\calE}>\epsilon\right\}\le 1/3
$
using at most
\begin{equation*}
 O\!\left(\frac{d_1d_2r^2}{\epsilon^2}\log(2d_1)\right)
\end{equation*}
queries. 
\end{theorem}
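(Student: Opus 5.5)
The plan is to reduce diamond-norm learning to a sequence of weighted Choi-state tomography problems, and to adapt the input state with the Matrix Multiplicative Weight Update (MMWU) algorithm over $T=O(\log(2d_1))$ rounds. For an input density $\sigma$ on $\Co^{d_1}$, feeding the purification $(\sqrt{\sigma^T}\otimes I)\ket{\Omega}$ through $\calE$ produces the state $\rho_\sigma=(\sqrt{\sigma^T}\otimes I)J_\calE(\sqrt{\sigma^T}\otimes I)$, which has rank at most $r$. Measuring $\rho_\sigma$ with Haar-random bases and applying the linear/projected estimator of \cite{SKKG22} gives $\widehat\rho$ with $\trnorm{\widehat\rho-\rho_\sigma}\lesssim \sqrt{d_1d_2 r^2/N}$ after $N$ queries. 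This is the same guarantee used in the gapped case (\cref{cor:gapped}). Here $\rho_\sigma$ is not normalized to $d_1$ but to trace one, so the $r^2$ factor comes from rank-$r$ trace-norm tomography in dimension $D$.

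The key identity is that the diamond distance is the maximum over $\sigma$ of the trace norm of $(\sqrt{\sigma^T}\otimes I)(J_{\widehat\calE}-J_\calE)(\sqrt{\sigma^T}\otimes I)$. Learning with the single input $\sigma=I/d_1$, as in the gapped case, only controls this quantity up to a factor $d_1$ in directions where $\sigma$ has small weight. This loss is harmless when the Choi spectrum is gapped, but not in general.

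The adaptive step runs MMWU on the $d_1$-dimensional density $\sigma_t$. In round $t$, we spend $N_t\approx d_1d_2r^2/\epsilon^2$ queries estimating $\rho_{\sigma_t}$, which gives an estimate of $J_\calE$ accurate in the $\sigma_t$-weighted norm. We then find a ``loss'' direction: a $\sigma$ where the current candidate channel (the constrained fit to all estimates so far) is still uncertain, for instance the partial trace over the output of the positive part of the residual. The update is $\sigma_{t+1}\propto\exp(\eta\sum_{s\le t}L_s)$. The standard MMWU regret bound with entropy $\log d_1$ shows that after $T=O(\log d_1)$ rounds the averaged input $\bar\sigma=\frac1T\sum_t\sigma_t$ dominates, up to constant factors, every direction relevant to the error. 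Equivalently, any adversarial $\sigma^\star$ has $\tr(\sigma^\star L)\lesssim \tr(\bar\sigma L)+\log(d_1)/(\eta T)$.

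The final estimate is computed by a convex program. It is the channel $\widehat\calE$ of Kraus rank at most $r$, or its projection onto CPTP maps, that minimizes the $\sum_t$ weighted trace-norm distance to the $\widehat\rho_t$. Feasibility of $\calE$ together with the triangle inequality give $\max_{t}\trnorm{\rho_{\sigma_t}(\widehat\calE)-\rho_{\sigma_t}(\calE)}\lesssim\epsilon$, and MMWU transfers this to all $\sigma$. A union bound over $T$ rounds costs only a constant factor per round after median-boosting. The total query count is $T\cdot N=O(d_1d_2r^2\log(2d_1)/\epsilon^2)$.

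The main obstacle I expect is the transfer step: showing that control under the mixture $\bar\sigma$ implies control under the worst $\sigma^\star$. The map $\sigma\mapsto\trnorm{(\sqrt{\sigma}\otimes I)X(\sqrt{\sigma}\otimes I)}$ is not linear in $\sigma$, so MMWU cannot be applied to it directly. I would first linearize it by the bound $\trnorm{(\sqrt\sigma\otimes I)X(\sqrt\sigma\otimes I)}\le\sqrt{\tr(\sigma A)\tr(\sigma B)}$, using an operator Cauchy–Schwarz decomposition $X=Y^\dagger Z$ with $A=\tr_2 Y^\dagger Y$ and $B=\tr_2 Z^\dagger Z$. I would then run MMWU on the linear losses $A$ and $B$, using the rank-$r$ structure so that these losses are bounded by $O(1)$ relative to $\epsilon$-scaled increments.
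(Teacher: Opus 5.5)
Your outline depends entirely on the transfer step. You flag it as the main obstacle but do not resolve it, and the mechanism you sketch cannot resolve it.

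\textbf{Trace norm is the wrong currency.} Suppose all you know is a trace-norm bound at the queried inputs. Control at $\bar\sigma$ then passes to an arbitrary $\sigma^\star$ only via $\sigma^\star\preceq c\,\bar\sigma$. No state dominates every pure state with $c<d_1$, so this is exactly the loss you are trying to avoid. This is also not what the gapped case uses: \cref{thm:ub-main} never goes through a trace-norm bound. It combines operator-norm concentration of the Haar estimator with a block decomposition of $\Delta$ relative to $\supp C_\calE$, whose coefficient is $\kappa(\calE)$. It also uses unitary invariance of the law of $\Delta$ on the complement (\cref{lem:conversion}).

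\textbf{The MMWU losses track the wrong object.} Your losses are residuals of earlier candidate fits. The error that matters belongs to the final candidate, which is chosen after all data are in. The regret bound controls $\lambda_{\max}$ of the average of past losses. It says nothing about directions in which only the final fit errs.

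\textbf{The width requirement is circular.} Additive MMWU gives a constant-factor guarantee after $O(\log d_1)$ rounds only when the losses have operator norm of the order of the target. In that case the worst-case bound is already immediate. Your hope that ``the rank-$r$ structure'' makes the losses $O(1)$ relative to $\epsilon$ is precisely the worst-case-versus-average statement the theorem needs. It is assumed, not proved.

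The missing idea is to aim the adaptivity at a property of the channel, not at the error of a candidate. The paper isolates the regularized conditioning operator
$H_s(\sigma)=\Phi_\sigma(f_s(X))^\dagger(I_{\mathrm B})=(\calE^c)^\dagger\bigl((\calE^c(\sigma)+sI_r)^{-1}\bigr)$,
a smoothed version of $\sigma^{-1/2}(\tr_{\mathrm B}P_X)^{\mathrm T}\sigma^{-1/2}$. It has four properties:
\begin{itemize}
\item the two-batch Nystr\"om estimate $\widehat X_1R\widehat X_1$ at input $\sigma$ has diamond error $O(s\opnorm{H_s(\sigma)}+rs)$ (\cref{prop:block});
\item the same data give a multiplicative estimate of $H_s(\sigma)+rI_{\mathrm A}$;
\item every $\sigma$ satisfies the average bound $\tr(\sigma H_s(\sigma))\le r$ (\cref{lem:resolvent-marginal});
\item $\rho\mapsto\log\bigl(H_s(\rho/2+I_{\mathrm A}/(2d_1))+rI_{\mathrm A}\bigr)$ is operator convex (\cref{cor:design-convex}).
\end{itemize}
The update $\rho_{j+1}\propto\exp(\log\rho_j+\log A_j)$ then has normalizer $z_j\le\tr(\rho_jA_j)\le15r/4$, by Golden--Thompson and this trace budget. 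Telescoping gives $\sum_j\log A_j\preceq(\log d_1+T\log(15r/4))I_{\mathrm A}$, regardless of the width of $A_j$, whose eigenvalues can be of order $1/s$. Operator convexity turns this into $H_s(\bar\sigma)\preceq9rI_{\mathrm A}$ after $\lceil\log_2d_1\rceil$ rounds. One final estimation at $\bar\sigma$ with $s=\epsilon/(64r)$ then finishes.

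Two smaller issues:
\begin{itemize}
\item Fitting over channels of Kraus rank at most $r$ is not a convex program.
\item Minimizing a sum of $T$ weighted distances bounds each term only by about $T\epsilon$, not $\epsilon$.
\end{itemize}
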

 \cref{alg:adaptive} uses $O(\log(2d_1))$ adaptive updates of the input
state and an independent Haar-random basis measurement after each channel
use.  

We complement our upper bounds with the following lower bound.

\begin{theorem}[Lower bound, \cref{thm-9151403} restated]\label{thm:lb-main}
Every adaptive incoherent protocol that learns every  channel in $\qchannel_{d_1,d_2}^{r}$ to diamond-norm error $\epsilon$ with probability at least $2/3$
uses at least
\begin{equation*}
\Omega\!\left(\frac{d_1d_2r^2}{\epsilon^2}\right)
\end{equation*}
queries.  The hard instance family used to prove the lower bound can be chosen so that every non-zero eigenvalue of
its Choi operator belongs to $[d_1/(4r),4d_1/r]$.  
\end{theorem}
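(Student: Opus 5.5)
We prove the lower bound by combining a local packing of channels with a per-query Fisher-information (equivalently, KL/mutual-information) bound that holds uniformly over all incoherent single-query experiments. We then convert Fisher information into a query lower bound with a Fano-type or Assouad-type argument, using the chain rule for adaptive protocols.

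\textbf{Step 1: construct the hard family.} Fix a reference channel $\calE_0$ of Kraus rank $r$ whose Choi operator $J_0$ is, roughly, $(d_1/r)$ times a projector onto an $r$-dimensional subspace $S\subseteq \Co^{d_2}\otimes\Co^{d_1}$ whose partial trace over the output is $I_{d_1}$. One concrete choice is Kraus operators $K_i=\sqrt{d_1/r}\,V_i$ with $V_i$ built from a Haar-random isometry $\Co^{d_1}\to\Co^{d_2}\otimes\Co^{r}$. Because of the constraint $d_1\le rd_2$ such an isometry exists, and $J_0$ has all $r$ non-zero eigenvalues close to $d_1/r$.

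Next perturb the Kraus operators: $K_i(\theta)=K_i+\frac{\epsilon}{\sqrt{D}\,r}\sum_\ell \theta_\ell G_{i\ell}$, where $\theta\in\{\pm1\}^{m}$ with $m\asymp d_1d_2r$. The directions $G_{i\ell}$ are chosen so that they do three things:
\begin{enumerate}
\item they move the range of $J$ off $S$ (a "tangent to the Grassmannian" direction) while keeping rank $r$;
\item after a first-order correction, $\sum_i K_i^\dagger K_i=I$ still holds;
\item distinct sign vectors $\theta,\theta'$ differing in $\Omega(m)$ coordinates give channels at diamond distance $\Omega(\epsilon)$, via a Gilbert--Varshamov subset.
\end{enumerate}
The normalization is chosen so that the Choi perturbation has trace norm $\asymp\epsilon$ in the worst input direction. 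The eigenvalue condition $[d_1/(4r),4d_1/r]$ then follows from Weyl's inequality, since the perturbation is small.

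\textbf{Step 2: bound the one-query Fisher information.} For any single incoherent experiment (input $\rho_{AR}$, POVM $\{M_x\}$) the outcome distribution is $p_\theta(x)=\tr[M_x(\calE_\theta\otimes\id)(\rho)]$. We need the Fisher information, averaged over coordinate directions, to satisfy $\sum_\ell F_{\ell\ell}\lesssim \epsilon^2 m/(d_1d_2r^2)$; equivalently, each of the $m$ bits receives information $O(\epsilon^2/(Dr^2))$ per query.

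The key is $F_{\ell\ell}=\sum_x (\partial_\ell p)^2/p$. Every output state has support containing the relevant range, with eigenvalues scaling like $1/r$ relative to total weight. Perturbations that leave $S$ enter linearly in the off-diagonal blocks. Bounding via $p\ge$ (overlap with $S$), we get a quantity like $\sum_\ell \|P_{S^\perp} G_\ell\rho G_\ell^\dagger\|$. Averaging over randomly rotated $G_\ell$ (Haar-random directions) yields the factor $1/(Dr)$ rather than $1/r$; this uniform-over-experiments averaging is the crux.

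\textbf{Step 3: conclude.} By the chain rule, $N$ adaptive queries give total information at most $N\cdot O(\epsilon^2 m/(Dr^2))$. An Assouad-type argument requires $\Omega(m)$ information to recover a constant fraction of the bits, which forces $N=\Omega(Dr^2/\epsilon^2)$.

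\textbf{Main obstacle.} The hardest step is Step 2. A naive bound uses $\partial p/p$ with $p$ possibly tiny outside $S$, and it loses factors of $r$ or $D$. We would first try to bound the $\chi^2$-divergence relative to $\calE_0$ with the perturbation expanded in first order: this reduces the information to $\tr[\Delta\,\rho_{\mathrm{out}}^{-1}\Delta]$-type expressions, which are then controlled by averaging over the Haar-random choice of perturbation directions.
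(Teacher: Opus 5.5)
\textbf{Verdict: there is a genuine gap.} Your overall architecture matches the paper's: a $\Theta(Dr)$-dimensional local family of rank-$r$ channels, a per-query Fisher bound uniform over experiments, additivity over adaptive rounds, then Fano. But Step~2, which you yourself call the crux, is not supplied, and the mechanisms you propose for it cannot supply it.

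\textbf{Step 2.} Work in scaled Choi-factor coordinates ($C_\Theta=\frac{d_1}{r}U_\Theta U_\Theta^\dagger$, prior scale $\asymp\epsilon/\sqrt D$ per coordinate). There you need the total Fisher information of one query, over $\Theta(Dr)$ directions, to be $O(D/r)$, i.e.\ $O(1/r^2)$ per direction on average.
\begin{itemize}
\item A measurement-independent bound such as your $\tr[\Delta\rho_{\rm out}^{-1}\Delta]$ (or first-order $\chi^2$) is of order $1/r$ per unit direction for perturbations that rotate the support. For example, with the maximally entangled input, $\rho_{\rm out}=UU^\dagger/r$ and $\delta\rho=\frac1r(U_0H^\dagger+HU_0^\dagger)$ with $P_SH=0$.
\item This is not an artifact: a single incoherent measurement adapted to one such direction attains Fisher information of order $1/r$ in it. Summed over directions, such bounds give $O(D)$ per query and only $\Omega(Dr/\epsilon^2)$.
\item What saves the factor $r$ is that one measurement cannot be informative about many directions at once, i.e.\ a uniform bound on $\tr I_1$.
\item Randomizing the directions $G_\ell$ does not produce this. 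For a fixed experiment, $\sum_\ell F_{\ell\ell}$ over an orthonormal family is a partial trace of the Fisher matrix, unchanged by rotating the family.
\end{itemize}
The missing idea is the paper's one-query Fisher lemma. Write every experiment through its tester, $q_\theta(y)=\frac{d_1}{r}\tr(T_yU_\Theta U_\Theta^\dagger)$ with $\int\tr(T_y)\,d\nu(y)=d_2$, and use $T_y^2\preceq\tr(T_y)\,T_y$:
\[
\|\nabla_U q\|_2^2=\frac{4d_1^2}{r^2}\tr(U^\dagger T_y^2U)\le\frac{4d_1}{r}\tr(T_y)\,q .
\]
With a chart of Lipschitz constant $2$ on the parameter domain, this gives $\tr I_1\le16D/r$ uniformly over inputs, measurements and parameters. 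The factor $q$ cancels the denominator, so small or zero probabilities cause no trouble.

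\textbf{Step 3 and the use of $\chi^2$/KL.} Fisher information does add over adaptive rounds, but Fano or Assouad need mutual information or testing affinities, and the KL/$\chi^2$ route you suggest breaks.
\begin{itemize}
\item $C_\theta$ is rank-deficient. An outcome whose tester is supported on $(\supp C_0)^\perp$ has probability $0$ under $\calE_0$ but probability of order $\frac1r\|P_{S^\perp}\Theta\|_2^2>0$ under $\calE_\theta$. This is second order, so your first-order expansion misses it. Hence $\chi^2(P_\theta\|P_{\calE_0})$, and KL between distinct vertices, are infinite, and any information bound built on them is vacuous.
\item The full cube $\{\pm1\}^m$ contains sign patterns whose perturbation has operator norm comparable to its Hilbert--Schmidt norm. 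These lie far outside the perturbative region where the normalization, eigenvalue window and Fisher bound hold. Passing to a Gilbert--Varshamov subset instead destroys Assouad's product structure.
\end{itemize}
The paper resolves both at once. It uses a Gaussian prior of variance $\sigma^2=t^2/(1024D)$ conditioned on the convex body $\calK_t$ (operator-norm constraints on $\Theta$ and $T_\Theta$). The Bakry--\'Emery log-Sobolev inequality then gives $I(\theta;Z)\le\frac{\sigma^2}{2}\E\tr I_Z(\theta)\le\frac{nt^2}{128r}$, working directly with Fisher information. Fano is closed by a small-ball estimate in normalized Choi trace norm.

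\textbf{Smaller issues in Step 1.}
\begin{itemize}
\item A Haar-random isometry does not give a well-conditioned anchor when $r$ is comparable to $D$. For example, $d_1=1$, $r=d_2$ gives a Marchenko--Pastur spectrum reaching $0$.
\item Restricting to Grassmannian-tangent directions leaves about $2(D-r)r-d_1^2$ dimensions, which is not $\Omega(Dr)$ for $r$ near $D$, and there are none at $r=D$. The paper instead keeps in-support directions and removes only the gauge directions $U_0W$ with $W^\dagger=-W$.
\item Trace preservation must hold exactly; the paper renormalizes by $(I+T_\Theta^\dagger T_\Theta)^{-1/2}$.
\item For an orthonormal family of stacked directions, the coefficient giving separation $\Omega(\epsilon)$ is $\asymp\epsilon/\sqrt{rd_2}$, not $\epsilon/(\sqrt D\,r)$.
\end{itemize}
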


Our results are summarized in \cref{tab:rates}.

\begin{table}[ht]
\centering
\caption{Incoherent query complexity of quantum channel tomography at constant success probability.}
\label{tab:rates}
\vspace{3mm}
\begin{tabular}{lcc}
\toprule
Setting & Upper bound & Lower bound \\
\midrule
Gapped, nonadaptive
&
 $O\!\left(\frac{d_1d_2r^2}{\epsilon^2}\right)$ \cref{cor:gapped}
&
 $\Omega\!\left(\frac{d_1d_2r^2}{\epsilon^2}\right)$ \;\cref{thm-9151403}
\\[6pt]
General, adaptive
&
 $O\!\left(
\frac{d_1d_2r^2\log(2d_1)}{\epsilon^2}
\right)$ \cref{thm-9151357}
&
 $\Omega\!\left(
\frac{d_1d_2r^2}{\epsilon^2}
\right)$ \;\cref{thm-9151403}
\\
\bottomrule
\end{tabular}
\end{table}

\subsection{Overview of techniques}

\textbf{Upper bound, Gapped case.} When the smallest positive eigenvalue of the Choi operator $C_\calE$ is at least $\Omega(d_1/r)$ (Gapped case), the protocol of \cite{SKKG22} is optimal up to constants (at constant success probability). It consists of measuring the Choi state with Haar random bases, averaging the formed Haar snapshots (\cref{def-9150321}), and projecting to a valid Choi state $\widetilde X $. In order to control the diamond distance, we split the error $\Delta = \widetilde X - \frac{1}{d_1}C_\calE$ to four blocks according to the projection $P_\calE$ onto the support of the Choi operator $C_\calE$. The support block $P_\calE\Delta P_\calE$ contributes $\kappa(\calE)\|\Delta\|_\infty$ where $\kappa(\calE) = d_1 \|\tr_B P_\calE\|_\infty$. For $\Omega(d_1/r)$-gapped channels $P_\calE  \preceq O(\frac{r}{d_1})C_\calE$ so $\kappa(\calE)\le O(r)$ and the contribution of the support block is at most $O(r\|\Delta\|_\infty)$. The same contribution (up to confidence-dependent factors) can be proven for the complement block $(I-P_\calE) \Delta (I-P_\calE)$ and this is the technical heart: Since we measure with a global Haar basis, and we project on the complement, the law of $\Delta$ is invariant under every unitary supported on $(\supp C_\calE)^{\perp}$, which can be used to flatten the input marginal (\cref{lem:random-complement}). The cross blocks are handled by Cauchy Schwarz inequality. { A scalar Bernstein argument combined with a sphere covering net (\cref{sec:covering-packing}), using the dimension-free subexponential moments of every directional Haar snapshot, gives $\|\Delta\|_\infty=O(\sqrt{(D+\log(1/\delta))/n})$ in the relevant range (\cref{lem:haar-moments}). Together with the block analysis, this yields $\|\widehat{\calE} - \calE\|_\diamond = O(r\sqrt{D/n})$ at constant success probability.}

\noindent\textbf{Upper bound, general case.} From the sketch above, the difficulty for general channels comes from bounding $\kappa(\calE)$ which can be much larger than $r$. In this case, we go beyond fixed maximally entangled input case, allowing for adaptive input states. Now, for an arbitrary input state $\sigma_{\mathrm{in}}$ we prepare the joint output state $X = \calE(\kettbbra{\sqrt{\sigma_{\mathrm{in}}}}{\sqrt{\sigma_{\mathrm{in}}}})$ and the orthogonal projector on its support $P_{X
}$. Following the same previous strategy, the quantity one would like to control is
$
\kappa_{\sigma_{\mathrm{in}}}(\calE)
:=\left\|\sigma_{\mathrm{in}}^{-1/2}(\tr_B P_X)^{\mathrm T}
\sigma_{\mathrm{in}}^{-1/2}\right\|_\infty.
$
It turns out that, for every channel $\calE$,  one  can show that there is input state $\sigma_{\mathrm{in}}$ for which $\kappa_{\sigma_{\mathrm{in}}}(\calE) = O(r)$ and thus can be used to obtain near optimal upper bound on the query complexity. However, finding such input state is nontrivial. The difficulty comes partially from the projector $P_X$ which is a discontinuous function of $X$. To overcome this difficulty, we  soften the  projector $P_X$ using the following filter for $s>0$:
\[f_s(X) = X(X+sI)^{-1}  \underset{s\,\downarrow\, 0}{\longrightarrow } P_X.\]
The new quantity that takes the role of $\kappa_{\sigma_{\mathrm{in}}}(\calE)$ is then $\|H_s(\sigma_{\mathrm{in}})\|_\infty$ 
where $H_s(\sigma_{\mathrm{in}})=\Phi_{\sigma_{\mathrm{in}}}(f_s(X))^\dagger(I_{\mathrm{B}})$ and  $\Phi_{\sigma_{\mathrm{in}}}(M)$ is the linear map whose Choi operator is $\sigma_{\mathrm{in}}^{-\mathrm{T}/2} M \sigma_{\mathrm{in}}^{-\mathrm{T}/2}$. Indeed, we have from \cref{lem:kappa} that  $\lim_{s\downarrow0}\|H_s(I/d_1)\|_\infty=\kappa(\calE)$. So the goal now is to find an input state  $\sigma_{\mathrm{in}}$ such that $\|H_s(\sigma_{\mathrm{in}})\|_\infty= O(r)$. 

Given the filter $f_s(X)$ and defining $X_s = X - sf_s(X) \approx X$ we can decompose the channel $\calE = \Phi_{\sigma_{\mathrm{in}}}(X_s) + s\Phi_{\sigma_{\mathrm{in}}}(f_s(X))$. Since $\|s\Phi_{\sigma_{\mathrm{in}}}(f_s(X))\|_\diamond = s\|H_s(\sigma_{\mathrm{in}})\|_\infty$, by setting $s=O(\epsilon/r)$, we can omit the second term in the decomposition once the goal $\|H_s(\sigma_{\mathrm{in}})\|_\infty \le O(r)$ is achieved. The first term of the decomposition $\Phi_{\sigma_{\mathrm{in}}}(X_s)$ motivates the shift of focus to $X_s = X^2(X+sI)^{-1}$. Furthermore, estimating $X_s$ permits to control $\|H_s(\sigma_{\mathrm{in}})\|_\infty$ since $H_s(\sigma_{\mathrm{in}}) = \frac{1}{s}\left(I_{\rm A}-\Phi_{\sigma_{\mathrm{in}}}(X_s)^\dagger(I_B)\right)$.

The operator $X_s$ can be written as $X_s= XRX$  with $R = (X+sI)^{-1}P_X $ which suggests an estimation using Nystr\"{o}m sandwich: we prepare two \textit{independent} estimators $\widehat{X}_0$ and $\widehat{X}_1$ of $X$ using averages of Haar snapshots. We estimate $\widehat{R} = (S+sI)^{-1}P_S$ where $S$ retains the $r$ largest eigenvalues of $\widehat{X}_0$ and set the remaining to zero. We then estimate $X_s$ with $\widehat{X}_s = \widehat{X}_1 \widehat{R} \widehat{X}_1$.

The sandwich form ensures that $\widehat{X}_s\succeq 0$ and thus $\Phi_{\sigma_{\mathrm{in}}}(\widehat{X}_s)$ is completely positive. Moreover, since $\widehat{R}$ is independent of $\widehat{X}_1 $, the error can be controlled by writing it as 
\[ \widehat{X}_s - X_s = X(\widehat{R}-R)X + E\widehat{R}X + X \widehat{R} E + E\widehat{R}E   \]
where $E = \widehat{X}_1 - X$. The cross terms can be controlled using a L\"{o}wner Cauchy-Schwarz inequality. The same expression can be used to get a multiplicative estimate $\widehat{H}(\sigma_{\mathrm{in}}) = \frac{1}{s}(I_{\rm A} - \Phi_{\sigma_{\mathrm{in}}}(\widehat{X}_s)^\dagger(I_B) )$ of $H_s(\sigma_{\mathrm{in}})$. Concentration inequalities show that the near-optimal upper bound is proven whenever $\|H_s(\sigma_{\mathrm{in}})\|_\infty \le 10r$ (\cref{prop:block}). 

The last part of this upper bound is the design of input state $\sigma_{\mathrm{in}}$ achieving $\|H_s(\sigma_{\mathrm{in}})\|_\infty \le 10r$. It turns out that 
 $H_s(\sigma_{\mathrm{in}}) =  (\calE^c)^\dagger ( (\calE^c(\sigma_{\mathrm{in}}) +sI_r )^{-1})$  where $\calE^c$ is the complementary channel of $\calE$ according to a fixed Kraus list(\cref{lem:resolvent-marginal,lem:ridge-tail}) so $\tr\left(\sigma_{\mathrm{in}} H_s(\sigma_{\mathrm{in}})\right)\le r$ for every $\sigma_{\mathrm{in}}$. 
 Roughly speaking, we want to turn an average guarantee $\tr\left(\sigma_{\mathrm{in}} H_s(\sigma_{\mathrm{in}})\right)\le r$ to a worst case guarantee $\|H_s(\sigma_{\mathrm{in}})\|_\infty \le 10r$. 
 This can be achieved by using the matrix exponentiated-gradient update \cite{TRW05}:
 \[\rho_{j+1} =  \frac{\exp\left(\log\rho_j +\log A_j\right)}{\tr\left(\exp\left(\log\rho_j +\log A_j\right)\right)}, \qquad \sigma_j = \frac{1}{2}\rho_j+\frac{I_{\rm A}}{2d_1}.\]
Here, $A_j$ is obtained from $\widehat{H}(\sigma_j)+rI_{\rm A}$ by replacing every eigenvalue below $r/2$ by $r/2$. This update raises the weight of input directions on which $H_s$ is large. It can be shown that $T=\max\{1,\lceil\log_2d_1\rceil\}$ iterations suffice using two ingredients. Golden--Thompson inequality gives
$
\tr\left(\exp\left(\log\rho_j+\log A_j\right)\right)
\le \tr(\rho_jA_j).$
On the concentration event, $A_j$ is a multiplicative approximation to $A(\rho_j)=H_s(\sigma_j)+rI_{\rm A}$, while the trace budget
$
\tr\bigl(\rho_jA(\rho_j)\bigr)\le3r$
follows from $\rho_j\preceq2\sigma_j$ and $\tr(\sigma_jH_s(\sigma_j))\le r$. Consequently, $\tr(\rho_jA_j)=O(r)$. The second ingredient is the operator convexity of $\rho\mapsto\log A(\rho)$ (\cref{cor:design-convex}), which translates a bound on the average of $\log A_j$ to a bound on $\log\left(H_s(\bar{\sigma})+rI_{\rm A}\right)$, where $\bar{\sigma}=\frac{1}{T}\sum_{j=1}^T\sigma_j$. After $T$ iterations, $H_s(\bar{\sigma})\preceq9rI_{\rm A}$ (\cref{prop:gapping}), so $\bar{\sigma}$ can be used as an input state, leading to a near-optimal query bound.
\\

\noindent\textbf{Lower bound.}  We prove the lower bound by the standard strategy of constructing a hard family and using Fano's inequality to show that any learning algorithm must query the channel at least a certain number of times to learn a channel chosen randomly from the hard family (see e.g., \cite{Flammia2012Sep,HHJWY17,LN22,Fawzi2025Jan}). In the regime $d_1\le rd_2/2$, one could use the hard family constructed in \cite{CGOYZ26} and, instead of bounding the mutual information by the Holevo information as in \cite{Oufkir2026Jan}, use log-Sobolev inequalities for Haar measure \cite[Theorem 5.16]{Meckes19}. In order to obtain a general bound for all parameters such that $d_1\le rd_2$ and $r\le d_1d_2$, we construct a new family of hard channels by first constructing a base channel whose non-zero Choi eigenvalues lie between $\frac{d_1}{2r}$ and $\frac{2d_1}{r}$. Then, the admissible channels of the family are constructed by a small perturbation of the Kraus operators of the base channel and normalization (to ensure the trace-preserving property). This set has dimension $\Omega(d_1d_2r)$ (\cref{lem:dim}), and the hard family is constructed by taking Gaussian weights on a basis. Let $\calE_\Theta$ be a random hard channel in this family, and let $Z$ be the transcript of classical outcomes of a learning algorithm. The lower bound is proved through  bounding  the mutual information $I(\Theta; Z)$  between $\Theta$ and $Z$.  On the one hand, learning $\calE$ to diamond error $\epsilon$ permits the learner to find $\Theta$ in a small ball (\cref{lem:smallball}). Correctness of the learning algorithm and Fano's inequality imply that $I(\Theta; Z) \ge \Omega(d_1d_2r)$ (\cref{lem:fano}). On the other hand, each query reveals only a little information. In fact, using the tester representation of single-query experiments, one can show that the trace of the one-query Fisher information is at most $O(d_1d_2/r)$ uniformly (\cref{lem:fisher}). The crucial point is that, even for adaptive incoherent algorithms, Fisher information adds along the transcript, and the trace of the $n$-query Fisher information $I_Z(\Theta)$ is at most $O(nd_1d_2/r)$ (see \cref{eq:adaptive-fisher}). The Gaussian log-Sobolev inequality (\cref{lem:prior}) translates the bound on the Fisher information into a bound on the mutual information, $I(\Theta;Z) \le O\!\left(\frac{\epsilon^2}{d_1d_2}\right)\tr I_Z(\Theta)\le O(n\epsilon^2/r)$, leading to the lower bound $n\ge \Omega\left(\frac{d_1d_2r^2}{\epsilon^2}\right)$.

\subsection{Related work}

\textbf{Incoherent quantum channel tomography.}
For general Kraus rank $r$, \cite{KRT17,KKEG19} use compressed-sensing techniques to establish low-rank recovery in Schatten norms. Similar guarantees are established using projected least squares \cite{GKKT20,SKKG22}. Converting these bounds to diamond distance, using standard inequalities, yields suboptimal bounds in general. The ancilla-assisted projected least squares protocol of \cite{SKKG22} is the estimator we use in the gapped case.   

For full rank $r=d_1d_2$, \cite{Ouf23} showed the near optimal query complexity $\widetilde{\Theta}(d_1^3d_2^3/\epsilon^2)$ for  nonadaptive incoherent  channel tomography under diamond distance. This was generalized to the adaptive incoherent setting by \cite{BGM26} who established the complexity $\Theta(d_1^3d_2^3/\epsilon^2)$. 

When the rank is minimal $r=1$, channel tomography becomes isometry tomography. \cite{HKOT23} learn a unitary channel on $\mathbb{C}^d$ to diamond error $\epsilon$ using $O(d^2/\epsilon^2)$  incoherent measurements.  Their sharper $O(d^2/\epsilon)$ bound uses a sequential
bootstrap procedure that repeatedly composes the unknown
unitary with a current classical estimate. It therefore  falls outside the incoherent model. Then,  
\cite{CGOYZ26} generalized this result to learn an  isometry channel of dimensions $(d_1, d_2)$ to diamond error $\epsilon$ using $O(d_1d_2/\epsilon^2)$ incoherent measurements. Our results extend both results to higher Kraus rank $r$. 

\noindent\textbf{Coherent quantum channel tomography.} 
The upper bound $O(d_1d_2r/\epsilon^2)$ on the coherent query complexity was established by \cite{MB25,CGOYZ26} and the lower bound $\Omega(d_1d_2r/\epsilon^2)$ was proved by \cite{CGOYZ26} when the dilation rate $\tau = rd_2/d_1 \ge 1 + \Omega(1)$. At the boundary $\tau=1$, channel tomography exhibits Heisenberg scaling $\Theta(d_1d_2r/\epsilon)$ \cite{CGOYZ26,HLSZY26}, which is not the case in the incoherent setting where $\Omega(d_1d_2r^2/\epsilon^2)$ is needed for all $\tau\ge 1$ (\cref{thm:lb-main}).

\noindent\textbf{Quantum state tomography.} State tomography can be seen as a special case of channel tomography with trivial input, i.e., $d_1=1$. The coherent query complexity is $\Theta(d_2r/\epsilon^2)$ \cite{ODW16,HHJWY17}. 
The full rank incoherent query complexity  ${\Theta}(d_2^3/\epsilon^2)$ is due to \cite{HHJWY17,GKKT20} for the upper bound and to \cite{CHLLS23} for the lower bound. The rank-$r$ explicit dependency was upper bounded by $O(d_2r^2/\epsilon^2)$ \cite{HHJWY17,GKKT20}. It was lower bounded by $\Omega(d_2r^2/\epsilon^2)$ by \cite{LN22} for nonadaptive strategies. \cref{thm:lb-main} specialized to $d_1=1$ generalized this result to adaptive strategies  so the incoherent rank-$r$ state query complexity is $\Theta(d_2r^2/\epsilon^2)$. We note that this state lower bound was independently proved by \cite{Keskin2026Sep,Nayak2026Sep} as a special case. The general framework for lower bounds was used in \cite{Flammia2012Sep,HHJWY17,LN22,Fawzi2025Jan}. Related Fisher-information and log-Sobolev arguments also appear in
\cite{Keskin2026Sep,Nayak2026Sep}.

\subsection{Discussion}

Our results answer \cref{question} by characterizing the cost of quantum memory in low-rank quantum channel tomography. For $\Omega(d_1/r)$-gapped channels,  a nonadaptive strategy, with a fixed maximally entangled input state, achieves  the optimal incoherent query complexity 
\[
\Theta\!\left(\frac{d_1d_2r^2}{\epsilon^2}\right)
\]
for constant success probability. For general channels, adaptive input selection permits to overcome the gapped spectrum assumption at the cost of an overhead of  $\log(2d_1)$ in query complexity. Consequently, compared with the coherent channel tomography results in \cite{CGOYZ26}, we see that the advantages of quantum memory lie in the Choi rank $r$, and also the error $\varepsilon$ (in the boundary and near-boundary regimes), up to a logarithmic factor.

A limitation of our upper bounds is that both algorithms are ancilla-assisted as they prepare a purification of the input state and measure the joint output-reference system. It remains open whether ancilla-free algorithms can achieve comparable query complexities for general Kraus rank $r$ (for minimal rank $r=1$ and for full rank $r=d_1d_2$ this is true according to \cite{CGOYZ26} and \cite{Ouf23,BGM26} respectively).

Moreover, our algorithms use exact Haar-random bases. It is natural to
ask whether finite exact or approximate unitary $t$-designs can
replace Haar randomness while preserving the query complexity of channel tomography.

\subsection{Organization}

In \cref{sec:prelim}, we introduce the notation and tools used
throughout the paper. In \cref{sec:upper}, we prove the upper bounds for channel tomography. We
begin in \cref{sec:fixed-probe} with a nonadaptive protocol based on a
fixed maximally entangled input state and establish a query upper bound in terms
of the parameter $\kappa(\calE)$; this yields the optimal query complexity for
gapped channels. We then turn in \cref{sec:algorithm} to general
channels, where adaptive selection of the input state removes the
spectral-gap assumption at a logarithmic overhead. The technical
lemmas used in both upper bounds are collected in
\cref{sec:ub-technical}.

In \cref{sec:lower}, we prove the lower bound for adaptive incoherent
protocols. The hard family consists of gapped channels, so the lower
bound applies both to the gapped subclass and to the full
class of channels of Kraus rank at most $r$. The main argument appears
in \cref{sec:lb-proof}, and the geometric and information-theoretic
estimates are proved in \cref{sec:lb-technical}.

\section{Preliminaries}\label{sec:prelim}
\subsection{Quantum channels and Choi operators}\label{sec:choi} 
For a finite-dimensional Hilbert space $\calH$, write $\calL(\calH)$ for its linear operators and $\calD(\calH)$ for its density operators.  The symbols $\|\cdot\|_1$, $\|\cdot\|_2$, and $\|\cdot\|_\infty$ denote the trace, Hilbert--Schmidt, and operator norms, respectively; vector norms are Euclidean.  On a real linear space of complex matrices, orthogonality is with respect to the real Hilbert--Schmidt inner product $\langle M,N\rangle_\R=\Re\tr(M^\dagger N)$.  For Hermitian $M$, its support is $\supp M=(\ker M)^\perp$. The L\"owner order $M\preceq N$ means that $N-M$ is positive semidefinite. By a \emph{multiplicative estimate} $\widehat M$ of a positive definite matrix $M$, with relative error $0\le\eta<1$, we mean $(1-\eta)M\preceq\widehat M\preceq(1+\eta)M$. All logarithms are natural unless a base is specified, and $\widetilde O$ and $\widetilde\Theta$ suppress logarithmic factors.

Let $\mathrm{A}$ and $\mathrm{B}$ denote the input and output quantum systems, with Hilbert spaces $\mathcal{H}_{\mathrm{A}}\cong\Co^{d_1}$ and $\mathcal{H}_{\mathrm{B}}\cong\Co^{d_2}$. We retain system labels in subscripts: $I_{\mathrm{A}}$ denotes the identity operator on $\mathcal{H}_{\mathrm{A}}$, $\id_{\mathrm{A}}$ the identity map on $\calL(\mathcal{H}_{\mathrm{A}})$, and $\tr_{\mathrm{A}}$ the partial trace over $\mathcal{H}_{\mathrm{A}}$; analogous notation applies to other systems. 

We fix the computational basis, and define transposes and complex conjugates with respect to this basis. Complex conjugation is denoted by a superscript $\star$; in particular, $\ket{\psi^\star}$ is the complex conjugate of $\ket{\psi}$. 
For $K:\mathcal{H}_{\mathrm{A}}\to\mathcal{H}_{\mathrm{B}}$, we write
\begin{equation}\label{eq:vec}
 \begin{aligned}
 \kett K=\sum_{a=1}^{d_1}\bigl(K\ket a\bigr)\otimes\ket a
          \in\mathcal{H}_{\mathrm{B}}\otimes\mathcal{H}_{\mathrm{A}}, \qquad \kett I=\sum_{a=1}^{d_1}\ket a\otimes\ket a         \in\mathcal{H}_{\mathrm{A}}\otimes\mathcal{H}_{\mathrm{A}}.
 \end{aligned}
\end{equation}
The inverse reshaping map is written $\operatorname{mat}$, so $\operatorname{mat}(\kett K)=K$. Therefore, we can easily see
\begin{equation}\label{eq:vec-facts}
 \langle\!\langle K|L\rangle\!\rangle=\tr(K^\dagger L),\quad
 \kett{KM}=(I\otimes M^{\mathrm T})\kett K,\quad
 \tr_{\mathrm{B}}(\kettbbra{K}{L})=(L^\dagger K)^{\mathrm T}.
\end{equation}

For a linear map
$\Phi:\calL(\mathcal{H}_{\mathrm{A}})\to\calL(\mathcal{H}_{\mathrm{B}})$,
define its Choi operator $C_\Phi\in\calL(\mathcal{H}_{\mathrm{B}}\otimes\mathcal{H}_{\mathrm{A}})$ by
\begin{equation}\label{eq:choi-def}
 C_\Phi=(\Phi\otimes\id_{\mathrm{A}})(\kettbbra {I}{I}) =\sum_{a,a'}\Phi(\ketbra a{a'})\otimes\ketbra a{a'}.
\end{equation}
For a channel $\mathcal{E} \in \qchannel_{d_1,d_2}^{r}$, we have $\tr C_\calE=d_1$.
The Choi state refers to the normalized Choi operator $C_\mathcal{E}/d_1$.
Complete positivity of $\Phi$ is equivalent to $C_\Phi\succeq0$, and trace preservation is equivalent to $\tr_{\mathrm{B}} C_\Phi=I_{\mathrm{A}}$.
A Kraus representation and its Choi representation are
\begin{equation}\label{eq:choi-kraus}
 \calE(\rho)=\sum_{i=1}^r K_i\rho K_i^\dagger,\qquad
 \sum_i K_i^\dagger K_i=I_{\mathrm{A}},\qquad
 C_\calE=\sum_i\kettbbra{K_i}{K_i}.
\end{equation}
The minimum number of Kraus operators, called the \emph{Kraus rank} or
\emph{Choi rank}, equals $\rank(C_\calE)$.

We write $\qchannel_{d_1,d_2}^{r}$ for the set of quantum channels
$\calE:\calL(\Co^{d_1})\to\calL(\Co^{d_2})$ with Kraus rank at most $r$.

\begin{proposition}[Diamond norm and Choi operators \cite{Watrous18}]\label{lem:diamond-choi}
Let $\Phi:\calL(\mathcal{H}_{\mathrm{A}})\to\calL(\mathcal{H}_{\mathrm{B}})$ be Hermiticity-preserving, with Choi
operator $C_\Phi$ defined in \cref{eq:choi-def}.  Then
\begin{equation}\label{eq:diamond-choi}
 \dnorm\Phi= \max_{\rho\in\calD(\mathcal{H}_{\mathrm{A}})}  \left\|(I_{\mathrm{B}}\otimes\sqrt{\rho^{\mathrm T}})C_\Phi               (I_{\mathrm{B}}\otimes\sqrt{\rho^{\mathrm T}})\right\|_1.
\end{equation}
In particular, any two channels $\calE,\calF:\calL(\mathcal{H}_{\mathrm{A}})\to\calL(\mathcal{H}_{\mathrm{B}})$ satisfy
\begin{equation}\label{eq:choi-diamond-comparison}
 \frac1{d_1}\|C_\calE-C_\calF\|_1
 \le\|\calE-\calF\|_\diamond
 \le\|C_\calE-C_\calF\|_1.
\end{equation}
\end{proposition}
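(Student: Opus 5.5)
The plan is to derive the formula from the standard definition $\dnorm\Phi=\max_{\ket\psi}\|(\Phi\otimes\id_{\mathrm{R}})(\ketbra\psi\psi)\|_1$. Here the maximum runs over pure states on $\mathcal{H}_{\mathrm{A}}\otimes\mathcal{H}_{\mathrm{R}}$ with $\mathcal{H}_{\mathrm{R}}\cong\mathcal{H}_{\mathrm{A}}$. The reduction to pure inputs uses convexity of the trace norm. The restriction to a reference of dimension $d_1$ uses the Schmidt decomposition, together with invariance of $\|\cdot\|_1$ under isometries on the reference.

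The key reshaping step is as follows. Every unit vector $\ket\psi\in\mathcal{H}_{\mathrm{A}}\otimes\mathcal{H}_{\mathrm{A}}$ can be written as $\ket\psi=(I\otimes M^{\mathrm T})\kett I$ with $\tr(M^\dagger M)=1$, using \cref{eq:vec-facts}. Take the polar decomposition $M^{\mathrm T}=U\sqrt{\rho^{\mathrm T}}$, where $\rho:=M^\dagger M$ (up to the appropriate transpose/conjugate bookkeeping), $U$ is unitary, and $\rho$ is a density operator. Then $\Phi$ acts only on the first factor, so
\[
(\Phi\otimes\id)(\ketbra\psi\psi)=(I_{\mathrm{B}}\otimes U\sqrt{\rho^{\mathrm T}})\,C_\Phi\,(I_{\mathrm{B}}\otimes\sqrt{\rho^{\mathrm T}}U^\dagger).
\]
Unitary invariance of the trace norm removes $U$. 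Conversely, every $\rho$ arises from the purification $(I\otimes\sqrt{\rho^{\mathrm T}})\kett I$. This gives \cref{eq:diamond-choi}. Hermiticity preservation is not really needed for the identity itself; it only ensures the relevant operators are Hermitian, so trace norms are sums of absolute eigenvalues.

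For \cref{eq:choi-diamond-comparison}, apply \cref{eq:diamond-choi} to $\Phi=\calE-\calF$, which is Hermiticity preserving with $C_\Phi=C_\calE-C_\calF$. The choice $\rho=I/d_1$ gives the lower bound $\frac1{d_1}\|C_\calE-C_\calF\|_1$.

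For the upper bound, use Hölder's inequality $\|AXA\|_1\le\|A\|_\infty^2\|X\|_1$ with $A=I_{\mathrm{B}}\otimes\sqrt{\rho^{\mathrm T}}$. Since $\|\sqrt{\rho^{\mathrm T}}\|_\infty^2=\|\rho\|_\infty\le1$, this yields $\le\|C_\calE-C_\calF\|_1$.

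The only delicate point is keeping the transpose/conjugation conventions consistent in the reshaping identity, which is routine. Since the statement is cited from \cite{Watrous18}, one could alternatively just invoke it.
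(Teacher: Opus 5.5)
The paper does not prove this proposition; it quotes it from \cite{Watrous18}. Your polar-decomposition reshaping is correct. So is your derivation of both bounds in \cref{eq:choi-diamond-comparison}. On the bookkeeping: with the paper's conventions the input marginal of $\kett{M}$ is $MM^\dagger$ and $|M^{\mathrm T}|=\sqrt{(MM^\dagger)^{\mathrm T}}$, so $\rho=MM^\dagger$, not $M^\dagger M$.

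\textbf{The gap is your starting point.} The standard definition, the one in \cite{Watrous18}, is $\dnorm{\Phi}=\sup\{\trnorm{(\Phi\otimes\id_{\mathrm R})(X)}:\trnorm{X}\le1\}$ over \emph{all} operators $X$, not only density operators. Convexity reduces this only to the extreme points of the trace-norm ball. Those are $\ketbra{u}{v}$ with two independent unit vectors $u,v$, not $\ketbra{\psi}{\psi}$. Passing from $\ketbra{u}{v}$ to $\ketbra{\psi}{\psi}$ is exactly where Hermiticity preservation enters, so your remark that it is ``not really needed'' is false. Take the qubit map $\Phi(X)=\bra{0}X\ket{1}\,\ketbra{0}{0}$. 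Then $C_\Phi=\ketbra{0}{0}\otimes\ketbra{0}{1}$, and the right-hand side of \cref{eq:diamond-choi} equals $\max_\rho\sqrt{\rho_{00}\rho_{11}}=1/2$. Yet $\trnorm{\Phi(\ketbra{0}{1})}=1$. If you define the diamond norm as a maximum over density operators, your argument is complete. However, that is a different quantity for non-Hermiticity-preserving maps, and not the norm the proposition refers to.

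\textbf{The missing step is short.}
\begin{enumerate}
\item Put $Y=(\Phi\otimes\id)(\ketbra{u}{v})$. Since $\Phi\otimes\id$ preserves Hermiticity, $(\Phi\otimes\id)(\ketbra{v}{u})=Y^\dagger$.
\item Append a qubit to the reference and set $\ket{w_\pm}=(\ket{u}\ket{0}\pm\ket{v}\ket{1})/\sqrt2$.
\item Then $(\Phi\otimes\id)(\ketbra{w_+}{w_+}-\ketbra{w_-}{w_-})=Y\otimes\ketbra{0}{1}+Y^\dagger\otimes\ketbra{1}{0}$, whose trace norm is $2\trnorm{Y}$.
\item The triangle inequality gives $\trnorm{Y}\le\max_\pm\trnorm{(\Phi\otimes\id)(\ketbra{w_\pm}{w_\pm})}$.
\end{enumerate}
Your Schmidt-decomposition step then brings the enlarged reference back to dimension $d_1$. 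From there, your polar-decomposition argument finishes the proof.
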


We also record two elementary trace-norm consequences of the L\"owner order.

\begin{proposition}[Order and trace-norm estimates]\label{lem:loewner-tools}
Let $G\succeq0$ and let $Q$ be Hermitian, both acting on the same finite-dimensional space.
\begin{enumerate}[leftmargin=2.2em,label=\rm(\roman*)]
\item If $-G\preceq Q\preceq G$ then $\trnorm Q\le\tr G$.
\item If $\opnorm Q\le\eta$ and $\Pi$ is the orthogonal projection onto
      $\supp Q$, then $\trnorm{ZQZ^{\dagger}}\le\eta\,\tr(Z\Pi Z^{\dagger})$ for every
      matrix $Z$.
\end{enumerate}
\end{proposition}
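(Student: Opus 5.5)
For part (i), I will use the variational characterization of the trace norm. Since $Q$ is Hermitian, write $Q=Q_+-Q_-$ as its Jordan decomposition, and let $P_\pm$ be the spectral projections onto the positive and negative eigenspaces. Then $U\coloneqq P_+-P_-$ is a Hermitian contraction with $\trnorm Q=\tr(UQ)=\tr(P_+Q)-\tr(P_-Q)$. From $Q\preceq G$ we get $\tr(P_+Q)\le\tr(P_+G)$, and from $-G\preceq Q$ we get $-\tr(P_-Q)\le\tr(P_-G)$. Summing gives $\trnorm Q\le\tr((P_++P_-)G)\le\tr G$. The last step uses $G\succeq0$ and $P_++P_-\preceq I$. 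No step here is delicate.

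For part (ii), I will reduce to part (i). Since $\Pi$ projects onto $\supp Q$ and $\opnorm Q\le\eta$, the spectral theorem gives $Q=\Pi Q\Pi$ and $-\eta\Pi\preceq Q\preceq\eta\Pi$. Congruence by an arbitrary matrix $Z$ preserves the Löwner order, because $M\succeq0$ implies $ZMZ^\dagger\succeq0$. Hence
\[
-\eta\,Z\Pi Z^\dagger\preceq ZQZ^\dagger\preceq \eta\,Z\Pi Z^\dagger .
\]
Here $ZQZ^\dagger$ is Hermitian and $G\coloneqq\eta Z\Pi Z^\dagger\succeq0$. Applying part (i) yields $\trnorm{ZQZ^\dagger}\le\eta\tr(Z\Pi Z^\dagger)$.

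Neither part has a real obstacle. The only point to check is that part (ii) allows $Z$ to be non-square, mapping into a possibly different space. Congruence still preserves positivity in that case, and part (i) is then applied on the codomain space, so the argument goes through unchanged.
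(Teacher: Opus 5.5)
Your proof is correct. Part (i) is the paper's argument; the paper takes $\Pi_+$ onto the non-negative eigenspace, so $\Pi_++\Pi_-=I$ and the last step is an equality, but that difference is cosmetic.

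For part (ii) you take a different route from the paper. You note $-\eta\Pi\preceq Q\preceq\eta\Pi$, push this sandwich through the congruence $M\mapsto ZMZ^\dagger$, and invoke (i), so (ii) becomes a corollary of (i). The paper does not use the L\"owner order here. It factors $Q=|Q|^{1/2}S|Q|^{1/2}$ with $\opnorm S\le1$ and applies the H\"older bound $\trnorm{MK}\le\hsnorm M\hsnorm K$ with $M=Z|Q|^{1/2}S$ and $K=|Q|^{1/2}Z^\dagger$. This gives $\trnorm{ZQZ^\dagger}\le\tr(Z|Q|Z^\dagger)\le\eta\,\tr(Z\Pi Z^\dagger)$.

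Your argument is shorter and reuses (i) directly. Run with $\pm|Q|$ in place of $\pm\eta\Pi$, it also yields the same intermediate bound $\tr(Z|Q|Z^\dagger)$. What the paper's factorization buys is that it does not need the same matrix on both sides. With left factor $Z$ and right factor $W$ it gives $\trnorm{ZQW^\dagger}\le\sqrt{\tr(Z|Q|Z^\dagger)\,\tr(W|Q|W^\dagger)}$. This is the same H\"older/Cauchy--Schwarz mechanism the paper uses for the cross terms in \cref{lem:weighted-norms}(c) and for the off-diagonal blocks in \cref{lem:conversion}. In those places no two-sided L\"owner sandwich by a congruence is available, so a congruence-based argument like yours would not apply.
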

\begin{proof}
(i) Let $\Pi_\pm$ be the spectral projections of $Q$ onto non-negative and negative eigenspaces.  Then $\trnorm Q=\tr(Q\Pi_+)-\tr(Q\Pi_-)\le\tr(G\Pi_+)+\tr(G\Pi_-)=\tr G$.
(ii) Write $Q=|Q|^{1/2}S|Q|^{1/2}$ with $\opnorm S\le1$. Then $ZQZ^{\dagger}=(Z|Q|^{1/2}S)(|Q|^{1/2}Z^{\dagger})$ and, by $\trnorm{MK}\le\hsnorm M\hsnorm K$, $\trnorm{ZQZ^{\dagger}}\le\tr(Z|Q|Z^{\dagger})\le\eta\tr(Z\Pi Z^{\dagger})$, using $|Q|\preceq\eta\Pi$.
\end{proof}

\subsection{Single-query testers}
A single-query experiment prepares a state on
$\mathcal H_{\mathrm A}\otimes\mathcal H_{\mathrm R}$, applies
$\calE\otimes\id_{\mathrm R}$, and measures the resulting state. Any mixed
input can be simulated by purifying it, enlarging the reference system, and
extending the measurement trivially to the purifying register. We may
therefore take the input to be a pure state
$\ket{\psi}\in\mathcal H_{\mathrm A}\otimes\mathcal H_{\mathrm R}$.

We allow the measurement to have an arbitrary standard-Borel outcome space
$\calY$. Thus the measurement is described by a POVM
\begin{equation}
\mathsf M:\mathcal B(\calY)\longrightarrow
\calL(\mathcal H_{\mathrm B}\otimes\mathcal H_{\mathrm R}),
\qquad
\mathsf M(\calY)=I_{\mathrm B}\otimes I_{\mathrm R}.
\end{equation}
For every Borel set $B\subseteq\calY$, the outcome law is
\begin{equation}\label{eq:povm-outcome-law}
\Pr_{\calE}\{Y\in B\}
=
\tr\!\left[
\mathsf M(B)
(\calE\otimes\id_{\mathrm R})(\ketbra{\psi}{\psi})
\right].
\end{equation}

Using the copied input space in \cref{eq:vec,eq:choi-def}, write
$\ket{\psi}=(I_{\mathrm A}\otimes L)\kett I$, where
$L:\mathcal H_{\mathrm A}\to\mathcal H_{\mathrm R}$. Then
\begin{equation}
L^\dagger L
=
\tr_{\mathrm R}(\ketbra{\psi}{\psi})^{\mathrm T}.
\end{equation}
Define the operator-valued tester measure
\begin{equation}\label{eq:tester-construction}
\mathsf T(B)
:=
(I_{\mathrm B}\otimes L^\dagger)
\mathsf M(B)
(I_{\mathrm B}\otimes L).
\end{equation}
Since the outcome state is
$(I_{\mathrm B}\otimes L)C_{\calE}(I_{\mathrm B}\otimes L^\dagger)$,
cyclicity of the trace gives
\begin{equation}\label{eq:tester}
\Pr_{\calE}\{Y\in B\}
=
\tr\bigl(\mathsf T(B)C_{\calE}\bigr),
\qquad
\mathsf T(\calY)
=
I_{\mathrm B}\otimes
\tr_{\mathrm R}(\ketbra{\psi}{\psi})^{\mathrm T}.
\end{equation}
The tester depends only on the chosen input and measurement, not on
$\calE$.

If $\calY$ is countable, then
$T_y:=\mathsf T(\{y\})$ and
\begin{equation}
\Pr_{\calE}\{Y=y\}=\tr(T_yC_{\calE}),
\qquad
\sum_{y\in\calY}T_y=\mathsf T(\calY).
\end{equation}

\paragraph{Incoherent protocol.}
An incoherent protocol repeats single-query experiments without retaining
quantum memory between rounds.

Specifically, an $n$-query incoherent protocol chooses, in round $t$, a state on $\mathcal{H}_{\mathrm{A}}\otimes \mathcal{H}_{\mathrm{R}_t}$ and a POVM on $\mathcal{H}_{\mathrm{B}}\otimes \mathcal{H}_{\mathrm{R}_t}$, as functions only of its previous classical outcomes, where $\mathrm{R}_t$ is the ancilla system used in round $t$.  It applies $\calE_{A\to B}\otimes\id_{\mathrm{R}_t}$ once and performs the POVM.  All quantum registers are then discarded.  After $n$ rounds it outputs a channel determined by the transcript.  The protocol is called nonadaptive when the inputs and POVMs are independent of earlier outcomes, otherwise it is called adaptive.

\subsection{Covering and packing nets}\label{sec:covering-packing}
We collect the elementary results of covering and packing nets used in this paper; see also
\cite[Secs.~4.2 and 4.4]{Vershynin18}.  All distances in this subsection are
Euclidean.  In particular, the unit sphere of $\Co^d$ is viewed as a subset
of $\R^{2d}$ when estimating cardinalities.

\begin{definition}[Covering and packing nets]\label{def:covering-packing}
Let $S$ be a subset of a finite-dimensional Euclidean space and let
$\eta>0$.  A finite subset $\calN\subseteq S$ is an
\emph{$\eta$-covering net} of $S$ if every $x\in S$ has some
$x_0\in\calN$ with $\|x-x_0\|_2\le\eta$.  A finite subset
$\mathcal{P}\subseteq S$ is an \emph{$\eta$-packing net} if
$\|x-y\|_2>\eta$ for any distinct $x,y\in\mathcal{P}$.
A packing net is \emph{maximal} if no further point of $S$ can be added
while preserving this separation; maximality is with respect to inclusion,
not a requirement of maximum cardinality.
\end{definition}

\begin{lemma}[Covering nets from packing nets]\label{lem:covering-cardinality}
Let $S$ be a nonempty subset of the Euclidean unit ball in $\R^m$.
For every $\eta>0$, there exists a maximal $\eta$-packing net of $S$.
Every such packing net is also an $\eta$-covering net and has cardinality
at most $(1+2/\eta)^m$.  Consequently, for $d\ge1$, the unit sphere of $\Co^d$ has an
$\eta$-covering net $\calN$ satisfying
\begin{equation}\label{eq:covering-cardinality}
 |\calN|\le\left(1+\frac2\eta\right)^{2d}.
\end{equation}
The same bounds hold for
the unit sphere of any $d$-dimensional complex subspace.
\end{lemma}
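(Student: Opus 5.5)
We use the standard volumetric argument, applied in two steps: existence of a maximal packing net, then its covering property and cardinality bound.

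\emph{Existence.} Build a packing greedily: start from any point of $S$, and as long as some point of $S$ lies at distance $>\eta$ from every chosen point, add it. The chosen points are pairwise $>\eta$ apart, so the open balls of radius $\eta/2$ around them are disjoint. Each such ball lies in the ball of radius $1+\eta/2$ about the origin, since $S$ sits in the unit ball. Comparing Lebesgue volumes in $\R^m$ gives
\[
|\mathcal P|\,(\eta/2)^m\le(1+\eta/2)^m ,
\qquad\text{i.e.}\qquad
|\mathcal P|\le(1+2/\eta)^m .
\]
This holds for \emph{every} $\eta$-packing net, so the greedy process terminates, and the final set is maximal by construction.

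\emph{Covering property.} If $\mathcal P$ is maximal and some $x\in S$ had $\|x-x_0\|_2>\eta$ for all $x_0\in\mathcal P$, then $\mathcal P\cup\{x\}$ would still be an $\eta$-packing. This contradicts maximality, so $\mathcal P$ is an $\eta$-covering net. The cardinality bound was already established above.

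\emph{Sphere cases.} Identify $\Co^d$ isometrically with $\R^{2d}$ via real and imaginary parts; the Euclidean norms agree. The unit sphere is a nonempty subset of the unit ball of $\R^{2d}$, so taking $m=2d$ gives \cref{eq:covering-cardinality}. For a $d$-dimensional complex subspace $V$ of some $\Co^{N}$, choose an orthonormal basis. This gives a unitary isometry $\Co^d\to V$, which preserves distances. Transporting a net of the unit sphere of $\Co^d$ yields a net of the unit sphere of $V$ with the same cardinality and covering radius. Alternatively, view the sphere of $V$ as a subset of the unit ball of the real $2d$-dimensional space $V$ and apply the first part with $m=2d$.

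The only mildly delicate point is termination of the greedy construction, which the volume bound guarantees. Alternatively, one can argue by compactness: the packing cardinality is bounded uniformly, so a packing of maximum cardinality exists, and it is automatically maximal with respect to inclusion.
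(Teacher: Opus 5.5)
Your proposal is correct and follows essentially the same route as the paper. In both, comparing the volumes of the disjoint $\eta/2$-balls inside the $(1+\eta/2)$-ball bounds every packing, which forces the greedy construction to terminate; maximality then gives the covering property, and taking $m=2d$ handles the complex unit sphere. Your isometry remark for $d$-dimensional complex subspaces spells out what the paper leaves implicit.
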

\begin{proof}
For any finite $\eta$-packing net $\mathcal{P}\subseteq S$, the Euclidean balls
of radius $\eta/2$ centered at its points are pairwise disjoint and lie
inside the ball of radius $1+\eta/2$ centered at the origin.
Writing $v_m$ for the volume of the unit ball in $\R^m$, comparison of
volumes gives
\[
 |\mathcal{P}|\,v_m(\eta/2)^m\le v_m(1+\eta/2)^m,
 \qquad |\mathcal{P}|\le(1+2/\eta)^m.
\]
Starting with one point of $S$ and repeatedly adding a point at distance
strictly greater than $\eta$ from all previously chosen points must
therefore terminate.  The resulting packing net is maximal.  Any maximal
$\eta$-packing net covers $S$ at radius $\eta$: otherwise, an uncovered
point could be added, contradicting maximality.  Apply the real-dimensional
bound with $m=2d$ to obtain \cref{eq:covering-cardinality}.
\end{proof}

\begin{lemma}[Norm approximation by covering nets]\label{lem:covering-norms}
Let $\calN_q$ be an $\eta$-covering net of the unit sphere of $\Co^q$,
where $0<\eta<1$.
\begin{enumerate}[leftmargin=2.2em,label=\rm(\roman*)]
\item For every $A\in\Co^{p\times q}$,
\begin{equation}\label{eq:covering-linear}
 \opnorm A\le\frac1{1-\eta}\max_{\ket{y}\in\calN_q}\|A\ket{y}\|_2.
\end{equation}
In particular, for every $\ket{z}\in\Co^q$,
\begin{equation}\label{eq:covering-vector}
 \|\ket{z}\|_2\le\frac1{1-\eta}
 \max_{\ket{y}\in\calN_q}|\braket{y}{z}|.
\end{equation}
\item If $\eta<1/2$ and $\calN_p$ is an $\eta$-covering net of the unit
sphere of $\Co^p$, then every $A\in\Co^{p\times q}$ satisfies
\begin{equation}\label{eq:bilinear-net}
 \opnorm A\le\frac1{1-2\eta}
 \max_{\substack{\ket{x}\in\calN_p\\\ket{y}\in\calN_q}}|\bra{x} A\ket{y}|.
\end{equation}
\item If $\eta<1/2$ and $M\in\Co^{q\times q}$ is Hermitian, then
\begin{equation}\label{eq:covering-hermitian}
 \opnorm M\le\frac1{1-2\eta}
 \max_{\ket{y}\in\calN_q}|\bra{y}M\ket{y}|.
\end{equation}
\item For $A\in\Co^{p\times q}$, define
$s_{\min}(A)=\min_{\|\ket{y}\|_2=1}\|A\ket{y}\|_2$.  Then
\begin{equation}\label{eq:covering-smin}
 s_{\min}(A)\ge\min_{\ket{y}\in\calN_q}\|A\ket{y}\|_2-\eta\opnorm A.
\end{equation}
\end{enumerate}
\end{lemma}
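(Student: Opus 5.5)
The plan is the standard net argument: for each target vector, pick the nearest net point and absorb the approximation error into the operator norm itself. Throughout, unit vectors are written as $\ket{y}=\ket{y_0}+\ket{e}$ with $\ket{y_0}$ in the relevant net and $\|\ket{e}\|_2\le\eta$.

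\textbf{Part (i).} Choose a unit $\ket{y}$ with $\|A\ket{y}\|_2=\opnorm A$. Then
\[
\opnorm A\le\|A\ket{y_0}\|_2+\|A\ket{e}\|_2\le\max_{\calN_q}\|A\ket{y'}\|_2+\eta\opnorm A .
\]
Rearranging gives \cref{eq:covering-linear}. For \cref{eq:covering-vector}, apply this to the $1\times q$ matrix $A=\bra{z}$. Its operator norm is $\|\ket{z}\|_2$, and $\|A\ket{y}\|_2=|\braket{z}{y}|=|\braket{y}{z}|$.

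\textbf{Part (ii).} Take unit $\ket{x},\ket{y}$ attaining $|\bra{x}A\ket{y}|=\opnorm A$, and approximate them by $\ket{x_0}+\ket{f}$ and $\ket{y_0}+\ket{e}$. Expanding the bilinear form gives
\[
\bra{x}A\ket{y}=\bra{x_0}A\ket{y_0}+\bra{f}A\ket{y}+\bra{x_0}A\ket{e}.
\]
The last two terms are each at most $\eta\opnorm A$, since $\ket{x_0}$ and $\ket{y}$ are unit vectors. Hence $\opnorm A\le\max|\bra{x_0}A\ket{y_0}|+2\eta\opnorm A$, and rearranging gives \cref{eq:bilinear-net}.

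\textbf{Part (iii).} For Hermitian $M$, choose a unit eigenvector $\ket{y}$ with $|\bra{y}M\ket{y}|=\opnorm M$. Then
\[
\bra{y}M\ket{y}-\bra{y_0}M\ket{y_0}=\bra{e}M\ket{y}+\bra{y_0}M\ket{e},
\]
which has modulus at most $2\eta\opnorm M$. Rearranging gives \cref{eq:covering-hermitian}.

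\textbf{Part (iv).} Take unit $\ket{y}$ attaining $s_{\min}(A)$ and its net approximation $\ket{y_0}$. Then
\[
\|A\ket{y_0}\|_2\le\|A\ket{y}\|_2+\|A(\ket{y_0}-\ket{y})\|_2\le s_{\min}(A)+\eta\opnorm A .
\]
Taking the minimum over the net on the left gives \cref{eq:covering-smin}.

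None of these steps is a real obstacle. The only point needing care is that maximizers exist, which holds by compactness of the unit sphere. It is also worth noting that in (ii) and (iii) the $\eta$-term is bounded using the unit norm of the \emph{other} vector. This is what keeps the loss at exactly $2\eta$ rather than $2\eta+\eta^2$.
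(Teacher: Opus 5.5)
Your proof is correct and follows essentially the same route as the paper. Each unit vector is approximated by a net point, the error term is bounded by $\eta\opnorm A$ (using in (ii) and (iii) that net points are themselves unit vectors), and then one rearranges; the only cosmetic difference is that you fix extremizers via compactness, whereas the paper bounds the discrepancy for every unit vector and then takes the supremum or infimum.
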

\begin{proof}
For each unit vector $\ket{y}$, choose $\ket{y_0}\in\calN_q$ with
$\|\ket{y}-\ket{y_0}\|_2\le\eta$.  The triangle inequality gives
\[
 \bigl|\|A\ket{y}\|_2-\|A\ket{y_0}\|_2\bigr|
 \le\|A(\ket{y}-\ket{y_0})\|_2\le\eta\opnorm A.
\]
Taking the supremum over $\ket{y}$ and rearranging proves
\cref{eq:covering-linear}; applying it to the linear functional
$A=\bra{z}$ proves \cref{eq:covering-vector}.  Taking the infimum
instead proves \cref{eq:covering-smin}.

For (ii), given unit vectors $\ket{x},\ket{y}$, choose $\ket{x_0}\in\calN_p$ and
$\ket{y_0}\in\calN_q$ within distance $\eta$.  Since $\ket{x_0}$ is also a unit vector,
\begin{align*}
 |\bra{x} A\ket{y}-\bra{x_0} A\ket{y_0}|
 &\le|(\bra{x}-\bra{x_0}) A\ket{y}|+|\bra{x_0} A(\ket{y}-\ket{y_0})|\\
 &\le2\eta\opnorm A.
\end{align*}
Use $\opnorm A=\sup_{\|\ket{x}\|_2=\|\ket{y}\|_2=1}|\bra{x} A\ket{y}|$ and rearrange.
For (iii), the same estimate with $\ket{x}=\ket{y}$ and $\ket{x_0}=\ket{y_0}$ gives
\[
 |\bra{y}M\ket{y}-\bra{y_0}M\ket{y_0}|
 \le2\eta\opnorm M.
\]
Now use the Hermitian identity
$\opnorm M=\sup_{\|\ket{y}\|_2=1}|\bra{y}M\ket{y}|$ and rearrange.
\end{proof}

\begin{corollary}[Covering-net union bounds]\label{cor:covering-union}
For a random Hermitian $d\times d$ matrix $M$ and a random complex
$p\times q$ matrix $A$, the following hold for every $t>0$:
\begin{align}
 \Pr\{\opnorm M>2t\}
 &\le9^{2d}\sup_{\|\ket{x}\|_2=1}
 \Pr\{|\bra{x}M\ket{x}|>t\},
 \label{eq:covering-union-hermitian}\\
 \Pr\{\opnorm A>2t\}
 &\le9^{2(p+q)}\sup_{\|\ket{x}\|_2=\|\ket{y}\|_2=1}
 \Pr\{|\bra{x} A\ket{y}|>t\}.
 \label{eq:covering-union-bilinear}
\end{align}
In the second line, $\ket{x}\in\Co^p$ and $\ket{y}\in\Co^q$.
\end{corollary}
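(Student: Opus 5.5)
The plan is to combine \cref{lem:covering-cardinality} with \cref{lem:covering-norms}, taking the covering radius $\eta=1/4$, and then apply a union bound over the net. With $\eta=1/4$ we have $1+2/\eta=9$, so \cref{lem:covering-cardinality} supplies a $1/4$-covering net $\calN_d$ of the unit sphere of $\Co^d$ with $|\calN_d|\le 9^{2d}$. Since $\eta<1/2$, the constant in \cref{lem:covering-norms} is $1/(1-2\eta)=2$.

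\textbf{Hermitian case.} Fix a realization of $M$. By \cref{eq:covering-hermitian},
\[
\opnorm M\le 2\max_{\ket{y}\in\calN_d}|\bra{y}M\ket{y}|.
\]
So if $\opnorm M>2t$, there must be some $\ket{y}\in\calN_d$ with $|\bra{y}M\ket{y}|>t$. The net is deterministic, since it does not depend on $M$, so we can union bound over its points:
\[
\Pr\{\opnorm M>2t\}\le\sum_{\ket{y}\in\calN_d}\Pr\{|\bra{y}M\ket{y}|>t\}\le 9^{2d}\sup_{\|\ket{x}\|_2=1}\Pr\{|\bra{x}M\ket{x}|>t\}.
\]

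\textbf{Bilinear case.} Take $1/4$-nets $\calN_p$ and $\calN_q$ of the unit spheres of $\Co^p$ and $\Co^q$. Their cardinalities are at most $9^{2p}$ and $9^{2q}$, so the product set has at most $9^{2(p+q)}$ pairs. By \cref{eq:bilinear-net}, $\opnorm A>2t$ forces some pair $(\ket{x},\ket{y})\in\calN_p\times\calN_q$ with $|\bra{x}A\ket{y}|>t$. A union bound over all pairs then gives \cref{eq:covering-union-bilinear}.

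\textbf{Measurability.} The only point needing care is that the events involved are measurable. This holds because $\opnorm{\cdot}$ and each $\bra{x}\cdot\ket{y}$ are continuous functions of the random matrix. There is no real obstacle beyond this check.
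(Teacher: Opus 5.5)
Your proof is correct and follows the paper's argument exactly. You use deterministic $1/4$-covering nets from \cref{lem:covering-cardinality} (of sizes $9^{2d}$, and $9^{2p}\cdot 9^{2q}$ for the product net), get the factor $1/(1-2\eta)=2$ from \cref{eq:covering-hermitian} and \cref{eq:bilinear-net}, and finish with a union bound; the measurability remark is a harmless extra check that the paper leaves implicit.
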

\begin{proof}
Choose deterministic $1/4$-covering nets as in
\cref{lem:covering-cardinality}.  For $M$, apply
\cref{eq:covering-hermitian} and a union bound over one covering net.
For $A$, apply \cref{eq:bilinear-net} and a union bound over the Cartesian
product of two covering nets.  The respective cardinality bounds are
$9^{2d}$ and $9^{2(p+q)}$.  
\end{proof}

\subsection{Haar-random basis measurement}\label{sec:haar-snapshots}
\begin{definition}[Haar-random basis]\label{def-9150318}
Let $U\in\mathbb{U}_d$ be a Haar-random unitary. We call $\{U\ket{1},\ldots,U\ket{d}\}$ a Haar-random basis on the $d$-dimensional Hilbert space $\textup{span}\{\ket{1},\ldots,\ket{d}\}$.
\end{definition}
\begin{definition}[Haar snapshot]\label{def-9150321}
Let $\rho$ be a quantum state on $\mathbb{C}^d$. Suppose we apply a Haar-random basis measurement on $\rho$, obtaining an outcome vector $\ket{\psi}$. We call
\begin{equation}\label{eq:ls}
Y=(d+1)\ketbra{\psi}{\psi}-I_d,
\end{equation}
a \emph{Haar snapshot} of $\rho$. {For $n$ independent repetitions, obtaining outcome vectors $\ket{\psi_1}, \ket{\psi_2}, \dots, \ket{\psi_n}$, we call the average of the resulting snapshots an $n$-snapshot estimator:
\begin{equation}\label{eq:ls-n}
\widehat{X}_n=\frac{1}{n}\sum_{t=1}^n \left((d+1)\ketbra{\psi_t}{\psi_t}-I_d\right).
\end{equation}
}
\end{definition}
The snapshot $Y$ has trace one and satisfies $\E Y=\rho$, but need not be positive semidefinite.
To establish unbiasedness and concentration, we use the following concentration inequality.

{
\begin{fact}[Scalar Bernstein for subexponential random variables
{\cite[Proposition~2.7.1 and Theorem~2.8.1]{Vershynin18}}]\label{fact:bernstein}
For a real-valued random variable $Z$, define
\[
 \|Z\|_{\psi_1}:=\inf\bigl\{s>0:\ \E\exp(|Z|/s)\le2\bigr\}.
\]
Up to absolute constants,
$\|Z\|_{\psi_1}=\sup_{p\ge1}p^{-1} \left(\E|Z|^p\right)^{1/p}$.
Let $Z_1,\ldots,Z_n$ be independent centered real-valued random variables,
put $K_t=\|Z_t\|_{\psi_1}$, $V=\sum_tK_t^2$, and
$K=\max_tK_t$.  There is an absolute constant $c>0$ such that, for every
$s\ge0$,
\begin{equation*}
 \Pr\left\{\left|\sum_{t=1}^n Z_t\right|>s\right\}
 \le2\exp\left(-c\min\left\{\frac{s^2}{V},\frac{s}{K}\right\}\right).
\end{equation*}
In particular, if $K_t\le K_0$ for every $t$, then there is an absolute
constant $c_1>0$ such that, for every $u\ge1$,
\[
 \Pr\left\{\left|\frac1n\sum_{t=1}^n Z_t\right|>
 c_1 K_0\left(\sqrt{\frac un}+\frac un\right)\right\}\le2e^{-u}.
\]
The same conclusions hold for complex-valued variables, up to a change of
the absolute constants, by applying the real-valued statement to their real
and imaginary parts.
\end{fact}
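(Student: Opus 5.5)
This is a standard fact, and the plan is to reconstruct it from the moment-generating-function method rather than cite it blindly. \emph{First}, for the equivalence $\|Z\|_{\psi_1}\asymp\sup_{p\ge1}p^{-1}(\E|Z|^p)^{1/p}$:
\begin{itemize}
\item If $\E\exp(|Z|/s)\le2$, then $|Z|^p/(s^pp!)\le e^{|Z|/s}$ gives $\E|Z|^p\le2\,p!\,s^p$. Since $(2\,p!)^{1/p}\le Cp$, this is one direction.
\item Conversely, if $(\E|Z|^p)^{1/p}\le Lp$ for all $p$, expand $\E e^{|Z|/s}=\sum_p\E|Z|^p/(s^pp!)$. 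Using $p^p/p!\le e^p$, the series is at most $1+\sum_{p\ge1}(eL/s)^p\le2$ once $s=2eL$.
\end{itemize}

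\emph{Second}, the key MGF estimate: for centered $Z$ with $\|Z\|_{\psi_1}\le K$ and $|\lambda|\le c_0/K$, we have $\E e^{\lambda Z}\le\exp(C\lambda^2K^2)$. To see this, Taylor expand, using $\E Z=0$ to kill the linear term. Bound the remaining terms by $\sum_{p\ge2}|\lambda|^p(CpK)^p/p!\le\sum_{p\ge2}(C'|\lambda|K)^p$. This is $O(\lambda^2K^2)$ when $C'|\lambda|K\le1/2$. Finally use $1+x\le e^x$.

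\emph{Third}, apply this to the sum. By independence, $\E e^{\lambda\sum_tZ_t}\le\exp(C\lambda^2V)$ for $|\lambda|\le c_0/K$. Markov's inequality then gives
\[
\Pr\Bigl\{\sum_t Z_t>s\Bigr\}\le\exp(-\lambda s+C\lambda^2V).
\]
Optimize over $\lambda$:
\begin{itemize}
\item If the unconstrained optimizer $\lambda=s/(2CV)$ is admissible, the exponent is $-s^2/(4CV)$.
\item Otherwise take $\lambda=c_0/K$. Then $C\lambda^2V\le\lambda s/2$, and the exponent is at most $-c_0s/(2K)$.
\end{itemize}
Repeating the argument for $-Z_t$ and taking a union bound gives the factor $2$ and the stated $\min\{s^2/V,s/K\}$ form. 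The only delicate point in the whole argument is this case split, together with tracking that the constants are absolute.

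\emph{Fourth}, the averaged corollary. With $K_t\le K_0$ we have $V\le nK_0^2$ and $K\le K_0$. Set $s=c_1K_0(\sqrt{un}+u)$. Then $s^2/V\ge c_1^2u$ and $s/K\ge c_1u$. Choosing $c_1\ge\max\{1,1/c\}$ makes the exponent at least $u$, which gives the bound $2e^{-u}$ after dividing by $n$.

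\emph{Fifth}, the complex case. Since $|\Re Z|,|\Im Z|\le|Z|$, both parts have $\psi_1$ norm at most $\|Z\|_{\psi_1}$ and remain centered. Apply the real statement to each part with $s/\sqrt2$ and use $|z|\le\sqrt2\max(|\Re z|,|\Im z|)$. A union bound gives a factor $4$ instead of $2$, which is absorbed into the constants (for $u\ge1$, replace $u$ by $u+\log2$ at the cost of adjusting $c_1$).
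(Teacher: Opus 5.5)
Your argument is correct and follows the standard route behind the results the paper cites without proof (Vershynin, Proposition~2.7.1 and Theorem~2.8.1): the $\psi_1$/moment equivalence by expanding $e^{|Z|/s}$, the centered MGF bound $\E e^{\lambda Z}\le\exp(C\lambda^2K^2)$ for $|\lambda|\lesssim 1/K$, and a Chernoff bound with a two-regime choice of $\lambda$. The one detail you leave implicit is non-integer $p$ in the norm equivalence, which you can handle by passing to $\lceil p\rceil$ via monotonicity of $L^p$ norms, or by reading $p!$ as $\Gamma(p+1)$.
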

}

\begin{lemma}[Haar outcomes and snapshot moments]\label{lem:haar-moments}
Let $\rho$ be a quantum state on $\mathbb{C}^d$ and let $\ket{\psi_1},\ldots,\ket{\psi_n}$ be the
independent outcome vectors obtained by measuring $\rho$ in Haar-random
orthonormal bases.  Let $\widehat{X}_n =\frac1n\sum_{j=1}^nY_j$, with
$Y_j=(d+1)\ketbra{\psi_j}{\psi_j}-I_d$ as in \cref{eq:ls}, and
write $Y=(d+1)\ketbra\psi\psi-I_d$ for a single Haar snapshot.  Then:
\begin{enumerate}[leftmargin=2.2em,label=\rm(\roman*)]
\item the observed vector $\ket{\psi}$ has density
      $p_\rho(\ket{\psi})=d\bra{\psi}\rho\ket{\psi}\le d$ with respect to the uniform
      (Haar) probability measure on the unit sphere of $\Co^{d}$;
\item $\E Y=\rho$, $\E Y^{2}=(d-1)\rho+dI$, $\E(Y-\rho)^{2}\preceq2dI$ and
      $\opnorm{Y-\rho}\le d+1$;
\item \label{item-9150411}
      { there is an absolute constant $c_{\rm H}$ such that, for
      every $0<\zeta<1$,
      \begin{equation}\label{eq:haar-net-concentration}
       \Pr\left\{\opnorm{\widehat{X}_n-\rho}>
       c_{\rm H}\left(\sqrt{\frac{d+\log(2/\zeta)}{n}}
       +\frac{d+\log(2/\zeta)}{n}\right)\right\}\le\zeta;
      \end{equation}}
\item \label{item-9150413} if $U$ is a unitary with $U\rho \, U^{\dagger}=\rho$, then
      $U\widehat{X}_n \, U^{\dagger}$ has the same law as $\widehat{X}_n $.
\end{enumerate}
\end{lemma}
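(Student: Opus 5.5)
I will prove the four parts in order; each uses only the unitary invariance of the Haar measure and elementary moment computations.

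For (i), I will fix an orthonormal basis $\{U\ket{k}\}$. Outcome $k$ occurs with probability $\bra{k}U^\dagger\rho U\ket{k}$. Let $\ket{\psi}=U\ket{k}$ with $k$ chosen this way. For any bounded test function $g$ on the sphere, $\E g(\ket\psi)=\sum_k\E_U[\bra{k}U^\dagger\rho U\ket{k}\,g(U\ket k)]$. For each fixed $k$, $U\ket k$ is uniform on the sphere, so each term equals $\int \bra{\phi}\rho\ket{\phi}g(\phi)\,d\phi$. Summing the $d$ terms gives density $d\bra\psi\rho\ket\psi\le d\opnorm\rho\le d$. (Global phases are irrelevant, since $Y$ depends only on $\ketbra\psi\psi$.)

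For (ii), I will use (i) together with the standard moments $\int\ketbra\phi\phi\,d\phi=I/d$ and $\int\ketbra\phi\phi\bra\phi\rho\ket\phi\,d\phi=(\rho+I)/(d(d+1))$ (the latter from the symmetric-subspace projector). Then $\E\ketbra\psi\psi=(\rho+I)/(d+1)$, hence $\E Y=\rho$. Since $(\ketbra\psi\psi)^2=\ketbra\psi\psi$, we have $Y^2=(d+1)(d-1)\ketbra\psi\psi+I$. Taking expectations gives $\E Y^2=(d-1)(\rho+I)+I=(d-1)\rho+dI$. Then $\E(Y-\rho)^2=\E Y^2-\rho^2\preceq(d-1)\rho+dI$, and it remains to show this is at most $2dI$, which holds because $\rho\preceq I$. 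For the norm bound, $Y-\rho=(d+1)\ketbra\psi\psi-(I+\rho)$ is a difference of two PSD matrices with norms at most $d+1$ and $2$, so $\opnorm{Y-\rho}\le d+1$.

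For (iv), I will use that $\ketbra{\psi}{\psi}\mapsto U\ketbra\psi\psi U^\dagger$ preserves the Haar measure. The density from (i) transforms as $\bra{\psi}U^\dagger\rho U\ket\psi=\bra\psi\rho\ket\psi$, so $U\ket{\psi_t}$ has the same law as $\ket{\psi_t}$. By independence across $t$, and because $UIU^\dagger=I$, $U\widehat X_nU^\dagger$ has the same law as $\widehat X_n$.

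Part (iii) is the main step. I will apply \cref{cor:covering-union} (the Hermitian version) to $M=\widehat X_n-\rho$. This reduces the claim to a uniform tail bound for the scalars $Z_t=\bra{x}(Y_t-\rho)\ket{x}=(d+1)|\braket{x}{\psi_t}|^2-1-\bra x\rho\ket x$ for a fixed unit vector $x$. The key claim is that $\|Z_t\|_{\psi_1}\le K_0$ for an absolute constant $K_0$, independent of $d$. To prove it, I will bound $\E|(d+1)|\braket x{\psi}|^2|^p$ using (i). Its density is at most $d$ times the uniform measure, so a crude bound gives a factor $d$, which is too lossy. Instead I will use the exact density $d\bra\psi\rho\ket\psi$ and Hölder's inequality. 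Under the uniform measure, $w=d|\braket{x}{\phi}|^2$ is Beta-distributed, approximately $\mathrm{Exp}(1)$, with $\E w^{p}\le p!$. I will bound $\E_\rho[w^p]=\int d\bra\phi\rho\ket\phi\,w^p$ by writing $\rho=\sum_i\lambda_i\ketbra{e_i}{e_i}$. This gives $\sum_i\lambda_i\E_{\rm unif}[d|\braket{e_i}\phi|^2 w^p]\le\sum_i\lambda_i(\E v^2)^{1/2}(\E w^{2p})^{1/2}\le C\sqrt{(2p)!}$, where $v=d|\braket{e_i}{\phi}|^2$. Then $(\E|Z|^p)^{1/p}\lesssim p$, so $Z$ is subexponential with an absolute constant. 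The main obstacle is exactly avoiding the $d$ loss here, and this Hölder step handles it. Bernstein's inequality (\cref{fact:bernstein}) with $u=2\cdot 2d\log 9+\log(2/\zeta)$ then gives a probability at most $2e^{-u}$ for each direction. Multiplying by $9^{2d}$ leaves at most $\zeta$, and since $u\lesssim d+\log(2/\zeta)$, adjusting $c_{\rm H}$ yields \cref{eq:haar-net-concentration}.
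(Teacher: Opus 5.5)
Your proof is correct. For (i), (ii), (iv), and for the overall structure of (iii), it coincides with the paper's: a uniform $\psi_1$ bound in a fixed direction, then scalar Bernstein, then the $1/4$-net union bound of \cref{cor:covering-union}.

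The one real difference is how you control the moments of $Z=(d+1)|\braket{x}{\psi}|^2$ under the tilted law with density $d\bra{\psi}\rho\ket{\psi}$.

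\begin{itemize}
\item The paper computes these moments exactly by treating the density as an extra tensor factor:
\[
\E Z^k=\frac{d(d+1)^k}{\binom{d+k}{k+1}}\tr\bigl[(\rho\otimes P_x^{\otimes k})\Pi^{(k+1)}_{\rm sym}\bigr]\le(k+1)!,
\]
using that each permutation contributes a product of cycle traces, each at most one because $P_x$ is a rank-one projector.
\item You instead separate the density from the test function by Cauchy--Schwarz. This costs a factor $\bigl(\E_{\rm Haar}[(d\bra{\phi}\rho\ket{\phi})^2]\bigr)^{1/2}\le\sqrt2$ and replaces $p!$ by $\sqrt{(2p)!}$. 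You still get $(\E|Z|^p)^{1/p}=O(p)$, so the $\psi_1$ norm is an absolute constant.
\end{itemize}

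The paper's exact computation gives a somewhat sharper moment bound, affecting only constants, but it relies on the rank-one structure of $P_x$. Your argument is more modular: it needs only the uniform-measure Beta moments and the second moment of the density. It is exactly the device the paper itself uses later, in Step~2 of \cref{lem:weighted-conc}, where the test matrix is not rank one.

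Your spectral decomposition of $\rho$ is unnecessary. You can apply Cauchy--Schwarz directly to $d\bra{\phi}\rho\ket{\phi}$, whose second Haar moment is $d(1+\tr\rho^2)/(d+1)\le 2$.
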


\begin{proof}
(i) Each vector of a Haar-random orthonormal basis is marginally Haar
distributed on the sphere, and outcome $i$ occurs with Born probability
$\bra{\psi_i}\rho\ket{\psi_i}$.  Hence for a test function $f$,
$\E f(\ket{\psi})=\sum_{i=1}^{d}\E_{\rm Haar}
[\bra{\psi_i}\rho\ket{\psi_i} f(\ket{\psi_i})]
 =d\,\E_{\ket{\psi}\sim\rm Haar}[\bra{\psi}\rho\ket{\psi} f(\ket{\psi})]= \E_{\ket{\psi}\sim\rm Haar}[p_\rho(\ket{\psi}) f(\ket{\psi})]$.

For (ii) and (iii) we use the standard moment formula
\begin{equation}\label{eq:sym-moment}
 \E_{\ket{\varphi}\sim\rm Haar}\bigl[(\ketbra{\varphi}{\varphi})^{\otimes k}\bigr]
 =\binom{d+k-1}{k}^{-1}\Pi_{\rm sym}^{(k)},
 \qquad
 \Pi_{\rm sym}^{(k)}=\frac1{k!}\sum_{\pi\in S_k}W_\pi ,
\end{equation}
where $W_\pi = \sum_{i_1, \dots, i_k} \ket{i_1i_2\cdots i_k}\bra{i_{\pi^{-1}(1)} i_{\pi^{-1}(2)} \cdots i_{\pi^{-1}(k)}} $ permutes tensor factors according to the permutation $\pi\in S_k$: the left side is a positive operator
commuting with every $U^{\otimes k}$ and supported on the symmetric subspace,
hence proportional to $\Pi^{(k)}_{\rm sym}$ by Schur's lemma, and the constant
is fixed by the trace \cite[Sec.~7.1]{Watrous18}.

(ii) By (i) and \cref{eq:sym-moment} with $k=2$,
\[
 \E\ketbra{\psi}{\psi}=d\,\tr_1\bigl[(\rho\otimes I)\E_{\rm Haar}
 (\ketbra{\varphi}{\varphi})^{\otimes2}\bigr]
 =\frac{2d}{d(d+1)}\tr_1\Bigl[(\rho\otimes I)\frac{I+W_{(12)}}2\Bigr]
 =\frac{I+\rho}{d+1},
\]
hence $\E Y=\rho$.  Since
$Y^{2}=(d+1)(d-1)\ketbra{\psi}{\psi}+I$ we get $\E Y^{2}=(d-1)\rho+dI$, and
$\E(Y-\rho)^{2}=\E Y^{2}-\rho^{2}\preceq(d-1)I+dI\preceq2dI$.  The eigenvalues of
$Y$ are $d$ and $-1$, so $\opnorm{Y-\rho}\le d+1$.

(iii)
{We next prove the dimension-log-free estimate in
\cref{eq:haar-net-concentration}.  Fix a unit vector $\ket{x}$, put
$P_x=\ketbra{x}{x}$, and set
$Z=(d+1)|\braket{x}{\psi}|^2$.  For every integer $k\ge1$, (i) and
\cref{eq:sym-moment} give
\[
 \E Z^k
 =\frac{d(d+1)^k}{\binom{d+k}{k+1}}
   \tr\!\left[(\rho\otimes P_x^{\otimes k})\Pi_{\rm sym}^{(k+1)}\right]
 \le (k+1)!.
\]
Indeed, after expanding the symmetrizer, every permutation contributes either
$\tr\rho=1$ or $\tr(\rho P_x)=\bra{x}\rho\ket{x}\le1$, and
$d(d+1)^k/\binom{d+k}{k+1}\le(k+1)!$.  Since
$\bra{x}(Y-\rho)\ket{x}=Z-\E Z$, the elementary inequality
$|a-b|^k\le2^{k-1}(|a|^k+|b|^k)$ shows that these centered random variables
have subexponential norm $\|Z-\E Z\|_{\psi_1}$ bounded by an absolute constant, uniformly in
$d$, $\rho$, and $\ket{x}$.  Scalar Bernstein (\cref{fact:bernstein}) therefore yields absolute constants
$c,c_1>0$ such that
\[
 \Pr\left\{\left|\bra{x}\bigl(
 \widehat{X}_n-\rho\bigr)\ket{x}\right|>t\right\}
 \le2\exp\bigl(-cn\min\{t^2,t\}\bigr).
\]
Apply the $1/4$-covering-net union bound in
\cref{eq:covering-union-hermitian} to $M=\widehat{X}_n-\rho$ to obtain
\[
 \Pr\{\opnorm{\widehat{X}_n-\rho}>2t\}
 \le2\exp\bigl(2d\log9-cn\min\{t^2,t\}\bigr).
\]
Taking $t=c_1(\sqrt{(d+\log(2/\zeta))/n}+(d+\log(2/\zeta))/n)$ proves
\cref{eq:haar-net-concentration} after enlarging the absolute constant.}

(iv) By (i) the density of $U\ket{\psi}$ is $d\bra{\psi}U^{\dagger}\rho U\ket{\psi}
=d\bra{\psi}\rho \ket{\psi}$, so $U\ket{\psi}$ and $\ket{\psi}$ have the same law; the
snapshots are independent and identically distributed.
\end{proof}

\paragraph{Haar experiments.}
For the upper bounds in \cref{sec:upper}, we use single-query experiments to obtain Haar
snapshots of the joint output state. Let $\calE\in\qchannel_{d_1,d_2}^r$
have Kraus operators $K_1,\ldots,K_r$, and let
$\sigma_{\mathrm{in}}\in\calD(\mathcal{H}_{\mathrm{A}})$. Prepare its purification
$\kett{\sqrt{\sigma_{\mathrm{in}}}}\in\mathcal{H}_{\mathrm{A}}\otimes
\mathcal{H}_{\mathrm{A}}$ and apply $\calE$ to the first system (in the rest of the paper, when a quantum channel acts on a bipartite system, we assume it acts on the first subsystem, with the identity on the second subsystem).
The resulting joint state is
\begin{equation}\label{eq:weighted-choi}
 \begin{aligned}
 \calE(\kettbbra{\sqrt{\sigma_{\mathrm{in}}}}{\sqrt{\sigma_{\mathrm{in}}}})
 &\coloneqq  (\calE\otimes \id)(\kettbbra{\sqrt{\sigma_{\mathrm{in}}}}{\sqrt{\sigma_{\mathrm{in}}}})\\
 &=(I_{\mathrm{B}}\otimes\sqrt{\sigma_{\mathrm{in}}^{\mathrm T}})C_\calE        (I_{\mathrm{B}}\otimes\sqrt{\sigma_{\mathrm{in}}^{\mathrm T}})\\
 &=\sum_i\kett{K_i\sqrt{\sigma_{\mathrm{in}}}}\bbra{K_i\sqrt{\sigma_{\mathrm{in}}}}.
 \end{aligned}
\end{equation}
It satisfies $\rank \calE(\kettbbra{\sqrt{\sigma_{\mathrm{in}}}}{\sqrt{\sigma_{\mathrm{in}}}})\le r$, $\tr_{\mathrm{B}}\calE(\kettbbra{\sqrt{\sigma_{\mathrm{in}}}}{\sqrt{\sigma_{\mathrm{in}}}})=\sigma_{\mathrm{in}}^{\mathrm T}$, and $\tr \calE(\kettbbra{\sqrt{\sigma_{\mathrm{in}}}}{\sqrt{\sigma_{\mathrm{in}}}})=1$.

Measure this state in an independently chosen Haar-random orthonormal
basis of $\mathcal{H}_{\mathrm{B}}\otimes\mathcal{H}_{\mathrm{A}}$.
The observed basis vector $\ket\psi$ yields the Haar snapshot
$Y=(D+1)\ketbra{\psi}{\psi}-I_D$ from \cref{eq:ls}, with $D=d_1d_2$.
Conditional on the chosen basis, the measurement has $D$ discrete outcomes.

\subsection{Weighted Choi operators and regularization}\label{sec:dictionary}
Fix a channel $\calE$ of Kraus rank $r$ and a positive definite
$\sigma_{\mathrm{in}}\in\calD(\mathcal{H}_{\mathrm{A}})$.  The following definitions relate
estimates of the joint output state $\calE(\kettbbra{\sqrt{\sigma_{\mathrm{in}}}}{\sqrt{\sigma_{\mathrm{in}}}})$ in \cref{eq:weighted-choi}
to channel reconstruction and diamond-norm error.

\paragraph{Reconstruction map.}
For each $M\in\calL(\mathcal{H}_{\mathrm{B}}\otimes\mathcal{H}_{\mathrm{A}})$,
we use $\Phi_{\sigma_{\mathrm{in}}}(M)$ to denote a linear map from $\calL(\mathcal{H}_{\mathrm{A}})$ to $\calL(\mathcal{H}_{\mathrm{B}})$ by
\begin{equation*}
 (\Phi_{\sigma_{\mathrm{in}}}(M)\otimes\id_{\mathrm{A}})
 (\kettbbra{\sqrt{\sigma_{\mathrm{in}}}}{\sqrt{\sigma_{\mathrm{in}}}})=M.
\end{equation*}
Since $\sigma_{\mathrm{in}}\succ0$, this determines $\Phi_{\sigma_{\mathrm{in}}}(M)$ uniquely.
Equivalently, its Choi operator is
\begin{equation}\label{eq:Lomega}
 C_{\Phi_{\sigma_{\mathrm{in}}}(M)}
 =(I_{\mathrm{B}}\otimes(\sigma_{\mathrm{in}}^{\mathrm T})^{-1/2})M
       (I_{\mathrm{B}}\otimes(\sigma_{\mathrm{in}}^{\mathrm T})^{-1/2}).
\end{equation}
We can see that $\Phi_{\sigma_{\mathrm{in}}}(\cdot)$ is also a linear map and $\Phi_{\sigma_{\mathrm{in}}}(M)$ is not necessarily a quantum channel.
In fact, $\Phi_{\sigma_{\mathrm{in}}}(M)$ is a channel exactly when
$M\succeq0$ and $\tr_{\mathrm{B}}M=\sigma_{\mathrm{in}}^{\mathrm T}$.  In particular,
\begin{equation*}
 \Phi_{\sigma_{\mathrm{in}}}(\calE(\kettbbra{\sqrt{\sigma_{\mathrm{in}}}}{\sqrt{\sigma_{\mathrm{in}}}}))=\calE.
\end{equation*}
So we call $\Phi_{\sigma_{\mathrm{in}}}(\cdot)$ the reconstruction map and $\Phi_{\sigma_{\mathrm{in}}}(M)$ the reconstructed map from $M$ using input state $\sigma_{\mathrm{in}}$.

Taking the partial trace and transpose in \cref{eq:Lomega} gives
\begin{equation}\label{eq:momega}
 \bigl(\tr_{\mathrm{B}}C_{\Phi_{\sigma_{\mathrm{in}}}(M)}\bigr)^{\mathrm T}
 =\sigma_{\mathrm{in}}^{-1/2}(\tr_{\mathrm{B}}M)^{\mathrm T}\sigma_{\mathrm{in}}^{-1/2}.
\end{equation}
For fixed $\sigma_{\mathrm{in}}$, the map
$M\mapsto(\tr_{\mathrm{B}}C_{\Phi_{\sigma_{\mathrm{in}}}(M)})^{\mathrm T}$ is linear and
positive.  For Hermitian $M$,
\begin{equation}\label{eq:normalization-adjoint}
 \Phi_{\sigma_{\mathrm{in}}}(M)^\dagger(I_{\mathrm{B}})
 =\bigl(\tr_{\mathrm{B}}C_{\Phi_{\sigma_{\mathrm{in}}}(M)}\bigr)^{\mathrm T},
\end{equation}
and note that $\tr\Phi_{\sigma_{\mathrm{in}}}(M)(\rho)=\tr\bigl(\rho\,\Phi_{\sigma_{\mathrm{in}}}(M)^\dagger(I_{\mathrm{B}})\bigr)$ for every state $\rho$.

\begin{lemma}[Weighted diamond-norm estimates]\label{lem:weighted-norms}
For $\rho\in\calD(\mathcal{H}_{\mathrm{A}})$, define
\begin{gather*}
 L_{\rho,\sigma_{\mathrm{in}}}
 :=I_{\mathrm{B}}\otimes
       \bigl(\sqrt{\rho^{\mathrm T}}\,(\sigma_{\mathrm{in}}^{\mathrm T})^{-1/2}\bigr),\\
 L_{\rho,\sigma_{\mathrm{in}}}\calE(\kettbbra{\sqrt{\sigma_{\mathrm{in}}}}{\sqrt{\sigma_{\mathrm{in}}}})L_{\rho,\sigma_{\mathrm{in}}}^{\dagger}=\calE(\kettbbra{\sqrt{\rho}}{\sqrt{\rho}}).
\end{gather*}
For every $M\in\calL(\mathcal{H}_{\mathrm{B}}\otimes\mathcal{H}_{\mathrm{A}})$,
\begin{equation}\label{eq:trace-transfer}
 \tr\bigl(L_{\rho,\sigma_{\mathrm{in}}}ML_{\rho,\sigma_{\mathrm{in}}}^{\dagger}\bigr)
 =\tr\bigl(\rho\,(\tr_{\mathrm{B}}C_{\Phi_{\sigma_{\mathrm{in}}}(M)})^{\mathrm T}\bigr),
\end{equation}
and for Hermitian $M$,
\begin{equation}\label{eq:diamond-weighted}
 \dnorm{\Phi_{\sigma_{\mathrm{in}}}(M)}
 =\max_{\rho\in\calD(\mathcal{H}_{\mathrm{A}})}
       \trnorm{L_{\rho,\sigma_{\mathrm{in}}}ML_{\rho,\sigma_{\mathrm{in}}}^{\dagger}}.
\end{equation}
Consequently:
\begin{enumerate}[leftmargin=2.4em,label=\rm(\alph*)]
\item\label{it:pos} if $G\succeq0$ then
      $\dnorm{\Phi_{\sigma_{\mathrm{in}}}(G)}
       =\opnorm{\Phi_{\sigma_{\mathrm{in}}}(G)^{\dagger}(I_{\mathrm{B}})}
       =\opnorm{\tr_{\mathrm{B}}C_{\Phi_{\sigma_{\mathrm{in}}}(G)}}$;
\item\label{it:sandwich} if $-aG\preceq M\preceq aG$ with $G\succeq0$
      and $a\ge0$, then
      $\dnorm{\Phi_{\sigma_{\mathrm{in}}}(M)}
       \le a\opnorm{\tr_{\mathrm{B}}C_{\Phi_{\sigma_{\mathrm{in}}}(G)}}$;
\item\label{it:cross} for $U,V\in\Co^{D\times k}$, where $D=d_1d_2$,
      \[
       \dnorm{\Phi_{\sigma_{\mathrm{in}}}(UV^{\dagger}+VU^{\dagger})}
       \le2\sqrt{\opnorm{\tr_{\mathrm{B}}C_{\Phi_{\sigma_{\mathrm{in}}}(UU^{\dagger})}}
                   \,\opnorm{\tr_{\mathrm{B}}C_{\Phi_{\sigma_{\mathrm{in}}}(VV^{\dagger})}}}.
      \]
\end{enumerate}
\end{lemma}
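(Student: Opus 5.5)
We verify the two identities by direct computation with the vectorization rules in \cref{eq:vec-facts}, and then derive (a)--(c) from \cref{eq:diamond-weighted} together with \cref{lem:loewner-tools}.

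\textbf{The defining identity for $L_{\rho,\sigma_{\mathrm{in}}}$.} By \cref{eq:weighted-choi}, $\calE(\kettbbra{\sqrt{\sigma_{\mathrm{in}}}}{\sqrt{\sigma_{\mathrm{in}}}})=(I_{\mathrm B}\otimes\sqrt{\sigma_{\mathrm{in}}^{\mathrm T}})C_\calE(I_{\mathrm B}\otimes\sqrt{\sigma_{\mathrm{in}}^{\mathrm T}})$. Conjugating by $L_{\rho,\sigma_{\mathrm{in}}}$ cancels $\sqrt{\sigma_{\mathrm{in}}^{\mathrm T}}$ and leaves $(I\otimes\sqrt{\rho^{\mathrm T}})C_\calE(I\otimes\sqrt{\rho^{\mathrm T}})$. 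This is the joint output state for the input $\rho$.

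\textbf{The trace-transfer identity \cref{eq:trace-transfer}.} By \cref{eq:Lomega}, $L_{\rho,\sigma_{\mathrm{in}}}ML_{\rho,\sigma_{\mathrm{in}}}^\dagger=(I\otimes\sqrt{\rho^{\mathrm T}})C_{\Phi_{\sigma_{\mathrm{in}}}(M)}(I\otimes\sqrt{\rho^{\mathrm T}})$. Its trace is $\tr(\rho^{\mathrm T}\tr_{\mathrm B}C_{\Phi_{\sigma_{\mathrm{in}}}(M)})$, which equals $\tr(\rho(\tr_{\mathrm B}C)^{\mathrm T})$ because the trace is invariant under transposition.

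\textbf{The diamond-norm formula \cref{eq:diamond-weighted}.} For Hermitian $M$, the map $\Phi_{\sigma_{\mathrm{in}}}(M)$ has Hermitian Choi operator and is therefore Hermiticity-preserving. Applying \cref{eq:diamond-choi} to it, the sandwiched operator appearing there is exactly $L_{\rho,\sigma_{\mathrm{in}}}ML_{\rho,\sigma_{\mathrm{in}}}^\dagger$ by the previous step.

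\textbf{Part (a).} If $G\succeq0$, then $L_{\rho,\sigma_{\mathrm{in}}}GL_{\rho,\sigma_{\mathrm{in}}}^\dagger\succeq0$, so its trace norm equals its trace. By \cref{eq:trace-transfer} this trace is $\tr(\rho\,N)$ with $N=(\tr_{\mathrm B}C_{\Phi_{\sigma_{\mathrm{in}}}(G)})^{\mathrm T}\succeq0$. Maximizing over states gives $\opnorm N$. Finally, $N=\Phi_{\sigma_{\mathrm{in}}}(G)^\dagger(I_{\mathrm B})$ by \cref{eq:normalization-adjoint}, and transposition does not change the operator norm.

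\textbf{Part (b).} Conjugation preserves the Löwner order, so $-aLGL^\dagger\preceq LML^\dagger\preceq aLGL^\dagger$. By \cref{lem:loewner-tools}(i), $\trnorm{LML^\dagger}\le a\tr(LGL^\dagger)$. Maximizing over $\rho$ and using (a) gives the bound.

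\textbf{Part (c).} This is the one step needing an idea beyond direct computation; the natural tool is the operator AM--GM inequality. For any $t>0$, expanding $(tU\mp V/t)(tU\mp V/t)^\dagger\succeq0$ gives
\[
-(t^2UU^\dagger+t^{-2}VV^\dagger)\preceq UV^\dagger+VU^\dagger\preceq t^2UU^\dagger+t^{-2}VV^\dagger.
\]
Apply (b) with $G=t^2UU^\dagger+t^{-2}VV^\dagger$ and $a=1$. By linearity and the triangle inequality, $\opnorm{\tr_{\mathrm B}C_{\Phi(G)}}\le t^2\alpha+t^{-2}\beta$, where $\alpha=\opnorm{\tr_{\mathrm B}C_{\Phi(UU^\dagger)}}$ and $\beta=\opnorm{\tr_{\mathrm B}C_{\Phi(VV^\dagger)}}$. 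Optimizing with $t^2=\sqrt{\beta/\alpha}$ yields $2\sqrt{\alpha\beta}$. The degenerate case $\alpha=0$ or $\beta=0$ is handled by letting $t\to0$ or $t\to\infty$.
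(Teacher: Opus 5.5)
Your proof is correct. The two identities and parts (a)--(b) follow the paper's route: substitute \cref{eq:Lomega} into \cref{eq:diamond-choi}, use positivity, and apply \cref{lem:loewner-tools}(i).

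For part (c) you take a different path. The paper fixes $\rho$, writes $L=L_{\rho,\sigma_{\mathrm{in}}}$, and uses the Schatten--H\"older bound $\trnorm{L(UV^\dagger+VU^\dagger)L^\dagger}\le 2\hsnorm{LU}\hsnorm{LV}$. It then identifies $a(\rho):=\hsnorm{LU}^2=\tr(\rho\,\Phi_{\sigma_{\mathrm{in}}}(UU^\dagger)^\dagger(I_{\mathrm B}))$, and likewise $b(\rho):=\hsnorm{LV}^2$, through \cref{eq:trace-transfer}, and only then maximizes over $\rho$. You instead use the L\"owner AM--GM bound $\pm(UV^\dagger+VU^\dagger)\preceq t^2UU^\dagger+t^{-2}VV^\dagger$ to reduce (c) to (b). You then split by linearity and the triangle inequality, and optimize the scalar $t$ after the maximization over $\rho$.

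Your route is more modular. It needs no Schatten-norm inequality, only order preservation under conjugation and part (b), and it is the same device the paper itself uses later in \cref{eq:cross-loewner}. The paper's route gives the finer per-input bound $2\sqrt{a(\rho)b(\rho)}$ before maximizing. Optimizing $t$ separately for each fixed $\rho$ in your argument recovers exactly that bound. Either way you obtain the stated inequality with the product of the two maxima.
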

\begin{proof}
Cyclicity of the trace gives \cref{eq:trace-transfer}; substituting
\cref{eq:Lomega} into \cref{eq:diamond-choi} gives
\cref{eq:diamond-weighted}.  Part~\ref{it:pos} follows by positivity,
and part~\ref{it:sandwich} by \cref{lem:loewner-tools} (i).
For part~\ref{it:cross}, write $L=L_{\rho,\sigma_{\mathrm{in}}}$ and use
\begin{align*}
 \trnorm{L(UV^\dagger+VU^\dagger)L^\dagger}
 &\le2\hsnorm{LU}\hsnorm{LV}\\
 &=2\sqrt{\tr\bigl(\rho\,\Phi_{\sigma_{\mathrm{in}}}(UU^\dagger)^\dagger(I_{\mathrm{B}})\bigr)
             \,\tr\bigl(\rho\,\Phi_{\sigma_{\mathrm{in}}}(VV^\dagger)^\dagger(I_{\mathrm{B}})\bigr)}.
\end{align*}
Maximizing over $\rho$ proves the bound.
\end{proof}

\paragraph{Regularization.}
For $s>0$, let $f_s(t):=t/(t+s)$ for $t\ge0$.  For $X\succeq0$, define
\begin{equation}\label{eq:ridge}
 f_s(X)=X(X+sI)^{-1},\qquad
 X_s:=Xf_s(X)=X^2(X+sI)^{-1}=X-sf_s(X).
\end{equation}
Thus $0\preceq X_s\preceq X$, with eigenvalue map
$\lambda\mapsto\lambda^2/(\lambda+s)$.
For the joint output state
$X:=\calE(\kettbbra{\sqrt{\sigma_{\mathrm{in}}}}{\sqrt{\sigma_{\mathrm{in}}}})$, define the
\emph{regularized input-sensitivity operator}
\begin{equation}\label{eq:gram}
 H_s(\sigma_{\mathrm{in}}):=\Phi_{\sigma_{\mathrm{in}}}(f_s(X))^\dagger(I_{\mathrm{B}}).
\end{equation}
Let $\calE^c$ be the complementary channel of $\mathcal{E}$ associated with a choice of Kraus
operators, as given in \cref{eq:choi-kraus}:
\begin{equation*}
 \calE^c(\rho)=\sum_{i,j}\tr(\rho K_j^\dagger K_i)\cdot \ketbra{i}{j}.
\end{equation*}
Its adjoint $(\mathcal{E}^c)^\dag$ is thus $(\calE^c)^\dagger(Q)=\sum_{i,j}\bra{i}Q\ket{j}\cdot K_i^\dagger K_j$.

\begin{lemma}[Complementary-channel formula]\label{lem:resolvent-marginal}
For $s>0$,
\begin{equation}\label{eq:resolvent-marginal}
 H_s(\sigma_{\mathrm{in}})
 =(\calE^c)^\dagger\bigl((\calE^c(\sigma_{\mathrm{in}})+sI_r)^{-1}\bigr).
\end{equation}
Hence, for every $\rho\in\calD(\mathcal{H}_{\mathrm{A}})$,
\begin{equation}\label{eq:H-pairing}
 \tr\bigl(\rho H_s(\sigma_{\mathrm{in}})\bigr)
 =\tr\bigl(\calE^c(\rho)(\calE^c(\sigma_{\mathrm{in}})+sI_r)^{-1}\bigr).
\end{equation}
Moreover,
\begin{equation*}
 \tr\bigl(\sigma_{\mathrm{in}} H_s(\sigma_{\mathrm{in}})\bigr)=\tr f_s(\calE(\kettbbra{\sqrt{\sigma_{\mathrm{in}}}}{\sqrt{\sigma_{\mathrm{in}}}}))\le r,
 \qquad 0\preceq H_s(\sigma_{\mathrm{in}})\preceq s^{-1}I_{\mathrm{A}}.
\end{equation*}
\end{lemma}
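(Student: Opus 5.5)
The plan is to factor the joint output state through the Kraus "stacking" operator and then push the resolvent from the $D$-dimensional side to the $r$-dimensional side. Define $V:\Co^r\to\mathcal{H}_{\mathrm{B}}\otimes\mathcal{H}_{\mathrm{A}}$ by
\[
V\ket{i}=\kett{K_i\sqrt{\sigma_{\mathrm{in}}}}.
\]
By \cref{eq:weighted-choi}, $X=VV^\dagger$. By \cref{eq:vec-facts}, the Gram matrix is
\[
V^\dagger V=\sum_{i,j}\tr\bigl(\sqrt{\sigma_{\mathrm{in}}}K_i^\dagger K_j\sqrt{\sigma_{\mathrm{in}}}\bigr)\ketbra{i}{j},
\]
which equals $\calE^c(\sigma_{\mathrm{in}})^{\mathrm T}$ (or $\calE^c(\sigma_{\mathrm{in}})$, depending on index conventions). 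I will fix the convention carefully so that $V^\dagger V=G$ with $G$ unitarily equivalent to $\calE^c(\sigma_{\mathrm{in}})$ via conjugation or transpose, and check at the end that the final formula matches \cref{eq:resolvent-marginal}.

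The key algebraic step is the push-through identity
\[
f_s(VV^\dagger)=VV^\dagger(VV^\dagger+sI)^{-1}=V(V^\dagger V+sI_r)^{-1}V^\dagger,
\]
which follows from $(VV^\dagger+sI)V=V(V^\dagger V+sI)$.

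Next we apply the reconstruction map. By \cref{eq:normalization-adjoint,eq:momega},
\[
H_s(\sigma_{\mathrm{in}})=\sigma_{\mathrm{in}}^{-1/2}\bigl(\tr_{\mathrm{B}} f_s(X)\bigr)^{\mathrm T}\sigma_{\mathrm{in}}^{-1/2}.
\]
Writing $Q=(V^\dagger V+sI_r)^{-1}$, we have $f_s(X)=\sum_{i,j}Q_{ij}\kettbbra{K_i\sqrt{\sigma_{\mathrm{in}}}}{K_j\sqrt{\sigma_{\mathrm{in}}}}$. The last identity in \cref{eq:vec-facts} then gives
\[
\bigl(\tr_{\mathrm{B}}f_s(X)\bigr)^{\mathrm T}=\sum_{i,j}Q_{ij}\sqrt{\sigma_{\mathrm{in}}}K_j^\dagger K_i\sqrt{\sigma_{\mathrm{in}}}.
\]
Conjugating by $\sigma_{\mathrm{in}}^{-1/2}$ cancels the square roots and leaves $\sum_{i,j}Q_{ij}K_j^\dagger K_i$, which is $(\calE^c)^\dagger$ applied to $Q$ or to $Q^{\mathrm T}$. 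Matching the transpose in $Q$ with the transpose in $V^\dagger V$ against $\calE^c(\sigma_{\mathrm{in}})$ yields \cref{eq:resolvent-marginal}. This index and transpose bookkeeping is the only real obstacle. I will handle it by checking the formula against the trace identity below, and by noting that transposition commutes with inversion and with adding $sI_r$.

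The remaining claims follow quickly. First, \cref{eq:H-pairing} is immediate from duality, $\tr(\rho(\calE^c)^\dagger(Q))=\tr(\calE^c(\rho)Q)$. Taking $\rho=\sigma_{\mathrm{in}}$ gives
\[
\tr\bigl(G(G+sI_r)^{-1}\bigr)=\tr f_s(V^\dagger V)=\tr f_s(VV^\dagger),
\]
since $VV^\dagger$ and $V^\dagger V$ share their nonzero spectrum. This equals $\tr f_s(X)\le\rank X\le r$ because $f_s<1$ on each eigenvalue. Positivity of $H_s(\sigma_{\mathrm{in}})$ holds because $(\calE^c)^\dagger$ is completely positive and $Q\succ0$. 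Finally, $Q\preceq s^{-1}I_r$ and unitality of $(\calE^c)^\dagger$ give $(\calE^c)^\dagger(I_r)=\sum_iK_i^\dagger K_i=I_{\mathrm{A}}$, hence $H_s(\sigma_{\mathrm{in}})\preceq s^{-1}I_{\mathrm{A}}$.
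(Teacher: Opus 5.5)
Your proposal is correct and follows the paper's proof essentially step for step: the same factorization $X=VV^\dagger$ with columns $\kett{K_i\sqrt{\sigma_{\mathrm{in}}}}$, the same push-through identity $f_s(VV^\dagger)=V(V^\dagger V+sI_r)^{-1}V^\dagger$, the same partial-trace formula from \cref{eq:vec-facts}, and then duality, positivity and unitality of $(\calE^c)^\dagger$. The transpose bookkeeping you leave open comes out as in the paper: $V^\dagger V=\calE^c(\sigma_{\mathrm{in}})^{\mathrm T}$ and $\sum_{i,j}Q_{ij}K_j^\dagger K_i=(\calE^c)^\dagger(Q^{\mathrm T})$, so the two transposes cancel; your spectral argument for $\tr(\sigma_{\mathrm{in}}H_s(\sigma_{\mathrm{in}}))=\tr f_s(X)$ is an equivalent variant of the paper's appeal to \cref{eq:trace-transfer}.
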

\begin{proof}
Let $B_{\sigma_{\mathrm{in}}}:\Co^r\to
\mathcal{H}_{\mathrm{B}}\otimes\mathcal{H}_{\mathrm{A}}$ have columns
$\kett{K_i\sqrt{\sigma_{\mathrm{in}}}}$.  Then
\begin{equation*}
 \calE(\kettbbra{\sqrt{\sigma_{\mathrm{in}}}}{\sqrt{\sigma_{\mathrm{in}}}})=B_{\sigma_{\mathrm{in}}} B_{\sigma_{\mathrm{in}}}^\dagger,\qquad
 B_{\sigma_{\mathrm{in}}}^\dagger B_{\sigma_{\mathrm{in}}}=\calE^c(\sigma_{\mathrm{in}})^{\mathrm T}.
\end{equation*}
For $Q:=(\calE^c(\sigma_{\mathrm{in}})^{\mathrm T}+sI_r)^{-1}$ and $X:=\calE(\kettbbra{\sqrt{\sigma_{\mathrm{in}}}}{\sqrt{\sigma_{\mathrm{in}}}})$, the resolvent identity
and \cref{eq:vec-facts,eq:momega,eq:normalization-adjoint} give
\[
 \begin{aligned}
 f_s(X)&=B_{\sigma_{\mathrm{in}}} Q B_{\sigma_{\mathrm{in}}}^\dagger,\\
 H_s(\sigma_{\mathrm{in}})&=\sum_{i,j}Q_{ij}K_j^\dagger K_i
             =(\calE^c)^\dagger(Q^{\mathrm T}).
 \end{aligned}
\]
This proves \cref{eq:resolvent-marginal}; adjoint duality gives
\cref{eq:H-pairing}.  Equation~\eqref{eq:trace-transfer} with $\rho=\sigma_{\mathrm{in}}$
gives $\tr(\sigma_{\mathrm{in}} H_s(\sigma_{\mathrm{in}}))=\tr f_s(\calE(\kettbbra{\sqrt{\sigma_{\mathrm{in}}}}{\sqrt{\sigma_{\mathrm{in}}}}))\le\rank \calE(\kettbbra{\sqrt{\sigma_{\mathrm{in}}}}{\sqrt{\sigma_{\mathrm{in}}}})\le r$. 
Since
$
0\preceq
(\calE^c(\sigma_{\mathrm{in}})+sI_r)^{-1}
\preceq s^{-1}I_r,
$
positivity and unitality of $(\calE^c)^\dagger$ imply
$
0\preceq H_s(\sigma_{\mathrm{in}})\preceq s^{-1}I_{\mathrm A}.
$

\end{proof}

\begin{lemma}[Regularized channel decomposition]\label{lem:ridge-tail}
Let $X:=\calE(\kettbbra{\sqrt{\sigma_{\mathrm{in}}}}{\sqrt{\sigma_{\mathrm{in}}}})$, and reconstruct its regularization $X_s$ (see \cref{eq:ridge}) as the map
\[
 \calE_s:=\Phi_{\sigma_{\mathrm{in}}}(X_s).
\]
Then $\calE_s$ and $\calE-\calE_s$ are completely positive and
trace-nonincreasing, with
\[
 \calE-\calE_s=s\Phi_{\sigma_{\mathrm{in}}}(f_s(X)).
\]
The regularization bias and output trace are determined by
\begin{equation*}
 \begin{aligned}
 \dnorm{\calE-\calE_s}
    &=s\opnorm{H_s(\sigma_{\mathrm{in}})},\\
 \calE_s^\dagger(I_{\mathrm{B}})
    &=I_{\mathrm{A}}-sH_s(\sigma_{\mathrm{in}}).
 \end{aligned}
\end{equation*}
\end{lemma}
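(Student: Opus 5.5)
The plan is to work entirely at the level of the operators $X$, $X_s$, $f_s(X)$ and then transport everything through the linear reconstruction map $\Phi_{\sigma_{\mathrm{in}}}$, using the identities already recorded in \cref{lem:weighted-norms,lem:resolvent-marginal}.

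\textbf{Step 1 (operator identity).} By \cref{eq:ridge}, $X = X_s + s f_s(X)$. Since $\Phi_{\sigma_{\mathrm{in}}}$ is linear and $\Phi_{\sigma_{\mathrm{in}}}(X)=\calE$, this gives $\calE-\calE_s = s\Phi_{\sigma_{\mathrm{in}}}(f_s(X))$.

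\textbf{Step 2 (complete positivity).} By \cref{eq:Lomega}, the Choi operator of $\Phi_{\sigma_{\mathrm{in}}}(M)$ is a congruence of $M$ by $I_{\mathrm B}\otimes(\sigma_{\mathrm{in}}^{\mathrm T})^{-1/2}$. It is therefore positive semidefinite whenever $M\succeq 0$. Both $X_s$ and $f_s(X)$ are positive semidefinite, because their eigenvalue maps $\lambda\mapsto\lambda^2/(\lambda+s)$ and $\lambda\mapsto\lambda/(\lambda+s)$ are nonnegative on $\lambda\ge0$. Hence $\calE_s$ and $\calE-\calE_s$ are completely positive.

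\textbf{Step 3 (trace behaviour).} Apply the adjoint at $I_{\mathrm B}$ to the identity from Step 1. Using \cref{eq:gram}, this gives $\calE^\dagger(I_{\mathrm B}) - \calE_s^\dagger(I_{\mathrm B}) = sH_s(\sigma_{\mathrm{in}})$. Since $\calE$ is trace preserving, $\calE^\dagger(I_{\mathrm B}) = I_{\mathrm A}$, and so $\calE_s^\dagger(I_{\mathrm B}) = I_{\mathrm A}-sH_s(\sigma_{\mathrm{in}})$.

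\textbf{Step 4 (both maps are trace-nonincreasing).} By \cref{lem:resolvent-marginal}, $0\preceq H_s(\sigma_{\mathrm{in}})\preceq s^{-1}I_{\mathrm A}$. This gives
\[
0\preceq sH_s(\sigma_{\mathrm{in}})\preceq I_{\mathrm A}
\quad\text{and}\quad
0\preceq I_{\mathrm A}-sH_s(\sigma_{\mathrm{in}})\preceq I_{\mathrm A}.
\]
For a CP map $\Psi$, the condition $\Psi^\dagger(I)\preceq I$ is exactly trace-nonincreasing, because $\tr\Psi(\rho)=\tr(\rho\Psi^\dagger(I))$. So both $\calE_s$ and $\calE-\calE_s$ are trace-nonincreasing.

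\textbf{Step 5 (diamond norm of the bias).} Apply \cref{lem:weighted-norms}\ref{it:pos} to $G=f_s(X)\succeq0$, and multiply by $s$. This yields $\dnorm{\calE-\calE_s}=s\opnorm{\Phi_{\sigma_{\mathrm{in}}}(f_s(X))^\dagger(I_{\mathrm B})}=s\opnorm{H_s(\sigma_{\mathrm{in}})}$.

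\textbf{Main obstacle.} The only point needing care is the adjoint identity $\Phi_{\sigma_{\mathrm{in}}}(M)^\dagger(I_{\mathrm B})=(\tr_{\mathrm B}C)^{\mathrm T}$ and the positivity conventions behind it. Both are supplied by \cref{eq:normalization-adjoint} and part (a) of \cref{lem:weighted-norms}, so I expect no real difficulty.
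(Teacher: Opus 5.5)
Your proof is correct and follows essentially the same route as the paper: decompose $X = X_s + s f_s(X)$, push it through the linear reconstruction $\Phi_{\sigma_{\mathrm{in}}}$, and read off the adjoint and diamond-norm identities from \cref{eq:gram}, $\calE^\dagger(I_{\mathrm B})=I_{\mathrm A}$, and \cref{lem:weighted-norms}(a). The only minor difference is how you get trace-nonincreasingness. You use the bound $0\preceq H_s(\sigma_{\mathrm{in}})\preceq s^{-1}I_{\mathrm A}$ from \cref{lem:resolvent-marginal}, whereas the paper just notes that two completely positive maps summing to the trace-preserving $\calE$ must each be trace-nonincreasing; both arguments are valid.
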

\begin{proof}
By \cref{eq:ridge}, $X_s\succeq0$ and
$X-X_s=sf_s(X)\succeq0$.
Reconstruction gives complete positivity and the displayed decomposition.
The norm identity follows from \cref{lem:weighted-norms}(a) and
\cref{eq:gram}; the adjoint identity follows from the decomposition,
\cref{eq:gram}, and $\calE^\dagger(I_{\mathrm{B}})=I_{\mathrm{A}}$.
Each completely positive summand is trace-nonincreasing because their sum
is $\calE$.
\end{proof}

\paragraph{The Choi-support projection.}
Let $P_\calE$ be the orthogonal projection onto $\supp C_\calE$ and define
\begin{equation}\label{eq:kappa-intro}
 \kappa(\calE)=d_1\|\tr_{\mathrm{B}} P_\calE\|_\infty.
\end{equation}
\begin{lemma}[Marginal of the Choi-support projection]\label{lem:kappa}
Let $\calE$ have Kraus rank $r$, with $\kappa(\calE)$ defined in
\cref{eq:kappa-intro}, and let $\lambda_{\min}^+(C_\calE)$ be the smallest
non-zero Choi eigenvalue of its Choi operator.  Then
\begin{equation}\label{eq:kappa-chain}
 r\le\kappa(\calE)\le
 \min\left\{\frac{d_1}{\lambda_{\min}^+(C_\calE)},
             d_1\min(r,d_2)\right\}.
\end{equation}
For every positive definite $\sigma_{\mathrm{in}}\in\calD(\mathcal{H}_{\mathrm{A}})$, let $P_{\mathcal{E},\sigma_{\mathrm{in}}}$ be the
orthogonal projection onto the support of the joint output state $\calE(\kettbbra{\sqrt{\sigma_{\mathrm{in}}}}{\sqrt{\sigma_{\mathrm{in}}}})$.  With $H_s$ defined in \cref{eq:gram}, we have
\begin{equation}\label{eq:H0}
 \lim_{s\downarrow0}H_s(\sigma_{\mathrm{in}})=\Phi_{\sigma_{\mathrm{in}}}(P_{\mathcal{E},\sigma_{\mathrm{in}}})^\dagger(I_{\mathrm{B}}).
\end{equation}
In particular,
$\lim_{s\downarrow0}\|H_s(I/d_1)\|_\infty=\kappa(\calE)$.
\end{lemma}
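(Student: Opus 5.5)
We prove the three parts of \cref{lem:kappa} in order: the upper bounds, then the limit formula \cref{eq:H0}, and finally the lower bound $r\le\kappa(\calE)$, which follows most cleanly from the limit formula.

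\emph{Upper bounds.} Write $C=C_\calE$ and $P=P_\calE$.
\begin{itemize}
\item Since $C\succeq\lambda_{\min}^+(C)\,P$, taking partial traces gives $\tr_{\mathrm B}P\preceq\lambda_{\min}^+(C)^{-1}\tr_{\mathrm B}C=\lambda_{\min}^+(C)^{-1}I_{\mathrm A}$. This yields $\kappa(\calE)\le d_1/\lambda_{\min}^+(C)$.
\item For any unit $\ket a\in\mathcal H_{\mathrm A}$, $\bra a\tr_{\mathrm B}P\ket a=\tr\bigl(P(I_{\mathrm B}\otimes\ketbra aa)\bigr)\le\min\{\rank P,\rank(I_{\mathrm B}\otimes\ketbra aa)\}=\min(r,d_2)$. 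This gives $\kappa(\calE)\le d_1\min(r,d_2)$.
\end{itemize}

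\emph{Limit formula.} Put $X=\calE(\kettbbra{\sqrt{\sigma_{\mathrm{in}}}}{\sqrt{\sigma_{\mathrm{in}}}})\succeq0$. The eigenvalue map $t\mapsto t/(t+s)$ converges to $\mathbf 1[t>0]$ as $s\downarrow0$, and $X$ has finitely many eigenvalues. Hence $f_s(X)\to P_{\calE,\sigma_{\mathrm{in}}}$ in norm. The map $M\mapsto\Phi_{\sigma_{\mathrm{in}}}(M)^\dagger(I_{\mathrm B})=\sigma_{\mathrm{in}}^{-1/2}(\tr_{\mathrm B}M)^{\mathrm T}\sigma_{\mathrm{in}}^{-1/2}$, by \cref{eq:momega,eq:normalization-adjoint}, is linear on a finite-dimensional space and therefore continuous. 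This proves \cref{eq:H0}.

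Now specialize to $\sigma_{\mathrm{in}}=I/d_1$. Then $X=C/d_1$, so $P_{\calE,I/d_1}=P$. The limit becomes $d_1(\tr_{\mathrm B}P)^{\mathrm T}$. Transposition preserves the operator norm, so $\lim_{s\downarrow0}\|H_s(I/d_1)\|_\infty=d_1\|\tr_{\mathrm B}P\|_\infty=\kappa(\calE)$, using continuity of the norm.

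\emph{Lower bound.} Here $\rank P=r$ because the Kraus rank is exactly $r$. Therefore $\tr(\tr_{\mathrm B}P)=r$, and $\tr_{\mathrm B}P$ is a PSD operator on $\Co^{d_1}$. Its largest eigenvalue is at least its average eigenvalue, $r/d_1$. Multiplying by $d_1$ gives $\kappa(\calE)\ge r$.

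\emph{Main subtlety.} None of the steps is hard. The point that needs care is that $P_{\calE,\sigma_{\mathrm{in}}}$ is a discontinuous function of $X$ in general. Here, however, $X$ is fixed and only $s$ varies, so the spectral argument is legitimate.
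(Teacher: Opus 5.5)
Your proof is correct and follows essentially the same route as the paper's: the L\"owner inequality $\lambda_{\min}^+(C_\calE)P_\calE\preceq C_\calE$ pushed through $\tr_{\mathrm B}$ using $\tr_{\mathrm B}C_\calE=I_{\mathrm A}$, the rank bound on $\tr\bigl(P_\calE(I_{\mathrm B}\otimes\ketbra aa)\bigr)$, spectral convergence $f_s(X)\to P_{\calE,\sigma_{\mathrm{in}}}$ combined with continuity of the linear map $M\mapsto\sigma_{\mathrm{in}}^{-1/2}(\tr_{\mathrm B}M)^{\mathrm T}\sigma_{\mathrm{in}}^{-1/2}$, and the trace-$r$ averaging bound for $r\le\kappa(\calE)$. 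One cosmetic point: your roadmap says the lower bound follows from the limit formula, but the argument you actually give is the direct trace argument on $\tr_{\mathrm B}P_\calE$, which needs no limit.
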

\begin{proof}
The positive operator $\tr_{\mathrm{B}}P_\calE$ has trace $r$ and dimension $d_1$,
so $d_1\|\tr_{\mathrm{B}}P_\calE\|_\infty\ge r$.  If
$\lambda=\lambda_{\min}^+(C_\calE)$, then
$\lambda P_\calE\preceq C_\calE$ implies
$\lambda\tr_{\mathrm{B}}P_\calE\preceq I_{\mathrm{A}}$, giving
$\kappa(\calE)\le d_1/\lambda$.
For a unit vector $\ket{v}\in \mathcal{H}_{\mathrm{A}}$,
\[
 \bra{v}\tr_{\mathrm{B}}P_\calE\ket{v}
 =\tr(P_\calE(I_{\mathrm{B}}\otimes\ketbra vv)P_\calE)
 \le\min(r,d_2),
\]
because
$
0\preceq
P_\calE(I_{\mathrm B}\otimes\ketbra vv)P_\calE
\preceq I,
$
and its rank is at most $\min(r,d_2)$.  This proves the remaining bound.
Finally, $f_s(\calE(\kettbbra{\sqrt{\sigma_{\mathrm{in}}}}{\sqrt{\sigma_{\mathrm{in}}}}))\to P_{\mathcal{E},\sigma_{\mathrm{in}}}$, so
\cref{eq:H0} follows from \cref{eq:gram} and continuity of the
reconstruction in \cref{eq:Lomega}.  At $\sigma_{\mathrm{in}}=I_{\mathrm{A}}/d_1$,
$C_{\Phi_{I/d_1}(M)}=d_1M$ and $P_{\mathcal{E},\sigma_{\mathrm{in}}}=P_\calE$, so the final claim
follows from \cref{eq:normalization-adjoint} and invariance of the
operator norm under transposition.
\end{proof}

In particular, a channel of Kraus rank at most $r$ satisfying
$\lambda_{\min}^+(C_\calE)\ge cd_1/r$ for some $c>0$ has
$\kappa(\calE)\le r/c$.  This spectral condition is used in
\cref{cor:gapped}.

\section{Upper bounds}\label{sec:upper}
In this section, we provide the algorithms that achieve our upper bounds in increasing order of difficulty and generality. In Section \ref{sec:fixed-probe}, we first treat the gapped case and show that a fixed maximally entangled input state is sufficient and adaptivity is not needed. The complexity we obtain depends on the parameter $\kappa(\calE)$ and is  optimal for  channels with $\Omega(d_1/r)$-gapped Choi spectrum (which implies $\kappa(\calE) = O(r)$). Then, in Section \ref{sec:algorithm}, we treat the general (non-gapped) setting where $\kappa(\calE)$ can be large. The near-optimal upper bound is achieved by an \textit{adaptive} algorithm that chooses the input states adaptively. Their technical lemmas appear in \cref{sec:ub-technical}.

\subsection{A fixed maximally entangled input}\label{sec:fixed-probe}

\begin{algorithm}[t]
\caption{Gapped-case incoherent quantum channel tomography}
\label{alg:gapped}

\KwIn{ $d_1, d_2, r, \epsilon, \delta, c_1,$ such that $\lambda^+_{\rm min}(C_\calE)\ge \frac{c_1d_1}{r}$  }
\KwOut{A channel estimate $\widehat \calE$}

 $D\leftarrow d_1d_2$; {$n \leftarrow \left\lceil
 c_{\rm ub}\frac{r^2}{c_1^2\epsilon^2}
 \left(D+\log\frac4\delta\right)(1+\log(8/\delta))^2
 \right\rceil$} \;
\For{$t\leftarrow 1$ \KwTo $n$}{
    Prepare $\frac{1}{\sqrt{d_1}} \sum_{i=1}^{d_1}\ket{i}_{\mathrm{A}} \otimes \ket{i}_{\mathrm{A}}$ and apply $\calE\otimes \id$  \;
    Measure BA in an independent Haar-random basis, obtaining outcome $\ket{\psi_t}$\;
}
$\widehat{X} \leftarrow \frac{1}{n} \sum_{t=1}^n \left((D+1)\ketbra{\psi_t}{\psi_t} - I_D\right)$\;
$\widetilde{X} =   \arg\min_{Y\succeq 0, \tr Y =1} \left\| Y - \widehat X\right\|_2$\;
$\widehat\calE\in\arg\min_{\calF\text{ a channel}}
       \left\|\calF-\Phi_{I/d_1}(\widetilde X)\right\|_\diamond$\;
\Return{$\widehat \calE$}\;
\end{algorithm}

The protocol of \cite{SKKG22} prepares the maximally entangled state, sends it through the unknown channel, measures the output in a random basis (see \cref{alg:gapped} for the detailed algorithm). We show that this  protocol achieves optimal query complexity when   $\kappa(\calE)$, defined in \cref{eq:kappa-intro}, is $O(r)$, which by \eqref{eq:kappa-chain} happens when the smallest non-zero Choi eigenvalue is at least $\Omega(d_1/r)$.

\paragraph{Fixed-input estimator.}\label{prot:fixed}
Prepare the same maximally entangled input $d_1^{-1/2}\kett I$ in every
experiment.  Its input marginal is $I_{\mathrm{A}}/d_1$, so the measured state is
$\calE(\kettbbra{I/\sqrt{d_1}}{I/\sqrt{d_1}})=C_\calE/d_1$. Measure the output Choi state by Haar random bases (see \cref{def-9150318}), obtaining outcomes $\{\ket{\psi_1}, \ldots, \ket{\psi_n}\}$.
Using the Haar snapshots (see \cref{def-9150321}), we define $\widehat X = \frac{1}{n}\sum_{t=1}^n (D+1)\ketbra{\psi_t}{\psi_t} - I_D$, then project it onto
$\calD(\mathcal{H}_{\mathrm{B}}\otimes \mathcal{H}_{\mathrm{A}})$ in Hilbert--Schmidt norm to obtain $\widetilde X$,
and output
\begin{equation}\label{eq:diamond-proj}
 \widehat\calE\in\arg\min_{\calF\text{ a channel}}
       \|\calF-\Phi_{I/d_1}(\widetilde X)\|_\diamond.
\end{equation}

\begin{theorem}[Upper bound for a fixed maximally entangled input]\label{thm:ub-main}
There is an absolute constant $c_{\rm ub}$ such that, for every channel $\calE$ with input dimension $d_1$ and output dimension $d_2$, the fixed-input estimator satisfies
$\Pr\{\|\widehat\calE-\calE\|_\diamond>\epsilon\}\le\delta$ whenever
\begin{equation}\label{eq:complexity}
 {n\ge c_{\rm ub}\frac{\kappa(\calE)^2}{\epsilon^2}
       \left(D+\log\frac{4}{\delta}\right)
       (1+\log(8/\delta))^2}.
\end{equation}
\end{theorem}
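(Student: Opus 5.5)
The plan is to reduce everything to a bound on $\dnorm{\Phi_{I/d_1}(\Delta)}$, where $X=C_\calE/d_1$ and $\Delta=\widetilde X-X$. Since $\widehat\calE$ minimizes the diamond distance to $\Phi_{I/d_1}(\widetilde X)$ over channels and $\calE$ is a feasible channel, the triangle inequality gives $\dnorm{\widehat\calE-\calE}\le 2\dnorm{\Phi_{I/d_1}(\Delta)}$. Note that $\Phi_{I/d_1}(\Delta)$ has Choi operator $d_1\Delta$.

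First I control $\opnorm{\widehat X-X}$ using \cref{lem:haar-moments}(iii) with $d=D$. On an event of probability $1-\delta/2$ this gives $\opnorm{\widehat X-X}\le\eta:=c_{\rm H}\sqrt{(D+\log(4/\delta))/n}$, since $n\ge D$ in the regime considered. The Hilbert–Schmidt projection onto density matrices has the form $\widetilde X=(\widehat X-\mu I)_+$ with the threshold $\mu$ fixed by $\tr\widetilde X=1$. Using $\tr X=1$ and $\rank X\le r$, the standard argument (as in \cite{SKKG22}, via interlacing) gives $|\mu|\le\eta$ and $\opnorm{\Delta}\le 2\eta$. Moreover $\widetilde X$ vanishes on eigendirections where $\widehat X<\mu$.

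Next I write $P=P_\calE$ and $Q=I-P$, and split $\Delta$ into the blocks $P\Delta P$, $Q\Delta Q$, and the cross terms $P\Delta Q+Q\Delta P$. Each block is handled as follows.

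\emph{Support block.} We have $-2\eta P\preceq P\Delta P\preceq 2\eta P$, so \cref{lem:weighted-norms}\ref{it:sandwich} bounds its contribution by $2\eta\, d_1\opnorm{\tr_B P}=2\eta\kappa(\calE)$.

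\emph{Complement block.} Here $Q\Delta Q=Q\widetilde XQ\succeq0$, so by \cref{lem:weighted-norms}\ref{it:pos} its contribution equals $d_1\opnorm{\tr_B(Q\widetilde XQ)}$. The trace satisfies $\tr(Q\widetilde XQ)\le\tr|\Delta|$ restricted to $P$, which is $O(r\eta)$. The difficulty is that the $A$-marginal could concentrate in one direction, which would cost a factor up to $d_1$. To avoid this I use invariance. By \cref{lem:haar-moments}\ref{item-9150413}, the law of $\widehat X$ is invariant under any unitary $U=P+V$ with $V$ unitary on $\ran Q$. The projection map commutes with such conjugations, so $Q\widetilde XQ$ has a law invariant under conjugation by $V$. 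Averaging over Haar $V$ flattens the marginal: $\E_V\tr_B(VMV^\dagger)=\tr(M)\,\tr_B Q/\rank Q$, and $d_1\opnorm{\tr_B Q}/\rank Q\le O(1)$ when $\rank Q\ge D/2$.

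Turning this average into a high-probability bound is where I expect the main work and the $(1+\log(8/\delta))$ factors to appear. I would condition on the spectrum of $Q\widetilde XQ$, treat its eigenvectors as Haar-random in $\ran Q$, and then bound $\opnorm{\tr_B(VMV^\dagger)}$ for fixed $M\succeq0$ by concentration of the marginal of a Haar-rotated operator. The tool here is a covering net over the unit vectors of $\Co^{d_1}$ via \cref{cor:covering-union}, combined with Lipschitz concentration. This should give $\opnorm{\tr_B(Q\widetilde XQ)}\lesssim \frac{r\eta}{d_1}(1+\log(8/\delta))$ with probability $1-\delta/4$. This lemma (the one the overview calls \cref{lem:random-complement}) is the main obstacle. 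In the edge case where $\rank Q$ is small, I would instead fall back on $\opnorm{\tr_BQ}\le\kappa$-type bounds.

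\emph{Cross blocks.} Factor $P\Delta Q=U V^\dagger$ with $U=P|\Delta|^{1/2}$-type factors, or more cleanly use the operator Cauchy–Schwarz inequality
\[
-(tP\Delta^2P+t^{-1}Q)\preceq P\Delta Q+Q\Delta P\preceq tP\Delta^2P+t^{-1}Q .
\]
A better route is \cref{lem:weighted-norms}\ref{it:cross}, writing $P\widetilde XQ=(P\widetilde X^{1/2})(\widetilde X^{1/2}Q)$. This bounds the cross term by $2\sqrt{d_1\opnorm{\tr_B(P\widetilde XP)}\cdot d_1\opnorm{\tr_B(Q\widetilde XQ)}}$. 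The first factor is at most $d_1\opnorm{\tr_B(PXP)}+2\eta\kappa\le 1+2\eta\kappa$, and the second is $O(r\eta(1+\log(8/\delta)))$. The cross contribution is then $O(\sqrt{r\eta}(1+\log))$, which is worse than $r\eta$ unless refined.

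To refine it, I use $PXQ=0$ so that $P\widetilde XQ=P\Delta Q$. I then factor $P\Delta Q=(P\Delta Q\Delta P)^{1/2}W$ and bound $\opnorm{P\Delta Q\Delta P}\le 4\eta^2$ together with the Haar-flattening of $Q$-side factors. Applying \cref{lem:weighted-norms}\ref{it:cross} with $U=P(\cdot)$ supported on $P$ (marginal controlled by $\kappa\eta$) and $V$ on $\ran Q$ (flattened, $O(r\eta\log)$) gives $O(\kappa\eta(1+\log(8/\delta)))$.

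Collecting the three blocks gives $\dnorm{\widehat\calE-\calE}\le C\kappa(\calE)\eta(1+\log(8/\delta))$. Setting this to be at most $\epsilon$ yields \eqref{eq:complexity}, and a union bound over the failure events gives total failure probability at most $\delta$.
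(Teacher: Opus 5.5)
Your plan is essentially the paper's proof. It uses the same reduction through the final channel projection and the same operator-norm control from \cref{lem:haar-moments} and \cref{lem:state-proj}, followed by the block argument of \cref{lem:conversion}: the support block costs $O(\kappa\eta)$, the complement block is flattened using invariance under $I_P\oplus V$ as in \cref{lem:random-complement}, and the cross block is handled by Cauchy--Schwarz against a flattened $Q$-side operator of rank at most $r$ and trace $O(r\eta)$. The only difference there is cosmetic: you flatten $|P\Delta Q|$ after an SVD split, while the paper flattens $P^\perp|\Delta|P^\perp$ after using $\Delta^2\preceq\eta|\Delta|$.

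The one piece you only sketch is the flattening lemma itself, which the paper proves with a Gaussian model of the Haar-random projection. Your Lipschitz route also works if you apply concentration on the unitary group to $V\mapsto\hsnorm{R_\varphi V T^{1/2}}$, which has Lipschitz constant $\sqrt{\eta}$, rather than to its square, and use $d_1\le rd_2$ to pay for the union bound over a net of $\Co^{d_1}$. In the edge case where $\rank Q$ is small, the natural fallback is the deterministic bound $d_1\opnorm{\tr_B Z}\le D\opnorm{Z}=O(r\eta)$ when $D<2r$, not a $\kappa$-type bound.
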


\begin{proof}
{Put
\[
 \theta=c_{\rm H}\left(
 \sqrt{\frac{D+\log(4/\delta)}{n}}
 +\frac{D+\log(4/\delta)}{n}\right).
\]}
By
\cref{lem:haar-moments}\ref{item-9150411}, and \cref{lem:state-proj}, with probability at least
$1-\delta/2$,
\[
 \left\|\widehat X-\calE\!\left(\kettbbra{I/\sqrt{d_1}}{I/\sqrt{d_1}}\right)\right\|_\infty\le\theta,\qquad
 \left\|\widetilde X-\calE\!\left(\kettbbra{I/\sqrt{d_1}}{I/\sqrt{d_1}}\right)\right\|_\infty\le2\theta.
\]
{The sample-size assumption ensures
$\theta\le2c_{\rm H}\sqrt{(D+\log(4/\delta))/n}$ and $2\theta<1$,
after increasing $c_{\rm ub}$.}
By \cref{lem:haar-moments}\ref{item-9150413}, the distribution of $\widehat X$ is
invariant under conjugation by any unitary fixing $\calE(\kettbbra{I/\sqrt{d_1}}{I/\sqrt{d_1}})$.  The projection
onto states commutes with unitary conjugation by \cref{lem:state-proj}.
Consequently,
\cref{lem:conversion}, with $\eta=2\theta$ and $\zeta=\delta/2$, gives
\[
 \|\Phi_{I/d_1}(\widetilde X)-\calE\|_\diamond
 \le2c_{\rm conv}\kappa(\calE)\theta(1+\log(8/\delta))
\]
with probability at least $1-\delta$.  The final projection onto channels
increases this bound by at most a factor of two: by the definition of $\widehat{\calE}$ in \cref{eq:diamond-proj}, $\|\widehat \calE-\Phi_{I/d_1}(\widetilde X)\|_\diamond \le \|\calE-\Phi_{I/d_1}(\widetilde X)\|_\diamond$ so by the triangle inequality $\|\calE - \widehat \calE\|_\diamond\le 2\|\calE-\Phi_{I/d_1}(\widetilde X)\|_\diamond$.
{Substituting the preceding bound on $\theta$}
proves \cref{eq:complexity}.
\end{proof}

\begin{theorem}[Non-zero Choi eigenvalues bounded below]\label{cor:gapped}
There is an absolute constant $c_{\rm ub}$ such that, for every $\mathcal{E}\in\qchannel_{d_1,d_2}^{r}$ with non-zero Choi eigenvalues $\geq cd_1/r$ for some $c>0$, the fixed-input estimator satisfies
$\Pr\{\|\widehat\calE-\calE\|_\diamond>\epsilon\}\le\delta$ whenever
\[
 {n\ge \frac{c_{\rm ub}}{c^2} \cdot \frac{r^2}{\epsilon^2}
       \left(D+\log\frac{4}{\delta}\right)
       (1+\log(8/\delta))^2}.
\]
\end{theorem}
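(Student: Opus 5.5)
This is a direct corollary of \cref{thm:ub-main}: I will bound $\kappa(\calE)$ by $r/c$ under the spectral hypothesis and substitute that bound into \cref{eq:complexity}.

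First, take $\calE\in\qchannel_{d_1,d_2}^{r}$ with every non-zero eigenvalue of $C_\calE$ at least $cd_1/r$. Then $\lambda_{\min}^+(C_\calE)\ge cd_1/r$. The rank of $C_\calE$ may be smaller than $r$, but this does not matter, since the first inequality of \cref{eq:kappa-chain} is not used. The upper bound in \cref{eq:kappa-chain} of \cref{lem:kappa} gives
\[
\kappa(\calE)\le \frac{d_1}{\lambda_{\min}^+(C_\calE)}\le \frac{d_1}{cd_1/r}=\frac{r}{c}.
\]
This is the remark recorded right after \cref{lem:kappa}. The derivation there is valid for any Kraus rank: $\lambda P_\calE\preceq C_\calE$ implies $\lambda\tr_{\mathrm B}P_\calE\preceq \tr_{\mathrm B}C_\calE=I_{\mathrm A}$.

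Second, take $c_{\rm ub}$ to be the same absolute constant as in \cref{thm:ub-main}. If
\[
n\ge \frac{c_{\rm ub}}{c^2}\cdot\frac{r^2}{\epsilon^2}\Bigl(D+\log\tfrac4\delta\Bigr)(1+\log(8/\delta))^2,
\]
then, because $\kappa(\calE)^2\le r^2/c^2$, this $n$ is at least the right-hand side of \cref{eq:complexity}. \cref{thm:ub-main} holds for every channel with the fixed-input estimator and has no other hypotheses. It therefore gives $\Pr\{\|\widehat\calE-\calE\|_\diamond>\epsilon\}\le\delta$, which is the claim.

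There is no real obstacle. The only points to check are that the constant $c_{\rm ub}$ carries over unchanged and that the sample-size condition in \cref{eq:complexity} is monotone in $\kappa(\calE)$. Monotonicity holds because the conclusion of \cref{thm:ub-main} is stated for all $n$ above the threshold. Setting $\delta=1/3$ recovers the $O(d_1d_2r^2/\epsilon^2)$ bound stated in the introduction.
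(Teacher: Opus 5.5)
Your proposal is correct and is essentially the paper's proof: the spectral hypothesis together with the upper bound $\kappa(\calE)\le d_1/\lambda_{\min}^+(C_\calE)$ from \cref{eq:kappa-chain} gives $\kappa(\calE)\le r/c$, and substituting this into the sample-size condition of \cref{thm:ub-main} with the same constant $c_{\rm ub}$ yields the claim. Your added remark, that this bound only uses $\tr_{\mathrm B}C_\calE=I_{\mathrm A}$ and so holds even when the actual Kraus rank is below $r$, is accurate.
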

\begin{proof}
The spectral assumption and \cref{eq:kappa-chain} give
$\kappa(\calE)\le r/c$.  Then, we apply \cref{thm:ub-main}.
\end{proof}

In the next section, we search adaptively for input states that spread the weight of the output evenly.

\subsection{Adaptive upper bound}\label{sec:algorithm}
We remove the gapped hypothesis of \cref{cor:gapped} at the cost of a logarithmic overhead in the query complexity. To this end, we rely on a two-batch estimator that returns an estimate of the channel and a multiplicative estimate of $H_s(\sigma_{\mathrm{in}})+rI_{\mathrm{A}}$ whose operator norm plays the role of $\kappa(\calE)$ at a general input state (rather than the maximally entangled state).
An update of the input marginal then chooses an input at which
$\|H_s(\sigma_{\mathrm{in}})\|_\infty=O(r)$.  We state the procedures first and use their
technical guarantees to prove \cref{thm-9151357}. The two-batch estimator is detailed in \cref{alg:two-batch} and the adaptive incoherent channel learning algorithm is detailed in \cref{alg:adaptive}.

\begin{definition}[Two-batch estimator]\label{def:block}
Fix a channel $\calE\in\qchannel_{d_1,d_2}^{r}$, a state
$\sigma_{\mathrm{in}}\in\calD(\mathcal{H}_{\mathrm{A}})$ with $\sigma_{\mathrm{in}}\succeq I_{\mathrm{A}}/(2d_1)$, a
regularization parameter $s>0$, and a batch size $n$.  From two
independent batches of $n$ Haar snapshots at input $\sigma_{\mathrm{in}}$, form
$\widehat X_0,\widehat X_1$ using \cref{eq:ls-n}.  Let $S$ retain the $r$
largest positive eigenvalues of $\widehat X_0$ and set its other
eigenvalues to zero.  With $P_S$ the projection onto $\ran S$, set
\begin{equation*}
 R=(S+sI_D)^{-1}P_S,
\end{equation*}
and return
\begin{equation}\label{eq:block-output}
 \widehat Z=\widehat X_1R\widehat X_1,\qquad
 \widehat H=\frac{I_{\mathrm{A}}-\Phi_{\sigma_{\mathrm{in}}}(\widehat Z)^\dagger(I_{\mathrm{B}})}s.
\end{equation}
Here the reconstruction is given by \cref{eq:Lomega}.
The operator $\widehat Z$ is positive semidefinite and the procedure
uses $2n$ queries.  The first batch chooses $R$ independently of the second
batch; this independence is used in \cref{prop:block}.
\end{definition}
\begin{algorithm}[t]
\caption{Two-batch estimator}
\label{alg:two-batch}

\KwIn{ $\calE, \sigma\succ 0, s > 0, n, d_1, d_2, r$ }
\KwOut{$\widehat Z$ and $\widehat H$}

 $D\leftarrow d_1d_2$ \;
 \For{$b\in \{0,1\}$}{
\For{$t\leftarrow 1$ \KwTo $n$}{
    Prepare $\kett{\sqrt{\sigma}} = \sum_{i,j=1}^{d_1} (\sqrt{\sigma})_{i,j}\ket{i}_{\mathrm{A}} \otimes \ket{j}_{\mathrm{A}}$ and apply $\calE\otimes \id$\;
    Measure BA in an independent Haar-random basis, obtaining outcome $\ket{\psi_{b, t}}$\;
}
$\widehat{X}_b \leftarrow \frac{1}{n} \sum_{t=1}^n \left((D+1)\ketbra{\psi_{b,t}}{\psi_{b,t}} - I_D\right)$\;
}

$S\leftarrow$ positive part of $\widehat X_0$ restricted to its $r$ largest positive eigenvalues\;
$P_S\leftarrow$ projector onto $\ran S\,; \quad R \leftarrow (S+sI_D)^{-1}P_S$\;
$\widehat Z = \widehat X_1 R \widehat X_1\,; \quad \widehat H\leftarrow\frac{1}s \left(I_{\mathrm{A}}-\Phi_{\sigma}(\widehat Z)^\dagger(I_{\mathrm{B}})\right)$\;
\Return{$(\widehat Z, \widehat H)$}\;
\end{algorithm}

\paragraph{Input update.}\label{sec:design}
For $\rho\in\calD(\mathcal{H}_{\mathrm{A}})$ define
\begin{equation}\label{eq:design-objects}
 \sigma_{\mathrm{in}}(\rho)=\frac\rho2+\frac{I_{\mathrm{A}}}{2d_1},\qquad
 A(\rho)=H_s(\sigma_{\mathrm{in}}(\rho))+rI_{\mathrm{A}}.
\end{equation}
The mixture guarantees that every queried input is bounded below by
$I_{\mathrm{A}}/(2d_1)$.  Given a positive definite estimate $A_j$ of $A(\rho_j)$, we use the matrix exponentiated-gradient update \cite{TRW05,AK16}
update
\begin{equation*}
 z_j=\tr\exp(\log\rho_j+\log A_j),\qquad
 \rho_{j+1}=\frac{\exp(\log\rho_j+\log A_j)}{z_j}.
\end{equation*}

\begin{algorithm}[t]
\caption{Adaptive incoherent quantum channel tomography}
\label{alg:adaptive}

\KwIn{ $d_1, d_2, r, \epsilon, \delta$ }
\KwOut{A channel estimate $\widehat \calE$}

 $D\leftarrow d_1d_2; \quad s\leftarrow \frac{\epsilon}{64r};\quad
 T\leftarrow\max\{1,\lceil\log_2d_1\rceil\}$\;
 $\delta'\leftarrow\frac{\delta}{T+1};\quad
 b\leftarrow {D+\left(1+\frac{d_1}{r}\right)\log\frac6{\delta'}};\quad
 n\leftarrow {\left\lceil K\frac{b}{s^2}\right\rceil}$\;
 $\rho_1 \leftarrow \frac{1}{d_1}I_{\rm A}$\;
 
\For{$j\leftarrow 1$ \KwTo $T$}{
    $\sigma_j \leftarrow \frac{1}{2}\rho_j + \frac{1}{2d_1}I_{\rm A}$\;
    $(\widehat Z_j, \widehat H_j) \leftarrow \TwoBatch(\calE, \sigma_j,s,n,d_1,d_2,r)$
    using fresh copies\;
    $A_j \leftarrow \widehat H_j +r I_{\rm A}$ with every eigenvalue below $r/2$ replaced by $r/2$ \;
   $z_j=\tr\exp\bigl(\log\rho_j+\log A_j\bigr),
 \qquad
 \rho_{j+1}=\frac{1}{z_j}\exp\bigl(\log\rho_j+\log A_j\bigr)$\;
}

$\bar{\sigma} \leftarrow \frac{1}{T} \sum_{j=1}^T \sigma_j $\;
$(\widehat Z, \widehat H) \leftarrow \TwoBatch(\calE, \bar{\sigma},s,n,d_1,d_2,r)$
    using fresh copies\;
$\widehat\calE\in\arg\min_{\calF\text{ a channel}}
       \|\calF-\Phi_{\bar{\sigma}}(\widehat Z)\|_\diamond.$\;
\Return{$\widehat \calE$}\;
\end{algorithm}

\begin{theorem}\label{thm-9151357}
There is an absolute constant $c$ such that, for every
$\calE\in\qchannel_{d_1,d_2}^{r}$, the adaptive
incoherent protocol in \cref{alg:adaptive} outputs a channel $\widehat\calE$
satisfying
\[
 \Pr\{\dnorm{\widehat\calE-\calE}>\epsilon\}\le\delta
\]
using at most
\begin{equation}\label{eq:final-count}
 {c\frac{r^2}{\epsilon^2}\log(2d_1)
 \left[
 D+\left(1+\frac{d_1}{r}\right)
 \log\left(\frac{18\log(2d_1)}{\delta}\right)
 \right]}
\end{equation}
queries.  The protocol uses $O(\log(2d_1))$ adaptive updates of the input
marginal and an independent Haar-random basis measurement after each channel
use.  
\end{theorem}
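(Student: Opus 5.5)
The proof splits into two parts, which I then combine with a union bound over the $T+1$ calls to \TwoBatch. The first part is a guarantee for a single call to \TwoBatch{} (the statement referred to as \cref{prop:block}). The second part is a guarantee for the input-update loop (\cref{prop:gapping}).

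\textbf{Step 1: one call to \TwoBatch.} Fix $\sigma_{\mathrm{in}}\succeq I_{\mathrm A}/(2d_1)$ and $s=\epsilon/(64r)$. Run with $n=\lceil Kb/s^2\rceil$, where $b=D+(1+d_1/r)\log(6/\delta')$. I claim that with probability at least $1-\delta'$ two things hold.
\begin{itemize}
\item[(a)] $\widehat H+rI_{\mathrm A}$ is a multiplicative estimate of $A=H_s(\sigma_{\mathrm{in}})+rI_{\mathrm A}$ with relative error $1/2$.
\item[(b)] If moreover $\opnorm{H_s(\sigma_{\mathrm{in}})}\le10r$, then $\dnorm{\Phi_{\sigma_{\mathrm{in}}}(\widehat Z)-\calE}\le\epsilon/2$.
\end{itemize}

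To prove this, write $X=\calE(\kettbbra{\sqrt{\sigma_{\mathrm{in}}}}{\sqrt{\sigma_{\mathrm{in}}}})$ and $E=\widehat X_1-X$, and decompose
\[
\widehat Z-X_s=X(R-R_0)X+ER X+XRE+ERE,\qquad R_0=(X+sI)^{-1}P_X.
\]
I bound each term in diamond norm via \cref{lem:weighted-norms}.
\begin{itemize}
\item The middle terms are handled by the cross bound (c). One factor is $\opnorm{\tr_{\mathrm B}C_{\Phi(XRX)}}\lesssim\opnorm{A}$.
\item For the other factor, $\Phi(ERE)$ and $\Phi(ER E)^\dagger(I)$ are controlled by conditioning on $R$, which is independent of $\widehat X_1$. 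Then $\langle v|\cdots|v\rangle$ is a scalar average of subexponential Haar snapshot variables, as in \cref{lem:haar-moments}(iii). A covering net over $\Co^{d_1}$ gives deviations of size $\sqrt{(d_1+\log)/n}$, weighted by $\opnorm{\sigma_{\mathrm{in}}^{-1}}\le2d_1$ and by $\opnorm{R}\le1/s$ on an $r$-dimensional range. This is where the $(d_1/r)\log(1/\delta')$ term in $b$ comes from.
\item The term $X(R-R_0)X$ is bounded using $\opnorm{\widehat X_0-X}\le\theta\ll s$ (\cref{lem:haar-moments}(iii)) together with Weyl/resolvent perturbation of the rank-$r$ truncation. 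This yields $X(R-R_0)X$ within $O(\theta/s)\cdot X f_s(X)$-type Löwner sandwiches, and \cref{lem:weighted-norms}(b) then converts these into trace bounds.
\end{itemize}

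Collecting terms gives $\dnorm{\Phi(\widehat Z)-\calE_s}\lesssim\sqrt{b/n}\,\opnorm{A}$ and $\opnorm{s(\widehat H-H_s)}\lesssim\sqrt{b/n}\,\opnorm{A}$. With $\sqrt{b/n}\le s/\sqrt K$ the latter becomes the multiplicative guarantee (a). Adding the bias $s\opnorm{H_s}\le 10rs<\epsilon/4$ from \cref{lem:ridge-tail} gives (b). I expect this step to be the main obstacle: getting multiplicative rather than additive control of $\widehat H$ from an $r$-truncated sandwich, while keeping only $D$ (not $D\log D$) in the count.

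\textbf{Step 2: the update loop.} On the good events, $A_j$ satisfies $\tfrac12A(\rho_j)\preceq A_j\preceq\tfrac32A(\rho_j)$; clipping at $r/2$ preserves this since $A(\rho_j)\succeq rI$. Golden--Thompson gives $z_j\le\tr(\rho_jA_j)\le\tfrac32\tr(\rho_jA(\rho_j))$. From $\rho_j\preceq2\sigma_j$ and \cref{lem:resolvent-marginal},
\[
\tr(\rho_jA(\rho_j))\le 2\tr(\sigma_jH_s(\sigma_j))+r\le3r,
\]
so $z_j\le 9r/2$.

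Next I iterate the Golden--Thompson-type bound
\[
\tr\exp\Big(\log\rho_1+\sum_j\log A_j\Big)\le\prod_j z_j,
\]
which follows from the recursion $\rho_{T+1}=\exp(\log\rho_1+\sum_j\log A_j)/\prod_j z_j$. Taking $\lambda_{\max}$ gives
\[
\frac1T\sum_j\log A_j\preceq\frac{\log d_1}{T}+\log(9r/2),
\]
and with $T\ge\log_2 d_1$ the first term is at most $\log 2$.

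Then I use operator convexity of $\rho\mapsto\log A(\rho)$ together with $A(\rho_j)\preceq2A_j$ and operator monotonicity of $\log$:
\[
\log A(\bar\rho)\preceq\frac1T\sum_j\log A(\rho_j)\preceq\log2+\frac1T\sum_j\log A_j\preceq\log(18r),
\]
where $\bar\sigma=\sigma_{\mathrm{in}}(\bar\rho)$. The constants need tightening to reach $H_s(\bar\sigma)\preceq9r$, for instance by using relative error $\eta$ smaller than $1/2$ through a larger $K$. This shows $\opnorm{H_s(\bar\sigma)}\le10r$.

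\textbf{Step 3: conclusion.} Apply Step 1(b) at $\bar\sigma$, and use the projection-onto-channels argument from \cref{thm:ub-main} (a factor of 2) to get $\dnorm{\widehat\calE-\calE}\le\epsilon$. A union bound over the $T+1$ calls gives failure probability at most $\delta$. The query count is $2n(T+1)=O\big((r^2/\epsilon^2)\log(2d_1)\,b\big)$, and since $\delta'=\delta/(T+1)$ this matches \cref{eq:final-count}.
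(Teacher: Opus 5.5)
Your overall architecture matches the paper's: fresh batches conditioned on the transcript, a union bound over the $T+1$ calls, Golden--Thompson plus operator convexity of $\rho\mapsto\log A(\rho)$, and the factor-$2$ projection. Step~2 closes with relative error $1/4$, exactly as in \cref{prop:gapping}. The gap is in Step~1, at the point you flagged, and it has two parts.

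\textbf{First part: additive is not multiplicative.} The claim ``$\opnorm{s(\widehat H-H_s)}\lesssim\sqrt{b/n}\,\opnorm{A}$ becomes the multiplicative guarantee (a)'' does not follow. An error measured against $\opnorm{A}$ is additive. At early inputs $\opnorm{A(\rho_j)}$ can be far larger than $r$ while $A(\rho_j)\approx rI_{\mathrm A}$ in other directions; this is why input selection is needed at all. So such a bound does not give the L\"owner comparisons $A(\rho_j)\preceq 2A_j$ and $\tr(\rho_jA_j)=O(r)$ that Step~2 uses. What you need is $-(cH+c'rI_{\mathrm A})\preceq\widehat H-H\preceq cH+c'rI_{\mathrm A}$. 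Hence each error term in $\widehat Z-X_s$ must be L\"owner-dominated by something whose image under $M\mapsto\Phi_{\sigma_{\mathrm{in}}}(M)^\dagger(I_{\mathrm B})$ is at most $O(s)H+O(sr)I_{\mathrm A}$.

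\textbf{Second part: your cross-term pairing fails even additively.} Taking $U=ER^{1/2}$, $V=XR^{1/2}$ in part (c) of \cref{lem:weighted-norms} puts the weight $\opnorm{R}\le s^{-1}$ on the noise.
\begin{itemize}
\item The quantity $\opnorm{\Phi_{\sigma_{\mathrm{in}}}(ERE)^\dagger(I_{\mathrm B})}$ can genuinely be of order $rs/K$. At $\sigma_{\mathrm{in}}=I_{\mathrm A}/d_1$ its mean is about $\tr(R)\,D/n$ times $I_{\mathrm A}$, and $\tr R\gtrsim 1/s$ as soon as $S$ retains an eigenvalue of order $s$. An example is a mixture of two orthogonal unitary channels with weight about $s/32$ on the second, where $\opnorm{H}\approx 1$, so the input is already good.
\item Meanwhile $\Phi_{\sigma_{\mathrm{in}}}(XRX)^\dagger(I_{\mathrm B})\approx I_{\mathrm A}-sH$ has norm about $1$.
\item The cross term is then bounded only by order $\sqrt{rs/K}$ rather than $O(rs)=O(\epsilon)$. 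Closing this would force $n$ to grow like $\epsilon^{-3}$.
\item The L\"owner split $\pm(ERX+XRE)\preceq aXRX+a^{-1}ERE$ likewise leaves an additive error of order $\sqrt{r/(Ks)}\,I_{\mathrm A}$ in $\widehat H$ for every $a$.
\end{itemize}

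\textbf{The missing idea} is the factorization in \cref{lem:sandwich}. With $F=f_s(X)$, the matrix $W:=RXF^{+1/2}$ satisfies $RX=WF^{1/2}$, $\rank W\le r$ and $\opnorm{W}\le 5/4$, since $XF^{+1/2}=(X+sI)F^{1/2}$ and $R(S+sI)=P_S$.
\begin{itemize}
\item This moves the large weight of $R$ off the noise and onto $F^{1/2}$, whose image is exactly $\Phi_{\sigma_{\mathrm{in}}}(F)^\dagger(I_{\mathrm B})=H$.
\item Then $\pm(EWF^{1/2}+F^{1/2}W^\dagger E)\preceq aF+a^{-1}EWW^\dagger E$ with $a=s/16$.
\item \cref{lem:weighted-conc} applied to $W$ gives $\opnorm{\Phi_{\sigma_{\mathrm{in}}}(EWW^\dagger E)^\dagger(I_{\mathrm B})}\le 4qrs^2$, a factor $s$ better than for $ERE$.
\item After dividing by $s$, the cross terms contribute at most $\tfrac{1}{16}H+64qrI_{\mathrm A}$ in L\"owner order. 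In diamond norm they contribute $4s\sqrt{qr\opnorm{H}}=O(rs)$ when $\opnorm{H}\le 10r$.
\item The bound $O(qrs)$ for $ERE\succeq 0$ is then used only for that term itself, where it suffices.
\end{itemize}

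\textbf{Two further corrections in Step~1.}
\begin{itemize}
\item The approximation term must be sandwiched as $\pm 3\eta F$, with image $3\eta H$. A multiple of $(\theta/s)X_s$ does not work: it fails at eigenvalues of $X$ of order $\theta$, and it would in any case add a constant of order $\theta/s$ to the diamond error.
\item $\bra{v}\Phi_{\sigma_{\mathrm{in}}}(ERE)^\dagger(I_{\mathrm B})\ket{v}$ is quadratic in $E$, not an average of snapshots. The paper linearizes it as $\|\calA_W(E)\ket{v}\|_2^2$ and uses nets over $\Co^{rd_2}\times\Co^{d_1}$. This is where $D=(d_1/r)\cdot rd_2$ enters $b$, not a net over $\Co^{d_1}$ alone.
\end{itemize}
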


\begin{proof}
{The choice $s=\epsilon/(64r)$ satisfies
$s\le\min\{1,\sqrt{d_1/r}\}$, as required by \cref{prop:block}.}
We condition on the transcript preceding each two-batch estimator.  It fixes
the input, and both batches are fresh.  By \cref{prop:block}, with
conditional probability at least $1-\delta'$, the estimate satisfies
\[
 \frac34 A(\rho_j)\preceq\widehat H_j+rI_{\mathrm{A}}
 \preceq\frac54 A(\rho_j).
\]
The same proposition applies to the final estimator.  A conditional union
bound shows that all $T+1$ guarantees hold simultaneously with probability
at least $1-\delta$.  From now on, we work on this event.

\textbf{Input selection.}
Since
$
A(\rho_j)\succeq rI_{\mathrm{A}}
$
and the lower bound in \cref{eq:block-multiplicative} gives
$
\widehat H_j+rI_{\mathrm{A}}
\succeq
\frac34 A(\rho_j)
\succeq
\frac{3r}{4}I_{\mathrm{A}},
$
no eigenvalue is replaced by the clipping step. Hence
$
A_j=\widehat H_j+rI_{\mathrm{A}}.
$
Therefore, \cref{prop:gapping} gives
\begin{equation}\label{eq:designed}
 H_s(\bar{\sigma}_{\mathrm{in}})\preceq9rI_{\mathrm{A}}.
\end{equation}

\textbf{Estimation error.}
Conditional on the transcript from the input-selection rounds, the final
input is fixed.
Combining \cref{eq:designed} with \cref{prop:block} yields
$\|\Phi_{\bar{\sigma}_{\mathrm{in}}}(\widehat Z)-\calE\|_\diamond\le11rs$.
The true channel is feasible in the final minimization, so
\[
 \|\widehat\calE-\calE\|_\diamond
 \le2\|\Phi_{\bar{\sigma}_{\mathrm{in}}}(\widehat Z)-\calE\|_\diamond
 \le22rs=\frac{22}{64}\epsilon<\epsilon.
\]

\textbf{Query complexity.}
The algorithm uses $2n(T+1)$ queries.  Since
{$T+1=O(\log(2d_1))$ and
$\delta'=\delta/(T+1)$, substituting $s=\epsilon/(64r)$ and
$T+1\le3\log(2d_1)$ gives \cref{eq:final-count}.}

\end{proof}

\subsection{Technical lemmas}\label{sec:ub-technical}
Building on the identities proved in \cref{sec:prelim}, we establish   the error bounds for a fixed maximally entangled
input and the guarantees for regularized estimation
and input selection.

\subsubsection{Error estimates for a fixed maximally entangled input}\label{sec:fixed-lemmas}
\begin{lemma}[\cite{GKKT20}]\label{lem:rank}
Let $\rho$ be a state of rank $r$ on a $D$-dimensional space and $\sigma$ any
state.  Then
\[\trnorm{\sigma-\rho}\le2r\opnorm{\sigma-\rho}.\]
\end{lemma}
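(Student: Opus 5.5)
The plan is to split $\Delta:=\sigma-\rho$ along the support of $\rho$. Let $P$ be the orthogonal projection onto $\supp\rho$, which has rank $r$, and let $P^\perp=I-P$. Since $\tr\sigma=\tr\rho=1$, we have $\tr\Delta=0$. We also have $P^\perp\Delta P^\perp=P^\perp\sigma P^\perp\succeq0$, because $P^\perp\rho P^\perp=0$.

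The first step is to handle this complement block using only the trace constraint. Since $P^\perp\Delta P^\perp$ is positive, its trace norm equals its trace. Then $\tr\Delta=0$ gives
\[
\trnorm{P^\perp\Delta P^\perp}=\tr(P^\perp\Delta P^\perp)=-\tr(P\Delta P)\le\trnorm{P\Delta P}.
\]
The support block $P\Delta P$ has rank at most $r$ and operator norm at most $\opnorm\Delta$, so $\trnorm{P\Delta P}\le r\opnorm\Delta$. Together these give $\trnorm{P\Delta P}+\trnorm{P^\perp\Delta P^\perp}\le2r\opnorm\Delta$.

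The cross blocks $P\Delta P^\perp+P^\perp\Delta P$ are the step I expect to be the main obstacle. A naive bound gives $\trnorm{P\Delta P^\perp}\le r\opnorm\Delta$ from the rank of $P\Delta P^\perp$. Adding this to the diagonal blocks would give $4r\opnorm\Delta$, not $2r$. To get the sharp constant I would avoid the block decomposition altogether and argue with the spectrum of $\Delta$.

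Write $\Delta=\Delta_+-\Delta_-$, so that $\trnorm\Delta=2\tr\Delta_-$ because $\tr\Delta=0$. Here $\Delta_-$ is the negative part of $\sigma-\rho$, and $-\Delta\preceq\rho$ since $\sigma\succeq0$. By Weyl monotonicity, the negative eigenvalues of $\Delta$ are dominated by those of $-\rho$. So $\Delta$ has at most $r$ negative eigenvalues, and each has magnitude at most $\opnorm\Delta$. Hence $\tr\Delta_-\le r\opnorm\Delta$, and therefore
\[
\trnorm{\sigma-\rho}=2\tr\Delta_-\le2r\opnorm{\sigma-\rho}.
\]
This spectral argument replaces the block splitting entirely and gives the stated constant. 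The only step needing care is the eigenvalue count. To justify it, note that the restriction of $\Delta$ to $\ker\rho$ is positive semidefinite, so any subspace on which $\Delta$ is negative definite meets $\ker\rho$ trivially and has dimension at most $r$. By min–max this means $\Delta$ has at most $r$ negative eigenvalues.
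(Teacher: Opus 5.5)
Your final spectral argument is correct and is essentially the paper's proof. The paper likewise shows that $\sigma-\rho$ has at most $r$ negative eigenvalues, because the span of any $r+1$ negative eigenvectors would meet $\ker\rho$, where the quadratic form of $\sigma-\rho$ equals that of $\sigma\succeq0$; it then uses $\tr(\sigma-\rho)=0$ to get $\trnorm{\sigma-\rho}=2\tr(\sigma-\rho)_-\le2r\opnorm{\sigma-\rho}$, so the block-decomposition detour can simply be dropped.
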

\begin{proof}
 Set $M=\sigma-\rho$.  If $M$ had $r+1$ negative eigenvalues, the span $V$ of
the corresponding eigenvectors would meet $\ker\rho$, of dimension $D-r$, in a
non-zero vector $\ket{v}$; but $\bra{v}M\ket{v}<0$ and
$\bra{v}M\ket{v}=\bra{v}\sigma\ket{v}\ge0$, a contradiction.  So $M$
has at most $r$ negative eigenvalues, and since $\tr M=0$ its positive and
negative parts have equal trace; hence
$\trnorm M=2\tr M_-\le2r\opnorm M$.  
\end{proof}
\begin{lemma}[Hilbert--Schmidt projection onto states \cite{GKKT20}]\label{lem:state-proj}
Let $\widehat X$ be Hermitian with $\tr\widehat X=1$ and let
$\widetilde X=\arg\min_{\sigma\in\calD(\Co^{D})}\hsnorm{\sigma-\widehat X}$.
Then $\widetilde X$ is obtained by keeping the eigenvectors of $\widehat X$
and replacing its eigenvalues $x_i$ by $(x_i-q)_+$, where $q\in\R$ is the
unique value with $\sum_i(x_i-q)_+=1$, and $a_+=\max\{a,0\}$.
Consequently
\begin{equation*}
 \opnorm{\widetilde X-X}\le2\opnorm{\widehat X-X}
 \quad\text{for every state }X,
\end{equation*}
and the map $\widehat X\mapsto\widetilde X$ commutes with unitary
conjugation:
$\widetilde{U\widehat XU^{\dagger}}=U\widetilde XU^{\dagger}$ for every unitary $U$.
\end{lemma}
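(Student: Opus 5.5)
The plan is to prove the three claims in order: first the water-filling formula, then the operator-norm bound, then unitary covariance, which will follow almost for free from the formula.

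\textbf{Formula.} Since $\calD(\Co^D)$ is convex and closed and $\hsnorm{\cdot}^2$ is strictly convex, the minimizer $\widetilde X$ exists and is unique. Let $\widehat X=\sum_i x_i\ketbra{e_i}{e_i}$, and let $\mathcal P(\sigma)=\sum_i\ketbra{e_i}{e_i}\sigma\ketbra{e_i}{e_i}$ be the pinching in this eigenbasis. The pinching maps states to states, and it satisfies
\[
\hsnorm{\sigma-\widehat X}^2=\hsnorm{\mathcal P(\sigma)-\widehat X}^2+\hsnorm{\sigma-\mathcal P(\sigma)}^2 ,
\]
because $\widehat X$ is diagonal in this basis. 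Uniqueness therefore forces $\widetilde X$ to be diagonal in the basis $\{\ket{e_i}\}$. This reduces the problem to the Euclidean projection of $(x_i)$ onto the probability simplex. The KKT conditions for that projection give the solution $(x_i-q)_+$. The function $q\mapsto\sum_i(x_i-q)_+$ is continuous and nonincreasing, and it is strictly decreasing wherever it is positive; this yields a unique $q$ with $\sum_i(x_i-q)_+=1$. Hence $\widetilde X=(\widehat X-qI)_+$.

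\textbf{Norm bound.} Put $\theta=\opnorm{\widehat X-X}$ and $\Delta=\widehat X-X$. The key intermediate claim is $|q|\le\theta$, which I will prove via Weyl's inequalities. If $q>\theta$, then
\[
\lambda_i(\widehat X-qI)\le\lambda_i(X)+\theta-q<\lambda_i(X),
\]
so $\sum_i(\lambda_i(\widehat X)-q)_+<\sum_i\lambda_i(X)=1$. The inequality is strict because some $\lambda_i(X)>0$, and the terms with $\lambda_i(X)=0$ contribute $0$ on the left. This is a contradiction. Symmetrically, if $q<-\theta$, then every eigenvalue satisfies $\lambda_i(\widehat X-qI)>\lambda_i(X)\ge0$, so the sum exceeds $1$, again a contradiction.

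Next write $\widehat X-qI=Y_+-N$, where $Y_+=\widetilde X$ and $N=(\widehat X-qI)_-\succeq0$. Since $\lambda_{\min}(\widehat X-qI)\ge\lambda_{\min}(X)-\theta-q\ge-(\theta+q)$, we get $\opnorm N\le(\theta+q)_+$. Then
\[
\widetilde X-X=(\Delta-qI)+N .
\]
For the lower bound, $N\succeq0$ gives $\widetilde X-X\succeq-(\theta+q)I$. For the upper bound, $\widetilde X-X\preceq(\theta-q)I+(\theta+q)_+I$. Checking the two cases $q\ge0$ and $q<0$ separately, using $|q|\le\theta$, shows that both bounds lie in $[-2\theta,2\theta]$. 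This gives $\opnorm{\widetilde X-X}\le2\theta$.

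\textbf{Covariance.} The matrix $U\widehat XU^\dagger$ has eigenvectors $U\ket{e_i}$ with the same eigenvalues $x_i$, so it has the same threshold $q$. The formula then gives $\widetilde{U\widehat XU^\dagger}=U\widetilde XU^\dagger$. Alternatively, this follows from uniqueness of the minimizer together with unitary invariance of both $\calD$ and $\hsnorm{\cdot}$.

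The main obstacle is the operator-norm bound. The positive part is not operator monotone, so a naive Lipschitz argument fails; the way around it is to control $q$ first and then bound the negative part $N$ directly, as above.
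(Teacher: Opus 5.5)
Your proof is correct and follows essentially the same route as the paper: first show $|q|\le\theta$, then control the eigenvalue clipping. The differences are minor. You derive the water-filling formula yourself via pinching and simplex KKT, whereas the paper only cites it. You bound $q$ with Weyl's inequalities rather than the paper's L\"owner monotonicity of $Z\mapsto\tr Z_+$. Finally, you sandwich $\widetilde X-X=(\Delta-qI)+N$ directly instead of proving $\opnorm{\widetilde X-\widehat X}\le\theta$ eigenvalue by eigenvalue and applying the triangle inequality, which also lets you avoid the paper's separate $\theta=0$ case.
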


\begin{proof}
The eigenvalue formula for the Hilbert--Schmidt projection onto density
operators is given in \cite{Smolin12} and \cite[Sec.~4.2]{GKKT20}.
Since this formula changes only the eigenvalues, the projection commutes
with unitary conjugation.

If $\widehat X=X$, the claim is immediate.  Otherwise, put $\theta=\opnorm{\widehat X-X}$ and note that
$q\mapsto\tr(\widehat X-qI)_+$ is nonincreasing.  Since
$\widehat X\preceq X+\theta I$ and $Z\mapsto\tr Z_+
=\max_{0\preceq\Pi\preceq I}\tr(Z\Pi)$ is L\"owner monotone,
$\tr(\widehat X-\theta I)_+\le\tr X_+=1$, so $q\le\theta$.  If we had
$q\le-\theta$ then $\widehat X-qI\succeq\widehat X+\theta I\succeq X\succeq0$
and hence $\tr(\widehat X-qI)_+=\tr\widehat X-qD=1-qD\ge1+\theta D>1$, a
contradiction; so $|q|\le\theta$.  Finally, for each eigenvalue,
$|(x_i-q)_+-x_i|$ equals $|q|$ if $x_i\ge q$ and $|x_i|\le\max(\theta,|q|)$
otherwise, hence is at most $\theta$ in both cases.  Thus
$\opnorm{\widetilde X-\widehat X}\le\theta$ and the triangle inequality
finishes the proof.
\end{proof}

\paragraph{Gaussian estimates.}
A standard complex Gaussian scalar has independent real and imaginary
parts, each distributed as $N(0,1/2)$.  The Gaussian matrix estimates in this subsection
use this normalization.

\begin{fact}[Real Gaussian quadratic forms {\cite[Lem.~1]{LM00}}]
\label{fact:laurent-massart}
Let $\xi_1,\xi_2,\dots$ be i.i.d.\ real standard Gaussians and $a_i\ge0$ with
$\sum_ia_i<\infty$.  Then for $u\ge0$,
\[
 \Pr\Bigl\{\sum_ia_i\xi_i^{2}\ge\sum_ia_i
 +2\|a\|_2\sqrt u+2\|a\|_\infty u\Bigr\}\le e^{-u},
 \quad
 \Pr\Bigl\{\sum_ia_i\xi_i^{2}\le\sum_ia_i-2\|a\|_2\sqrt u\Bigr\}\le e^{-u}.
\]
\end{fact}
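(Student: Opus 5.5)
This Fact is quoted from \cite[Lem.~1]{LM00}, and we would verify it by the Cramér--Chernoff method applied to the centered variable $Z=\sum_i a_i(\xi_i^2-1)$. We first treat finitely many nonzero $a_i$, and pass to the general summable case at the end.

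\textbf{Upper tail.} We use the explicit moment generating function $\E e^{t\xi^2}=(1-2t)^{-1/2}$, valid for $t<1/2$. For $0\le\lambda<1/(2\|a\|_\infty)$, independence gives
\[
\psi(\lambda):=\log\E e^{\lambda Z}=\sum_i\Bigl(-\lambda a_i-\tfrac12\log(1-2\lambda a_i)\Bigr).
\]
Next we use the elementary inequality $-x-\tfrac12\log(1-2x)\le x^2/(1-2x)$ for $0\le x<1/2$, which follows by comparing power series termwise. Together with $a_i\le\|a\|_\infty$ in the denominators, this yields
\[
\psi(\lambda)\le\frac{\lambda^2\|a\|_2^2}{1-2\lambda\|a\|_\infty}.
\]
This is the sub-gamma condition $\psi(\lambda)\le\lambda^2v/(2(1-c\lambda))$ with $v=2\|a\|_2^2$ and $c=2\|a\|_\infty$.

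The standard sub-gamma tail bound then gives $\Pr\{Z\ge\sqrt{2vu}+cu\}\le e^{-u}$. To derive it, optimize $\lambda$ in Markov's inequality; the Legendre transform of $\lambda^2v/(2(1-c\lambda))$ has inverse $u\mapsto\sqrt{2vu}+cu$. Substituting $v$ and $c$ produces exactly the threshold $2\|a\|_2\sqrt u+2\|a\|_\infty u$.

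\textbf{Lower tail.} For $\lambda>0$ we have
\[
\log\E e^{-\lambda Z}=\sum_i\Bigl(\lambda a_i-\tfrac12\log(1+2\lambda a_i)\Bigr).
\]
Applying $\log(1+x)\ge x-x^2/2$ for $x\ge0$ bounds this by $\lambda^2\|a\|_2^2$, with no restriction on $\lambda$. So $-Z$ is sub-Gaussian with variance proxy $2\|a\|_2^2$. Choosing $\lambda=\sqrt u/\|a\|_2$ in Chernoff's bound gives $\Pr\{Z\le-2\|a\|_2\sqrt u\}\le e^{-u}$.

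\textbf{Infinite sums.} For the case of infinitely many $a_i$ with $\sum_i a_i<\infty$, we apply the finite bounds to truncations $\sum_{i\le N}$. The partial sums $\sum_{i\le N}a_i\xi_i^2$ increase almost surely to $\sum_i a_i\xi_i^2$, which is finite a.s. because its expectation is $\sum_i a_i<\infty$. The thresholds $\sum_{i\le N}a_i$, $\|a_{\le N}\|_2$ and $\|a_{\le N}\|_\infty$ converge to their full counterparts. Passing to the limit in the event probabilities, using continuity of measure and monotone convergence (applied to the moment generating functions), gives both inequalities.

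The only step with real content is the scalar inequality $-x-\tfrac12\log(1-2x)\le x^2/(1-2x)$ and the resulting sub-gamma inversion. These are standard, so we do not expect any genuine obstacle.
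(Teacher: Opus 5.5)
Your proof is correct and is essentially the argument of \cite{LM00}. The paper gives no proof of its own and simply cites Lemma~1 there, which proceeds exactly as you do: Cram\'er--Chernoff with $\log\E e^{\lambda Z}\le\lambda^2\|a\|_2^2/(1-2\lambda\|a\|_\infty)$ and the sub-gamma inversion for the upper tail, and $\log\E e^{-\lambda Z}\le\lambda^2\|a\|_2^2$ for the lower tail. Your extension to infinitely many $a_i$ is fine but not needed here, since the paper only applies the Fact with finitely many weights (in \cref{cor:gauss-sandwich} and \cref{lem:gaussian-sv}).
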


\begin{corollary}[Complex Gaussian quadratic forms]\label{cor:gauss-sandwich}
Let $T\succeq0$ act on $\Co^{N}$ and let $\ket{g_1},\dots,\ket{g_m}$ be i.i.d.\ standard
complex Gaussian vectors in $\Co^{N}$: their coordinates have
independent real and imaginary parts distributed as $N(0,1/2)$.
Then for every $u\ge0$,
\[
 \Pr\Bigl\{\sum_{j=1}^{m}\bra{g_j}T\ket{g_j}
 \ge m\tr T+\sqrt{2m\tr(T^{2})\,u}+\opnorm Tu\Bigr\}\le e^{-u}.
\]
\end{corollary}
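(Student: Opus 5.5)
The plan is to diagonalize $T$ and reduce everything to \cref{fact:laurent-massart}. Write $T=\sum_{k=1}^N t_k\ketbra{e_k}{e_k}$ with $t_k\ge0$ and orthonormal $\ket{e_k}$. A standard complex Gaussian vector is invariant in law under unitaries, so for each $j$ the coefficients $\braket{e_k}{g_j}$, $k=1,\dots,N$, are again i.i.d.\ standard complex Gaussians. Independence across $j$ then makes all of these coefficients, over both $k$ and $j$, jointly i.i.d.

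In these coordinates
\[
 \sum_{j=1}^m\bra{g_j}T\ket{g_j}=\sum_{j,k}t_k\bigl(\Re\braket{e_k}{g_j}^2+\Im\braket{e_k}{g_j}^2\bigr).
\]
Each real and imaginary part is $N(0,1/2)$, so it equals $\xi/\sqrt2$ for a real standard Gaussian $\xi$. The sum is therefore $\sum_i a_i\xi_i^2$ over $2mN$ i.i.d.\ real standard Gaussians. The weights are $a_i=t_k/2$, and each value $t_k/2$ appears exactly $2m$ times (once for each $j$ and each of the real and imaginary parts).

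Next I compute the three quantities in \cref{fact:laurent-massart}:
\[
 \sum_i a_i=2m\cdot\tfrac12\tr T=m\tr T,\qquad
 \|a\|_2^2=2m\sum_k t_k^2/4=\tfrac{m}{2}\tr(T^2),\qquad
 \|a\|_\infty=\tfrac12\opnorm T.
\]
The upper-tail bound of the Fact gives the threshold
\[
 m\tr T+2\sqrt{\tfrac m2\tr(T^2)\,u}+\opnorm T\,u
 = m\tr T+\sqrt{2m\tr(T^2)\,u}+\opnorm T\,u,
\]
exceeded with probability at most $e^{-u}$. This is exactly the claimed bound.

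There is no real obstacle here. The only point needing care is the normalization: the variance $1/2$ produces the factor $1/2$ in the weights, and the doubling from real and imaginary parts must be counted correctly so the constants match the statement exactly. The Fact is stated for possibly infinite sums, and our finite sum is a special case.
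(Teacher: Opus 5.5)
Your proof is correct and follows the paper's argument essentially verbatim. Like the paper, you diagonalize $T$, rewrite the sum as $\sum_i a_i\xi_i^2$ with each weight $t_k/2$ repeated $2m$ times, compute $\sum_i a_i=m\tr T$, $\|a\|_2^2=\tfrac m2\tr(T^2)$ and $\|a\|_\infty=\tfrac12\opnorm T$, and then apply \cref{fact:laurent-massart}; your explicit appeal to unitary invariance to justify that the coefficients $\braket{e_k}{g_j}$ are i.i.d.\ standard complex Gaussians is a step the paper leaves implicit.
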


\begin{proof}
Diagonalize $T=\sum_k\lambda_k\ketbra{u_k}{u_k}$.  Then
$\sum_j\bra{g_j}T\ket{g_j}=\sum_k\lambda_k\sum_j|\braket{u_k}{g_j}|^{2}$, and the
$|\braket{u_k}{g_j}|^{2}$ are i.i.d.\ mean-one exponentials, i.e.\
$\tfrac12\chi^{2}_{2}$.  Hence the sum equals $\sum_ia_i\xi_i^{2}$ with each
$\lambda_k$ appearing $2m$ times with weight $a=\lambda_k/2$.  Then
$\sum_ia_i=m\tr T$, $\|a\|_2^{2}=\tfrac m2\tr(T^{2})$ and
$\|a\|_\infty=\tfrac12\opnorm T$, and \cref{fact:laurent-massart} gives
the claim.
\end{proof}

\begin{lemma}[Extreme singular values of complex Gaussian matrices]
\label{lem:gaussian-sv}
Let $G$ be an $N\times m$ matrix whose entries have independent real
and imaginary parts distributed as $N(0,1/2)$.  Write $s_{\max}(G)$
and $s_{\min}(G)$ for its largest and smallest singular values.
Then for every $t\ge0$,
\begin{equation}\label{eq:smax}
 \Pr\bigl\{s_{\max}(G)\ge4(\sqrt N+\sqrt m+t)\bigr\}\le e^{-t^{2}},
\end{equation}
and if $N\ge125\,m$,
\begin{equation*}
 \Pr\bigl\{s_{\min}(G)\le\tfrac15\sqrt N\bigr\}\le2e^{-N/16}.
\end{equation*}
(Sharper constants are classical; see Davidson and Szarek
{\cite[Thm.~II.13]{DS01}} and Vershynin {\cite[Thm.~4.6.1]{Vershynin18}}.  The following proof gives the constants used here.)
\end{lemma}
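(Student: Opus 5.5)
For the upper tail \cref{eq:smax} we use an $\epsilon$-net argument built from \cref{lem:covering-norms} together with the quadratic-form tail \cref{cor:gauss-sandwich}. For the lower tail on $s_{\min}$ we combine the same net with the lower Laurent--Massart bound.

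\textbf{Largest singular value.} Fix $1/4$-covering nets $\calN_N$ and $\calN_m$ of the two unit spheres. By \cref{cor:covering-union}, it suffices to control $\bra{x}G\ket{y}$ for fixed unit vectors. This scalar is a standard complex Gaussian, so $|\bra{x}G\ket{y}|^2$ is exponential with mean one, and
\[
\Pr\{|\bra{x}G\ket{y}|>\tau\}=e^{-\tau^2}.
\]
Take $\tau=2(\sqrt N+\sqrt m+t)$. Then
\[
\tau^2\ge 4(N+m+t^2)\ge 2(N+m)\log 9+t^2,
\]
since $4>2\log 9\approx4.39$ fails. We therefore need more slack, and we use the constant $4$ instead. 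Applying \cref{eq:covering-union-bilinear} with $2\tau=4(\sqrt N+\sqrt m+t)$ requires
\[
9^{2(N+m)}e^{-\tau^2}\le e^{-t^2}.
\]
With $\tau=2(\sqrt N+\sqrt m+t)$ we have $\tau^2\ge 4N+4m+4t^2$. The condition $2\log 9\,(N+m)\le 4(N+m)+3t^2$ is not automatic, because $2\log 9\approx 4.39>4$. To fix this, we instead use the cross term: $\tau^2=4(\sqrt N+\sqrt m)^2+8t(\sqrt N+\sqrt m)+4t^2$, and $(\sqrt N+\sqrt m)^2\ge N+m$. The shortfall is $0.4(N+m)$, which we recover by using a finer $1/8$-net. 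The net cardinality then becomes $17^{2(N+m)}$ with the factor $1/(1-2\eta)=4/3$. Alternatively, we keep the $1/4$-net and note that $\|G\ket y\|_2^2$ for fixed $y$ is a $\tfrac12\chi^2_{2N}$ variable, applying \cref{cor:gauss-sandwich} with $T=I_N$ and $m=1$. Then only the $\calN_m$ net is needed, via \cref{eq:covering-linear}, with cardinality $9^{2m}$. We expect this second route to give the constant $4$ comfortably, since
\[
\|G\ket y\|^2\le N+\sqrt{2Nu}+u\le(\sqrt N+\sqrt u)^2
\]
with $u=t^2+2m\log 9$. This yields
\[
s_{\max}\le\tfrac43\bigl(\sqrt N+\sqrt{t^2+4.4m}\bigr)\le 4(\sqrt N+\sqrt m+t).
\]

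\textbf{Smallest singular value.} For fixed unit $\ket y\in\Co^m$, $\|G\ket y\|_2^2=\sum_{i=1}^{2N}\tfrac12\xi_i^2$. The lower bound of \cref{fact:laurent-massart} gives
\[
\|G\ket y\|^2\ge N-\sqrt{2Nu}
\]
except with probability $e^{-u}$. Taking $u=N/8$ gives $\|G\ket y\|\ge \sqrt{N/2}$. We then apply \cref{eq:covering-smin} with an $\eta$-net of $\Co^m$ and the operator-norm bound from the first part, $\opnorm G\le 4(\sqrt N+\sqrt m+t)\le c\sqrt N$ when $N\ge125m$, with $t^2=N/16$. This gives
\[
s_{\min}\ge\sqrt{N/2}-\eta\cdot c\sqrt N\ge\tfrac15\sqrt N
\]
for a small absolute $\eta$, say $\eta=1/30$. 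The union bound costs $(1+2/\eta)^{2m}e^{-N/8}$, and the condition $N\ge125m$ is exactly what makes $2m\log(61)\le N/16$. Adding the $e^{-N/16}$ failure probability of the operator-norm event gives $2e^{-N/16}$.

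The main obstacle is the constant bookkeeping. We must choose net radii so that the cardinality factor $(1+2/\eta)^{2m}$ is absorbed by $e^{-N/16}$ using only $N\ge125m$, while still leaving $\sqrt{N/2}-\eta\,s_{\max}\ge\sqrt N/5$. We would verify this numerically for the chosen $\eta$; $\eta$ can be adjusted if the check fails.
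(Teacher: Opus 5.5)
Your final argument is essentially the paper's. For $s_{\max}$ you net only the unit sphere of $\Co^m$ and use the upper $\tfrac12\chi^2_{2N}$ tail of $\|G\ket{y}\|_2^2$. The paper uses a $1/2$-net with factor $2$ and $u=3.3m+t^2$; your $1/4$-net with factor $4/3$ and $u=t^2+2m\log 9$ works equally well. You correctly saw that the bilinear $1/4$-net attempt cannot give the constant $4$; it should simply be deleted rather than patched. For $s_{\min}$ you combine the lower tail at $u=N/8$ with \cref{eq:covering-smin} and the $s_{\max}$ bound, exactly as the paper does.

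The one step that fails as written is the cardinality claim for $\eta=1/30$. We have $2\log 61\approx 8.22>125/16\approx 7.81$. So $61^{2m}e^{-N/8}\le e^{-N/16}$ would need $N\ge 132m$, and $N\ge 125m$ is not enough. Your choice of $\eta$ is too small, i.e.\ the net is too fine.

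The fix is to enlarge $\eta$. You need $2\log(1+2/\eta)\le 125/16$. This holds at the paper's choice $\eta=1/24$, since $49^{2m}\le e^{7.8m}\le e^{N/16}$ when $N\ge125m$. Your operator-norm bound with $t^2=N/16$ gives, off an event of probability $e^{-N/16}$,
\[
s_{\max}\le 4\bigl(1+1/\sqrt{125}+1/4\bigr)\sqrt N\le 5.36\sqrt N.
\]
This tolerates any $\eta\le 0.094$. In particular, $\eta=1/24$ gives $s_{\min}\ge\sqrt{N/2}-5.36\sqrt N/24\ge\sqrt N/5$, with total failure probability $2e^{-N/16}$.

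The paper instead takes $t=\sqrt N$ in the operator-norm step. That gives $s_{\max}\le 12\sqrt N$ off an event of probability $e^{-N}$, which also closes with $\eta=1/24$.
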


\begin{proof}
For a fixed unit $\ket{x}\in\Co^{m}$ the vector $G\ket{x}$ is standard complex Gaussian in
$\Co^{N}$, so $\|G\ket{x}\|_2^{2}=\tfrac12\chi^{2}_{2N}$, i.e.\ it is
$\sum_ia_i\xi_i^{2}$ with $2N$ weights $a_i=\tfrac12$.  Thus
$\sum_ia_i=N$, $\|a\|_2=\sqrt{N/2}$, $\|a\|_\infty=\tfrac12$, and
\cref{fact:laurent-massart} gives, for every $u\ge0$,
\begin{equation}\label{eq:one-vector}
 \Pr\bigl\{\|G\ket{x}\|_2^{2}\ge N+\sqrt{2Nu}+u\bigr\}\le e^{-u},
 \qquad
 \Pr\bigl\{\|G\ket{x}\|_2^{2}\le N-\sqrt{2Nu}\bigr\}\le e^{-u}.
\end{equation}

\emph{Largest singular value.}  By \cref{lem:covering-cardinality},
choose a $\tfrac12$-covering net $\calS$ of the unit sphere of $\Co^{m}$
with $|\calS|\le5^{2m}\le e^{3.3m}$.  Equation~\eqref{eq:covering-linear} gives
$s_{\max}(G)\le2\max_{\ket{x}\in\calS}\|G\ket{x}\|_2$.  Taking $u=3.3m+t^{2}$ in the first
bound of \cref{eq:one-vector} and a union bound,
$\max_{\ket{x}\in\calS}\|G\ket{x}\|_2^{2}\le N+\sqrt{2N(3.3m+t^{2})}+3.3m+t^{2}
\le(\sqrt N+1.9\sqrt m+t)^{2}$ except with probability $e^{-t^{2}}$, which
gives \cref{eq:smax}.

\emph{Smallest singular value.}  By \cref{lem:covering-cardinality},
choose a $\tfrac1{24}$-covering net $\calS'$ of the unit sphere of $\Co^m$
with $|\calS'|\le49^{2m}\le e^{7.8m}$.  Equation~\eqref{eq:covering-smin} gives
$s_{\min}(G)\ge\min_{\ket{x}\in\calS'}\|G\ket{x}\|_2-\tfrac1{24}s_{\max}(G)$.  Using \cref{eq:one-vector} with $u=N/8$ and a union bound gives
$\min_{\ket{x}\in\calS'}\|G\ket{x}\|_2^{2}\ge N/2$ except with probability
$e^{7.8m-N/8}\le e^{-N/16}$ when $N\ge125m$.  By \cref{eq:smax} with
$t=\sqrt N$ and $m\le N$, $s_{\max}(G)\le12\sqrt N$ except with probability
$e^{-N}$.  On the intersection,
$s_{\min}(G)\ge\sqrt{N/2}-\tfrac{12}{24}\sqrt N\ge\tfrac15\sqrt N$.
\end{proof}

\paragraph{Partial traces and diamond-norm error.}

\begin{lemma}[Partial trace after a Haar-random conjugation]\label{lem:random-complement}
There is an absolute constant $c$ with the following property.  Let
$\eta>0$ and $S\subseteq \mathcal{H}_{\mathrm{B}}\otimes \mathcal{H}_{\mathrm{A}}$ have $\dim S=r$, and let
$T\succeq0$ be supported on
$S^{\perp}$ with
\[
 \opnorm T\le\eta,\qquad\tr T\le2r\eta,\qquad rd_2\ge d_1 ,
\]
and let $V$ be Haar distributed on the unitary group of $S^{\perp}$.  Then for
every $0<\zeta<1$, with probability at least $1-\zeta$,
\begin{equation}\label{eq:random-complement}
 d_1\opnorm{\tr_{\mathrm{B}}(VTV^{\dagger})}\le cr\eta\,\ell,
 \qquad\ell:=1+\log(2/\zeta).
\end{equation}
\end{lemma}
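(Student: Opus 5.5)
The plan is to test the partial trace against vectors of $\mathcal H_{\mathrm A}$, replace the Haar rotation by Gaussian vectors, and use \cref{cor:gauss-sandwich} together with a net. Write $\Pi$ for the projection onto $S^\perp$ and $N=\dim S^\perp=D-r$. For a unit vector $\ket v\in\mathcal H_{\mathrm A}$, put $Q_v=\Pi(I_{\mathrm B}\otimes\ketbra vv)\Pi$. This operator satisfies $0\preceq Q_v\preceq I$, $\tr Q_v\le d_2$ and $\tr Q_v^2\le d_2$, and
\[
 \bra v\tr_{\mathrm B}(VTV^\dagger)\ket v=\tr\bigl(VTV^\dagger Q_v\bigr).
\]
By \cref{cor:covering-union}, it suffices to show that for each fixed $\ket v$ this quantity exceeds $cr\eta\ell/(2d_1)$ with probability at most $\zeta\,9^{-2d_1}$. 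So the target is a one-vector tail bound at level $u\asymp d_1+\ell$.

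\textbf{Trivial regimes.} If $r\ge D/2$, then for every unit $\ket v$,
\[
 \bra v\tr_{\mathrm B}(VTV^\dagger)\ket v\le d_2\opnorm T\le d_2\eta ,
\]
so $d_1\opnorm{\tr_{\mathrm B}(VTV^\dagger)}\le D\eta\le2r\eta$. Similarly, if $N<C_0$ for a large absolute constant $C_0$, the same bound gives at most $D\eta\le(C_0+r)\eta\le(C_0+1)r\eta$. Hence I assume $N\ge D/2$ and $N\ge C_0$.

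\textbf{Gaussian domination.} Diagonalize $T=\sum_k\lambda_k\ketbra{t_k}{t_k}$ on $S^\perp$ and split the eigenvectors into at most $K\le 126$ blocks of size $m\le N/125$. In block $b$ write $T_b=\sum_{k\in b}\lambda_k\ketbra{t_k}{t_k}$. The vectors $V\ket{t_k}$, $k\in b$, form the columns of a Haar-random $N\times m$ isometry. This isometry has the law of $G_b(G_b^\dagger G_b)^{-1/2}$, where $G_b$ is an $N\times m$ standard complex Gaussian matrix in a fixed basis of $S^\perp$. By \cref{lem:gaussian-sv}, on an event of probability at least $1-2e^{-N/16}$ we have $G_b^\dagger G_b\succeq (N/25) I_m$, and hence
\[
 VT_bV^\dagger\preceq\frac{25}{N}\,G_b\,\widetilde T_b\,G_b^\dagger,
\]
where $\widetilde T_b$ is $T_b$ written in the column coordinates. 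Summing over blocks reduces the problem to bounding
\[
 \frac{25}{N}\sum_b\sum_{k\in b}\lambda_k\bra{g_k}Q_v\ket{g_k}
\]
for independent Gaussian vectors $\ket{g_k}$. The failure probability $2Ke^{-N/16}$ is at most $\zeta$ once $N\gtrsim\ell$; when $N\lesssim\ell$, the trivial bound $D\eta\le(2N+r)\eta\lesssim r\eta\ell$ already suffices. Crucially, this event does not depend on $\ket v$, so it is paid only once and not inside the union bound.

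\textbf{Quadratic-form tail.} For fixed $\ket v$ and each block, I apply \cref{cor:gauss-sandwich} (the Laurent--Massart step) with weights $\lambda_k$ and matrix $Q_v$, at level $u=2d_1\log 9+\log(2K/\zeta)\lesssim d_1+\ell$. Using $\sum\lambda_k\le2r\eta$, $\sum\lambda_k^2\le2r\eta^2$, $\lambda_k\le\eta$, $\tr Q_v\le d_2$, $\tr Q_v^2\le d_2$ and $\opnorm{Q_v}\le1$, this gives, after summing blocks,
\[
 \sum_k\lambda_k\bra{g_k}Q_v\ket{g_k}\lesssim r\eta d_2+\eta\sqrt{r d_2 u}+\eta u .
\]
Multiplying by $25d_1/N\le 50/d_2$ yields
\[
 d_1\bra v\tr_{\mathrm B}(VTV^\dagger)\ket v\lesssim r\eta+\eta\sqrt{\frac{r u}{d_2}}+\frac{\eta u}{d_2}.
\]
Now the hypothesis $d_1\le rd_2$ enters: $u/d_2\lesssim r+\ell$ and $\sqrt{ru/d_2}\lesssim r+\sqrt{r\ell}$. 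Hence the right-hand side is $O(r\eta\ell)$. A union bound over the net and \cref{eq:covering-union-hermitian} then finish the proof.

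\textbf{Main obstacle.} The main obstacle is the Haar-to-Gaussian comparison. A direct Lipschitz concentration on the unitary group only uses $\hsnorm T$. It loses a factor $\sqrt{d_1/r}$ and would need $d_1^2\lesssim rd_2$, which is not assumed. The block splitting with the smallest-singular-value bound is my first choice because it makes the domination one-sided and independent of $\ket v$. I would still check that the columns of each block are jointly distributed as the polar part of a Gaussian matrix, which holds by the unitary invariance of both laws.
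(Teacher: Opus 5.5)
\textbf{Gap in the Gaussian domination step.} Your trivial-regime reductions are sound, and so is the reduction to $\bra v\tr_{\mathrm B}(VTV^\dagger)\ket v=\tr(VTV^\dagger Q_v)$. The closing arithmetic with $d_1\le rd_2$ is also fine and matches the paper's final numerics. However, the step you rely on, ``hence $VT_bV^\dagger\preceq\frac{25}{N}G_b\widetilde T_bG_b^\dagger$,'' is false. Write $A_b=G_b^\dagger G_b$ and $\Lambda_b=\widetilde T_b=\mathrm{diag}(\lambda_k)_{k\in b}$. Then $VT_bV^\dagger=G_bA_b^{-1/2}\Lambda_bA_b^{-1/2}G_b^\dagger$. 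The bound $A_b\succeq\frac N{25}I_m$ does not give $A_b^{-1/2}\Lambda_bA_b^{-1/2}\preceq\frac{25}{N}\Lambda_b$ unless $\Lambda_b$ commutes with $A_b$.

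For example, take $\Lambda_b=\ketbra{e_1}{e_1}$. Then $A_b^{-1/2}\Lambda_bA_b^{-1/2}$ is rank one along $A_b^{-1/2}\ket{e_1}$, which is almost surely not parallel to $\ket{e_1}$. So no multiple of $G_b\Lambda_bG_b^\dagger$ dominates $VT_bV^\dagger$. After pairing with $Q_v$, you would be comparing quadratic forms of $G_b^\dagger Q_vG_b$ in two different directions. Hence the bound on $\sum_k\lambda_k\bra{g_k}Q_v\ket{g_k}$ does not follow from the singular-value event either. The norm of the inverse Gram matrix can be pulled out only in two situations: when the rotated operator is a multiple of a projection, or in the form $\tr(XA^{-1})\le\opnorm{A^{-1}}\tr X$ with $X\succeq0$. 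A block of $T$ with unequal eigenvalues is neither.

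\textbf{How the paper avoids this.} It rotates the range projection $R_v$ of $Q_v$, which has rank $m\le d_2$, rather than $T$. One has $\bra v\tr_{\mathrm B}(VTV^\dagger)\ket v\le\tr(TV^\dagger R_vV)$ with $V^\dagger R_vV\sim G(G^\dagger G)^{-1}G^\dagger$. The bound $\tr(TG(G^\dagger G)^{-1}G^\dagger)=\tr\bigl((G^\dagger TG)(G^\dagger G)^{-1}\bigr)\le\frac{25}{N}\tr(G^\dagger TG)$ is then legitimate. Applying \cref{cor:gauss-sandwich} with matrix $T$ gives exactly your three terms.

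The price is that the singular-value event now depends on $v$ and must be paid inside the net union bound. This needs $N\ge125d_2$ and $N/16\gtrsim d_1+\ell$. The paper secures both by also disposing of the case $d_1\le K$ via $d_1\opnorm{\tr_{\mathrm B}(VTV^\dagger)}\le d_1\tr T\le2Kr\eta$. Then, when $D>Kr\ell$, it uses $D\le rd_2^2$ to force $d_2>\sqrt K$. So $N>D/2$ exceeds both $Kd_2/2\ge125d_2$ (from $d_1>K$) and $\sqrt K d_1/2$ (from $d_2>\sqrt K$).

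\textbf{Keeping your $v$-independent event.} Alternatively, group the eigenvalues of $T$ dyadically, splitting each level further into at most $126$ chunks of size at most $N/125$. Within each chunk, $\Lambda_b\preceq\lambda_b^{\max}I$ with $\lambda_b^{\max}\le2\lambda_b^{\min}$. Then $VT_bV^\dagger\preceq\lambda_b^{\max}G_bA_b^{-1}G_b^\dagger\preceq\frac{25\lambda_b^{\max}}{N}G_bG_b^\dagger$ is valid, and $\lambda_b^{\max}|b|\le2\tr T_b$ preserves the trace budget. This route requires three further steps:
\begin{itemize}
\item discard eigenvalues below $\eta/N$, which contribute at most $D\eta/N\le2\eta$ deterministically;
\item handle the resulting $O(\log N)$ levels, each with its own singular-value events;
\item use level-dependent tail parameters so the fluctuation terms still sum to $O(\eta\sqrt{rd_2u}+\eta u)$.
\end{itemize}
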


\begin{proof}
Write $Z=VTV^{\dagger}$, which is again positive, supported on $S^{\perp}$, with the
same norm and trace as $T$.  We can easily see two deterministic bounds:
\begin{equation*}
 \text{(D1)}\quad d_1\opnorm{\tr_{\mathrm{B}}Z}\le d_1d_2\opnorm Z\le D\eta,
 \qquad
 \text{(D2)}\quad d_1\opnorm{\tr_{\mathrm{B}}Z}\le d_1\tr Z\le2d_1r\eta ,
\end{equation*}
using $\bra{\varphi}\tr_{\mathrm{B}}Z\ket{\varphi}=\tr(Z(I_{\mathrm{B}}\otimes \ketbra{\varphi}{\varphi}))
\le\min\{d_2\opnorm Z,\tr Z\}$.  Let $K$ be a large absolute constant, chosen
at the end.  If $D\le Kr\ell$ then (D1) already gives
\cref{eq:random-complement} with $c=K$; if $d_1\le K$ then (D2) gives it with
$c=2K$.  So assume from now on
\begin{equation}\label{eq:main-regime}
 D>Kr\ell\qquad\text{and}\qquad d_1>K .
\end{equation}
Set $D'=\dim S^{\perp}=D-r$.  From \cref{eq:main-regime}, $r<D/K$, so
$D'>D/2$ for $K\ge2$; from $d_1\le rd_2$ we get $D\le rd_2^{2}$, hence
$d_2^{2}>K\ell\ge K$ and $d_2>\sqrt K$; and
$D'>D/2=d_1d_2/2\ge Kd_2/2\ge125\,d_2$ for $K\ge250$.

\smallskip\noindent\textbf{Step 1 (covering net).}
By \cref{lem:covering-cardinality}, choose a $\tfrac14$-covering net $\calN$
of the unit sphere of $\mathcal{H}_{\mathrm{A}}$ with $|\calN|\le9^{2d_1}$.
Applying \cref{eq:covering-hermitian} to the positive operator
$\tr_{\mathrm{B}}Z$ gives
\[
 \opnorm{\tr_{\mathrm{B}}Z}
 \le2\max_{\ket{\varphi}\in\calN}\bra{\varphi}\tr_{\mathrm{B}}Z\ket{\varphi}.
\]
Thus it suffices to bound this quadratic form for each fixed $\ket{\varphi}$
and take a union bound.  Set
\begin{equation}\label{eq:u-choice}
 u:=\log\bigl(3|\calN|/\zeta\bigr)\le 2d_1\log9+\log(3/\zeta)
   \le4.4\,d_1+2\ell .
\end{equation}

\smallskip\noindent\textbf{Step 2 (reduction to a projection of rank at most $d_2$).}
Fix a unit $\ket{\varphi}\in \mathcal{H}_{\mathrm{A}}$.  Because $Z$ is supported on $S^{\perp}$,
\[
 \bra{\varphi}\tr_{\mathrm{B}}Z\ket{\varphi}
 =\tr\bigl(Z\,Q_\varphi\bigr),
 \qquad
 Q_\varphi:=\Pi_{S^{\perp}}(I_{\mathrm{B}}\otimes \ketbra{\varphi}{\varphi})\Pi_{S^{\perp}} .
\]
$Q_\varphi$ is a positive contraction of rank at most $d_2$; let $R_\varphi$
be the orthogonal projection onto its range, of rank $m\le d_2$.  Then
$Q_\varphi\preceq R_\varphi$ and hence
$\bra{\varphi}\tr_{\mathrm{B}}Z\ket{\varphi}\le\tr(TV^{\dagger}R_\varphi V)$.

\smallskip\noindent\textbf{Step 3 (Gaussian model for the random subspace).}
If $m=0$, the quadratic form is zero and the desired bound is immediate.
Assume $m\ge1$.  Then $V^{\dagger}R_\varphi V$ is a Haar-random rank-$m$ orthogonal projection of
$S^{\perp}\simeq\Co^{D'}$, so it is distributed as $G(G^{\dagger}G)^{-1}G^{\dagger}$ with
$G$ an $D'\times m$ standard complex Gaussian matrix.  Since
$D'\ge125\,d_2\ge125\,m$, \cref{lem:gaussian-sv} gives
$\Pr\bigl\{G^{\dagger}G\succeq\tfrac{D'}{25}I_m\bigr\}\ge1-2e^{-D'/16}$, and on that event
\[
 \tr\bigl(TG(G^{\dagger}G)^{-1}G^{\dagger}\bigr)
 =\tr\bigl((G^{\dagger}TG)(G^{\dagger}G)^{-1}\bigr)
 \le\frac{25}{D'}\tr(G^{\dagger}TG).
\]
\cref{cor:gauss-sandwich} bounds
$\tr(G^{\dagger}TG)\le m\tr T+\sqrt{2m\tr(T^{2})u}+\eta u$ except with probability
$e^{-u}$.

\smallskip\noindent\textbf{Step 4 (bounding the exceptional probabilities).}
By \cref{eq:main-regime} and $d_2>\sqrt K$ we have
$D'>D/2\ge d_1\sqrt K/2$, and also $D'>Kr\ell/2\ge K\ell/2$.  Hence
$D'/16\ge\tfrac{\sqrt K}{64}d_1+\tfrac K{64}\ell$, which for $K$ a large enough
absolute constant exceeds $4.4d_1+2\ell+\log2\ge u+\log2$ by
\cref{eq:u-choice}.  Therefore $2e^{-D'/16}\le e^{-u}$ and the two exceptional
events together have probability at most $2e^{-u}\le\zeta/|\calN|$.

\smallskip\noindent\textbf{Step 5 (union bound and final estimate).}
On the intersection over $\ket{\varphi}\in\calN$, an event of probability at least
$1-\zeta$, using $m\le d_2$, $\tr T\le2r\eta$,
$\tr(T^{2})\le\opnorm T\tr T\le2r\eta^{2}$ and $D'>D/2$,
\[
 d_1\opnorm{\tr_{\mathrm{B}}Z}
 \le\frac{50\,d_1}{D'}\Bigl(2r\eta d_2+2\eta\sqrt{d_2ru}+\eta u\Bigr)
 \le\frac{100}{d_2}\Bigl(2r\eta d_2+2\eta\sqrt{d_2ru}+\eta u\Bigr).
\]
By \cref{eq:u-choice} and $d_1\le rd_2$ we have
$u/d_2\le4.4r+2\ell\le7r\ell$ and
$\sqrt{ru/d_2}\le\sqrt{4.4r^{2}+2r\ell}\le2.1r+1.5r\ell\le4r\ell$.
Therefore $d_1\opnorm{\tr_{\mathrm{B}}Z}\le(200+800+700)\,r\eta\ell$, which is
\cref{eq:random-complement}.
\end{proof}

\begin{lemma}[Diamond-norm error under unitary invariance]\label{lem:conversion}
There are absolute constants $c,c_{\rm conv}$ with the following property.  Let $\calE\in\qchannel_{d_1,d_2}^r$ have Kraus rank exactly $r$.  Let $P$ be the
orthogonal projector onto $\supp(\calE(\kettbbra{I/\sqrt{d_1}}{I/\sqrt{d_1}}))$, set $P^\perp=I_D-P$, and put
$\kappa=\kappa(\calE)=d_1\|\tr_{\mathrm{B}}P\|_\infty$ as in
\cref{eq:kappa-intro}.  Recall the reconstruction map
$\Phi_{I/d_1}(\cdot)$ defined in \cref{eq:Lomega}, where
$C_{\Phi_{I/d_1}(M)}=d_1M$.  Let $\widetilde X$ be a random density
operator on $\mathcal{H}_{\mathrm{B}}\otimes \mathcal{H}_{\mathrm{A}}$
whose distribution is invariant under conjugation by every unitary of the form
$U=I_P\oplus V$, where $I_P$ is the identity on $\supp(P)$ and $V$ is
unitary on $\supp(P^{\perp})$.  Set $\Delta=\widetilde X-\calE(\kettbbra{I/\sqrt{d_1}}{I/\sqrt{d_1}})$.  Then for every $\eta,\zeta\in(0,1)$, except on an event of probability at most $\zeta$, $\|\Delta\|_\infty \leq \eta$ implies
\begin{equation*}
  \dnorm{\Phi_{I/d_1}(\widetilde X)-\calE}\leq c_{\rm conv}\,\kappa\eta\,\ell,
 \qquad \ell:=1+\log(4/\zeta).
\end{equation*}
More precisely, we have
\begin{equation}\label{eq:conversion-terms}
 \dnorm{\Phi_{I/d_1}(\widetilde X)-\calE}
 \le\underbrace{cr\eta\ell}_{P^\perp\Delta P^\perp}
   +\underbrace{2\eta\sqrt{cr\kappa\ell}}_{\text{off-diagonal terms}}
   +\underbrace{\kappa\eta}_{P\Delta P}.
\end{equation}
\end{lemma}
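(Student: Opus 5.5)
Write $X=\calE(\kettbbra{I/\sqrt{d_1}}{I/\sqrt{d_1}})=C_\calE/d_1$. By linearity of the reconstruction, $\Phi_{I/d_1}(\widetilde X)-\calE=\Phi_{I/d_1}(\Delta)$. Split
\[
\Delta=P\Delta P+(P^\perp\Delta P+P\Delta P^\perp)+P^\perp\Delta P^\perp
\]
and bound the diamond norm of the reconstruction of each block separately, using \cref{lem:weighted-norms}. Since $C_{\Phi_{I/d_1}(M)}=d_1M$, for $G\succeq0$ part~(a) gives $\dnorm{\Phi_{I/d_1}(G)}=d_1\opnorm{\tr_{\mathrm B}G}$.

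\textbf{Support block.} On $\{\opnorm\Delta\le\eta\}$ we have $-\eta P\preceq P\Delta P\preceq\eta P$. Part~(b) then gives
\[
\dnorm{\Phi_{I/d_1}(P\Delta P)}\le\eta\, d_1\opnorm{\tr_{\mathrm B}P}=\kappa\eta .
\]

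\textbf{Complement block.} Here $X$ vanishes, so $P^\perp\Delta P^\perp=P^\perp\widetilde XP^\perp\succeq0$, with operator norm at most $\eta$ on the event. Its trace is small: $\tr\Delta=0$ gives
\[
\tr P^\perp\Delta P^\perp=-\tr P\Delta P\le r\eta .
\]
So $T:=P^\perp\widetilde XP^\perp$ meets the hypotheses of \cref{lem:random-complement} with $S=\supp P$ and $\dim S=r$. The condition $rd_2\ge d_1$ holds because $\calE$ has Kraus rank exactly $r$ and is a channel.

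The key step is to use the assumed invariance. Let $V$ be Haar on $\supp(P^\perp)$, independent of $\widetilde X$. Then $(I_P\oplus V)\widetilde X(I_P\oplus V)^\dagger$ has the same law as $\widetilde X$, and also $\Delta\mapsto U\Delta U^\dagger$ with $\opnorm\Delta$ unchanged. Consequently the pair $(\opnorm\Delta,\ \opnorm{\tr_{\mathrm B}P^\perp\Delta P^\perp})$ has the same joint law as the corresponding pair with $T$ replaced by $VTV^\dagger$.

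Condition on $\widetilde X$. On the event $\opnorm\Delta\le\eta$, apply \cref{lem:random-complement} with confidence $\zeta/2$. This gives
\[
d_1\opnorm{\tr_{\mathrm B}(VTV^\dagger)}\le cr\eta\ell,
\]
except with probability $\zeta/2$. Transfer this back to $\widetilde X$ through the equality in law. Part~(a) then bounds the complement contribution by $cr\eta\ell$.

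I expect this step to be the main obstacle. The difficulty is formulating the probability statement correctly, as "the event $\{\opnorm\Delta\le\eta\}\cap\{\text{bad}\}$ has probability at most $\zeta$." Averaging over the randomization $V$ makes this work, because the event $\opnorm\Delta\le\eta$ is itself invariant.

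\textbf{Cross terms.} Factor $P^\perp\Delta P=P^\perp\widetilde XP$. Since $\widetilde X\succeq0$, write $\widetilde X=W W^\dagger$ and take $U=P^\perp W$, $V'=PW$, so the cross sum is $UV'^\dagger+V'U^\dagger$. Then $UU^\dagger=P^\perp\widetilde XP^\perp$ is controlled as above by $cr\eta\ell/d_1$ in partial-trace norm.

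The other factor $V'V'^\dagger=P\widetilde XP$ is naturally of order $X$ rather than $\eta$, so this naive choice is too crude. Instead I use the Löwner Cauchy–Schwarz form: for any $t>0$,
\[
\pm(P^\perp\Delta P+P\Delta P^\perp)\preceq t\,P^\perp\widetilde XP^\perp+t^{-1}P\widetilde XP^\perp\widetilde XP .
\]
Because $\widetilde X\succeq0$ and $\opnorm{P^\perp\widetilde X}\le\opnorm\Delta\le\eta$, we get $P\widetilde XP^\perp\widetilde XP\preceq\eta^2P$. Part~(b) then gives the bound
\[
t\,cr\eta\ell+t^{-1}\eta^2\kappa ,
\]
and optimizing over $t$ yields $2\eta\sqrt{cr\kappa\ell}$.

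\textbf{Conclusion.} Summing the three contributions gives \cref{eq:conversion-terms}. Since $r\le\kappa$ by \cref{eq:kappa-chain}, each term is at most a constant times $\kappa\eta\ell$, which proves the stated bound with an absolute constant $c_{\rm conv}$.
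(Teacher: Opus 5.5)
Your support-block and complement-block arguments, including the randomization through the invariance, are essentially the paper's. Your bound $\tr P^\perp\Delta P^\perp=-\tr P\Delta P\le r\eta$ is a fine shortcut there. The gap is in the cross terms: the L\"owner inequality
\[
\pm(P^\perp\Delta P+P\Delta P^\perp)\preceq t\,P^\perp\widetilde XP^\perp+t^{-1}P\widetilde XP^\perp\widetilde XP
\]
is false. Write $G=P^\perp\widetilde XP^\perp$ and $M=P^\perp\widetilde XP$. The right side minus the left side is the block operator with diagonal blocks $tG$, $t^{-1}M^\dagger M$ and off-diagonal blocks $\mp M,\mp M^\dagger$. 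Its positivity requires $M^\dagger G^{+}M\preceq M^\dagger M$. Since $\widetilde X\succeq0$ forces $\ran M\subseteq\supp G$, and $\opnorm G\le\eta<1$ gives $G^{+}\succeq\eta^{-1}$ there, this fails whenever $M\neq0$. Already with scalar blocks $g,m$, the determinant condition reads $g|m|^2\ge|m|^2$, i.e.\ $g\ge1$. Heuristically, the product of your two factors is of order $\eta^3$, while the squared cross block is of order $\eta^2$. Consistently, $\min_t(t\,cr\eta\ell+t^{-1}\eta^2\kappa)=2\eta^{3/2}\sqrt{cr\kappa\ell}$, not $2\eta\sqrt{cr\kappa\ell}$ as you wrote. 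That is a better rate than the lemma claims, which is the symptom of the error.

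The repair, which is what the paper does, factors the cross block through $\Delta$ itself. This gives $\pm(P^\perp\Delta P+P\Delta P^\perp)\preceq t\,P^\perp\Delta^2P^\perp+t^{-1}P$; the paper writes it as $\trnorm{L_\rho P^\perp\Delta PL_\rho}^2\le\tr(L_\rho P^\perp\Delta^2P^\perp L_\rho)\tr(L_\rho PL_\rho)$. On the event, $\Delta^2\preceq\eta|\Delta|$, so $P^\perp\Delta^2P^\perp\preceq\eta T_2$ with $T_2:=P^\perp|\Delta|P^\perp$.

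The missing ingredient is therefore a random-complement bound $d_1\opnorm{\tr_{\mathrm B}T_2}\le cr\eta\ell$ for $T_2$, not only for $T_1=P^\perp\Delta P^\perp$. Controlling $T_1$ alone does not suffice: $P^\perp\Delta^2P^\perp=P^\perp\widetilde X^2P^\perp$ is only dominated by $\opnorm{\widetilde X}\,T_1$, and $\opnorm{\widetilde X}$ can be of order one. Applying \cref{lem:random-complement} to $T_2$ needs two things:
\begin{itemize}
\item the trace bound $\tr T_2\le\trnorm\Delta\le2r\eta$, which comes from \cref{lem:rank} rather than from $\tr\Delta=0$;
\item the equivariance $|U\Delta U^\dagger|=U|\Delta|U^\dagger$, so that your randomization argument also covers $T_2$.
\end{itemize}
Spending $\zeta/2$ on each of $T_1$ and $T_2$ keeps the total failure probability at $\zeta$. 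Then $\min_t(t\,\eta\cdot cr\eta\ell+t^{-1}\kappa)=2\eta\sqrt{cr\kappa\ell}$, and the rest of your argument goes through.
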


\begin{proof}
Write $L_\rho=L_{\rho,I/d_1}=I_{\mathrm{B}}\otimes(\sqrt{d_1\rho^{\mathrm T}})$, so that by
\cref{lem:weighted-norms}
\begin{equation}\label{eq:conv-master}
 \dnorm{\Phi_{I/d_1}(\Delta)}=\max_{\rho\in\calD(\mathcal{H}_{\mathrm{A}})}\trnorm{L_\rho\Delta L_\rho}.
\end{equation}
Also, $ \tr(L_\rho GL_\rho)=d_1\tr\bigl(G(I_{\mathrm{B}}\otimes \rho^{\mathrm T})\bigr) \le d_1\opnorm{\tr_{\mathrm{B}}G}$ for $G\succeq0$.  Throughout, we will work on the event $\mathsf E=\{\opnorm\Delta\le\eta\}$.
Note that the distribution of $\Delta$ is invariant under the group $\calG=\{I_P\oplus V\}$. On $\mathsf E$, \cref{lem:rank} gives $\trnorm\Delta\le2r\eta$.  Put
\[
 T_1=P^{\perp}\Delta P^{\perp},\qquad T_2=P^{\perp}|\Delta|P^{\perp}.
\]
Both are positive semidefinite (for $T_1$, we know $P^{\perp}\calE(\kettbbra{I/\sqrt{d_1}}{I/\sqrt{d_1}})P^{\perp}=0$ and $\widetilde X\succeq0$), and on $\mathsf E$ both satisfy $\opnorm{T_i}\le\eta$ and $\tr T_i\le\trnorm\Delta\le2r\eta$. We also know $rd_2\ge d_1$ since $\calE$ is a quantum channel.

\smallskip\noindent\textbf{Randomization.}
Let $V$ be Haar-random on $\supp(P^{\perp})$, and $U=I_P\oplus V$. We know that $U\Delta U^{\dagger}$ has the same distribution as $\Delta$, and $T_i(U\Delta U^{\dagger})=UT_i(\Delta)U^{\dagger}$
because $P^{\perp}$ commutes with $U$ and $|U\Delta U^{\dagger}|=U|\Delta|U^{\dagger}$.
Consequently, for the event
\[
 \calF:=\mathsf E\cap\bigl\{d_1\opnorm{\tr_{\mathrm{B}}T_i}>cr\eta\ell\ \text{for some }i\in\{1,2\}\bigr\},
\]
we have $\Pr(\calF)=\Pr(\calF')$ where $\calF'$ is the same event computed for
$U\Delta U^{\dagger}$.  Conditioning on $\Delta$ and applying
\cref{lem:random-complement} with $\zeta/2$ to each of the two fixed
admissible operators $T_1,T_2$ (legitimate because its conclusion holds for
every fixed admissible $T$) gives $\Pr(\calF)\le\zeta$.  From here on we work
on $\mathsf E\setminus\calF$, i.e.\ we may use
\begin{equation}\label{eq:two-random-bounds}
 d_1\opnorm{\tr_{\mathrm{B}}T_i}\le cr\eta\ell,\qquad i=1,2 .
\end{equation}

\smallskip\noindent\textbf{The four blocks.}
Fix $\rho\in\calD(\mathcal{H}_{\mathrm{A}})$ and split
$\Delta=P\Delta P+P\Delta P^{\perp}+P^{\perp}\Delta P+T_1$.

\emph{The $P^\perp\Delta P^\perp$ term.}  $L_\rho T_1L_\rho\succeq0$, so its trace norm is its
trace, and \cref{eq:conv-master} with \cref{eq:two-random-bounds} gives
$\trnorm{L_\rho T_1L_\rho}\le cr\eta\ell$.

\emph{The $P\Delta P$ term.}  $P\Delta P$ is supported in $\ran P$ and has operator
norm at most $\eta$, so \cref{lem:loewner-tools}(ii) and
\cref{eq:conv-master} give
\[
 \trnorm{L_\rho P\Delta PL_\rho}\le\eta\,\tr(L_\rho PL_\rho)
 \le\eta\, d_1\opnorm{\tr_{\mathrm{B}}P}=\kappa\eta .
\]
The coefficient is $\kappa(\calE)$ from \cref{eq:kappa-intro}.

\emph{The off-diagonal terms.}  By $\trnorm{MK}\le\hsnorm M\hsnorm K$ with
$M=L_\rho P^{\perp}\Delta$ and $K=PL_\rho$,
\[
 \trnorm{L_\rho P^{\perp}\Delta PL_\rho}^{2}
 \le\tr\bigl(L_\rho P^{\perp}\Delta^{2}P^{\perp}L_\rho\bigr)\,
    \tr\bigl(L_\rho PL_\rho\bigr).
\]
On $\mathsf E$ we have $\Delta^{2}\preceq\eta|\Delta|$, hence
$P^{\perp}\Delta^{2}P^{\perp}\preceq\eta T_2$; using \cref{eq:conv-master}
twice, once with $T_2$ and \cref{eq:two-random-bounds} and once with $P$,
\[
 \trnorm{L_\rho P^{\perp}\Delta PL_\rho}^{2}
 \le\bigl(\eta\cdot cr\eta\ell\bigr)\cdot\kappa,
 \qquad\text{i.e.}\qquad
 \trnorm{L_\rho P^{\perp}\Delta PL_\rho}\le\eta\sqrt{cr\kappa\ell},
\]
and the adjoint block obeys the same bound.

Adding the four contributions and maximizing over $\rho$ gives
\cref{eq:conversion-terms}.  Finally $r\le\kappa$ by
\cref{lem:kappa} and $\ell\ge1$, so
$\sqrt{r\kappa\ell}\le\kappa\ell$ and each of the three terms is at most a
constant multiple of $\kappa\eta\ell$.
\end{proof}

\subsubsection{Regularized estimation}\label{sec:ridge}

\begin{lemma}[Positive semidefinite rank truncation]\label{lem:rank-trunc}
Let $X\succeq0$ be a $D\times D$ matrix of rank at most $r$, and
let $\widehat X$ be Hermitian with
$\opnorm{\widehat X-X}\le\theta$.  Let $S$ be obtained from $\widehat X$ by
keeping its $r$ largest eigenvalues if they are positive and setting all other
eigenvalues to zero.  Then
\begin{equation}\label{eq:rank-trunc}
 S\succeq0,\qquad\rank S\le r,\qquad\opnorm{S-X}\le2\theta .
\end{equation}
\end{lemma}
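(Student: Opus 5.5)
The first two claims are immediate from the construction, so the plan concentrates on the operator-norm bound. Write $\widehat X=\sum_i x_i\ketbra{u_i}{u_i}$ with $x_1\ge x_2\ge\dots\ge x_D$. Then $S=\sum_{i\le r}(x_i)_+\ketbra{u_i}{u_i}$, which is a sum of at most $r$ positive rank-one terms. Hence $S\succeq0$ and $\rank S\le r$.

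For the bound $\opnorm{S-X}\le2\theta$, I will use the triangle inequality
\[
\opnorm{S-X}\le\opnorm{S-\widehat X}+\opnorm{\widehat X-X}\le\opnorm{S-\widehat X}+\theta .
\]
This reduces the claim to showing $\opnorm{S-\widehat X}\le\theta$. The operator $\widehat X-S$ is diagonal in the eigenbasis $\{u_i\}$, so it suffices to bound each of its eigenvalues by $\theta$ in absolute value.

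\emph{Eigenvalues with $i\le r$.} The eigenvalue is $x_i-(x_i)_+$. This is $0$ if $x_i\ge0$, and $x_i$ otherwise. By Weyl's inequality, $x_i\ge\lambda_i(X)-\theta\ge-\theta$ because $X\succeq0$. So in either case the eigenvalue has absolute value at most $\theta$.

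\emph{Eigenvalues with $i>r$.} The eigenvalue is $x_i$ itself. Since $\rank X\le r$, we have $\lambda_i(X)=0$ for $i>r$. Weyl's inequality then gives $|x_i|=|x_i-\lambda_i(X)|\le\theta$.

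Combining the two cases gives $\opnorm{\widehat X-S}\le\theta$, and therefore $\opnorm{S-X}\le2\theta$.

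The only step that needs any care is the correct use of Weyl's inequality in both directions, $|\lambda_i(\widehat X)-\lambda_i(X)|\le\theta$. This is used once to see that the discarded negative eigenvalues are at least $-\theta$, and once to see that the tail eigenvalues beyond index $r$ are small. Ties and the case of fewer than $r$ positive eigenvalues need no special handling, since the argument treats each index separately.
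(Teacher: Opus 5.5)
Your proposal is correct and follows essentially the same route as the paper: Weyl's inequality shows that every eigenvalue of $\widehat X$ discarded in forming $S$ (the negative ones, and those beyond index $r$) has modulus at most $\theta$, so $\opnorm{S-\widehat X}\le\theta$, and the triangle inequality gives $\opnorm{S-X}\le2\theta$. Your index-by-index bookkeeping makes explicit what the paper states in one line.
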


\begin{proof}
By Weyl's inequality every eigenvalue of $\widehat X$ that is negative has
modulus at most $\theta$, and if $r<D$ its $(r+1)$-st largest eigenvalue is at
most $\lambda_{r+1}(X)+\theta=\theta$.  Every eigenvalue discarded in forming
$S$ therefore has modulus at most $\theta$, so
$\opnorm{S-\widehat X}\le\theta$, and the triangle inequality gives
\cref{eq:rank-trunc}.
\end{proof}

\begin{lemma}[Weighted quadratic estimate]\label{lem:weighted-conc}
There is an absolute constant $c_{\rm B}$ with the following property.
Let $\sigma_{\mathrm{in}}\in\calD(\mathcal{H}_{\mathrm{A}})$ satisfy $\sigma_{\mathrm{in}}\succeq I_{\mathrm{A}}/(2d_1)$, assume
$d_1\le rd_2$ and $1\le r\le D$, and use the reconstruction in
\cref{eq:Lomega}.  Let $X$ be a
state on $\mathcal{H}_{\mathrm{B}}\otimes \mathcal{H}_{\mathrm{A}}$, let $\widehat X$ be the linear estimator in
\cref{eq:ls-n} built from $n$ independent Haar snapshots of $X$, and put
$E=\widehat X-X$.  Let $W$ be a fixed $D\times k$ matrix of rank at most $r$,
independent of those snapshots.  Then, for every $0<\delta<1$, with probability
at least $1-\delta$,
\begin{equation}\label{eq:weighted-conc}
 \opnorm{\Phi_{\sigma_{\mathrm{in}}}(EWW^{\dagger}E)^{\dagger}(I_{\mathrm{B}})}^{1/2}
 \;\le\;{c_{\rm B}\opnorm W\sqrt{d_1}
 \left(
 \sqrt{\frac{rd_2+\log(4/\delta)}{n}}
 +\frac{rd_2+\log(4/\delta)}{n}
 \right)}.
\end{equation}
The statement holds conditionally on any earlier classical transcript that
determines $\sigma_{\mathrm{in}}$ and $W$, provided the $n$ snapshots come from fresh measurements.
\end{lemma}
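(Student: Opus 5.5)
We first reduce the left-hand side of \cref{eq:weighted-conc} to a supremum of scalar quantities. Since $EWW^\dagger E\succeq0$, \cref{lem:weighted-norms}\ref{it:pos} and \cref{eq:momega} give
\[
 \opnorm{\Phi_{\sigma_{\mathrm{in}}}(EWW^\dagger E)^\dagger(I_{\mathrm B})}
 =\sup_{\ket{v}}\tr\bigl(EWW^\dagger E\,(I_{\mathrm B}\otimes\ketbra{\bar v}{\bar v})\bigr)
 =\sup_{\ket v}\hsnorm{(I_{\mathrm B}\otimes\bra{\bar v})EW}^2,
\]
where $\ket v$ ranges over the unit sphere of $\mathcal H_{\mathrm A}$ and $\ket{\bar v}$ is $((\sigma_{\mathrm{in}}^{\mathrm T})^{-1/2}\ket{v^\star})$. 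The hypothesis $\sigma_{\mathrm{in}}\succeq I_{\mathrm A}/(2d_1)$ gives $\|\ket{\bar v}\|_2\le\sqrt{2d_1}$, which accounts for the factor $\sqrt{d_1}$. Next we write $W=\Pi W$ with $\Pi$ the projection onto $\ran W$, so $\dim\ran W\le r$. The Hilbert--Schmidt norm is at most $\sqrt r$ times an operator norm, but that loses $\sqrt r$. Instead we bound $\hsnorm{(I\otimes\bra{\bar v})E\Pi}\,\opnorm W$ directly.

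The quantity to control is therefore the random $d_2\times r$ matrix $M_v=(I_{\mathrm B}\otimes\bra{u})E\,\Pi$, restricted to an orthonormal basis of $\ran\Pi$, where $u=\bar v/\|\bar v\|$. Its Hilbert--Schmidt norm equals $\sup\Re\tr(N^\dagger M_v)$ over $d_2\times r$ matrices $N$ with $\hsnorm N=1$. For each fixed pair $(u,N)$, the quantity $\tr(N^\dagger M_v)=\tr(E\,\Pi \widetilde N^\dagger(I\otimes\bra u))$ is $\tr(EA)$ for a fixed matrix $A=\sum_k \ket{a_k}\bra{b_k}$ with $\hsnorm A\le1$. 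We would show that each single-snapshot term $\tr((Y-X)A)$ is subexponential with absolute $\psi_1$-norm. This follows from the moment computation in \cref{lem:haar-moments}(iii): expanding $\bra{\psi}A\ket{\psi}$, the $k$-th moments are controlled via \cref{eq:sym-moment} by $(k+1)!\,\|A\|_2^k$-type bounds, since every permutation contributes a trace product bounded using $\hsnorm A\le1$ and $\opnorm A\le1$. \cref{fact:bernstein} then gives $|\tr(EA)|\lesssim\sqrt{u/n}+u/n$ with probability $1-2e^{-u}$.

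We then take a union bound over a $1/4$-net of pairs (unit $u\in\Co^{d_1}$, unit $N\in\Co^{d_2\times r}$), using \cref{lem:covering-cardinality} and the bilinear approximation of \cref{lem:covering-norms}. The net has cardinality $9^{2(d_1+rd_2)}$, and $d_1\le rd_2$ makes this $e^{O(rd_2)}$. Taking $u\asymp rd_2+\log(4/\delta)$ yields
\[
 \sup_v\hsnorm{M_v}\lesssim\sqrt{\tfrac{rd_2+\log(4/\delta)}{n}}+\tfrac{rd_2+\log(4/\delta)}{n}.
\]
Multiplying by $\sqrt{2d_1}\opnorm W$ gives \cref{eq:weighted-conc}. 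The conditional statement follows because $W$ and $\sigma_{\mathrm{in}}$ are fixed given the transcript, and the snapshots are fresh.

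The main obstacle is the uniform subexponential bound for the bilinear form $\tr((Y-X)A)$ when $A$ is only normalized in Hilbert--Schmidt norm and is not Hermitian. We would split $A$ into Hermitian and anti-Hermitian parts. We would then verify that $\E|\bra\psi A\ket\psi-\E|^p\le (Cp)^p$ independently of $d$, using the symmetrizer expansion with $\rho=X$ and the density bound of part (i). If that direct moment bound fails, the fallback is to use the Lipschitz concentration of Haar vectors on the sphere to handle the $\hsnorm A$ dependence.
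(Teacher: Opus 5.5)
Your reduction and net argument are sound and coincide with the paper's Steps~3--4. Your $N\in\Co^{d_2\times r}$ is the paper's $\ket a\in\Co^{kd_2}$, your $\ket{\bar v}$ is its $\ket{h^\star}$, and factoring out $\opnorm W$ through $\Pi$ is equivalent to $\hsnorm{WQ^\dagger}\le\opnorm W\hsnorm Q$.

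The gap is in the step that carries the content of the lemma: the dimension-free bound $\|\tr((Y-X)A)\|_{\psi_1}\lesssim\hsnorm A$. You justify it by saying it follows from \cref{lem:haar-moments}(iii) because every permutation contributes a trace product controlled by $\hsnorm A\le1$ and $\opnorm A\le1$. That is false. In the expansion of $\E[((D+1)\bra\psi A\ket\psi)^k]$ via \cref{eq:sym-moment}, each fixed point outside the $X$-slot contributes a factor $\tr A$, which is only bounded by $\sqrt D\hsnorm A$.

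This already fails for your own test matrices. Suppose $\ran\Pi$ is spanned by $\ket{e_j}\otimes\ket u$ for $j\le r\le d_2$, and $N=r^{-1/2}[\ket{e_1}\cdots\ket{e_r}]$. Then $A=r^{-1/2}\sum_{j\le r}\ketbra{e_j\otimes u}{e_j\otimes u}$ has $\hsnorm A=1$ but $\tr A=\sqrt r$, so the uncentered variable has mean of order $\sqrt r$. Bounding uncentered moments and then centering via $|a-b|^k\le2^{k-1}(|a|^k+|b|^k)$, as in \cref{lem:haar-moments}(iii), therefore yields only $\psi_1\lesssim\sqrt r$. That puts an extra $\sqrt r$ in \cref{eq:weighted-conc} and an extra factor $r$ in the batch size of \cref{prop:block}.

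Your fallback does not repair this either:
\begin{itemize}
\item The global Lipschitz constant of $\varphi\mapsto(D+1)\bra\varphi A\ket\varphi$ is of order $D\opnorm A$. L\'evy concentration then gives fluctuations of order $\sqrt D\opnorm A$, not $\hsnorm A$.
\item If ``the density bound of part (i)'' means $q_X\le D$, transferring from Haar measure that way costs a factor $\log D$ in the $\psi_1$ norm.
\end{itemize}

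The missing idea is trace removal. Since $\tr Y=\tr X=1$, one has $\tr((Y-X)A)=\tr((Y-X)A_0)$ with $A_0=A-\frac{\tr A}{D}I$ and $\hsnorm{A_0}\le\hsnorm A$. Take $A_0$ Hermitian and traceless, and expand $\E_{q_X}[((D+1)\bra\psi A_0\ket\psi)^k]$ with $X$ occupying one slot:
\begin{itemize}
\item every permutation with a fixed point outside the $X$-slot vanishes;
\item the remaining cycles avoiding $X$ give $|\tr A_0^\ell|\le\hsnorm{A_0}^\ell$ for $\ell\ge2$;
\item the cycle through $X$ gives $|\tr(XA_0^m)|\le\opnorm{A_0}^m$;
\item the prefactor $(D+1)^k/\prod_{i=1}^k(D+i)$ is at most $1$.
\end{itemize}
Hence the $k$-th moment is at most $(k+1)!\hsnorm{A_0}^k$. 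The leftover constant $\tr(XA_0)$ is at most $\opnorm{A_0}$, and splitting a non-Hermitian $A$ into Hermitian parts gives $\psi_1\lesssim\hsnorm A$.

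The paper does this cycle computation under Haar measure (Step~1) and passes to the biased law by Cauchy--Schwarz with $\E_{\rm Haar}q_X^2\le2$ (Step~2). Your direct expansion under $q_X$ also works and avoids that transfer, but only after the trace has been removed.
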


{
\begin{proof}

A compact singular value decomposition replaces $W$ by a factor with at most
$r$ columns, so we may assume $k\le r$ without
changing $WW^{\dagger}$ or $\opnorm W$; the case $W=0$ is trivial.  Write
$\ket{w_1},\dots,\ket{w_k}$ for the columns of $W$ and define, for a $D\times D$ matrix
$M$, the $(kd_2)\times d_1$ matrix
\begin{equation*}
 \calA_W(M):=
 \begin{pmatrix}\mathrm{mat}(M\ket{w_1})\,\sigma_{\mathrm{in}}^{-1/2}\\ \vdots\\
                \mathrm{mat}(M\ket{w_k})\,\sigma_{\mathrm{in}}^{-1/2}\end{pmatrix}.
\end{equation*}
By \cref{eq:vec-facts}, $\mathrm{mat}(\ket{v})^{\dagger}\mathrm{mat}(\ket{v})
=(\tr_{\mathrm{B}}\ketbra{v}{v})^{\mathrm T}$, so summing over the blocks gives the
identity
\begin{equation}\label{eq:AW-identity}
 \calA_W(E)^\dagger\calA_W(E)
 =\Phi_{\sigma_{\mathrm{in}}}(EWW^\dagger E)^\dagger(I_{\mathrm B}),
 \qquad
 \opnorm{\calA_W(E)}
 =\opnorm{\Phi_{\sigma_{\mathrm{in}}}(EWW^\dagger E)^\dagger(I_{\mathrm B})}^{1/2}.
\end{equation}
It therefore suffices to control $\opnorm{\calA_W(E)}$.

\medskip
\noindent\textbf{Step 1: direct Haar moment bounds.}
Let $\ket{\varphi}$ be Haar distributed on the unit sphere of $\Co^D$.  We claim
that, for every complex $D\times D$ matrix $M$ and every real $p\ge2$,
\begin{equation}\label{eq:haar-HS-Lp}
 \left(\E\left|
 D\bra{\varphi}M\ket{\varphi}-\tr M
 \right|^p\right)^{1/p}
 \le c_1 p\hsnorm M .
\end{equation}

We first take $M=M^\dagger$ and $\tr M=0$.  For every positive integer $m$,
the Haar moment identity \cref{eq:sym-moment} gives
\begin{align}
 \mathbb E
 \left(D\bra{\varphi}M\ket{\varphi}\right)^m
 &=
 \frac{D^m}{D(D+1)\cdots(D+m-1)}
 \sum_{\pi\in S_m}
 \prod_{c\in\mathcal C(\pi)}
 \tr(M^{|c|}),
 \label{eq:haar-cycle-moment-HS}
\end{align}
 where $\mathcal C(\pi)$ is the set of cycles of $\pi$ and $|c|$ is the length of the cycle $c$.

If $\pi$ has a one-cycle, its contribution to
\cref{eq:haar-cycle-moment-HS} vanishes because $\tr M=0$.  For every
cycle length $\ell\ge2$,
\[
 |\tr(M^\ell)|
 \le \sum_i|\lambda_i(M)|^\ell
 \le \left(\sum_i|\lambda_i(M)|^2\right)^{\ell/2}
 =\hsnorm M^\ell.
\]
Therefore, for every even integer $m\ge2$,
\begin{align*}
 \mathbb E
 \left|D\bra{\varphi}M\ket{\varphi}\right|^m
 &\le
 \frac{D^m}{D(D+1)\cdots(D+m-1)}
 \sum_{\pi\in S_m}\hsnorm M^m\le m!\,\hsnorm M^m.
\end{align*}
Hence
\begin{equation}\label{eq:haar-even-moment}
 \left(\E\left|
 D\bra{\varphi}M\ket{\varphi}
 \right|^m\right)^{1/m}
 \le (m!)^{1/m}\hsnorm M
 \le m\hsnorm M.
\end{equation}

For a general Hermitian $M$, set
\[
 M_0=M-\frac{\tr M}{D}I_D.
\]
Then
\[
 \hsnorm{M_0}^2
 =\hsnorm M^2-\frac{|\tr M|^2}{D}
 \le\hsnorm M^2.
\]
Given $p\ge2$, choose an even integer $m$ such that $p\le m\le p+2$.
Monotonicity of $L^p$ norms and \cref{eq:haar-even-moment} then give
\[
 \left(\E\left|
 D\bra{\varphi}M\ket{\varphi}-\tr M
 \right|^p\right)^{1/p}
 \le m\hsnorm M
 \le 2p\hsnorm M.
\]

Finally, for a general complex matrix $M$, write
\[
 M=M_1+iM_2,\qquad
 M_1=\frac{M+M^\dagger}{2},\qquad
 M_2=\frac{M-M^\dagger}{2i}.
\]
Both $M_1$ and $M_2$ are Hermitian and
\[
 \hsnorm{M_1}^2+\hsnorm{M_2}^2=\hsnorm M^2.
\]
Applying the Hermitian estimate to the real and imaginary parts and using
the triangle inequality proves \cref{eq:haar-HS-Lp}.

\medskip
\noindent\textbf{Step 2: transfer to the observed Haar outcome.}
Let $\ket{\psi}$ be the outcome obtained by measuring the state $X$ in a
Haar-random basis.  By \cref{lem:haar-moments}(i), $\ket{\psi}$ has density
\[
 q_X(\ket{\varphi})=D\bra{\varphi}X\ket{\varphi}
\]
with respect to spherical Haar measure.  The second Haar moment gives
\begin{align}
 \E q_X^2
 &=D^2\mathbb E_{\mathrm{Haar}}
   \bigl(\bra{\varphi}X\ket{\varphi}\bigr)^2 \notag\\
 &=\frac{D\bigl((\tr X)^2+\tr(X^2)\bigr)}{D+1}
 =\frac{D(1+\tr(X^2))}{D+1}
 \le2,
 \label{eq:qX-L2}
\end{align}
where we used $\tr X=1$ and $\tr(X^2)\le1$.

For the snapshot
\[
 Y=(D+1)\ketbra{\psi}{\psi}-I_D,
\]
and a fixed complex matrix $M$, define
\[
 \xi_M:=\tr((Y-X)M).
\]
As a function of a Haar vector $\ket{\varphi}$,
\begin{align}
 \xi_M(\ket{\varphi})
 &=(D+1)\bra{\varphi}M\ket{\varphi}
   -\tr M-\tr(XM) \notag\\
 &=\frac{D+1}{D}
   \left(
    D\bra{\varphi}M\ket{\varphi}-\tr M
   \right)
   +\frac{\tr M}{D}-\tr(XM).
 \label{eq:xi-decomposition}
\end{align}
The deterministic term is bounded by
\begin{align}
 \left|\frac{\tr M}{D}-\tr(XM)\right|
 &\le \frac{|\tr M|}{D}+|\tr(XM)| \notag\\
 &\le \frac{\hsnorm M}{\sqrt D}
       +\|X\|_1\opnorm M
 \le2\hsnorm M.
 \label{eq:xi-centering-term}
\end{align}
Combining \cref{eq:haar-HS-Lp,eq:xi-decomposition,eq:xi-centering-term}
shows that, for every $p\ge2$,
\[
 \left(\E |\xi_M|^{2p}\right)^{1/(2p)}
 \le c_1 p\hsnorm M.
\]
Cauchy--Schwarz and \cref{eq:qX-L2} now give
\begin{align}
 \left(\E_{q_X}|\xi_M|^p\right)^{1/p}
 &=
 \left(
  \mathbb E_{\mathrm{Haar}}
  \left[q_X(\ket{\varphi})|\xi_M(\ket{\varphi})|^p\right]
 \right)^{1/p} \notag\\
 &\le
 \left(\E|q_X|^2\right)^{1/(2p)}\left(\E
 |\xi_M|^{2p}\right)^{1/(2p)}
 \le c_1 p\hsnorm M.
 \nonumber
\end{align}
For $1\le p<2$, the same estimate follows by monotonicity from the
case $p=2$. Thus, by the equivalent moment characterization of the
$\psi_1$ norm stated in \cref{fact:bernstein}, there is an absolute
constant $c_1$ such that
\begin{equation}\label{eq:biased-psi1}
 \|\xi_M\|_{\psi_1}
 \le  c_1\hsnorm M.
\end{equation}
Moreover,
\[
 \mathbb E_{q_X}\xi_M
 =\tr((\mathbb EY-X)M)=0
\]
by \cref{lem:haar-moments}(ii).

Let $Y_1,\dots,Y_n$ be the independent snapshots forming $\widehat X$ and
write
\[
 \xi_{M,t}:=\tr((Y_t-X)M).
\]
The scalar Bernstein inequality for independent centered subexponential
variables (\cref{fact:bernstein}), applied to \cref{eq:biased-psi1}, gives, for every $u\ge1$,
\begin{equation}\label{eq:biased-hw-average}
 \Pr\left\{
 \left|
 \frac1n\sum_{t=1}^n\xi_{M,t}
 \right|
 >
 c_1\hsnorm M
 \left(
  \sqrt{\frac un}+\frac un
 \right)
 \right\}
 \le2e^{-u}.
\end{equation}
For a complex $M$, this follows by applying the real-valued inequality to
the Hermitian and skew-Hermitian parts and taking a union bound; the factor
from this union bound is absorbed by replacing $u$ with $u+\log2$ and
enlarging the absolute constant.  Since
\[
 E=\widehat X-X=\frac1n\sum_{t=1}^n(Y_t-X),
\]
\cref{eq:biased-hw-average} can equivalently be written as
\begin{equation}\label{eq:biased-hw-E}
 \Pr\left\{
 |\tr(EM)|
 >
 c_1\hsnorm M
 \left(
  \sqrt{\frac un}+\frac un
 \right)
 \right\}
 \le2e^{-u}.
\end{equation}

\medskip
\noindent\textbf{Step 3: fixed bilinear forms of $\calA_W(E)$.}
Fix unit vectors
\[
 \ket{a}\in\Co^{kd_2},
 \qquad
 \ket{v}\in\Co^{d_1},
\]
and decompose $\ket{a}=(\ket{a_1},\dots,\ket{a_k})$ with $\ket{a_j}\in\Co^{d_2}$.  Define
\[
 \ket{h}:=\sigma_{\mathrm{in}}^{-1/2}\ket{v},
 \qquad
 \ket{q_j}:=\ket{a_j} \otimes \ket{h^\star},
 \qquad
 M_{a,v}:=\sum_{j=1}^k \ket{w_j}\bra{q_j}.
\]
For every $D\times D$ matrix $M$, we have
\[
 \bra{a_j}\mathrm{mat}(M\ket{w_j})\ket{h} = (\bra{a_j}\otimes \bra{h^\star})M\ket{w_j}.
\]
Therefore,
\begin{align}
 \bra{a}\calA_W(M)\ket{v}
 &=\sum_{j=1}^k
   \bra{a_j}\mathrm{mat}(M\ket{w_j})
   \sigma_{\mathrm{in}}^{-1/2}\ket{v} \notag\\
 &=\sum_{j=1}^k \bra{q_j} M\ket{w_j}
 =\tr(MM_{a,v}).
 \label{eq:AW-bilinear}
\end{align}

Let $Q$ be the $D\times k$ matrix with columns $\ket{q_1},\dots,\ket{q_k}$.
Then $M_{a,v}=WQ^\dagger$.  Moreover,
\begin{align}
 \hsnorm Q^2
 &=\sum_{j=1}^k\|\ket{a_j}\otimes \ket{h^\star}\|_2^2
 =\sum_{j=1}^k\|\ket{a_j}\|_2^2\|\ket{h^\star}\|_2^2 \notag\\
 &=\|\ket{a}\|_2^2
   \bra{v}\sigma_{\mathrm{in}}^{-1}\ket{v}
 \le2d_1,
 \nonumber
\end{align}
because $\sigma_{\mathrm{in}}\succeq I_{\mathrm A}/(2d_1)$.  It follows that
\begin{equation}\label{eq:test-matrix-norms}
 \hsnorm{M_{a,v}}
 =\hsnorm{WQ^\dagger}
 \le\opnorm W\,\hsnorm Q
 \le\sqrt{2d_1}\,\opnorm W.
\end{equation}
Applying \cref{eq:biased-hw-E} with $M=M_{a,v}$ and using
\cref{eq:AW-bilinear,eq:test-matrix-norms}, we obtain, for every fixed
pair of unit vectors $(\ket{a},\ket{v})$ and every $u\ge1$,
\begin{equation}\label{eq:AW-fixed-pair}
 \Pr\left\{
 |\bra{a}\calA_W(E)\ket{v}|
 >
 c_1\opnorm W\sqrt{d_1}
 \left(
  \sqrt{\frac un}+\frac un
 \right)
 \right\}
 \le2e^{-u}.
\end{equation}

\medskip
\noindent\textbf{Step 4: passage from fixed vectors to the operator norm.}
By \cref{lem:covering-cardinality}, choose deterministic
$1/4$-covering nets $\calN_1$ and $\calN_2$ of the unit spheres of
$\Co^{kd_2}$ and $\Co^{d_1}$, respectively, with
\[
 |\calN_1|\le9^{2kd_2},
 \qquad
 |\calN_2|\le9^{2d_1}.
\]
Set
\[
 u=c_0(kd_2+d_1)+\log\frac4\delta,
\]
where $c_0$ is a sufficiently large absolute constant.  Taking a union
bound in \cref{eq:AW-fixed-pair} over
$\calN_1\times\calN_2$, we find that, with probability at least
$1-\delta$,
\begin{align}
 \max_{\substack{\ket{a}\in\calN_1\\\ket{v}\in\calN_2}}
 |\bra{a}\calA_W(E)\ket{v}|
 &\le
 c_1\opnorm W\sqrt{d_1}
 \left(
  \sqrt{\frac{kd_2+d_1+\log(4/\delta)}{n}}
  +\frac{kd_2+d_1+\log(4/\delta)}{n}
 \right).
 \label{eq:AW-net-bound}
\end{align}

The passage from this maximum to the operator norm is
\cref{eq:bilinear-net} with $\eta=1/4$, applied to $B=\calA_W(E)$.

Since $k\le r$ and $d_1\le rd_2$,
\[
 kd_2+d_1\le2rd_2.
\]
Combining \cref{eq:AW-net-bound,eq:bilinear-net} and absorbing numerical
constants gives
\[
 \opnorm{\calA_W(E)}
 \le
 c_{\rm B}\opnorm W\sqrt{d_1}
 \left(
  \sqrt{\frac{rd_2+\log(4/\delta)}{n}}
  +\frac{rd_2+\log(4/\delta)}{n}
 \right).
\]
Finally, \cref{eq:AW-identity} yields
\[
 \opnorm{
 \Phi_{\sigma_{\mathrm{in}}}(EWW^\dagger E)^\dagger(I_{\mathrm B})
 }^{1/2}
 \le
 c_{\rm B}\opnorm W\sqrt{d_1}
 \left(
  \sqrt{\frac{rd_2+\log(4/\delta)}{n}}
  +\frac{rd_2+\log(4/\delta)}{n}
 \right),
\]
which is \cref{eq:weighted-conc}.
\end{proof}

}

\begin{lemma}[Stability of the regularized quadratic form]\label{lem:sandwich}
Let $s>0$ and let $X,S\succeq0$ be $D\times D$ matrices with
$\opnorm{S-X}\le\eta\le s/4$.  Let $F=f_s(X)$ and $Z=X_s$
be the matrices defined in \cref{eq:ridge}, let $P_S$ be the
projection onto $\ran S$, and set
\begin{equation*}
 R:=(S+sI_D)^{-1}P_S .
\end{equation*}
Then $0\preceq R\preceq s^{-1}I_D$, $\rank R\le\rank S$, and
\begin{equation}\label{eq:sandwich}
 -3\eta F\ \preceq\ XRX-Z\ \preceq\ 3\eta F .
\end{equation}
Moreover, writing $F^+$ for the Moore--Penrose inverse and $F^{+1/2}=(F^+)^{1/2}$,
\begin{equation}\label{eq:W-def}
 W:=RXF^{+1/2}
 \quad\text{satisfies}\quad
 WF^{1/2}=RX,\qquad\rank W\le\rank S,\qquad\opnorm W\le2 .
\end{equation}
\end{lemma}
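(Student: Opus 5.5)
The plan is to write everything in resolvent form and compare $(S+sI)^{-1}$ with $(X+sI)^{-1}$ through the single norm bound
\[
\opnorm{(S+sI)^{-1/2}(X+sI)^{1/2}}^2\le 1+\eta/s\le\tfrac54 .
\]
This follows from $X+sI\preceq S+sI+\eta I\preceq(1+\eta/s)(S+sI)$.

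\emph{Elementary claims.} Since $R$ is $(S+sI)^{-1}$ restricted to $\ran S$, we get $0\preceq R\preceq s^{-1}I$ and $\rank R\le\rank S$ directly.

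\emph{Splitting $XRX-Z$.} Use $Z=X(X+sI)^{-1}X$ and
\[
R=(S+sI)^{-1}-s^{-1}(I-P_S),
\]
which holds because $(S+sI)^{-1}=s^{-1}$ on $\ker S$. Then
\[
XRX-Z=\underbrace{X\bigl[(S+sI)^{-1}-(X+sI)^{-1}\bigr]X}_{(\mathrm I)}-\underbrace{s^{-1}X(I-P_S)X}_{(\mathrm{II})}.
\]

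\emph{Bounding (I).} The resolvent identity gives $(\mathrm I)=ACB^\dagger$ with $A=X(S+sI)^{-1}$, $C=X-S$ ($\opnorm C\le\eta$) and $B=X(X+sI)^{-1}=F$. Since $(\mathrm I)$ is Hermitian, it equals $\tfrac12(ACB^\dagger+BC^\dagger A^\dagger)$. The standard inequality $\pm(ACB^\dagger+BC^\dagger A^\dagger)\preceq\eta(AA^\dagger+BB^\dagger)$ then applies. For the two Gram terms, $BB^\dagger=F^2\preceq F$. Also
\[
AA^\dagger=F\,(X+sI)(S+sI)^{-2}(X+sI)\,F\preceq\opnorm{(S+sI)^{-1}(X+sI)}^2F^2 .
\]
I would bound this last norm by $1+\eta/s\le 5/4$ via the perturbation $X+sI=(S+sI)+(X-S)$. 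This yields $\pm(\mathrm I)\preceq c\eta F$ with $c\le\tfrac12(1+\tfrac{25}{16})<1.3$.

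\emph{Bounding (II).} This term is the main obstacle. We have $0\preceq(\mathrm{II})$, so only an upper bound $(\mathrm{II})\preceq c'\eta F$ is needed. Setting $x=(I-P_S)Xv$, two facts are available. First, $x^\dagger Xx=x^\dagger(X-S)x\le\eta\|x\|^2$, since $x\in\ker S$. Second, $(I-P_S)X=(I-P_S)(X-S)$. The crude consequences $\|x\|^2\le\eta\,v^\dagger Xv$ and $\|x\|\le(\eta+s)\|Fv\|$ do not combine to $s\eta\,v^\dagger Fv$ when $X$ has large eigenvalues.

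So my first attempt is to exploit that $\ker S$ is nearly orthogonal to the large-eigenvalue directions of $X$. I would write
\[
s^{-1}(I-P_S)X=(I-P_S)(S+sI)^{-1}X=(I-P_S)(S+sI)^{-1}(X+sI)\,F .
\]
This gives $(\mathrm{II})=s^{-1}\,F\,G^\dagger G\,F$ with $G=(I-P_S)(X+sI)=(I-P_S)(X-S+sI)$, using the factorisation $X=(X+sI)F$ on both sides. It then remains to show $G^\dagger G\preceq O(s\eta)F^{+}$ on $\supp X$. For this I would split $v$ along the spectral decomposition of $X$ at threshold $s$: on eigenvalues $\le s$ one has $F\approx X/s$ and the bound $\|x\|^2\le\eta\, v^\dagger Xv$ suffices, while on eigenvalues $>s$ one uses $\|G\|\le\eta+s$ together with $F\succeq\frac12$ there. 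If the constant from the cross terms of this split exceeds what \cref{eq:sandwich} allows, I would instead merge (I) and (II) before estimating, writing
\[
XRX-Z=X(S+sI)^{-1}\bigl[(X-S)-(S+sI)(I-P_S)\bigr](X+sI)^{-1}X ,
\]
and redo the $\pm\eta(AA^\dagger+BB^\dagger)$ argument with the bracket as the middle factor.

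\emph{The factor $W$.} $X$ and $F$ share their support and $XP_F=X$, so $WF^{1/2}=RXF^{+1/2}F^{1/2}=RX$. The rank bound follows from $\rank W\le\rank R\le\rank S$. For the norm, on $\supp X$ we have $XF^{+1/2}=X^{1/2}(X+sI)^{1/2}$. Using $X^{1/2}\preceq(X+sI)^{1/2}$, which commutes with $(X+sI)^{1/2}$,
\[
\opnorm W\le\opnorm{(S+sI)^{-1}(X+sI)}\le \tfrac54\le2 ,
\]
by the same resolvent comparison used in (I).
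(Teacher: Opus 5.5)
\textbf{Verdict: there is a genuine gap in your treatment of term (II).} The elementary claims, your bound on (I), and your treatment of $W$ are correct. For (I) you use the first-order resolvent identity with the symmetric Cauchy--Schwarz inequality, giving $\pm(\mathrm I)\preceq\frac{41}{32}\eta F$. This is a valid alternative to the paper's second-order expansion, which gives the sharper one-sided bounds $-\eta F^2\preceq X(S+sI)^{-1}X-Z\preceq(\eta+\eta^2/s)F^2$.

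\textbf{Where the split fails.} You correctly flag (II) as the obstacle, but neither proposed argument closes it. On the eigenvalues of $X$ above $s$ you use $\opnorm{G}\le\eta+s$ and $F\succeq\frac12$. That only gives $G^\dagger G\preceq(\eta+s)^2 I$ there, hence a bound on (II) of order $s^{-1}(\eta+s)^2F\approx sF$ on that block. This is a factor $s/\eta$ too large, and it is exactly the crude estimate $\|x\|\le(\eta+s)\|Fv\|$ you had just rejected.

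\textbf{Where the fallback fails.} The second piece of your merged bracket contributes $X(S+sI)^{-1}(S+sI)(I-P_S)(X+sI)^{-1}X=X(I-P_S)(X+sI)^{-1}X$. That is not $s^{-1}X(I-P_S)X$, so the identity is wrong. The correct bracket is $(X-S)-(I-P_S)(X+sI)=P_S(X-S)-s(I-P_S)$. Its norm is of order $s$, so the $\pm\eta(AA^\dagger+BB^\dagger)$ argument again yields only $O(s)F$.

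\textbf{The missing step.} It uses exactly the two facts you listed, but combined at the operator level rather than through $\|x\|$. Your target $G^\dagger G\preceq O(s\eta)F^{+}$ on $\supp X$ is equivalent to $\opnorm{GF^{1/2}}^2\le O(s\eta)$. Since $GF^{1/2}=(I-P_S)X^{1/2}(X+sI)^{1/2}$,
\[
\opnorm{GF^{1/2}}^2=\opnorm{(I-P_S)(X^2+sX)(I-P_S)}\le\opnorm{(X-S)(I-P_S)}^2+s\,\opnorm{(I-P_S)(X-S)(I-P_S)}\le\eta^2+s\eta .
\]
Hence $(\mathrm{II})=s^{-1}F^{1/2}(GF^{1/2})^\dagger(GF^{1/2})F^{1/2}\preceq(\eta+\eta^2/s)F\le\frac54\eta F$. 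This is the paper's estimate, written there as $\opnorm{F^{+1/2}XQ}^2=\opnorm{Q(X^2+sX)Q}$ via $XF^+X=X^2+sX$. Combined with your (I) bound it gives $-\frac{81}{32}\eta F\preceq XRX-Z\preceq\frac{41}{32}\eta F$, which lies inside $\pm3\eta F$.

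No spectral split is needed. A repaired two-block split, using $\opnorm{(I-P_S)X}\le\eta$ on the high block, has the right order but loses constants in the cross terms (about $\frac94\eta F$ at best for (II) when $\eta=s/4$). Together with your (I) estimate it then overshoots the stated $3\eta F$.
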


\begin{proof}
Write $R_0=(S+sI)^{-1}$ and $R_X=(X+sI)^{-1}$, so $F=XR_X=R_XX$ and
$Z=XR_XX$.  Iterating the resolvent identity
$R_0=R_X-R_X(S-X)R_0$ once gives the exact second-order form
\[
 R_0=R_X-R_X(S-X)R_X+R_X(S-X)R_0(S-X)R_X ,
\]
and multiplying by $X$ on both sides,
\begin{equation*}
 XR_0X-Z=-F(S-X)F+F(S-X)R_0(S-X)F .
\end{equation*}
Since $-\eta I\preceq S-X\preceq\eta I$ and
$0\preceq R_0\preceq s^{-1}I$, conjugation gives
\begin{equation}\label{eq:second-order-bound}
 -\eta F^{2}\ \preceq\ XR_0X-Z\ \preceq\ (\eta+\eta^{2}/s)F^{2}.
\end{equation}
Next, $SQ=0$ for $Q:=I-P_S$ implies $R_0Q=s^{-1}Q$, hence
$R=R_0-s^{-1}Q$ and $XRX=XR_0X-s^{-1}XQX$.  Also
$\opnorm{XQ}=\opnorm{(X-S)Q}\le\eta$ and $0\preceq QXQ\preceq\eta Q$.  On
$\supp X=\supp F$ one has $F^{-1}=(X+sI)X^{-1}$, so
$XF^+X=X^{2}+sX$ (both sides also vanish on $\ker X$) and
\[
 \opnorm{F^{+1/2}XQ}^{2}
 =\opnorm{QXF^+XQ}
 =\opnorm{Q(X^{2}+sX)Q}\le\eta^{2}+s\eta ,
\]
using $\opnorm{QX^{2}Q}=\opnorm{XQ}^{2}$.  Since
$F^{+1/2}F^{1/2}=P_X$ and $XP_X=X$, we may write
$QX=\bigl(QXF^{+1/2}\bigr)F^{1/2}$, and therefore
\[
 XQX=(QX)^{\dagger}(QX)
 =F^{1/2}\bigl(QXF^{+1/2}\bigr)^{\dagger}
        \bigl(QXF^{+1/2}\bigr)F^{1/2}
 \ \preceq\ \bigl(\eta^{2}+s\eta\bigr)F,
\]
that is,
\begin{equation}\label{eq:XQX}
 s^{-1}XQX\ \preceq\ (\eta+\eta^{2}/s)F .
\end{equation}
Subtracting \cref{eq:XQX} from \cref{eq:second-order-bound} and using
$F^{2}\preceq F$ and $\eta/s\le1/4$ gives \cref{eq:sandwich}, since both
coefficients are then at most $2\eta+\eta^{2}/s\le\tfrac94\eta\le3\eta$.

For \cref{eq:W-def}, note $XF^{+1/2}=(X+sI)F^{1/2}$ (both sides
vanish on $\ker X$ and agree on $\supp X$).  Hence
\[
 W=R(X+sI)F^{1/2}=\bigl[R(S+sI)+R(X-S)\bigr]F^{1/2}
  =\bigl[P_S+R(X-S)\bigr]F^{1/2},
\]
so $\opnorm W\le(1+\eta/s)\opnorm{F^{1/2}}\le\tfrac54\le2$ and
$\rank W\le\rank R\le\rank S$.  Finally
$F^{+1/2}F^{1/2}=P_X$ and $XP_X=X$, so $WF^{1/2}=RX$.
\end{proof}

\begin{proposition}[Error bounds for the two-batch estimator]\label{prop:block}
There is an absolute constant $K$ with the following property.  Let
{$0<s\le\min\{1,\sqrt{d_1/r}\}$}, $0<\delta<1$, and $\calE\in\qchannel_{d_1,d_2}^{r}$.
Fix $\sigma_{\mathrm{in}}\in\calD(\mathcal{H}_{\mathrm{A}})$ with $\sigma_{\mathrm{in}}\succeq I_{\mathrm{A}}/(2d_1)$, and let
$H=H_s(\sigma_{\mathrm{in}})$, where $H_s$ is defined in \cref{eq:gram}.  Set
\begin{equation}\label{eq:block-n}
 {b:=D+\left(1+\frac{d_1}{r}\right)\log\frac{6}{\delta},
 \qquad
 n=\left\lceil K\,\frac{b}{s^{2}}\right\rceil} .
\end{equation}
Let $\widehat Z,\widehat H$ be the outputs of the two-batch estimator
in \cref{def:block} with this batch size, and use the reconstruction rule $M\mapsto\Phi_{\sigma_{\mathrm{in}}}(M)$ from
\cref{eq:Lomega}.  With probability at least $1-\delta$,
\begin{align}
 \tfrac34\bigl(H+rI_{\mathrm{A}}\bigr)&\ \preceq\ \widehat H+rI_{\mathrm{A}}
   \ \preceq\ \tfrac54\bigl(H+rI_{\mathrm{A}}\bigr),
 \label{eq:block-multiplicative}\\[2pt]
 \dnorm{\Phi_{\sigma_{\mathrm{in}}}(\widehat Z)-\calE}
 &\ \le\ \Bigl(s+\tfrac{3s}{64}\Bigr)\opnorm H
   +4s\sqrt{q\,r\opnorm H}+q\,rs,
 \qquad q:=2^{-16}.
 \label{eq:block-channel}
\end{align}
In particular, if $\opnorm H\le10r$ then
$\dnorm{\Phi_{\sigma_{\mathrm{in}}}(\widehat Z)-\calE}\le11\,rs$.
\end{proposition}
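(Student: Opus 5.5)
Write $X=\calE(\kettbbra{\sqrt{\sigma_{\mathrm{in}}}}{\sqrt{\sigma_{\mathrm{in}}}})$, $F=f_s(X)$, $Z=X_s$, and let $E_b=\widehat X_b-X$ for $b=0,1$. The plan is to work on the intersection of three good events, each of probability at least $1-\delta/3$.

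\begin{itemize}
\item[(E0)] $\opnorm{E_0}\le\theta$ with $\theta\le s/8$. This follows from \cref{lem:haar-moments}\ref{item-9150411}, since $n\ge Kb/s^2$ and $b\ge D+\log(6/\delta)$. Then \cref{lem:rank-trunc} gives $\opnorm{S-X}\le2\theta=:\eta\le s/4$. Hence \cref{lem:sandwich} applies and yields $-3\eta F\preceq XRX-Z\preceq3\eta F$, together with the factor $W=RXF^{+1/2}$ with $\opnorm W\le2$, $\rank W\le r$ and $WF^{1/2}=RX$. Both $R$ and $W$ are functions of the first batch only.
\item[(E1)] Conditionally on the first batch, apply \cref{lem:weighted-conc} to the fresh second batch with $W'=R^{1/2}$, which has rank at most $r$ and $\opnorm{R^{1/2}}\le s^{-1/2}$. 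This controls $\opnorm{\Phi_{\sigma_{\mathrm{in}}}(E_1RE_1)^\dagger(I_{\mathrm B})}$.
\item[(E2)] The cross term $XRE_1=F^{1/2}W^\dagger E_1$ is controlled via \cref{lem:weighted-norms}\ref{it:cross} with $U=F^{1/2}$ and $V=E_1W$. We have $\opnorm{\tr_{\mathrm B}C_{\Phi(F)}}=\opnorm H$, and \cref{lem:weighted-conc} with this $W$ bounds $\opnorm{\Phi(E_1WW^\dagger E_1)^\dagger(I_{\mathrm B})}$.
\end{itemize}

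With the choice of $n$, the quantity $\sqrt{d_1}\sqrt{(rd_2+\log(12/\delta))/n}$ is at most a tiny constant times $s$. The second-order term is not automatically of lower order, so the ratio must be checked. The factor $d_1/r$ in $b$ is designed so that $d_1(rd_2+\log)/n\lesssim d_1(r d_2+\log)s^2/(K b)$ is small compared with $rs$; that is, $b$ dominates $(d_1/r)(rd_2+\log)\ge D$ up to constants. The hypothesis $s\le\sqrt{d_1/r}$ handles the linear-in-$1/n$ Bernstein term. The outcome is
\[
\opnorm{\Phi(E_1RE_1)^\dagger(I)}\le s^{-1}\cdot q' s^2 r = q' r s
\qquad\text{and}\qquad
\opnorm{\Phi(E_1WW^\dagger E_1)^\dagger(I)}\le q r s^2,
\]
with the constants tuned by $K$ to give $q=2^{-16}$.

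Next, decompose
\[
\widehat Z-Z=(XRX-Z)+(E_1RX+XRE_1)+E_1RE_1.
\]
\begin{itemize}
\item The first term is sandwiched by $\pm3\eta F$ with $3\eta\le 3s/64$ once $\theta\le s/128$. By \cref{lem:weighted-norms}\ref{it:sandwich}, its diamond contribution is at most $(3s/64)\opnorm H$.
\item The cross terms contribute at most $2\sqrt{\opnorm H\cdot qrs^2}\cdot(\text{constant})$, which gives the $4s\sqrt{qr\opnorm H}$ term.
\item The last term is PSD, and by \cref{lem:weighted-norms}\ref{it:pos} it contributes $q r s$.
\end{itemize}
Adding the bias $\dnorm{\calE-\calE_s}=s\opnorm H$ from \cref{lem:ridge-tail} gives \cref{eq:block-channel}. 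The final claim with $\opnorm H\le10r$ is then arithmetic: $10.47rs+4s\sqrt{10}r/256+qrs\le11rs$.

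For \cref{eq:block-multiplicative}, \cref{lem:ridge-tail} gives $\Phi(Z)^\dagger(I)=I-sH$, so
\[
\widehat H-H=-s^{-1}\Phi(\widehat Z-Z)^\dagger(I_{\mathrm B}).
\]
I would bound this operator in Löwner order, not just in norm, using the same three terms.
\begin{itemize}
\item The sandwich term gives $\pm(3\eta/s)H\preceq\pm\frac{3}{64}H$.
\item The PSD term $E_1RE_1$ gives between $0$ and $q r I$.
\item For the cross terms, the Löwner Cauchy--Schwarz inequality $\pm(UV^\dagger+VU^\dagger)\preceq tUU^\dagger+t^{-1}VV^\dagger$ is preserved by the positive map $M\mapsto\Phi(M)^\dagger(I)$. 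Choosing $t=1/8$ gives $\pm(\tfrac18 H+8qr I)$.
\end{itemize}
Altogether $|\widehat H-H|\preceq\tfrac14(H+rI)$ in the Löwner sense, which is \cref{eq:block-multiplicative}.

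I expect the main obstacle to be the parameter bookkeeping in (E1) and (E2): verifying that $b$ as defined makes $d_1\bigl(\sqrt{(rd_2+\log)/n}+(rd_2+\log)/n\bigr)^2\le q r s^2$ uniformly. This uses $d_1\le rd_2$, $r\le D$, and $s\le\min\{1,\sqrt{d_1/r}\}$. The other delicate point is keeping the conditional independence of $R,W$ from the second batch explicit so that \cref{lem:weighted-conc} applies, finishing with a union bound over the three events at $\delta/3$ each (hence the $6/\delta$).
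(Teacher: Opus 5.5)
Your proposal is correct and follows the paper's proof essentially step for step. The shared steps are: the pilot event from Haar concentration plus rank truncation (giving $\opnorm{S-X}\le s/64$); two conditional applications of the weighted quadratic estimate, to $W=RXF^{+1/2}$ and to $R^{1/2}$, with a $\delta/3$ union bound; the exact three-term expansion of $\widehat Z-Z$; and L\"owner Cauchy--Schwarz for the multiplicative bound, with only cosmetic differences in constants. One small slip: for the cross term in $\widehat H-H=-s^{-1}\Phi_{\sigma_{\mathrm{in}}}(\widehat Z-Z)^\dagger(I_{\mathrm B})$, the weight must be $t=s/8$ (or rescale $V$ by $s^{-1}$), not $t=1/8$, to obtain $\pm(\tfrac18H+8qrI_{\mathrm A})$; the paper uses $a=s/16$.
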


\begin{proof}
Write $E=\widehat X_1-\calE(\kettbbra{\sqrt{\sigma_{\mathrm{in}}}}{\sqrt{\sigma_{\mathrm{in}}}})$, $\eta=s/64$, and let
$F=f_s(\calE(\kettbbra{\sqrt{\sigma_{\mathrm{in}}}}{\sqrt{\sigma_{\mathrm{in}}}}))$ and $Z=\calE(\kettbbra{\sqrt{\sigma_{\mathrm{in}}}}{\sqrt{\sigma_{\mathrm{in}}}})F$ be as in \cref{eq:ridge}.  

\smallskip\noindent\textbf{Step 1: the three events.}
{By \cref{lem:haar-moments}(iii) and
\cref{lem:rank-trunc},
\begin{equation}\label{eq:pilot-event}
 \opnorm{S-\calE(\kettbbra{\sqrt{\sigma_{\mathrm{in}}}}{\sqrt{\sigma_{\mathrm{in}}}})}\le\eta
\end{equation}
with probability at least $1-\delta/3$, provided
$n\ge c(D+\log(6/\delta))/s^2$, which follows from \cref{eq:block-n} for
$K$ large.}  Condition on the first batch and on \cref{eq:pilot-event}.  Then
$R$ and $W=R\calE(\kettbbra{\sqrt{\sigma_{\mathrm{in}}}}{\sqrt{\sigma_{\mathrm{in}}}})F^{+1/2}$ are fixed and independent of $E$.  For the fixed
unknown state $\calE(\kettbbra{\sqrt{\sigma_{\mathrm{in}}}}{\sqrt{\sigma_{\mathrm{in}}}})$, the matrix $W$ is determined by the first batch; it is
used only in the proof and need not be computed by the protocol.  By
\cref{lem:sandwich}, $\rank W\le r$, $\opnorm W\le2$,
$\rank R^{1/2}\le r$ and $\opnorm{R^{1/2}}\le s^{-1/2}$.  Apply
\cref{lem:weighted-conc} twice --- once to $W$ and once to $R^{1/2}$ ---
each with failure probability $\delta/3$.  Put
$a=rd_2+\log(12/\delta)$.  After an absolute adjustment of $K$,
\cref{eq:block-n} gives $n\ge Kd_1a/(rs^2)$, and hence
\[
 \sqrt{d_1}\left(\sqrt{\frac an}+\frac an\right)
 \le s\sqrt{\frac rK}+\frac{rs^2}{K\sqrt{d_1}}
 \le s\sqrt r\left(\frac1{\sqrt K}+\frac1K\right),
\]
where the last inequality uses $s\le\sqrt{d_1/r}$.

Taking $K$ sufficiently large in terms of $c_{\rm B}$ and $q$ makes the
last expression at most
$
\frac{s\sqrt{qr}}{c_{\rm B}}.
$
Since $\|W\|_\infty\le2$ and
$\|R^{1/2}\|_\infty\le s^{-1/2}$, \cref{eq:weighted-conc} yields
\begin{equation}\label{eq:two-quadratics}
 \opnorm{\Phi_{\sigma_{\mathrm{in}}}(EWW^{\dagger}E)^{\dagger}(I_{\mathrm{B}})}\le4qrs^{2},
 \qquad
 \opnorm{\Phi_{\sigma_{\mathrm{in}}}(ERE)^{\dagger}(I_{\mathrm{B}})}\le qrs .
\end{equation}
{A union bound gives total failure probability at most $\delta$.}

\smallskip\noindent\textbf{Step 2: the exact expansion.}
By \cref{lem:sandwich}, $R\calE(\kettbbra{\sqrt{\sigma_{\mathrm{in}}}}{\sqrt{\sigma_{\mathrm{in}}}})=WF^{1/2}$ and $\calE(\kettbbra{\sqrt{\sigma_{\mathrm{in}}}}{\sqrt{\sigma_{\mathrm{in}}}})R=F^{1/2}W^{\dagger}$, so
\begin{equation}\label{eq:expansion}
 \widehat Z-Z=\underbrace{(\calE(\kettbbra{\sqrt{\sigma_{\mathrm{in}}}}{\sqrt{\sigma_{\mathrm{in}}}})R\calE(\kettbbra{\sqrt{\sigma_{\mathrm{in}}}}{\sqrt{\sigma_{\mathrm{in}}}})-Z)}_{\text{approximation error}}
 +\underbrace{\bigl(EWF^{1/2}+F^{1/2}W^{\dagger}E\bigr)}_{\text{linear}}
 +\underbrace{ERE}_{\succeq0}.
\end{equation}
For every $a>0$, positivity of
$(a^{1/2}F^{1/2}\mp a^{-1/2}W^{\dagger}E)^{\dagger}(a^{1/2}F^{1/2}\mp a^{-1/2}W^{\dagger}E)$
gives
\begin{equation}\label{eq:cross-loewner}
 \pm\bigl(EWF^{1/2}+F^{1/2}W^{\dagger}E\bigr)\preceq aF+a^{-1}EWW^{\dagger}E .
\end{equation}

\smallskip\noindent\textbf{Step 3: the multiplicative estimate.}
Combining \cref{eq:sandwich,eq:expansion,eq:cross-loewner}
and $ERE\succeq0$,
\[
 -\bigl[(3\eta+a)F+a^{-1}EWW^{\dagger}E\bigr]
 \preceq\widehat Z-Z
 \preceq(3\eta+a)F+a^{-1}EWW^{\dagger}E+ERE .
\]
Apply the positive linear map in \cref{eq:momega}.  By
\cref{eq:normalization-adjoint,eq:gram},
$\Phi_{\sigma_{\mathrm{in}}}(F)^\dagger(I_{\mathrm{B}})=H$, and \cref{eq:block-output,lem:ridge-tail} give
$\widehat H-H=-\Phi_{\sigma_{\mathrm{in}}}(\widehat Z-Z)^\dagger(I_{\mathrm{B}})/s$.
With $a=s/16$ and $\eta=s/64$, so that $(3\eta+a)/s=7/64$, the two
estimates in \cref{eq:two-quadratics} give
\[
 -\Bigl[\tfrac7{64}H+65qrI_{\mathrm{A}}\Bigr]
 \ \preceq\ \widehat H-H\ \preceq\
 \Bigl[\tfrac7{64}H+64qrI_{\mathrm{A}}\Bigr].
\]
Since $7/64\le1/4$ and $65q\le1/4$, the right-hand sides are at most
$\tfrac14(H+rI_{\mathrm{A}})$ in the L\"owner order, so $ -\frac{1}{4}(H+rI_{\mathrm{A}})\ \preceq\ \widehat H-H\ \preceq\ \frac{1}{4}(H+rI_{\mathrm{A}})$. Hence, by adding $H+rI_{\mathrm{A}} $, we obtain $ \frac{3}{4}(H+rI_{\mathrm{A}})\ \preceq\ \widehat H+rI_{\mathrm{A}}\ \preceq\ \frac{5}{4}(H+rI_{\mathrm{A}})$, which is
\cref{eq:block-multiplicative}.

\smallskip\noindent\textbf{Step 4: the channel estimate.}
Write $\widehat Z-\calE(\kettbbra{\sqrt{\sigma_{\mathrm{in}}}}{\sqrt{\sigma_{\mathrm{in}}}})=(\widehat Z-Z)-sF$ and apply $M\mapsto\Phi_{\sigma_{\mathrm{in}}}(M)$.  By
\cref{lem:weighted-norms}, the approximation term  of \cref{eq:expansion} contributes at most
$3\eta\opnorm H$ (\cref{it:sandwich} with $G=F$); the linear term  of \cref{eq:expansion} 
contributes at most
$2\sqrt{\opnorm{\Phi_{\sigma_{\mathrm{in}}}(EWW^{\dagger}E)^{\dagger}(I_{\mathrm{B}})}\opnorm H}
\le4s\sqrt{qr\opnorm H}$ (\cref{it:cross} with $U=EW$, $V=F^{1/2}$); the
term $ERE\succeq0$  of \cref{eq:expansion} contributes exactly $\opnorm{\Phi_{\sigma_{\mathrm{in}}}(ERE)^{\dagger}(I_{\mathrm{B}})}\le qrs$ (\cref{it:pos}); and $-sF$ contributes exactly $s\opnorm H$
(\cref{lem:ridge-tail}).  Adding gives \cref{eq:block-channel}.  If
$\opnorm H\le10r$ then the right-hand side is at most
$10rs(1+\tfrac3{64})+4rs\sqrt{10q}+qrs\le11rs$.

\end{proof}

\subsubsection{Input selection}\label{sec:design-lemmas}
We first record the matrix inequalities used to choose the input marginal.
\begin{lemma}[Operator monotonicity and concavity; cf.\ {\cite[Ch.~V]{Bhatia97}}]
\label{lem:operator-monotone}
On positive definite matrices, $x\mapsto x^{-1}$ is operator convex and
operator decreasing, and $x\mapsto\log x$ and, for each $s>0$,
$x\mapsto x/(x+s)$ are operator monotone and operator concave.  
\end{lemma}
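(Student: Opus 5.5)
The plan is to prove the two facts about $x\mapsto x^{-1}$ directly, and then obtain the other two functions from them by affine and integral representations. Operator monotonicity and concavity are preserved under positive combinations and limits, so this is enough.

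\emph{Inverse.} Let $0\prec A\preceq B$. Conjugating by $B^{-1/2}$ gives $C:=B^{-1/2}AB^{-1/2}\preceq I$. Since $C\succ0$ and $C\preceq I$, its eigenvalues lie in $(0,1]$, so $C^{-1}\succeq I$. Hence $B^{1/2}A^{-1}B^{1/2}\succeq I$, and conjugating back by $B^{-1/2}$ yields $A^{-1}\succeq B^{-1}$; thus $x\mapsto x^{-1}$ is operator decreasing.

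For convexity I would use Schur complements. For $A\succ0$, the block matrix
\[
\begin{pmatrix} A & I\\ I & A^{-1}\end{pmatrix}
\]
is positive semidefinite, because its Schur complement $A^{-1}-I\,A^{-1}\,I$ vanishes. Take the convex combination with weights $t,1-t$ of this matrix for $A$ and for $B$. The result has diagonal blocks $M=tA+(1-t)B$ and $N=tA^{-1}+(1-t)B^{-1}$, with off-diagonal blocks $I$, and it is again positive semidefinite. Since $M\succ0$, positivity of the Schur complement gives $N-M^{-1}\succeq0$, which is exactly operator convexity.

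\emph{The filter $x/(x+s)$.} Write $x/(x+s)=1-s\,(x+s)^{-1}$. The map $x\mapsto x+sI$ is an affine shift that preserves the Löwner order and commutes with convex combinations. Therefore $(x+s)^{-1}$ is operator decreasing and operator convex, and negating it and adding the constant $1$ gives an operator monotone and operator concave map.

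\emph{Logarithm.} I would use the integral representation
\[
\log x=\int_0^\infty\Bigl(\frac1{1+t}-\frac1{x+t}\Bigr)\,dt,\qquad x>0,
\]
which holds for matrices through the spectral calculus. For each $t\ge0$ the integrand is a constant minus $(x+tI)^{-1}$, so by the previous step it is operator monotone and operator concave. The integrand equals $(x-1)/\bigl((1+t)(x+t)\bigr)=O(t^{-2})$ uniformly on compact spectral ranges. The integral therefore converges in norm, and Löwner inequalities pass to the limit because the positive semidefinite cone is closed.

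The only mildly delicate point is this justification of the limit for the integral, together with making the Schur-complement step explicit. Everything else is a direct combination of the inverse case.
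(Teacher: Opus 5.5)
Your proof is correct. Its second half matches the paper's reduction exactly: you write $x/(x+s)=1-s(x+s)^{-1}$, and you obtain $\log$ from the integral representation with integrands $\tfrac{1}{1+t}I-(M+tI)^{-1}$, using that the order relations survive nonnegative combinations and limits.

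The difference is in the base case for the inverse. The paper uses a single variational identity, $\bra{x}M^{-1}\ket{x}=\sup_{\ket{z}}\bigl(2\Re\braket{x}{z}-\bra{z}M\ket{z}\bigr)$. Each quadratic form of $M^{-1}$ is then a supremum of functions that are affine and nonincreasing in $M$, so operator convexity and operator decrease both follow from one completing-the-square step.

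You instead give two separate arguments. Congruence by $B^{-1/2}$ gives the order reversal. The Schur-complement criterion, applied to convex combinations of $\begin{pmatrix}A&I\\I&A^{-1}\end{pmatrix}$, gives convexity, and you correctly check that $tA+(1-t)B\succ0$ is what licenses the final Schur-complement step.

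Your route is the standard textbook argument and is fully rigorous, but it uses the block-positivity criterion as an outside fact. The paper's route is more self-contained. It is also the same variational mechanism the paper reuses in \cref{lem:opconvex}, where $\bra{x}[\Phi(Q^{-1})]^{-1}\ket{x}$ is written as a constrained minimum of quadratic forms affine in $Q$.
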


\begin{proof}
\emph{The inverse.}  For $M\succ0$ and $\ket{x}\in\Co^{D}$,
$\bra{x}M^{-1}\ket{x}=\sup_{\ket{z}}\bigl(2\operatorname{Re}\braket{x}{z}
-\bra{z}M\ket{z}\bigr)$, attained at $\ket{z}=M^{-1}\ket{x}$.  Each quadratic form of
$M^{-1}$ is thus a supremum of functions affine in $M$, hence convex, and
visibly nonincreasing in $M$; so $M\mapsto M^{-1}$ is operator convex and
operator decreasing.

\emph{The others.}  For $x>0$ and $s>0$,
\begin{equation*}
 \frac{x}{x+s}=1-\frac{s}{x+s},
 \qquad
 \log x=\int_0^{\infty}\Bigl(\frac1{1+t}-\frac1{x+t}\Bigr)dt ,
\end{equation*}
the second being the standard integral representation (both sides vanish at
$x=1$ and have derivative $1/x$, since
$\int_0^\infty(x+t)^{-2}dt=1/x$).  By the previous paragraph applied to
$M+tI$, each integrand $\tfrac1{1+t}I-(M+tI)^{-1}$ is operator concave and
operator monotone in $M$, and these properties are preserved by taking
nonnegative combinations and pointwise limits; the same applies to
$I-s(M+sI)^{-1}$.
\end{proof}

\begin{fact}[Golden--Thompson {\cite[Thm.~IX.3.7]{Bhatia97}}]
\label{fact:golden-thompson}
For Hermitian $M,N$ one has $\tr e^{M+N}\le\tr(e^{M}e^{N})$.
\end{fact}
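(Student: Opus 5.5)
The plan is the classical Lie--Trotter argument: first prove a trace inequality for products of positive matrices, then pass to the limit using the Lie product formula. Put $A_p=e^{M/p}$ and $B_p=e^{N/p}$ for $p=2^k$. Both are positive definite. The goal is
\[
 \tr\bigl((A_pB_p)^{p}\bigr)\le\tr\bigl(A_p^{p}B_p^{p}\bigr)=\tr\bigl(e^{M}e^{N}\bigr).
\]
Then let $k\to\infty$ and use the Lie product formula $(e^{M/p}e^{N/p})^{p}\to e^{M+N}$, together with continuity of the trace. The left-hand side is real and nonnegative, because $A_pB_p$ is similar to $A_p^{1/2}B_pA_p^{1/2}\succeq0$, so the inequality is meaningful.

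\textbf{Key inequality.} For any square matrix $Z$ and integer $m\ge1$,
\[
 |\tr Z^{2m}|\le\tr\bigl((ZZ^\dagger)^{m}\bigr).
\]
To prove it, first write $|\tr(Z^mZ^m)|\le\hsnorm{Z^m}^2$ by Cauchy--Schwarz for the Hilbert--Schmidt inner product. Next, Hölder's inequality for Schatten norms gives $\hsnorm{Z^m}\le\|Z\|_{2m}^m$. Finally, $\|Z\|_{2m}^{2m}=\tr((Z^\dagger Z)^m)=\tr((ZZ^\dagger)^m)$.

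\textbf{Iteration.} Apply the key inequality with $Z=XY$, where $X,Y\succ0$. Then $ZZ^\dagger=XY^2X$, and cyclicity gives $\tr((XY^2X)^m)=\tr((X^2Y^2)^m)$. Hence
\[
 \tr\bigl((XY)^{2m}\bigr)\le\tr\bigl((X^2Y^2)^{m}\bigr).
\]
Apply this with $X=A_p$, $Y=B_p$ and $m=2^{k-1}$. Then apply it again with $X=A_p^2$, $Y=B_p^2$ and $m=2^{k-2}$, and continue in the same way. After $k$ steps this gives $\tr((A_pB_p)^{2^k})\le\tr(A_p^{2^k}B_p^{2^k})$. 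Each intermediate trace is again real and nonnegative by the same similarity remark as above.

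\textbf{Main obstacle.} The only nontrivial analytic inputs are the Schatten--Hölder bound $\hsnorm{Z^m}\le\|Z\|_{2m}^m$ and the Lie product formula. I would take both as standard. If needed, the Lie product formula follows from $e^{M/p}e^{N/p}=e^{(M+N)/p+O(p^{-2})}$ and a telescoping estimate.
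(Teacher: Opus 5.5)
Your proof is correct and complete. It is the classical Golden--Thompson argument: the Lie product formula combined with the dyadic inequality $\tr\bigl((XY)^{2m}\bigr)\le\tr\bigl((X^2Y^2)^m\bigr)$ for $X,Y\succ0$, which you justify cleanly via Cauchy--Schwarz, Schatten--H\"older and cyclicity. Restricting to $p=2^k$ is harmless, since the Lie product formula holds along any sequence $p\to\infty$.

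The paper gives no proof of this fact and simply quotes Bhatia. To my recollection, Bhatia's treatment also rests on the Lie product formula, but obtains the needed power inequality for products of positive matrices from majorization arguments. That route yields a stronger statement valid for every unitarily invariant norm, with the trace inequality as the trace-norm special case.

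Your route is more elementary and self-contained, but it delivers only the trace version. That is exactly what the paper uses, namely $\tr\exp(\log\rho_j+\log A_j)\le\tr(\rho_jA_j)$ in the proof of \cref{prop:gapping}.
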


\begin{lemma}[Trace bound for $A(\rho)$]\label{lem:trace-budget}
Fix $s>0$ and, as in \cref{eq:design-objects}, set
$\sigma_{\mathrm{in}}(\rho)=\rho/2+I_{\mathrm{A}}/(2d_1)$ and
$A(\rho)=H_s(\sigma_{\mathrm{in}}(\rho))+rI_{\mathrm{A}}$, where $H_s$ is defined in
\cref{eq:gram}.  Then every $\rho\in\calD(\mathcal{H}_{\mathrm{A}})$ satisfies
$\tr(\rho A(\rho))\le3r$.
\end{lemma}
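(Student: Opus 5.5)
The plan is to compare $\rho$ with the actual queried input $\sigma:=\sigma_{\mathrm{in}}(\rho)=\rho/2+I_{\mathrm{A}}/(2d_1)$. We then use the average-case identity from \cref{lem:resolvent-marginal}, namely $\tr(\sigma H_s(\sigma))\le r$.

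First, since $\rho$ is a state we have $\tr(\rho\, rI_{\mathrm{A}})=r$. It therefore suffices to show
\[
 \tr\bigl(\rho H_s(\sigma)\bigr)\le2r .
\]

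Second, we use $\rho\preceq 2\sigma$. This holds because $2\sigma-\rho=I_{\mathrm{A}}/d_1\succeq0$. By \cref{lem:resolvent-marginal}, $H_s(\sigma)\succeq0$. For any positive semidefinite $H$ and $0\preceq\rho\preceq2\sigma$ we have $\tr(\rho H)\le 2\tr(\sigma H)$, since $\tr((2\sigma-\rho)H)=\tr(H^{1/2}(2\sigma-\rho)H^{1/2})\ge0$. Applying this with $H=H_s(\sigma)$ gives
\[
 \tr\bigl(\rho H_s(\sigma)\bigr)\le2\tr\bigl(\sigma H_s(\sigma)\bigr)\le 2r .
\]
The last inequality is the bound $\tr(\sigma_{\mathrm{in}}H_s(\sigma_{\mathrm{in}}))=\tr f_s(X)\le r$ from \cref{lem:resolvent-marginal}, applied at $\sigma_{\mathrm{in}}=\sigma$. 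That lemma is stated for positive definite inputs, and $\sigma\succeq I_{\mathrm{A}}/(2d_1)\succ0$, so it applies.

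Adding the two pieces gives $\tr(\rho A(\rho))\le 2r+r=3r$.

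The only point needing care is checking that the hypotheses of \cref{lem:resolvent-marginal} hold at $\sigma$, namely positive definiteness and that $\calE$ has Kraus rank at most $r$, so that $\rank X\le r$ and $\tr f_s(X)\le r$. Beyond that the argument is a one-line Löwner-order comparison, and I do not expect a real obstacle.
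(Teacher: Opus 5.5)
Your proof is correct and is essentially the paper's argument. The paper makes the same factor-two comparison, but first pushes it through the complementary channel via \cref{eq:H-pairing}, giving $\calE^c(\rho)\preceq 2\calE^c(\sigma)$ and then $\tr\bigl(\calE^c(\sigma)(\calE^c(\sigma)+sI_r)^{-1}\bigr)\le r$. You instead apply $\rho\preceq 2\sigma$ directly against $H_s(\sigma)\succeq0$ and use $\tr(\sigma H_s(\sigma))\le r$, which is the same quantity by \cref{eq:H-pairing}.
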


\begin{proof}
Put $\nu=\sigma_{\mathrm{in}}(\rho)\succeq\rho/2$.  By \cref{eq:H-pairing} and positivity
of $\calE^c$, $\calE^c(\rho)\preceq2\calE^c(\nu)$, so
\[
 \tr\bigl(\rho H_s(\nu)\bigr)=\tr\bigl(\calE^c(\rho)(\calE^c(\nu)+sI_r)^{-1}\bigr)
 \le2\tr\bigl(\calE^c(\nu)(\calE^c(\nu)+sI_r)^{-1}\bigr)\le2r .
\]
Adding $\tr(\rho\cdot rI_{\mathrm{A}})=r$ finishes the proof.
\end{proof}

\begin{lemma}[Operator convexity of $Q\mapsto\log\Phi(Q^{-1})$]
\label{lem:opconvex}
Let $\Phi:\calL(\Co^{m})\to\calL(\mathcal{H}_{\mathrm{A}})$ be completely positive with
$\Phi(I_m)\succ0$.  Then the map $Q\mapsto\log\Phi(Q^{-1})$ is operator
convex on positive definite $Q$.
\end{lemma}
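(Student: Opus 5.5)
The plan is to write $\log\Phi(Q^{-1})=-\log g(Q)$ with $g(Q):=\Phi(Q^{-1})^{-1}$. I will show that $g$ is operator concave. Since $\log$ is operator monotone and operator concave (\cref{lem:operator-monotone}), composing gives that $\log g$ is operator concave, and hence $\log\Phi(Q^{-1})$ is operator convex.

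\emph{Well-definedness.} If $Q\succ0$ then $Q^{-1}\succeq\lambda I_m$ for some $\lambda>0$. Positivity of $\Phi$ then gives $\Phi(Q^{-1})\succeq\lambda\Phi(I_m)\succ0$. So $g(Q)$ is well defined and positive definite, and the logarithm makes sense.

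\emph{Concavity of $g$.} Take a Stinespring form $\Phi(Y)=V^\dagger(Y\otimes I_k)V$, which is available because $\Phi$ is completely positive. Since $V^\dagger V=\Phi(I_m)\succ0$, the map $V$ is injective and $V^\dagger$ is surjective. Put $M=Q\otimes I_k\succ0$ and $G=V^\dagger M^{-1}V=\Phi(Q^{-1})$. For any vector $\ket x\in\mathcal H_{\mathrm A}$, I claim the variational formula
\[
 \bra x G^{-1}\ket x=\min\{\bra z M\ket z:\ V^\dagger\ket z=\ket x\}.
\]
The feasible set is nonempty by surjectivity of $V^\dagger$. By Lagrange multipliers, the optimum is $\ket z=M^{-1}V\ket\mu$ with $G\ket\mu=\ket x$, and the optimal value is $\bra\mu G\ket\mu=\bra xG^{-1}\ket x$. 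Alternatively, one can verify the formula directly by completing the square: for feasible $\ket z$, write $\ket z=M^{-1}VG^{-1}\ket x+\ket w$ with $V^\dagger\ket w=0$; the cross term vanishes and the remainder $\bra wM\ket w$ is nonnegative.

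Thus each quadratic form of $g(Q)$ is an infimum of functions that are affine in $Q$, namely $Q\mapsto\bra z(Q\otimes I_k)\ket z$, taken over a constraint set that does not depend on $Q$. Hence each quadratic form is concave in $Q$, which means $g$ is operator concave.

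\emph{Conclusion.} For $Q_1,Q_2\succ0$ and $t\in[0,1]$, concavity of $g$ gives $g(tQ_1+(1-t)Q_2)\succeq tg(Q_1)+(1-t)g(Q_2)$. Operator monotonicity of $\log$ then gives
\[
 \log g(tQ_1+(1-t)Q_2)\succeq\log\bigl(tg(Q_1)+(1-t)g(Q_2)\bigr).
\]
Operator concavity of $\log$ bounds the right-hand side below by $t\log g(Q_1)+(1-t)\log g(Q_2)$. Negating both sides and using $\log(G^{-1})=-\log G$ yields the operator convexity of $Q\mapsto\log\Phi(Q^{-1})$.

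The main obstacle is the concavity of $g$. The naive route fails: $Q\mapsto\Phi(Q^{-1})$ is operator convex, but inverting that inequality points the wrong way. The Stinespring variational formula is what gets around this, and surjectivity of $V^\dagger$, which comes from $\Phi(I_m)\succ0$, is exactly what makes the formula valid.
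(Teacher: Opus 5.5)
Your proof is correct and follows the paper's argument essentially step for step. The Stinespring variational formula $\bra{x}\Phi(Q^{-1})^{-1}\ket{x}=\min\{\bra{z}(Q\otimes I_k)\ket{z}:\ V^\dagger\ket{z}=\ket{x}\}$ gives operator concavity of $Q\mapsto\Phi(Q^{-1})^{-1}$, and composing with the operator monotone and operator concave logarithm and then negating yields the claim; your completing-the-square check and the explicit verification that $\Phi(Q^{-1})\succ0$ are small details the paper leaves implicit.
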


\begin{proof}
Take a Stinespring representation $\Phi(Y)=V^{\dagger}(I_k\otimes Y)V$ with
$V:\mathcal{H}_{\mathrm{A}}\to\Co^{k}\otimes\Co^{m}$.  The hypothesis $\Phi(I_m)=V^{\dagger}V\succ0$ makes
$V$ injective, so $V^{\dagger}$ is onto.  Set
$\Psi(Q):=\bigl[\Phi(Q^{-1})\bigr]^{-1}$.  For every $\ket{x}\in \mathcal{H}_{\mathrm{A}}$,
\begin{equation}\label{eq:variational}
 \bra{x}\Psi(Q)\ket{x}
 =\min\bigl\{\bra{y}(I_k\otimes Q)\ket{y}:\ V^{\dagger}\ket{y}=\ket{x}\bigr\}.
\end{equation}
Indeed, for positive definite $M$ the constrained minimum of
$\bra{y}M\ket{y}$ subject to $V^{\dagger}\ket{y}=\ket{x}$ equals
$\bra{x}(V^{\dagger}M^{-1}V)^{-1}\ket{x}$, attained at
$\ket{y}=M^{-1}V(V^{\dagger}M^{-1}V)^{-1}\ket{x}$; take $M=I_k\otimes Q$ and note
$V^{\dagger}(I_k\otimes Q^{-1})V=\Phi(Q^{-1})$.

By \cref{eq:variational} each quadratic form of $\Psi$ is an infimum of
functions affine in $Q$, hence concave; so $\Psi$ is operator concave.  Since
$\log$ is operator monotone and operator concave
(\cref{lem:operator-monotone}), $\log\circ\,\Psi$ is operator concave:
for $t\in[0,1]$,
\[
 \log\Psi\bigl(tQ_0+(1-t)Q_1\bigr)
 \succeq\log\bigl(t\Psi(Q_0)+(1-t)\Psi(Q_1)\bigr)
 \succeq t\log\Psi(Q_0)+(1-t)\log\Psi(Q_1).
\]
Finally $\log\Phi(Q^{-1})=-\log\Psi(Q)$, which is therefore operator convex.
\end{proof}

\begin{corollary}[Operator convexity of $\rho\mapsto\log A(\rho)$]\label{cor:design-convex}
Fix $s>0$ and let $A(\rho)=H_s(\rho/2+I_{\mathrm{A}}/(2d_1))+rI_{\mathrm{A}}$ be the
matrix-valued map defined in \cref{eq:design-objects}, with $H_s$ defined in
\cref{eq:gram}.  Then $\rho\mapsto\log A(\rho)$ is operator convex
on $\calD(\mathcal{H}_{\mathrm{A}})$.
\end{corollary}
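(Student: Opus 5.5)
The plan is to write $\log A(\rho)$ as $\log\Phi(Q^{-1})$ for a completely positive $\Phi$ and a positive definite $Q$ depending affinely on $\rho$, and then apply \cref{lem:opconvex}. By \cref{lem:resolvent-marginal}, with $\nu=\sigma_{\mathrm{in}}(\rho)=\rho/2+I_{\mathrm{A}}/(2d_1)$,
\[
 A(\rho)=(\calE^c)^\dagger\bigl((\calE^c(\nu)+sI_r)^{-1}\bigr)+rI_{\mathrm{A}}.
\]
The extra term $rI_{\mathrm{A}}$ has to be absorbed into a single CP map evaluated at the inverse of one positive definite matrix. I would do this with a direct sum. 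Set $m=r+1$ and
\[
 Q(\rho):=\bigl(\calE^c(\nu)+sI_r\bigr)\oplus\tfrac1r,
\]
which is positive definite on $\Co^{r}\oplus\Co$. Then $Q(\rho)^{-1}=(\calE^c(\nu)+sI_r)^{-1}\oplus r$.

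Next define $\Phi:\calL(\Co^{r+1})\to\calL(\mathcal{H}_{\mathrm{A}})$ by
\[
 \Phi(Y)=(\calE^c)^\dagger(\Pi_1Y\Pi_1^\dagger)+(\Pi_2Y\Pi_2^\dagger)\,I_{\mathrm{A}},
\]
where $\Pi_1:\Co^{r+1}\to\Co^r$ and $\Pi_2:\Co^{r+1}\to\Co$ are the coordinate compressions. This map is completely positive: compressions are CP, $(\calE^c)^\dagger$ is CP, and $y\mapsto yI_{\mathrm{A}}$ is CP. It also satisfies $\Phi(I_{r+1})=(\calE^c)^\dagger(I_r)+I_{\mathrm{A}}=2I_{\mathrm{A}}\succ0$, using unitality of $(\calE^c)^\dagger$. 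With these choices, $A(\rho)=\Phi(Q(\rho)^{-1})$.

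\Cref{lem:opconvex} then gives operator convexity of $Q\mapsto\log\Phi(Q^{-1})$ on positive definite $Q$. Since $\rho\mapsto Q(\rho)$ is affine in $\rho$, as $\calE^c$ is linear and $\nu$ is affine in $\rho$, the composition is operator convex. Explicitly, for $t\in[0,1]$ we have $Q(t\rho_0+(1-t)\rho_1)=tQ(\rho_0)+(1-t)Q(\rho_1)$, and we plug this into the convexity inequality.

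There is no real obstacle here. The only points needing care are checking that $\Phi(I)\succ0$, so that \cref{lem:opconvex} applies, and that $Q(\rho)\succ0$ on all of $\calD(\mathcal{H}_{\mathrm{A}})$. Both hold because $s>0$ and $r\ge1$.
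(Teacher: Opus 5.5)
Your proof is correct and follows the paper's own argument. Both proofs absorb $rI_{\mathrm A}$ through a direct-sum block and check that $\Phi(I)\succ0$ for a completely positive $\Phi$, then apply \cref{lem:opconvex} composed with the affine map $\rho\mapsto Q(\rho)$. The only difference is cosmetic: the paper works on $\Co^r\oplus\mathcal{H}_{\mathrm A}$ with the block $I_{\mathrm A}$ and $Y_2\mapsto rY_2$, while you work on $\Co^r\oplus\Co$ with the scalar block $1/r$ and $y\mapsto yI_{\mathrm A}$.
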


\begin{proof}
On $\Co^{r}\oplus \mathcal{H}_{\mathrm{A}}$ define the affine, positive definite matrix
$Q(\rho)=\mathrm{diag}\bigl(\calE^c(\sigma_{\mathrm{in}}(\rho))+sI_r,\ I_{\mathrm{A}}\bigr)$ and the
completely positive map
$\Phi\bigl(\mathrm{diag}(Y_1,Y_2)\bigr)=(\calE^c)^{\dagger}(Y_1)+rY_2$, extended
to all of $\calL(\Co^{r}\oplus \mathcal{H}_{\mathrm{A}})$ by compressing to the two diagonal blocks
first.  Then $\Phi(I)=(1+r)I_{\mathrm{A}}\succ0$ and, by \cref{eq:resolvent-marginal},
$\Phi\bigl(Q(\rho)^{-1}\bigr)=H_s(\sigma_{\mathrm{in}}(\rho))+rI_{\mathrm{A}}=A(\rho)$.  Apply
\cref{lem:opconvex} and compose with the affine map $\rho\mapsto Q(\rho)$.
\end{proof}

\begin{proposition}[Input selection from multiplicative estimates]
\label{prop:gapping}
Fix $s>0$, and let $\sigma_{\mathrm{in}}(\rho)$ and $A(\rho)$ be the maps defined
in \cref{eq:design-objects}, formed from $H_s$ in \cref{eq:gram}.
Put $T:=\max\{1,\lceil\log_2d_1\rceil\}$ and $\rho_1:=I_{\mathrm{A}}/d_1$.
Suppose that
for each $j=1,\dots,T$ a positive definite $A_j$ is given with
\begin{equation}\label{eq:mult-hyp}
 \tfrac34\,A(\rho_j)\ \preceq\ A_j\ \preceq\ \tfrac54\,A(\rho_j),
\end{equation}
and define
\begin{equation}\label{eq:gapping-update}
 z_j=\tr\exp\bigl(\log\rho_j+\log A_j\bigr),
 \qquad
 \rho_{j+1}=\frac{\exp\bigl(\log\rho_j+\log A_j\bigr)}{z_j}.
\end{equation}
Then $\bar\rho:=\frac1T\sum_{j=1}^{T}\rho_j$ satisfies
\begin{equation}\label{eq:gapped}
 A(\bar\rho)\preceq10rI_{\mathrm{A}},
 \qquad\text{equivalently}\qquad
 H_s\bigl(\sigma_{\mathrm{in}}(\bar\rho)\bigr)\preceq9rI_{\mathrm{A}} .
\end{equation}
\end{proposition}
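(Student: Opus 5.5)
The argument is a matrix multiplicative-weights potential argument. The potential is $\Psi_j:=\log z_j$, and the trace budget of \cref{lem:trace-budget} controls how much it can grow. Iterating \cref{eq:gapping-update}, we have $\log\rho_{j+1}=\log\rho_j+\log A_j-\log z_j$. Hence
\[
 \log\rho_{T+1}=\log\rho_1+\sum_{j=1}^T\log A_j-\sum_{j=1}^T\log z_j
 =-\log(d_1)I_{\mathrm{A}}+\sum_{j=1}^T\log A_j-\sum_{j=1}^T\log z_j .
\]
Since $\rho_{T+1}$ is a state, $\log\rho_{T+1}\preceq0$. It follows that
\[
 \sum_j\log A_j\preceq\Bigl(\log d_1+\sum_j\log z_j\Bigr)I_{\mathrm{A}} .
\]

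\textbf{Bounding $z_j$.} By Golden--Thompson (\cref{fact:golden-thompson}), $z_j\le\tr(\rho_jA_j)$. By \cref{eq:mult-hyp} this is at most $\tfrac54\tr(\rho_jA(\rho_j))$, and \cref{lem:trace-budget} bounds it by $\tfrac{15}{4}r$. Therefore $\sum_j\log z_j\le T\log(15r/4)$.

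\textbf{Passing to $A(\rho_j)$ and then to $A(\bar\rho)$.} Operator monotonicity of $\log$ (\cref{lem:operator-monotone}) together with $A(\rho_j)\preceq\tfrac43A_j$ gives $\log A(\rho_j)\preceq\log(4/3)I+\log A_j$. Averaging over $j$ and using the operator convexity of $\rho\mapsto\log A(\rho)$ from \cref{cor:design-convex} (Jensen in the operator sense for a finite average), we get
\[
 \log A(\bar\rho)\preceq\frac1T\sum_j\log A(\rho_j)
 \preceq\Bigl(\log\tfrac43+\log\tfrac{15r}{4}+\tfrac{\log d_1}{T}\Bigr)I_{\mathrm{A}} .
\]
The choice $T\ge\log_2d_1$ gives $\log(d_1)/T\le\log 2$; when $d_1=1$ this term is $0$. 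Exponentiating, which is legitimate since both sides are scalar multiples of the identity on the right, yields
\[
 A(\bar\rho)\preceq\tfrac43\cdot\tfrac{15}{4}\cdot 2\,r\,I_{\mathrm{A}}=10r\,I_{\mathrm{A}} .
\]
The equivalent form $H_s(\sigma_{\mathrm{in}}(\bar\rho))\preceq9rI_{\mathrm{A}}$ follows by subtracting $rI_{\mathrm{A}}$.

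\textbf{Main obstacle.} The step needing care is the operator Jensen step. It requires that the $\rho_j$ be genuine states (positive definite), so that $\log\rho_j$ is defined and \cref{cor:design-convex} applies. This holds inductively: $\rho_1=I/d_1\succ0$, and the exponential update keeps each iterate positive definite with unit trace. The multiplicative hypothesis is used only through $A_j\succeq\tfrac34A(\rho_j)$ (upper direction for $\log A(\rho_j)$) and $A_j\preceq\tfrac54A(\rho_j)$ (the trace bound). The constants combine exactly to $10r$, so no slack is needed.
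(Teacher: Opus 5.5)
Your proposal is correct and follows the paper's proof step for step. The shared chain is: Golden--Thompson with \cref{eq:mult-hyp} and \cref{lem:trace-budget} to get $z_j\le 15r/4$; telescoping $\log\rho_{j+1}=\log\rho_j+\log A_j-(\log z_j)I_{\mathrm{A}}$ with $\log\rho_{T+1}\preceq 0$; operator monotonicity of $\log$ to pass from $A_j$ to $A(\rho_j)$; operator convexity from \cref{cor:design-convex} to reach $\bar\rho$; and exponentiating the resulting scalar eigenvalue bound using $d_1^{1/T}\le 2$. One small remark: positive definiteness of the $\rho_j$ is needed only to define $\log\rho_j$, since \cref{cor:design-convex} already holds on all of $\calD(\mathcal{H}_{\mathrm{A}})$ because $\sigma_{\mathrm{in}}(\rho)\succeq I_{\mathrm{A}}/(2d_1)$.
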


\begin{proof}
Each $\rho_j$ is a density operator, positive definite by induction.  By
Golden--Thompson (\cref{fact:golden-thompson}), \cref{eq:mult-hyp} and
\cref{lem:trace-budget},
\begin{equation*}
 z_j\le\tr\bigl(\rho_jA_j\bigr)\le\tfrac54\tr\bigl(\rho_jA(\rho_j)\bigr)
 \le\tfrac{15r}4 .
\end{equation*}
Taking logarithms in \cref{eq:gapping-update} is exact and gives
$\log\rho_{j+1}=\log\rho_j+\log A_j-(\log z_j)I_{\mathrm{A}}$; summing over
$j=1,\dots,T$,
\[
 \sum_{j=1}^{T}\log A_j
 =\log\rho_{T+1}-\log\rho_1+\Bigl(\sum_{j=1}^{T}\log z_j\Bigr)I_{\mathrm{A}}
 \preceq(\log d_1)I_{\mathrm{A}}+T\log\tfrac{15r}4\,I_{\mathrm{A}} ,
\]
using $\rho_{T+1}\preceq I_{\mathrm{A}}$, hence $\log\rho_{T+1}\preceq0$, and
$-\log\rho_1=(\log d_1)I_{\mathrm{A}}$.  By operator monotonicity of the logarithm (\cref{lem:operator-monotone})  and
the lower bound in \cref{eq:mult-hyp},
$\log A(\rho_j)\preceq\log A_j+\log\tfrac43\,I_{\mathrm{A}}$.  Therefore
\[
 \frac1T\sum_{j=1}^{T}\log A(\rho_j)
 \preceq\Bigl(\frac{\log d_1}{T}+\log(5r)\Bigr)I_{\mathrm{A}} ,
\]
because $\tfrac{15}4\cdot\tfrac43=5$.  \cref{cor:design-convex}
bounds $\log A(\bar\rho)$ by the left-hand side, so the largest eigenvalue of
$\log A(\bar\rho)$ is at most $\log(5r)+(\log d_1)/T$; exponentiating this
\emph{scalar} inequality gives
$A(\bar\rho)\preceq5r\,d_1^{1/T}I_{\mathrm{A}}\preceq10rI_{\mathrm{A}}$, since
$d_1^{1/T}\le2$ by the choice of $T$.  Subtracting  $rI_{\mathrm{A}}$ gives \cref{eq:gapped}.
\end{proof}

\section{Lower bound}\label{sec:lower}
In this section we prove the lower bound   of \cref{thm:lb-main} which we restate here as \cref{thm-9151403}.

\begin{theorem}\label{thm-9151403}
There are absolute constants $c,\epsilon_0>0$ such that the following holds.
Assume $d_2\ge 2$, $1\le r\le D$, $d_1\le rd_2$, $Dr\ge12$, and $0<\epsilon\le\epsilon_0$.
Every adaptive incoherent protocol that learns every channel in $\qchannel_{d_1,d_2}^r$ to diamond-norm error $\epsilon$ with probability at least $2/3$
uses at least
\begin{equation*}
 c\frac{Dr^2}{\epsilon^2}
\end{equation*}
queries.  The hard instance family used to prove the lower bound can be chosen so that every non-zero eigenvalue of its Choi operator belongs to $[d_1/(4r),4d_1/r]$.  The conclusion therefore also holds for learning channels with $\Omega(d_1/r)$-gapped Choi spectrum.
\end{theorem}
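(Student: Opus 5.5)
The plan follows the Fano / Fisher-information / log-Sobolev route outlined in the technique overview.

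\textbf{Step 1: base channel.} I first build a base channel $\calE_0\in\qchannel_{d_1,d_2}^r$ with Kraus operators $K_1,\dots,K_r$ whose Choi operator has all $r$ non-zero eigenvalues in $[d_1/(2r),2d_1/r]$. One route is to stack the Kraus operators into an isometry $V=\sum_i\ket i\otimes K_i:\Co^{d_1}\to\Co^r\otimes\Co^{d_2}$ and choose it so that the Gram matrix $(\tr K_i^\dagger K_j)_{ij}$ is close to $(d_1/r)I_r$. A random isometry works when $d_1\ll rd_2$; in the boundary regime $d_1\approx rd_2$ I would use an explicit block or Weyl-type construction.

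\textbf{Step 2: perturbed family.} Perturbations take the form $K_i(\theta)=(K_i+\sum_\alpha\theta_\alpha G_{\alpha,i})N_\theta^{-1/2}$, where $N_\theta=\sum_i(\cdot)^\dagger(\cdot)$ enforces trace preservation. The directions $G_\alpha$ range over an orthonormal basis of a real subspace of dimension $m=\Omega(Dr)$ (\cref{lem:dim}). This subspace must be transverse to the gauge directions $K\mapsto KU$ with $U\in\mathbb U_r$ and to the normalization constraint, so that the map from $\theta$ to the Choi operator is locally injective. For $\|\theta\|$ of order $\epsilon/\sqrt{d_1}$ or so, the eigenvalues stay within the window $[d_1/(4r),4d_1/r]$.

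\textbf{Step 3: Fano lower bound on information.} I draw $\Theta\sim N(0,\tau^2 I_m)$ with $\tau^2\asymp\epsilon^2/(Dr)$, truncated to a high-probability ball. Two facts are needed:
\begin{itemize}
\item a small-ball estimate (\cref{lem:smallball}): the diamond distance dominates a constant multiple of the parameter distance (via the Choi trace norm, \cref{eq:choi-diamond-comparison}), so any $\epsilon$-accurate output localizes $\Theta$ to a ball of radius $O(\epsilon/\sqrt{d_1})$ or so;
\item a continuous Fano inequality (\cref{lem:fano}), which then gives $I(\Theta;Z)\ge\Omega(m)=\Omega(Dr)$.
\end{itemize}
The scalings must be tuned so that the prior radius is a constant factor larger than the localization radius.

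\textbf{Step 4: information upper bound.}
\begin{itemize}
\item \emph{Single query.} Using the tester representation \cref{eq:tester}, the probabilities are $\tr(T_yC_\theta)$ with $\sum_yT_y=I\otimes\omega^{\mathrm T}$. Bounding the Fisher trace $\sum_\alpha\sum_y(\tr T_y\partial_\alpha C)^2/\tr(T_yC)$ by Cauchy--Schwarz in the $C_\theta$-weighted form, together with the gap $C_\theta\succeq(d_1/4r)P$, should give $\tr I\le O(d_1d_2/r)$ after normalizing so that the $\theta$-scale matches the Choi eigenvalues (\cref{lem:fisher}).
\item \emph{Adaptivity.} Fisher information is additive along the transcript by the chain rule for conditional densities, so $\tr I_Z(\theta)\le O(nD/r)$ (\cref{eq:adaptive-fisher}).
\item \emph{Log-Sobolev.} The Gaussian log-Sobolev inequality (\cref{lem:prior}) gives $I(\Theta;Z)\le\frac{\tau^2}{2}\E\tr I_Z(\Theta)=O(n\epsilon^2/r^2)$. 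Comparing with Step 3, $n\epsilon^2/r^2\gtrsim Dr$, which is the claimed bound $n\gtrsim Dr^2/\epsilon^2$ once the normalizations of Steps 3 and 4 agree.
\end{itemize}

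\textbf{Main obstacle.} The hard part is the uniform one-query Fisher bound. It must hold for every tester, including entangled inputs with a non-maximally-mixed marginal $\omega$, and the normalization $N_\theta^{-1/2}$ couples all Kraus operators. I would first handle $\omega$ by reducing to the weighted Choi operator, using the trace budget $\tr(\omega\cdot)$ together with the eigenvalue gap to control the score. The key estimate I would aim for is
\[
\sum_\alpha\bigl(\partial_\alpha C\bigr)C^{+}\bigl(\partial_\alpha C\bigr)\preceq O(d_2/r)\,I,
\]
which follows from the orthonormality of the $G_\alpha$ and the gap.
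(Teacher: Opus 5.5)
Your architecture (Fano with a small-ball estimate, a uniform one-query Fisher bound, additivity over adaptive rounds, log-Sobolev for a truncated Gaussian) matches the paper's. However, the whole content of the theorem is the factor $r$ separating $Dr^2/\epsilon^2$ from the coherent rate $Dr/\epsilon^2$. Both steps where the paper earns that factor are done in your plan by arguments that lose it.

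\textbf{Fisher bound.} Cauchy--Schwarz applied direction by direction, as in your key estimate, bounds $\tr I_1$ by $\tr\bigl((I_{\mathrm B}\otimes\omega^{\mathrm T})M\bigr)$ with $M=\sum_\alpha\partial_\alpha C\,C^+\partial_\alpha C$. Work in the paper's scaled-factor coordinates, where $C=\frac{d_1}{r}UU^\dagger$ and, at $\Theta=0$, $\partial_\alpha C=\frac{d_1}{r}(U_0\Delta_\alpha^\dagger+\Delta_\alpha U_0^\dagger)$ for an orthonormal basis $\{\Delta_\alpha\}$ of $\calV$. There one checks that $\tr M\ge\frac{d_1}{r}\sum_\alpha\hsnorm{\Delta_\alpha}^2=\frac{d_1}{r}k\ge\frac34 d_1D$; the cross terms are nonnegative because $U_0^\dagger\Delta_\alpha$ is Hermitian. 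For the maximally entangled input your bound therefore equals $\tr M/d_1\ge\frac34D$. That is a factor $\Omega(r)$ above the $16D/r$ you need, and no estimate of $M$ can rescue it. What you are bounding is of the order of the quantum Fisher information, i.e.\ the coherent rate. Already for states ($d_1=1$, $\rho=P/r$) it is at least $2(d_2-r)$, versus the true $O(d_2/r)$. The paper never splits into directions. It computes the whole gradient $\nabla_U q_\theta(y)=\frac{2d_1}{r}T_yU$ and uses $\hsnorm{T_yU}^2\le\opnorm{T_y}\tr(U^\dagger T_yU)\le\tr(T_y)\,\frac{r}{d_1}q_\theta(y)$, then $\|DU_\Theta\|\le2$ and $\int\tr T_y\,d\nu=d_2$. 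The spectral gap plays no role there.

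\textbf{Localization.} A linear comparison ``diamond distance $\ge$ constant $\times$ parameter distance'' only gives, in the same coordinates, $\hsnorm{\Theta-\Xi}\le 2r\,d_1^{-1}\hsnorm{C_\Theta-C_\Xi}\le 2r\,d_1^{-1}\trnorm{C_\Theta-C_\Xi}\le 4r\epsilon$. This is sharp for low-rank $\Theta-\Xi$, where the trace and Hilbert--Schmidt norms of $C_\Theta-C_\Xi$ are comparable. The prior has radius $\sigma\sqrt k\asymp t\sqrt r$, and the count closes only for $t\asymp\epsilon$. With radius $4r\epsilon$ you would need $t\gtrsim\sqrt r\,\epsilon$, and $I(\theta;Z)\le nt^2/(128r)$ then yields only $n\gtrsim Dr/\epsilon^2$, with an $r$-dependent $\epsilon_0$. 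The paper instead conditions the Gaussian on the operator-norm body $\calK_t$; a Euclidean ball of typical radius would only give $\opnorm\Theta\lesssim t\sqrt r$. This yields the a priori bound $d_1^{-1}\opnorm{C_\Theta-C_\Xi}\le 16t/r$. The paper then interpolates, $\hsnorm H^2\le\opnorm H\trnorm H$, to shrink the localization radius to $\sqrt{128\,rt\epsilon}\asymp\sqrt r\,\epsilon$.

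Your stated scalings also do not close. With $\tau^2\asymp\epsilon^2/(Dr)$ in the coordinates where $\tr I_1=O(D/r)$, you get $I(\Theta;Z)=O(n\epsilon^2/r^2)$. Comparing with $\Omega(Dr)$ gives $n\gtrsim Dr^3/\epsilon^2$, not the claimed bound, and for large $r$ this contradicts the $O(Dr^2\log(2d_1)/\epsilon^2)$ upper bound. The reason is that at that scale $\opnorm{T_\Theta}=O(\epsilon/\sqrt r)$ typically. So the base channel is already $O(\epsilon/\sqrt r)$-close in diamond norm to most of the prior, and Fano gives nothing. The working choice is $\sigma^2\asymp\epsilon^2/D$.
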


For the proof, we construct a local family of exactly trace-preserving
channels with $\Theta(Dr)$ real parameters. We then show that the Fisher
information matrix of a single incoherent query has trace at most $16D/r$.
A Gaussian prior with variance $\Theta(t^2/D)$ in each coordinate,
conditioned on the parameter domain, yields the mutual-information bound
$O(nt^2/r)$ between the channel parameter and the algorithm transcript.
Finally, learning to error $\Theta(t)$ requires $\Omega(Dr)$ nats of
information, which gives the desired lower bound.

\subsection{Main proof}\label{sec:lb-proof}
\paragraph{The local family.}\label{sec:lb-family}
For a Kraus list $\{K_1, \dots, K_r\}$, we write its stack and scaled Choi factor as
\begin{equation}\label{eq:stack}
 \mathbf K=\begin{pmatrix}K_1\\\vdots\\K_r\end{pmatrix},\qquad
 U=\sqrt{\frac r{d_1}}
     [\,\kett{K_1}\ \cdots\ \kett{K_r}\,],\qquad
 C_\calE=\frac{d_1}{r}UU^\dagger.
\end{equation}
The reshaping map
\begin{equation}\label{eq:reshuffle}
 \sS(M)=\begin{pmatrix}\operatorname{mat}(\ket{m_1})\\\vdots\\
                         \operatorname{mat}(\ket{m_r})\end{pmatrix},
 \qquad M=[\ket{m_1}\ \cdots\ \ket{m_r}],
\end{equation}
is a Hilbert--Schmidt isometry and satisfies
$\sS(U)=\sqrt{r/d_1}\,\mathbf K$.
By \cref{lem:anchor}, choose reference Kraus operators with stack
$\mathbf K_0$ and scaled Choi factor $U_0$ satisfying
\begin{equation}\label{eq:anchor}
 \mathbf K_0^\dagger\mathbf K_0=I_{\mathrm{A}},\qquad
 \tfrac12 I_r\preceq U_0^\dagger U_0\preceq2I_r.
\end{equation}
For $\Theta\in\Co^{D\times r}$, put
\begin{equation}\label{eq:Ttheta}
 T_\Theta=\sqrt{d_1/r}\,\sS(\Theta),\qquad
 \|T_\Theta\|_2=\sqrt{d_1/r}\,\|\Theta\|_2,
\end{equation}
and consider the real linear space
\begin{equation}\label{eq:Esubspace}
 \calV=\{\Theta:
 U_0^\dagger\Theta=\Theta^\dagger U_0,\quad
 \mathbf K_0^\dagger T_\Theta+T_\Theta^\dagger\mathbf K_0=0\}.
\end{equation}
Unitary mixing of the Kraus operators changes $U_0$ to $U_0V$, for a
unitary $V$, without changing $U_0U_0^\dagger$.  The first constraint makes
$\Theta$ orthogonal, in the real Hilbert--Schmidt inner product, to all
directions $U_0W$ with $W^\dagger=-W$.  The second constraint is the
linearized trace-preservation condition.  Define
\begin{equation}\label{eq:chart}
 Q_\Theta=(I_{\mathrm{A}}+T_\Theta^\dagger T_\Theta)^{-1/2},\qquad
 \mathbf K_\Theta=(\mathbf K_0+T_\Theta)Q_\Theta.
\end{equation}
For every $\Theta\in\calV$,
\begin{equation*}
 (\mathbf K_0+T_\Theta)^\dagger(\mathbf K_0+T_\Theta)
 =I_{\mathrm{A}}+T_\Theta^\dagger T_\Theta,
 \qquad \mathbf K_\Theta^\dagger\mathbf K_\Theta=I_{\mathrm{A}}.
\end{equation*}
Thus $\mathbf K_\Theta$ defines a channel $\calE_\Theta$.  Its scaled factor
and Choi operator are
\begin{equation}\label{eq:flagged}
 U_\Theta=(I_{\mathrm{B}}\otimes Q_\Theta^{\mathrm T})(U_0+\Theta),\qquad
 C_\Theta:=C_{\calE_\Theta}=\frac{d_1}{r}U_\Theta U_\Theta^\dagger.
\end{equation}
For $0<t\le t_0:=2^{-5}$ restrict to the compact convex set
\begin{equation}\label{eq:Kt}
 \calK_t=\{\Theta\in\calV:\|\Theta\|_\infty\le t,
                                  \ \|T_\Theta\|_\infty\le t\}.
\end{equation}
Choose an orthonormal basis of the real Hilbert space $\calV$, with
inner product $\langle M,N\rangle_\R=\Re\tr(M^\dagger N)$, and denote
the coordinates of $\Theta$ by $\theta\in\R^k$, where
$k=\dim_\R\calV$.  
We identify $\calK_t$ with its coordinate image in $\R^k$. Set
\begin{equation}\label{eq:sigma}
\sigma:=\frac{t}{32\sqrt D},
\qquad
\gamma_\sigma:=\mathcal N(0,\sigma^2 I_k).
\end{equation}
For every Borel set $A\subseteq\R^k$, define
\begin{equation}\label{eq:prior}
\mu_t(A)
:=
\Pr_{\theta\sim\gamma_\sigma}
\left\{
\theta\in A
\,\middle|\,
\theta\in\calK_t
\right\}.
\end{equation}

The dimension, spectral bounds, and prior estimates needed below are
proved in \cref{lem:dim,lem:spectral-control,lem:prior}.

\begin{proof}[Proof of \cref{thm-9151403}]
Take $M=2^{21}$, $\epsilon_0=2^{-26}$, and $t=M\epsilon\le t_0$.
Run the protocol on $\calE_\Theta$ with $\theta\sim\mu_t$ and let $Z$
be its classical transcript.  By \cref{lem:spectral-control}, every
channel in the family has rank exactly $r$ and non-zero Choi eigenvalues
in $[d_1/(4r),4d_1/r]$.

\textbf{Information available from the transcript.}
\cref{lem:fisher} bounds the trace of the one-query Fisher information matrix by
$16D/r$, uniformly over the input, measurement, and parameter.
Conditional scores have mean zero, so this bound adds over adaptive
rounds.  The log-Sobolev inequality for $\mu_t$ then gives
\begin{equation}\label{eq:main-mi}
 I(\theta;Z)\le\frac{nt^2}{128r}
\end{equation}
by \cref{prop:mi}.

\textbf{Information required for reconstruction.}
On the success event, \cref{eq:choi-diamond-comparison} implies
\[
 \frac1{d_1}\|C_\Theta-C_{\widehat\calE(Z)}\|_1\le\epsilon.
\]
By \cref{lem:smallball}, every such ball has prior mass at most
$2e^{-Dr/2}$.  Fano's inequality for reconstruction sets
(\cref{lem:fano}) and success probability $2/3$ yield
\[
 I(\theta;Z)\ge\frac23\left(\frac{Dr}{2}-\log2\right)-\log2
 \ge\frac{Dr}{6},
\]
where the last inequality holds for $Dr\ge12$.
Comparing with \cref{eq:main-mi} gives
\[
 n\ge\frac{128}{6}\frac{Dr^2}{M^2\epsilon^2}
 \ge\frac{21}{M^2}\frac{Dr^2}{\epsilon^2}.
\]
This proves the theorem with $c=21/M^2$.  Since the family has rank exactly
$r$, it is also contained in the rank-at-most-$r$ class.
\end{proof}

\subsection{Technical lemmas}\label{sec:lb-technical}
We establish the geometric estimates for the parameterized family, the
mutual-information bound for adaptive transcripts, and the bound on the
prior mass of trace-norm balls used in \cref{sec:lb-proof}.

\subsubsection{Geometry of the lower-bound family}\label{sec:lb-geometry}
\begin{lemma}[Reference Kraus operators]\label{lem:anchor}
Assume $d_2\ge2, 1\le r\le D=d_1d_2$ and $ d_1\le rd_2$.  There are Kraus operators
$K_1,\dots,K_r\in\Co^{d_2\times d_1}$, with stack $\mathbf K_0$ and Choi factor
$U_0$ as in \cref{eq:stack}, such that
\begin{equation*}
 \mathbf K_0^{\dagger}\mathbf K_0=I_{\mathrm{A}},
 \qquad
 \tfrac12I_r\ \preceq\ U_0^{\dagger}U_0\ \preceq\ 2\,I_r .
\end{equation*}
The $K_i$ may be taken Hilbert--Schmidt orthogonal, so that $U_0^{\dagger}U_0$ is
diagonal.
\end{lemma}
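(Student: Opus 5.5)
The plan is an explicit combinatorial construction: Kraus operators with disjoint matrix-unit supports. By \cref{eq:stack}, the columns of $U_0$ are $\sqrt{r/d_1}\,\kett{K_i}$, and $\langle\!\langle K_i|K_j\rangle\!\rangle=\tr(K_i^\dagger K_j)$ by \cref{eq:vec-facts}. Hence
\[
(U_0^\dagger U_0)_{ij}=\frac{r}{d_1}\tr(K_i^\dagger K_j).
\]
If the $K_i$ are Hilbert--Schmidt orthogonal, this matrix is diagonal with entries $\frac{r}{d_1}\hsnorm{K_i}^2$. So it suffices to find Hilbert--Schmidt orthogonal $K_1,\dots,K_r$ that satisfy two conditions:
\[
\sum_i K_i^\dagger K_i=I_{\mathrm A},\qquad \hsnorm{K_i}^2\in\Bigl[\frac{d_1}{2r},\frac{2d_1}{r}\Bigr]\ \text{for every } i.
\]

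First, I partition the index set of cells $[d_2]\times[d_1]$, which has $D$ elements, into $r$ nonempty groups $G_1,\dots,G_r$. Each group gets either $\lfloor D/r\rfloor$ or $\lceil D/r\rceil$ cells; this is possible because $r\le D$. Then I set
\[
K_i=\frac{1}{\sqrt{d_2}}\sum_{(b,a)\in G_i}\ketbra{b}{a}.
\]
The $K_i$ have disjoint supports in the matrix-unit basis, so they are Hilbert--Schmidt orthogonal, and $\hsnorm{K_i}^2=|G_i|/d_2$.

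Second, I check trace preservation, which gives $\mathbf K_0^\dagger\mathbf K_0=I_{\mathrm A}$. Consider the stacked matrix $\mathbf K_0$. In column $a$, the entries are $1/\sqrt{d_2}$ exactly at the positions corresponding to cells $(b,a)$, with $b$ ranging over $[d_2]$, and each cell sits in exactly one block. Distinct columns have disjoint supports. Each column therefore has squared norm $d_2\cdot\frac1{d_2}=1$, and distinct columns are orthogonal. This gives $\mathbf K_0^\dagger\mathbf K_0=\sum_iK_i^\dagger K_i=I_{\mathrm A}$. The same fact follows from $\sum_i K_i^\dagger K_i=\frac1{d_2}\sum_{(b,a)}\ketbra{a}{a}=I_{\mathrm A}$, since the cross terms $\ket{a}\langle b|b'\rangle\bra{a'}$ vanish unless $b=b'$, and within a fixed $b$ each $a$ appears in only one cell.

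Third, the diagonal entries of $U_0^\dagger U_0$ are
\[
\frac{r}{d_1}\cdot\frac{|G_i|}{d_2}=\frac{r|G_i|}{D}.
\]
Since $D/r\ge1$, we have $\lfloor D/r\rfloor\ge D/(2r)$ and $\lceil D/r\rceil\le D/r+1\le 2D/r$. Hence every entry lies in $[1/2,2]$, which gives $\tfrac12 I_r\preceq U_0^\dagger U_0\preceq 2I_r$.

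I expect no real obstacle; the only step needing care is the bookkeeping in the second step, namely that the cross terms cancel because the groups are disjoint. The hypotheses $d_2\ge2$ and $d_1\le rd_2$ are not needed for this particular construction. They only ensure that the exact-rank family is feasible, and the construction is consistent with them.
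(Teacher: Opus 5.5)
There is a genuine gap in your second step, and as written the construction does not give a channel. The rows of the stack $\mathbf K_0$ are indexed by pairs (group $i$, output index $b$), not by cells. Suppose some $G_i$ contains two cells $(b,a)$ and $(b,a')$ with $a\neq a'$ in the same output row. Then columns $a$ and $a'$ of $\mathbf K_0$ both have an entry in row $(i,b)$, and $K_i^\dagger K_i$ picks up $\frac1{d_2}\bigl(\ketbra{a}{a'}+\ketbra{a'}{a}\bigr)$.

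Your justification, that each $a$ appears in only one cell for fixed $b$, is beside the point. What you need is that within a fixed group and a fixed $b$ at most one $a$ appears. All such off-diagonal contributions are nonnegative, so they cannot cancel across groups.

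For example, take $d_1=d_2=2$ and $r=1$, which satisfies every hypothesis. Your recipe gives $K_1=\tfrac1{\sqrt2}\bigl(\ketbra{1}{1}+\ketbra{1}{2}+\ketbra{2}{1}+\ketbra{2}{2}\bigr)$, and then $K_1^\dagger K_1=I_{\mathrm A}+\ketbra{1}{2}+\ketbra{2}{1}$. Even for $r=d_1=d_2=2$, grouping the cells by output row fails in the same way.

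Your remark that $d_1\le rd_2$ is not needed should have been a warning sign. The condition $\mathbf K_0^\dagger\mathbf K_0=I_{\mathrm A}$ says the $rd_2\times d_1$ matrix $\mathbf K_0$ has orthonormal columns, which is impossible when $d_1>rd_2$.

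Avoiding repeated output rows within a group forces $|G_i|\le d_2$. That is compatible with group sizes near $D/r$ only when $r\ge d_1$.

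In that regime your idea can be rescued by choosing the partition deliberately. List the cells column by column, $(1,1),\dots,(d_2,1),(1,2),\dots,(d_2,d_1)$, and cut the list into consecutive runs of lengths $\lfloor D/r\rfloor$ or $\lceil D/r\rceil$. Since $\lceil D/r\rceil\le d_2$ when $r\ge d_1$, each run has distinct output indices. With that partition, your norm and diagonal computations are correct.

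For $r<d_1$ you need a different construction, which is what the paper does:
\begin{itemize}
\item Partition the input basis into $r$ blocks of sizes $a_i\in\{\lfloor d_1/r\rfloor,\lceil d_1/r\rceil\}$.
\item Let $K_i$ be an isometry from the $i$-th block into $\calH_{\mathrm B}$, and zero elsewhere. This is possible exactly because $\lceil d_1/r\rceil\le d_2$, i.e.\ because $d_1\le rd_2$.
\item This gives $\hsnorm{K_i}^2=a_i$ and diagonal entries $ra_i/d_1\in[\frac12,2]$.
\end{itemize}
For $r\ge d_1$ the paper instead uses rank-one operators $k_x^{-1/2}\ket{v_{x,j}}\bra{x}$ with $k_x\approx r/d_1$. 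Your repaired cell construction works equally well in that case.
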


\begin{proof}
Note $(U_0^{\dagger}U_0)_{ij}=(r/d_1)\tr(K_i^{\dagger}K_j)$.

\emph{Case $r\le d_1$.}  Partition the input basis into $r$ nonempty blocks of
sizes $a_i\in\{\lfloor d_1/r\rfloor,\lceil d_1/r\rceil\}$ summing to $d_1$.
Since $d_1\le rd_2$ we have $d_1/r\le d_2$, and as $d_2$ is an integer,
$\lceil d_1/r\rceil\le d_2$; so each block admits an isometry into $\mathcal{H}_{\mathrm{B}}$.  Let
$K_i$ be such an isometry on the $i$-th block and zero elsewhere.  Then
$\sum_iK_i^{\dagger}K_i=I_{\mathrm{A}}$, the $K_i$ have orthogonal supports, and
$\tr(K_i^{\dagger}K_j)=a_i\delta_{ij}$.  Hence $U_0^{\dagger}U_0$ is diagonal with entries
$ra_i/d_1$.  For $x=d_1/r\ge1$ one has $\lfloor x\rfloor\ge x/2$ and
$\lceil x\rceil\le2x$, which gives \cref{eq:anchor}.

\emph{Case $r\ge d_1$.}  Choose integers
$k_x\in\{\lfloor r/d_1\rfloor,\lceil r/d_1\rceil\}$, $1\le x\le d_1$, with
$\sum_xk_x=r$.  Since $r\le d_1d_2$ we have $\lceil r/d_1\rceil\le d_2$, so
for each $x$ we may pick $k_x$ orthonormal vectors $\ket{v_{x,1}},\dots,
\ket{v_{x,k_x}}$ in $\mathcal{H}_{\mathrm{B}}$.  Put $K_{x,j}=k_x^{-1/2}\ket{v_{x,j}}\bra{x}$.
There are exactly $r$ of them, $\sum_{x,j}K_{x,j}^{\dagger}K_{x,j}=I_{\mathrm{A}}$, and they
are Hilbert--Schmidt orthogonal: different $x$ have orthogonal input supports,
equal $x$ have orthogonal output vectors.  Their squared norms are $1/k_x$, so
$U_0^{\dagger}U_0$ is diagonal with entries $r/(d_1k_x)$, and the same
floor--ceiling estimates applied to $y=r/d_1\ge1$ give \cref{eq:anchor}.
The two cases agree when $r=d_1$.
\end{proof}

\begin{lemma}[Dimension of the parameter space]\label{lem:dim}
Assume \cref{eq:feasible}.  Let $\calV$ be the real linear space
defined in \cref{eq:Esubspace} from reference factors satisfying
\cref{eq:anchor}, and set $k=\dim_\R\calV$.  Then
\begin{equation}\label{eq:dim}
 \tfrac34\,Dr\ \le\ \Bigl(1-\tfrac1{d_2^{2}}\Bigr)Dr
 \ \le\ 2Dr-r^{2}-d_1^{2}\ \le\ k\ \le\ 2Dr .
\end{equation}
\end{lemma}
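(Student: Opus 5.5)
We will compute $k=\dim_\R\calV$ by rank--nullity. The ambient space $\Co^{D\times r}$ has real dimension $2Dr$, and $\calV$ is the kernel of the real linear map
\[
 \calL:\Theta\mapsto\bigl(U_0^\dagger\Theta-\Theta^\dagger U_0,\ \mathbf K_0^\dagger T_\Theta+T_\Theta^\dagger\mathbf K_0\bigr).
\]
The first component takes values in the $r\times r$ anti-Hermitian matrices, a space of real dimension $r^2$. The second takes values in the $d_1\times d_1$ Hermitian matrices, of real dimension $d_1^2$. Hence $k\ge 2Dr-r^2-d_1^2$, and trivially $k\le 2Dr$. This gives the two inner inequalities of \cref{eq:dim} that involve $k$ without needing surjectivity of $\calL$.

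It then remains to prove the purely numerical inequality $2Dr-r^2-d_1^2\ge(1-1/d_2^2)Dr$. Equivalently, we need $r^2+d_1^2\le Dr(1+1/d_2^2)$. We would use the feasibility constraints \cref{eq:feasible}.

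First, $r\le D$ gives $r^2\le Dr$. Second, $d_1\le rd_2$ gives
\[
 d_1^2=d_1\cdot d_1\le d_1\cdot rd_2\cdot\frac{d_1}{d_1 d_2}\cdot d_2\cdot\frac{1}{d_2}.
\]
More cleanly, $d_1^2=Dr\cdot\frac{d_1}{rd_2}\le Dr$, but that only gives $2Dr$ total, which is too weak. So we refine by splitting into cases.

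\emph{Case $r\ge d_1$.} Here $r^2+d_1^2\le r^2+rd_1=r(r+d_1)$, and we want $r+d_1\le D(1+d_2^{-2})$. Since $r\le D$, it suffices that $d_1\le D/d_2^2=d_1/d_2$ — which fails. So we instead use the tighter bound $r^2\le Dr$ combined with $d_1^2\le$ something small.

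The honest approach is to view $g(r)=r^2-(1+d_2^{-2})Dr+d_1^2$ as convex in $r$ and check nonpositivity only at the endpoints of the feasible interval $r\in[d_1/d_2,\,D]$.
\begin{itemize}
\item At $r=D$: $g(D)=d_1^2-D^2/d_2^2=0$.
\item At $r=d_1/d_2$: $g=d_1^2/d_2^2-(1+d_2^{-2})d_1^2+d_1^2=0$.
\end{itemize}
By convexity, $g\le0$ on the whole interval, which is exactly the needed inequality.

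Finally, $d_2\ge2$ gives $1-1/d_2^2\ge3/4$, which is the leftmost inequality. The only real subtlety is recognizing the convexity/endpoint argument rather than trying crude case bounds. We expect no genuine obstacle, since the upper bound $k\le 2Dr$ and the constraint count are immediate.
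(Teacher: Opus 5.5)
Your proof is correct and matches the paper's argument. It uses the same rank--nullity count ($r^2$ real conditions from the anti-Hermitian part of $U_0^\dagger\Theta$, $d_1^2$ from the Hermitian trace-preservation constraint), and your convex quadratic $g(r)$, which vanishes at $r=d_1/d_2$ and $r=D$, is exactly the paper's factorization $(r-d_1/d_2)(r-d_1d_2)\le0$ on the feasible interval. Delete the abandoned scratch attempts before the endpoint argument: one of them implicitly asserts $d_1^2\le d_1r$, which is false in general.
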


\begin{proof}
The ambient real dimension is $2Dr$ so $k\le 2Dr$.  The Hermiticity constraint says that the
skew-Hermitian part of $U_0^{\dagger}\Theta$ vanishes, which is $r^{2}$ real linear
conditions; the trace-preservation constraint takes values in the Hermitian
$d_1\times d_1$ matrices, which is $d_1^{2}$ real linear conditions.  By
rank--nullity $k\ge2Dr-r^{2}-d_1^{2}$; independence of the constraints is not
needed.

By \cref{eq:feasible} we have $d_1/d_2\le r\le d_1d_2$, so
$(r-d_1/d_2)(r-d_1d_2)\le0$.  Expanding,
$r^{2}+d_1^{2}\le rd_1d_2+rd_1/d_2=Dr(1+d_2^{-2})$.  Since $d_2\ge2$ this is
at most $\tfrac54Dr$, and \cref{eq:dim} follows.
\end{proof}

\begin{lemma}[Inverse square root]\label{lem:invsqrt}
Let $X,Y\succeq I$ be positive definite.  Then, in the operator norm and in
the Hilbert--Schmidt norm, we have
\begin{equation*}
 \bigl\|X^{-1/2}-Y^{-1/2}\bigr\|\ \le\ \tfrac12\|X-Y\| .
\end{equation*}
Consequently the Fr\'echet derivative of $X\mapsto X^{-1/2}$ at any
$X\succeq I$ has norm at most $\tfrac12$ in both norms.
\end{lemma}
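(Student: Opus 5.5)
We prove the bound through an integral representation of the inverse square root and resolvent identities. For $X\succ0$ we have
\[
 X^{-1/2}=\frac{1}{\pi}\int_0^\infty t^{-1/2}(X+tI)^{-1}\,dt,
\]
which follows from the scalar identity $\int_0^\infty t^{-1/2}(x+t)^{-1}dt=\pi x^{-1/2}$ applied to the spectral decomposition. Subtracting the representations for $X$ and $Y$ and applying the resolvent identity $(X+t)^{-1}-(Y+t)^{-1}=(X+t)^{-1}(Y-X)(Y+t)^{-1}$ gives
\[
 X^{-1/2}-Y^{-1/2}=\frac1\pi\int_0^\infty t^{-1/2}(X+tI)^{-1}(Y-X)(Y+tI)^{-1}\,dt.
\]

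Next we bound the integrand in norm. Both the operator norm and the Hilbert--Schmidt norm are unitarily invariant and satisfy $\|AMB\|\le\|A\|_\infty\|M\|\,\|B\|_\infty$. Since $X,Y\succeq I$, each resolvent satisfies $\|(X+t)^{-1}\|_\infty\le(1+t)^{-1}$, and likewise for $Y$. The norm of the difference is therefore at most
\[
 \|X-Y\|\cdot\frac1\pi\int_0^\infty \frac{t^{-1/2}}{(1+t)^2}\,dt.
\]
The scalar integral equals $B(1/2,3/2)=\Gamma(1/2)\Gamma(3/2)/\Gamma(2)=\pi/2$, so the constant is exactly $\tfrac12$. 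Equivalently, this is the derivative of $x^{-1/2}$ at $x=1$, up to sign.

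For the Fréchet derivative, differentiating the integral representation gives
\[
 D(X^{-1/2})[H]=-\frac1\pi\int_0^\infty t^{-1/2}(X+t)^{-1}H(X+t)^{-1}\,dt.
\]
The same estimate bounds its norm by $\tfrac12\|H\|$. Alternatively, the derivative bound follows directly from the Lipschitz bound by taking $Y=X+\tau H$, which still satisfies $Y\succeq I$ only when $H\succeq 0$. The integral argument avoids this restriction.

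The only delicate point is checking that the constant comes out as exactly $\tfrac12$ rather than something larger. The Beta-integral evaluation handles this, and no other step is an obstacle.
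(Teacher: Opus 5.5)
Your proof is correct and is the same as the paper's: the integral representation $X^{-1/2}=\frac1\pi\int_0^\infty t^{-1/2}(X+tI)^{-1}\,dt$, the resolvent identity, the bound $(1+t)^{-1}$ on each resolvent, and the evaluation $\int_0^\infty t^{-1/2}(1+t)^{-2}\,dt=\pi/2$. The only difference is in the derivative step. The paper just says it follows by taking a limit, whereas you differentiate the integral representation directly; your route is slightly cleaner at boundary points such as $X=I$, where $X+\tau H\succeq I$ fails for indefinite Hermitian $H$.
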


\begin{proof}
Use $X^{-1/2}=\frac1\pi\int_0^\infty s^{-1/2}(X+sI)^{-1}\,ds$ and the resolvent
identity $(X+sI)^{-1}-(Y+sI)^{-1}=(X+sI)^{-1}(Y-X)(Y+sI)^{-1}$.  For
$X,Y\succeq I$ both resolvents have operator norm at most $(1+s)^{-1}$, so the
integrand is bounded in either norm by $s^{-1/2}(1+s)^{-2}\|X-Y\|$; here we
use that a two-sided multiplication is bounded on the Hilbert--Schmidt norm by
the product of the operator norms of its factors.  Since
$\int_0^\infty s^{-1/2}(1+s)^{-2}ds=\pi/2$, the claim
follows.  The derivative statement follows by taking limit. 
\end{proof}

\begin{lemma}[Remainder estimates]\label{lem:remainder}
Let $0<t\le t_0=2^{-5}$ and $\Theta,\Xi\in\calK_t$, where
$\calK_t\subseteq\calV$ is the parameter domain defined in
\cref{eq:Esubspace,eq:Kt}.  Let $Q_\Theta$ and $U_\Theta$ be given by
\cref{eq:chart,eq:flagged}, and define
\begin{equation}\label{eq:remainder-def}
 U_\Theta=U_0+\mathbf S_\Theta,
 \qquad
 \mathbf S_\Theta:=\Theta+R_\Theta,
 \qquad
 R_\Theta:=\bigl(I_{\mathrm{B}}\otimes (Q_\Theta-I_{\mathrm{A}})^{\mathrm T}\bigr)(U_0+\Theta).
\end{equation}
Put $\Delta=\Theta-\Xi$.  For every direction $H\in\calV$, with
$DU_\Theta[H]$ denoting the Fr\'echet derivative of
$\Theta\mapsto U_\Theta$ in direction $H$, the following estimates hold:
\begin{align}
 \opnorm{Q_\Theta-I_{\mathrm{A}}}\le\tfrac12t^{2},
 \qquad
 \opnorm{R_\Theta}\le t^{2},
 \qquad
 \opnorm{\mathbf S_\Theta}\le\tfrac{17}{16}t,
 \label{eq:rem-op}\\[2pt]
 \hsnorm{R_\Theta-R_\Xi}\le\tfrac54t\,\hsnorm\Delta,
 \qquad
 \hsnorm{\mathbf S_\Theta-\mathbf S_\Xi}\le\tfrac98\hsnorm\Delta,
 \qquad
 \hsnorm{DU_\Theta[H]}\le2\hsnorm H .
 \label{eq:pullback-norm}
\end{align}
\end{lemma}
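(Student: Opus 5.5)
The plan is to reduce everything to the matrix $Y_\Theta:=T_\Theta^\dagger T_\Theta$ and to \cref{lem:invsqrt}.

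\textbf{Operator-norm bounds.} On $\calK_t$ we have $\opnorm{T_\Theta}\le t$, so $0\preceq Y_\Theta\preceq t^2I_{\mathrm A}$. The scalar inequality $1-(1+x)^{-1/2}\le x/2$ for $x\ge0$ then gives $\opnorm{Q_\Theta-I_{\mathrm A}}\le t^2/2$. Since $Q_\Theta-I_{\mathrm A}$ acts only on the input factor, conjugation by transpose and tensoring with $I_{\mathrm B}$ preserve its operator norm. From \cref{eq:anchor}, $\opnorm{U_0}\le\sqrt2$, hence $\opnorm{U_0+\Theta}\le\sqrt2+t$ and
\[
\opnorm{R_\Theta}\le\tfrac12t^2(\sqrt2+t)\le t^2,
\]
which holds because $t\le2^{-5}$. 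The bound on $\mathbf S_\Theta$ follows from $\opnorm{\mathbf S_\Theta}\le t+t^2\le\tfrac{17}{16}t$.

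\textbf{Lipschitz bounds.} Write
\[
R_\Theta-R_\Xi=\bigl(I\otimes(Q_\Theta-Q_\Xi)^{\mathrm T}\bigr)(U_0+\Theta)+\bigl(I\otimes(Q_\Xi-I)^{\mathrm T}\bigr)\Delta.
\]
The second term has Hilbert--Schmidt norm at most $\tfrac12t^2\hsnorm\Delta$. For the first, \cref{lem:invsqrt} applied with $X=I+Y_\Theta$ and $Y=I+Y_\Xi$ gives
\[
\hsnorm{Q_\Theta-Q_\Xi}\le\tfrac12\hsnorm{Y_\Theta-Y_\Xi}.
\]
Expanding $Y_\Theta-Y_\Xi=T_\Theta^\dagger T_\Delta+T_\Delta^\dagger T_\Xi$, linearity of $\Theta\mapsto T_\Theta$ gives $\hsnorm{Y_\Theta-Y_\Xi}\le2t\hsnorm{T_\Delta}$.

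The main obstacle is converting this into $\hsnorm\Delta$, because $\hsnorm{T_\Delta}=\sqrt{d_1/r}\,\hsnorm\Delta$ carries a dimensional factor. I will instead bound the product directly. The first term equals $(I\otimes(Q_\Theta-Q_\Xi)^{\mathrm T})$ applied to a $D\times r$ matrix; reshaping by $\sS$ and using $\sS(U_0)=\sqrt{r/d_1}\,\mathbf K_0$, it becomes $\sqrt{r/d_1}\,(\mathbf K_0+T_\Theta)(Q_\Theta-Q_\Xi)$. Since $\sS$ is a Hilbert--Schmidt isometry and $\sS$ of a right action by an input operator is the stack times that operator, the relevant norm is
\[
\sqrt{r/d_1}\,\opnorm{\mathbf K_0+T_\Theta}\,\hsnorm{Q_\Theta-Q_\Xi}\le\sqrt{r/d_1}\,(1+t)\cdot t\sqrt{d_1/r}\,\hsnorm\Delta=(1+t)t\hsnorm\Delta,
\]
where $\opnorm{\mathbf K_0}=1$ because $\mathbf K_0$ is an isometry. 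The $\sqrt{d_1/r}$ factors cancel, which resolves the obstacle. Adding the two terms gives $\hsnorm{R_\Theta-R_\Xi}\le(t+t^2+\tfrac12t^2)\hsnorm\Delta\le\tfrac54t\hsnorm\Delta$.

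Then $\hsnorm{\mathbf S_\Theta-\mathbf S_\Xi}\le(1+\tfrac54t)\hsnorm\Delta\le\tfrac98\hsnorm\Delta$ because $t\le2^{-5}$. For the derivative, $DU_\Theta[H]=\lim_{h\to0}(\mathbf S_{\Theta+hH}-\mathbf S_\Theta)/h$, and the Lipschitz bound yields $\hsnorm{DU_\Theta[H]}\le\tfrac98\hsnorm H\le2\hsnorm H$. If $\Theta+hH$ leaves $\calK_t$, I can run the same computation on the slightly larger set $\calK_{2t}$, where $2t$ is still small, which still gives a constant at most $2$. Alternatively, I can differentiate the formula directly using the derivative bound in \cref{lem:invsqrt}.
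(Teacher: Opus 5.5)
Your proposal is correct and follows the paper's proof essentially step for step. You use the same splitting of $R_\Theta-R_\Xi$, the same application of \cref{lem:invsqrt} to $I+T_\Theta^\dagger T_\Theta$, and the same reshaping through $\sS$ so that the $\sqrt{d_1/r}$ factors cancel, arriving at $(t+\tfrac32t^2)\hsnorm\Delta$.

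The only deviation is in the derivative bound. You pass to the limit in the Lipschitz estimate, using the enlargement to $\calK_{2t}$ to handle boundary points. The paper instead differentiates the formula for $U_\Theta$ directly and applies the derivative statement of \cref{lem:invsqrt}. Both routes give a constant well below $2$.
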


\begin{proof}
Since $\opnorm{T_\Theta}\le t$ we have
$I_{\mathrm{A}}\preceq I_{\mathrm{A}}+T_\Theta^{\dagger}T_\Theta\preceq(1+t^{2})I_{\mathrm{A}}$, so
\cref{lem:invsqrt} with $Y=I$ gives
$\opnorm{Q_\Theta-I}\le\tfrac12\opnorm{T_\Theta^{\dagger}T_\Theta}\le\tfrac12t^{2}$.
As $\opnorm{U_0}\le\sqrt2$ by \cref{eq:anchor} and $\opnorm\Theta\le t$,
\[
 \opnorm{R_\Theta}\le\tfrac12t^{2}(\sqrt2+t)\le t^{2},
 \qquad
 \opnorm{\mathbf S_\Theta}\le t+t^{2}\le\tfrac{17}{16}t ,
\]
using $t\le2^{-5}$.  This is \cref{eq:rem-op}.

For the Hilbert--Schmidt estimates, use the reshaping map $\sS$ from
\cref{eq:reshuffle}.  Under this map, $M\mapsto(I_{\mathrm{B}}\otimes N^{\mathrm T})M$ is
$\sS(M)\mapsto\sS(M)N$, so for any $N\in\Co^{d_1\times d_1}$,
\begin{equation}\label{eq:cancel}
 \hsnorm{(I_{\mathrm{B}}\otimes N^{\mathrm T})(U_0+\Theta)}
 =\hsnorm{\sS(U_0+\Theta)N}
 \le\opnorm{\sS(U_0+\Theta)}\hsnorm N
 \le\sqrt{\tfrac r{d_1}}\,(1+t)\,\hsnorm N,
\end{equation}
because $\sS(U_0+\Theta)=\sqrt{r/d_1}(\mathbf K_0+T_\Theta)$ with
$\opnorm{\mathbf K_0}=1$ and $\opnorm{T_\Theta}\le t$.  Combined with
$\hsnorm{T_\Delta}=\sqrt{d_1/r}\,\hsnorm\Delta$ from \cref{eq:Ttheta}, the
factor $\sqrt{d_1/r}$ cancels, giving a bound independent of the
dimensions.

Now decompose
\[
 R_\Theta-R_\Xi
 =\bigl(I_{\mathrm{B}}\otimes (Q_\Theta-Q_\Xi)^{\mathrm T}\bigr)(U_0+\Theta)
 +\bigl(I_{\mathrm{B}}\otimes (Q_\Xi-I)^{\mathrm T}\bigr)\Delta .
\]
By \cref{lem:invsqrt},
$\hsnorm{Q_\Theta-Q_\Xi}\le\tfrac12\hsnorm{T_\Theta^{\dagger}T_\Theta-T_\Xi^{\dagger}T_\Xi}
\le\tfrac12(\opnorm{T_\Theta}+\opnorm{T_\Xi})\hsnorm{T_\Delta}
\le t\sqrt{d_1/r}\,\hsnorm\Delta$.  By \cref{eq:cancel} the first term has
Hilbert--Schmidt norm at most $t(1+t)\hsnorm\Delta$, and the second at most
$\tfrac12t^{2}\hsnorm\Delta$.  Hence
$\hsnorm{R_\Theta-R_\Xi}\le(t+t^{2}+\tfrac12t^{2})\hsnorm\Delta
\le\tfrac54t\hsnorm\Delta$ and
$\hsnorm{\mathbf S_\Theta-\mathbf S_\Xi}\le(1+\tfrac54t)\hsnorm\Delta\le\tfrac98\hsnorm\Delta$.

For the derivative, differentiate \cref{eq:remainder-def} in a direction
$H\in\calV$:
\[
 DU_\Theta[H]=H+\bigl(I_{\mathrm{B}}\otimes (DQ_\Theta[H])^{\mathrm T}\bigr)(U_0+\Theta)
 +\bigl(I_{\mathrm{B}}\otimes (Q_\Theta-I)^{\mathrm T}\bigr)H .
\]
\cref{lem:invsqrt} gives
$\hsnorm{DQ_\Theta[H]}\le\tfrac12\hsnorm{T_H^{\dagger}T_\Theta+T_\Theta^{\dagger}T_H}
\le\opnorm{T_\Theta}\hsnorm{T_H}\le t\sqrt{d_1/r}\hsnorm H$, so by
\cref{eq:cancel} the middle term is at most $t(1+t)\hsnorm H$ and the last is
at most $\tfrac12t^{2}\hsnorm H$.  Therefore
$\hsnorm{DU_\Theta[H]}\le(1+\tfrac54t)\hsnorm H\le2\hsnorm H$.
\end{proof}

\begin{lemma}[Spectral and metric control]\label{lem:spectral-control}
Let $0<t\le t_0=2^{-5}$ and $\Theta,\Xi\in\calK_t$, with
$\calV$ and $\calK_t$ as in \cref{eq:Esubspace,eq:Kt}.  Use the
scaled Choi factor $U_\Theta$ and Choi operator
$C_\Theta=C_{\calE_\Theta}$ defined in \cref{eq:flagged}; derivatives
are taken along the real space $\calV$.  Then
\begin{enumerate}[label=\rm(\roman*)]
\item $C_\Theta$ has rank $r$ and its non-zero eigenvalues lie in
      $[d_1/(4r),4d_1/r]$.
\item $\|DU_\Theta[H]\|_2\le2\|H\|_2$ for all $H\in\calV$.
\item $d_1^{-1}\|C_\Theta-C_\Xi\|_2\ge(2r)^{-1}\|\Theta-\Xi\|_2$.
\item $d_1^{-1}\|C_\Theta-C_\Xi\|_\infty\le16t/r$.
\item $\|\Theta-\Xi\|_2^2\le4rt^2$.
\end{enumerate}
\end{lemma}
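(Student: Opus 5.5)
We verify the five items one at a time, each as a perturbation of the reference factor $U_0$ using \cref{lem:remainder}. Write $U_\Theta=U_0+\mathbf S_\Theta$ with $\opnorm{\mathbf S_\Theta}\le\tfrac{17}{16}t$.

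\textbf{Items (i), (ii), (v).} For (i), note that the non-zero eigenvalues of $C_\Theta=\frac{d_1}{r}U_\Theta U_\Theta^\dagger$ coincide with those of $\frac{d_1}{r}U_\Theta^\dagger U_\Theta$, an $r\times r$ matrix. The singular values of $U_0$ lie in $[1/\sqrt2,\sqrt2]$ by \cref{eq:anchor}. By Weyl's inequality, those of $U_\Theta$ lie in $[1/\sqrt2-\tfrac{17}{16}t,\ \sqrt2+\tfrac{17}{16}t]$. For $t\le2^{-5}$ this interval sits inside $[1/2,2]$, so squaring gives eigenvalues in $[d_1/(4r),4d_1/r]$ and rank exactly $r$. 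Item (ii) is the last estimate of \cref{eq:pullback-norm}. Item (v) follows from $\|\Theta\|_\infty,\|\Xi\|_\infty\le t$ and $\rank\Theta\le r$ (it has $r$ columns): $\|\Theta-\Xi\|_2^2\le r\|\Theta-\Xi\|_\infty^2\le 4rt^2$.

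\textbf{Item (iv).} Expand
\[
U_\Theta U_\Theta^\dagger-U_\Xi U_\Xi^\dagger=U_0(\mathbf S_\Theta-\mathbf S_\Xi)^\dagger+(\mathbf S_\Theta-\mathbf S_\Xi)U_0^\dagger+\mathbf S_\Theta\mathbf S_\Theta^\dagger-\mathbf S_\Xi\mathbf S_\Xi^\dagger.
\]
In operator norm this is at most $2\sqrt2\cdot\tfrac{17}{8}t+2(\tfrac{17}{16})^2t^2\le 8t$. Multiplying by $d_1/r$ and dividing by $d_1$ gives $8t/r\le16t/r$.

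\textbf{Item (iii), the main obstacle.} This is a lower bound, and it is exactly where the gauge constraint in $\calV$ enters. Put $\Delta=\Theta-\Xi$ and $\mathbf S_\Theta-\mathbf S_\Xi=\Delta+(R_\Theta-R_\Xi)$. The linear part of the Choi difference is $L(\Delta):=U_0\Delta^\dagger+\Delta U_0^\dagger$. Expanding the squared Hilbert--Schmidt norm gives
\[
\hsnorm{L(\Delta)}^2=2\tr(U_0^\dagger U_0\Delta^\dagger\Delta)+2\Re\tr(U_0^\dagger\Delta U_0^\dagger\Delta).
\]
Because $\Delta\in\calV$, $G:=U_0^\dagger\Delta$ is Hermitian, so the cross term is $2\tr(G^2)\ge0$. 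The first term is at least $\tr(\Delta^\dagger\Delta)=\hsnorm\Delta^2$ since $U_0^\dagger U_0\succeq\tfrac12 I$. Hence $\hsnorm{L(\Delta)}\ge\hsnorm\Delta$.

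The remaining terms must then be controlled in Hilbert--Schmidt norm. The contribution of $R_\Theta-R_\Xi$ inside the linear term is at most $2\sqrt2\cdot\tfrac54t\hsnorm\Delta$. The quadratic part $\mathbf S_\Theta\mathbf S_\Theta^\dagger-\mathbf S_\Xi\mathbf S_\Xi^\dagger=\mathbf S_\Theta(\mathbf S_\Theta-\mathbf S_\Xi)^\dagger+(\mathbf S_\Theta-\mathbf S_\Xi)\mathbf S_\Xi^\dagger$ is at most $2\cdot\tfrac{17}{16}t\cdot\tfrac98\hsnorm\Delta$. With $t\le2^{-5}$ the total error is at most $\tfrac12\hsnorm\Delta$, so
\[
\hsnorm{U_\Theta U_\Theta^\dagger-U_\Xi U_\Xi^\dagger}\ge\tfrac12\hsnorm\Delta.
\]
Multiplying by $d_1/r$ and dividing by $d_1$ gives (iii).

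The delicate points are the sign of the cross term, which is where Hermiticity of $U_0^\dagger\Delta$ is used, and keeping the numerical constants under the $t_0$ threshold.
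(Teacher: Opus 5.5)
Your proposal is correct and follows the paper's proof essentially step for step. Item (iii) uses the same linearization $L(\Delta)$, the same Hermiticity argument for the cross term, and the same remainder bounds from \cref{lem:remainder}, and (ii) and (v) are identical. The only cosmetic differences are these: in (i) you bound the singular values of $U_\Theta$ additively via Weyl with $\|\mathbf S_\Theta\|_\infty\le\frac{17}{16}t$, rather than multiplicatively through $Q_\Theta$; and in (iv) you expand $U_\Theta U_\Theta^\dagger-U_\Xi U_\Xi^\dagger$ directly instead of passing through $C_0$. Both routes give the stated constants.
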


\begin{proof}
(i) By \cref{eq:anchor}, $\sigma_{\min}(U_0)\ge2^{-1/2}$ and
$\opnorm{U_0}\le\sqrt2$.  The matrix $I_{\mathrm{B}}\otimes Q_\Theta^{\mathrm T}$ is
positive with the spectrum of $Q_\Theta$, hence with eigenvalues in
$[(1+t^{2})^{-1/2},1]$.  Therefore, by \cref{eq:flagged},
\[
 \sigma_{\min}(U_\Theta)\ \ge\ \frac{2^{-1/2}-t}{\sqrt{1+t^{2}}}\ >\ \tfrac12,
 \qquad
 \opnorm{U_\Theta}\ \le\ \sqrt2+t\ <\ 2 ,
\]
for $t\le2^{-5}$.  The non-zero eigenvalues of
$(C_\Theta/d_1)=U_\Theta U_\Theta^{\dagger}/r$ are those of $U_\Theta^{\dagger}U_\Theta/r$, so
they lie in $[1/(4r),4/r]$, and there are exactly $r$ of them.

(ii) This is the derivative estimate in \cref{eq:pullback-norm}.

(iii) Put $\Delta=\Theta-\Xi\in\calV$ and
$L(\Delta)=U_0\Delta^{\dagger}+\Delta U_0^{\dagger}$.  Expanding and using cyclicity,
\[
 \hsnorm{L(\Delta)}^{2}
 =2\tr\bigl(\Delta^{\dagger}\Delta\,U_0^{\dagger}U_0\bigr)
 +2\tr\bigl((U_0^{\dagger}\Delta)^{2}\bigr) .
\]
The Hermiticity constraint in \cref{eq:Esubspace} says that $U_0^{\dagger}\Delta$ is
Hermitian, so the second trace is nonnegative; and
$U_0^{\dagger}U_0\succeq\tfrac12I$ by \cref{eq:anchor}, so the first is at least
$\hsnorm\Delta^{2}$.  Hence
\begin{equation}\label{eq:injectivity}
 \hsnorm{L(\Delta)}\ \ge\ \hsnorm\Delta .
\end{equation}
The Hermiticity constraint excludes the $r^2$-dimensional space of
perturbations $\Delta=U_0W$ with $W^\dagger=-W$.  Without the constraint,
these directions, arising from unitary mixing of the Kraus operators,
would satisfy $L(\Delta)=0$.

Writing $U_\Theta=U_0+\mathbf S_\Theta$ and
$\mathbf S_\Theta-\mathbf S_\Xi=\Delta+(R_\Theta-R_\Xi)$,
\[
 r\bigl((C_\Theta-C_\Xi)/d_1\bigr)-L(\Delta)
 =U_0(R_\Theta-R_\Xi)^{\dagger}+(R_\Theta-R_\Xi)U_0^{\dagger}
 +\mathbf S_\Theta \mathbf S_\Theta^{\dagger}-\mathbf S_\Xi \mathbf S_\Xi^{\dagger} .
\]
By \cref{lem:remainder} its Hilbert--Schmidt norm is at most
\[
 2\sqrt2\cdot\tfrac54t\hsnorm\Delta
 +\bigl(\opnorm{\mathbf S_\Theta}+\opnorm{\mathbf S_\Xi}\bigr)\hsnorm{\mathbf S_\Theta-\mathbf S_\Xi}
 \le\Bigl(\tfrac52\sqrt2+\tfrac{17}{8}\cdot\tfrac98\Bigr)t\hsnorm\Delta
 \le6t\hsnorm\Delta .
\]
With \cref{eq:injectivity} and $6t\le\tfrac3{16}$ this gives
$r\hsnorm{(C_\Theta-C_\Xi)/d_1}\ge(1-6t)\hsnorm\Delta\ge\tfrac12\hsnorm\Delta$.

(iv)  $\opnorm{U_\Theta-U_0}=\opnorm{\mathbf S_\Theta}\le\tfrac{17}{16}t$ and
$\opnorm{U_\Theta}+\opnorm{U_0}\le2+\sqrt2$, so
$\opnorm{(C_\Theta-C_0)/d_1}\le r^{-1}\opnorm{U_\Theta-U_0}
 \bigl(\opnorm{U_\Theta}+\opnorm{U_0}\bigr)\le8t/r$.  The triangle inequality
through $(C_0/d_1)$ gives (iv).

(v) $\Theta$ has rank at most $r$, so
$\hsnorm\Theta\le\sqrt r\opnorm\Theta\le\sqrt rt$, and likewise for $\Xi$.
\end{proof}

\subsubsection{Information bounds and prior mass estimates}\label{sec:lb-fisher}
For a fixed single-query experiment, let $\{T_y\}_{y\in\calY}$ be its
tester, as in \cref{eq:tester}.
Since
\begin{equation}
\tr\mathsf T(\calY)
=
\tr\!\left(
I_{\mathrm B}\otimes
\tr_{\mathrm R}(\ketbra{\psi}{\psi})^{\mathrm T}
\right)
=d_2,
\end{equation}
the formula
\begin{equation}\label{eq:tester-control}
\nu(B):=\frac{1}{d_2}\tr\mathsf T(B),
\qquad
B\in\mathcal B(\calY),
\end{equation}
defines a probability measure on $\calY$.

The finite-dimensional Radon--Nikodym theorem gives a measurable operator
density $T_y\succeq0$ such that
\begin{equation}\label{eq:tester-density}
\mathsf T(B)=\int_B T_y\,d\nu(y).
\end{equation}
Moreover,
\begin{equation}
\tr(T_y)=d_2
\qquad
\text{for $\nu$-almost every $y$}.
\end{equation}
Recall that 
$\theta\in\R^k$ are  the coordinates of $\Theta$ in an orthonormal basis of
the real Hilbert space $\calV$.
Then, the outcome law under the channel $\calE_\Theta$ has density
\begin{equation}\label{eq:qtheta}
q_\theta(y)
=
\tr(T_yC_\Theta)
=
\frac{d_1}{r}\tr(T_yU_\Theta U_\Theta^\dagger)
\end{equation}
with respect to $\nu$.

The score is $\nabla_\theta\log q_\theta(y)$ when $q_\theta(y)>0$ and is
defined to be zero otherwise. Its Fisher information matrix is
\begin{equation}\label{eq:fisher-definition}
I_1(\theta)
=
\int_{\{y:q_\theta(y)>0\}}
\frac{
\nabla_\theta q_\theta(y)
\nabla_\theta q_\theta(y)^{\mathrm T}
}{
q_\theta(y)
}
\,d\nu(y).
\end{equation}
\begin{lemma}[One-query Fisher information]\label{lem:fisher}
Let $0<t\le t_0=2^{-5}$ and $\Theta\in\calK_t$.  
The Fisher information matrix defined in \cref{eq:fisher-definition} satisfies
\begin{equation}\label{eq:fisher-one}
\tr I_1(\theta)\le\frac{16D}{r}.
\end{equation}
Moreover, the score has mean zero.
\end{lemma}

\begin{proof}
For a fixed $G\succeq0$, define
\begin{equation}
p(U):=\frac{d_1}{r}\tr(GUU^\dagger).
\end{equation}
On the real Hilbert space of complex matrices with inner product
$\Re\tr(X^\dagger Y)$, one has
\begin{equation}
\nabla_Up
=
2\frac{d_1}{r}GU
\end{equation}
and hence
\begin{equation}
\begin{aligned}
\|\nabla_Up\|_2^2
&=
4\left(\frac{d_1}{r}\right)^2
\tr(U^\dagger G^2U)\\
&\le
4\frac{d_1}{r}\tr(G)\,p(U).
\end{aligned}
\end{equation}
The derivative of $\theta\mapsto U_\Theta$ has operator norm at most $2$
by \cref{lem:spectral-control}. Taking $G=T_y$ therefore gives
\begin{equation}\label{eq:score-gradient-bound}
\|\nabla_\theta q_\theta(y)\|_2^2
\le
16\frac{d_1}{r}\tr(T_y)q_\theta(y).
\end{equation}
Dividing by $q_\theta(y)$ on its positive set and integrating with respect
to $\nu$ yields
\begin{equation}
\begin{aligned}
\tr I_1(\theta)
&\le
16\frac{d_1}{r}
\int_{\calY}\tr(T_y)\,d\nu(y)\\
&=
16\frac{d_1d_2}{r}
=
\frac{16D}{r},
\end{aligned}
\end{equation}
where we used \cref{eq:tester-density}. If $q_\theta(y)=0$, then
\cref{eq:score-gradient-bound} implies
\begin{equation}
\nabla_\theta q_\theta(y)=0.
\end{equation}

It remains to prove that the score is centered. For every coordinate
$\theta_a$,
\begin{equation}
|\partial_a q_\theta(y)|
=
|\tr(T_y\partial_a C_\Theta)|
\le
\tr(T_y)\|\partial_a C_\Theta\|_\infty.
\end{equation}
The derivative of $C_\Theta$ is locally bounded, while
$\tr(T_y)=d_2$ for $\nu$-almost every $y$. Differentiation under the
integral is therefore justified. Since
\begin{equation}
\int_{\calY}q_\theta(y)\,d\nu(y)=1,
\end{equation}
we obtain
\begin{equation}
\begin{aligned}
\E_\theta\!\left[
\nabla_\theta\log q_\theta(Y)
\right]
&=
\int_{\calY}
\nabla_\theta q_\theta(y)\,d\nu(y)\\
&=
\nabla_\theta
\int_{\calY}q_\theta(y)\,d\nu(y)
=
0.
\end{aligned}
\end{equation}
\end{proof}

\paragraph{Adaptive transcripts.}\label{sec:lb-mi}
Let $
Z=(W,Y_1,\ldots,Y_n)$
be the transcript, where $W$ is the private random seed, sampled
independently of $\theta$. For each fixed seed $W=w$ and classical history,
apply \cref{eq:tester-control} to the tester selected at that history.  Since all
outcome spaces are standard Borel, the corresponding Radon--Nikodym
densities may be chosen jointly measurable in the history and current
outcome. The resulting parameter-independent control kernels induce a
parameter-independent reference measure on the transcript space. Relative
to this measure, the transcript likelihood is the product of the
conditional likelihoods.

The transcript score is therefore the sum of the conditional scores.
Since each conditional score has conditional mean zero, the cross terms
between distinct rounds vanish. Hence
\begin{equation}\label{eq:adaptive-fisher}
I_Z(\theta)
=
\sum_{j=1}^n
\E_\theta\!\left[
I_{Y_j\mid W,Y_{<j}}(\theta)
\right],
\qquad
\tr I_Z(\theta)\le\frac{16nD}{r}.
\end{equation}
The expectation in \cref{eq:adaptive-fisher} is over $W$ and the previous
outcomes. The bound is uniform over all seeds and histories.

\begin{lemma}[Operator norm of a Gaussian matrix]\label{lem:gauss-net}
Let $G\in\Co^{p\times q}$ be a centered Gaussian random matrix obtained as a
real-linear image of a real Gaussian vector, and suppose that for every pair
of complex unit vectors $\ket{x},\ket{y}$ both $\Re(\bra{x}G\ket{y})$ and $\Im(\bra{x}G\ket{y})$ have
variance at most $v^{2}$.  Then for $u>0$,
\begin{equation}\label{eq:gauss-net}
 \Pr\bigl\{\opnorm G>u\bigr\}
 \ \le\ 4\exp\Bigl(2(p+q)\log9-\frac{u^{2}}{16v^{2}}\Bigr).
\end{equation}
\end{lemma}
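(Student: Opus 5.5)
The approach is a covering-net argument combined with a one-dimensional Gaussian tail bound and a union bound, exactly parallel to \cref{cor:covering-union}. First I would reduce the operator norm to finitely many bilinear forms. Choose $1/4$-covering nets $\calN_p,\calN_q$ of the unit spheres of $\Co^p$ and $\Co^q$ as in \cref{lem:covering-cardinality}, so that $|\calN_p||\calN_q|\le 9^{2(p+q)}$. By \cref{eq:bilinear-net} with $\eta=1/4$,
\[
 \opnorm G\le 2\max_{\ket{x}\in\calN_p,\ket{y}\in\calN_q}|\bra{x}G\ket{y}|.
\]
Hence the event $\{\opnorm G>u\}$ is contained in the union over net pairs of the events $\{|\bra{x}G\ket{y}|>u/2\}$.

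Next I would bound the tail of each fixed bilinear form. For fixed unit vectors, $\bra{x}G\ket{y}$ is a real-linear image of a centered real Gaussian vector. Its real and imaginary parts are therefore centered real Gaussians with variances at most $v^2$. If $|\bra{x}G\ket{y}|>u/2$, then either $|\Re\bra{x}G\ket{y}|>u/(2\sqrt2)$ or $|\Im\bra{x}G\ket{y}|>u/(2\sqrt2)$. The standard Gaussian tail $\Pr\{|g|>a\}\le 2e^{-a^2/(2v^2)}$ bounds each of these by $2\exp(-u^2/(16v^2))$. Together they give
\[
 \Pr\{|\bra{x}G\ket{y}|>u/2\}\le 4\exp\bigl(-u^2/(16v^2)\bigr).
\]
If the variance is zero, the probability vanishes, which is consistent with the bound.

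Finally, a union bound over at most $9^{2(p+q)}=\exp(2(p+q)\log 9)$ pairs yields \cref{eq:gauss-net}. The only point needing care is matching the constants: $1/(1-2\eta)=2$ at $\eta=1/4$, and the split of $u/2$ between the real and imaginary parts produces exactly the $16v^2$ in the exponent. I do not expect any genuine obstacle. The one thing to verify is that linearity makes every directional component Gaussian, so no independence between entries is needed.
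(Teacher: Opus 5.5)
Your proposal is correct and is essentially the paper's proof. The paper bounds each fixed bilinear form the same way: $|\bra{x}G\ket{y}|>s$ forces the real or imaginary part to exceed $s/\sqrt2$, giving probability at most $4e^{-s^2/(4v^2)}$. It then applies the $1/4$-net union bound \cref{eq:covering-union-bilinear} with $s=u/2$, which is exactly your combination of \cref{eq:bilinear-net} with a union over at most $9^{2(p+q)}$ net pairs, with the same constants.
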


\begin{proof}
For a fixed pair of unit vectors $\ket{x}\in\Co^p$, $\ket{y}\in\Co^q$, the event
$|\bra{x}G\ket{y}|>s$ forces the real or the imaginary part to exceed $s/\sqrt2$ in
absolute value, an event of probability at most $4e^{-s^{2}/(4v^{2})}$.
The $1/4$-covering-net union bound in \cref{eq:covering-union-bilinear},
with $s=u/2$, therefore gives
\[
 \Pr\{\opnorm G>u\}
 \le9^{2(p+q)}\,4e^{-u^2/(16v^2)},
\]
which is \cref{eq:gauss-net}. 
\end{proof}
\begin{fact}[Bakry--\'Emery log-Sobolev criterion on a convex domain
{\cite[Theorem~2.1]{KolesnikovMilman16}}]\label{fact:lsi}
Let $\Omega\subseteq\R^k$ be open and convex, and let
$\Phi\in C^2(\Omega)$ satisfy
\begin{equation}
0<Z_\Phi:=\int_\Omega e^{-\Phi(\theta)}\,d\theta<\infty
\end{equation}
and
\begin{equation}
\nabla^2\Phi(\theta)\succeq\varkappa I_k
\qquad
\text{for every $\theta\in\Omega$},
\end{equation}
where $\varkappa>0$. Let $\pi$ be the probability measure on $\Omega$
with density $Z_\Phi^{-1}e^{-\Phi}$. Then, for every
$f\in C^1(\Omega)$,
\begin{equation}\label{eq:lsi-fact}
\Ent_\pi(f^2)
\le
\frac{2}{\varkappa}
\int_\Omega\|\nabla f\|_2^2\,d\pi,
\end{equation}
where
\begin{equation}
\Ent_\pi(g)
:=
\int_\Omega g\log g\,d\pi
-
\left(\int_\Omega g\,d\pi\right)
\log\left(\int_\Omega g\,d\pi\right).
\end{equation}
\end{fact}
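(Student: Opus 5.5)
We would prove this criterion with the Prékopa--Leindler approach of Bobkov and Ledoux. That route never differentiates at the boundary of $\Omega$, so the boundary terms of a $\Gamma_2$/Neumann-semigroup argument do not arise. Extend $\Phi$ to all of $\R^k$ by setting $V=\Phi$ on $\Omega$ and $V=+\infty$ off $\Omega$. Since $\Omega$ is convex and $\nabla^2\Phi\succeq\varkappa I_k$ on $\Omega$, the extended-value function $V$ is $\varkappa$-uniformly convex. Concretely, for all $x,y$ and $\lambda\in[0,1]$,
\[
\lambda V(x)+(1-\lambda)V(y)-V(\lambda x+(1-\lambda)y)\ge\tfrac{\varkappa}{2}\lambda(1-\lambda)\|x-y\|_2^2 .
\]
This follows by integrating the Hessian bound along the segment $[x,y]\subseteq\Omega$. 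It holds trivially when $x$ or $y$ lies outside $\Omega$. Then $\pi$ has density $Z_\Phi^{-1}e^{-V}$ on $\R^k$.

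First reduce to nice test functions. It suffices to prove \cref{eq:lsi-fact} for bounded Lipschitz $f$ with $f^2\ge$ a positive constant. General $f\in C^1(\Omega)$ with finite right-hand side then follows by truncation, adding $\epsilon$, and monotone/dominated convergence; the case of an infinite right-hand side is trivial. Set $g=2\log|f|$, normalised by adding a constant, since both sides of the inequality are homogeneous. The goal becomes the equivalent dual form
\[
\int e^{g}\,d\pi\,\log\!\int e^{g}\,d\pi\le\int g e^{g}d\pi-\ldots ,
\]
which we obtain from the infimal-convolution (Hamilton--Jacobi) form of the inequality. Specifically, for $\lambda\in(0,1)$ apply Prékopa--Leindler to three functions:
\[
u(z)=e^{-V(z)},\qquad
w(x)=e^{g(x)/(1-\lambda)-V(x)},\qquad
h(y)=e^{Q g(y)-V(y)},
\]
where $Q$ is a suitable quadratic infimal convolution of $g$ with parameter proportional to $\varkappa$. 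Uniform convexity of $V$ exactly absorbs the quadratic penalty in $Q$. The resulting inequality,
\[
\Bigl(\int e^{g/(1-\lambda)}d\pi\Bigr)^{1-\lambda}\Bigl(\int e^{Q_\lambda g}\,d\pi\Bigr)^{\lambda}\le1,
\]
is the Bobkov--Ledoux Hamilton--Jacobi inequality. Differentiating at $\lambda=0$ and using $\tfrac{d}{d\lambda}Q_\lambda g\big|_{0}=-\tfrac{1}{2\varkappa}\|\nabla g\|_2^2$ (the Hopf--Lax derivative, valid a.e.\ for Lipschitz $g$) yields
\[
\Ent_\pi(e^{g})\le\frac{1}{2\varkappa}\int\|\nabla g\|_2^2e^{g}\,d\pi .
\]
Substituting $e^{g}=f^2$, so that $\|\nabla g\|^2e^{g}=4\|\nabla f\|^2$, gives the constant $2/\varkappa$.

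The step we expect to be the main obstacle is justifying the derivative at $\lambda=0$, namely exchanging limit and integral. This is where boundedness and Lipschitzness of $f$ are used: they give uniform bounds on $(Q_\lambda g-g)/\lambda$, so dominated convergence applies. The infimal convolution must also be taken over $\Omega$ only, and we must check that the Hopf--Lax derivative formula still holds at interior points; points of $\partial\Omega$ carry no $\pi$-mass because $\Omega$ is open and convex, so its boundary is Lebesgue-null. If this becomes delicate, the fallback is to approximate $V$ by finite $\varkappa$-convex functions $\Phi_N$, for example $\Phi$ plus $N\,\mathrm{dist}(\cdot,\Omega)^2$ after first extending $\Phi$ convexly. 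We would then apply the classical smooth Bakry--Émery theorem on $\R^k$ and pass to the limit $N\to\infty$, using weak convergence of the measures.
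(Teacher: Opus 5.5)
\textbf{There is a genuine gap.} The paper gives no proof of this Fact; it only cites Kolesnikov--Milman. A Pr\'ekopa--Leindler argument in the style of Bobkov--Ledoux would be a legitimate self-contained alternative, but your central Pr\'ekopa--Leindler step has the functions in the wrong roles and cannot yield the log-Sobolev inequality.

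With $u=e^{-V}$ evaluated at the interpolated point $(1-\lambda)x+\lambda y$, the pointwise hypothesis requires $g(x)+\lambda Q_\lambda g(y)\le\frac{\varkappa}{2}\lambda(1-\lambda)\|x-y\|_2^2$ for all $x,y\in\Omega$. Taking $x=y$ forces $Q_\lambda g\le -g/\lambda$. The best admissible choice is $\lambda Q_\lambda g(y)=\inf_{x\in\Omega}\{\frac{\varkappa}{2}\lambda(1-\lambda)\|x-y\|_2^2-g(x)\}$, whose quadratic coefficient tends to $0$. Hence $\lambda Q_\lambda g\to-\sup g$, and your displayed inequality degenerates as $\lambda\downarrow0$ to the trivial bound $\int e^g\,d\pi\le e^{\sup g}$. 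This is strict for nonconstant $g$, so there is no equality at $\lambda=0$ to differentiate from. What you have written is a Maurey property-$(\tau)$ inequality, the dual form of Talagrand's transport inequality; it is genuinely weaker than log-Sobolev and linearizes only to a Poincar\'e inequality.

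Conversely, suppose $Q_\lambda g$ were of Hopf--Lax type with $Q_0g=g$, as your derivative formula presupposes. Then the displayed inequality would be false: after normalizing $\int e^g\,d\pi=1$, its first-order term reads $\Ent_\pi(e^g)\le0$. In fact, in $\lambda\log\int e^{Q_\lambda g}\,d\pi$ the $\lambda$-derivative of $Q_\lambda g$ only enters at second order. So the formula $\frac{d}{d\lambda}Q_\lambda g|_0=-\frac1{2\varkappa}\|\nabla g\|_2^2$ cannot be what produces the gradient term.

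\textbf{The fix.} Place the convolved function in the middle and $e^{-V}$ outside. With $z=\lambda x+(1-\lambda)y$, take $h(z)=e^{F_\lambda(z)-V(z)}$, $u(x)=e^{-V(x)}$ with weight $\lambda$, and $w(y)=e^{g(y)/(1-\lambda)-V(y)}$ with weight $1-\lambda$. Since $z-y=\lambda(x-y)$, uniform convexity of $V$ makes the hypothesis hold for $F_\lambda(z)=\sup_{y\in\Omega}\{g(y)-\frac{\varkappa(1-\lambda)}{2\lambda}\|z-y\|_2^2\}$. Pr\'ekopa--Leindler then gives $\int e^{F_\lambda}\,d\pi\ge\bigl(\int e^{g/(1-\lambda)}\,d\pi\bigr)^{1-\lambda}$, with equality at $\lambda=0$, and the right side has derivative $\Ent_\pi(e^g)$ there. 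For bounded $L$-Lipschitz $g$, one has $0\le(F_\lambda-g)/\lambda\le L^2/(2\varkappa(1-\lambda))$. Because the supremum localizes to $\|y-z\|_2=O(\lambda)$ and $\Omega$ is open, also $(F_\lambda-g)/\lambda\to\|\nabla g\|_2^2/(2\varkappa)$ pointwise. Dominated convergence then gives $\Ent_\pi(e^g)\le\frac1{2\varkappa}\int\|\nabla g\|_2^2e^g\,d\pi$, and your substitution $e^g=f^2$ finishes.

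\textbf{Two smaller points.} First, truncation does not make a general $f\in C^1(\Omega)$ Lipschitz when $\nabla f$ is unbounded near $\partial\Omega$ or at infinity, so reaching the full class needs a genuine density argument. This is harmless for the paper, which only applies the Fact to $f_{\eta,z}$; that function is smooth, bounded below, and Lipschitz on the bounded set $\Omega_t$. Second, in your fallback a finite $\varkappa$-convex extension of $\Phi$ need not exist when $\nabla\Phi$ blows up at $\partial\Omega$, although for the Gaussian potential used in the paper it is immediate.
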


\begin{lemma}[Mass and log-Sobolev inequality for the prior]\label{lem:prior}
In the orthonormal coordinates fixed above, let $\gamma_\sigma$ and
$\mu_t$ be the measures defined in \cref{eq:sigma,eq:prior}, and set
\begin{equation}
\Omega_t:=\operatorname{int}_{\R^k}(\calK_t).
\end{equation}
Let $0<t\le t_0=2^{-5}$.  Use the
space $\calV$, reshaped perturbation $T_\Theta$, and domain
$\calK_t$ from \cref{eq:Esubspace,eq:Ttheta,eq:Kt}.     Then
\begin{equation}\label{eq:prior-mass}
 \gamma_\sigma\bigl\{\opnorm\Theta\le t/2,\ \opnorm{T_\Theta}\le t/2\bigr\}
 \ \ge\ \tfrac12,
 \qquad\text{so in particular}\qquad
 \gamma_\sigma(\calK_t)\ \ge\ \tfrac12 .
\end{equation}
 Then $\mu_t$ is supported on
$\calK_t$, satisfies $\mu_t\le2\gamma_\sigma$ as measures, and obeys the
logarithmic Sobolev inequality
\begin{equation}\label{eq:lsi}
\Ent_{\mu_t}(f^2)
\le
2\sigma^2
\int_{\Omega_t}\|\nabla f\|_2^2\,d\mu_t
\end{equation}
for every $f\in C^1(\Omega_t)$.
\end{lemma}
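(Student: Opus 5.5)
The proof has three parts, taken in order: the mass bound \cref{eq:prior-mass}, then the comparison $\mu_t\le2\gamma_\sigma$, then the log-Sobolev inequality.

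\textbf{Mass bound.} We treat $\Theta$ and $T_\Theta$ as centered Gaussian matrices and apply \cref{lem:gauss-net} to each. In the orthonormal coordinates of $\calV$, $\Theta$ is the orthogonal projection of an isotropic Gaussian onto $\calV$. For unit vectors $\ket x,\ket y$, the real part $\Re\bra x\Theta\ket y=\langle \ketbra yx^{\dagger}\ldots\rangle_\R$ is a real linear functional of $\theta$ whose representing vector has norm at most $\|\ketbra xy\|_2=1$. Hence its variance is at most $\sigma^2$, and the same holds for the imaginary part. With $p=D$, $q=r$, \cref{lem:gauss-net} gives
\[
\Pr\{\|\Theta\|_\infty>t/2\}\le4\exp\bigl(2(D+r)\log9-t^2/(64\sigma^2)\bigr).
\]
Since $\sigma^2=t^2/(1024D)$, this is $4\exp(2(D+r)\log 9-16D)$. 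Using $r\le D$, it is at most $4e^{-7D}\le 1/4$.

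For $T_\Theta=\sqrt{d_1/r}\,\sS(\Theta)$, which is a $(rd_2)\times d_1$ matrix, a bilinear form $\bra a T_\Theta\ket v$ is again a linear functional of $\Theta$, now with representing vector of norm at most $\sqrt{d_1/r}$ because $\sS$ is an isometry. The variance is therefore at most $\sigma^2 d_1/r$. \cref{lem:gauss-net} with $p+q=rd_2+d_1\le 2rd_2$ gives an exponent $4rd_2\log9-t^2r/(64\sigma^2 d_1)=4rd_2\log 9-16 r d_2$, since $D/d_1=d_2$. This is at most $-7rd_2$, so the failure probability is again below $1/4$. A union bound over the two events gives mass at least $1/2$. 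The constant check is routine; the only thing to get right is the variance scaling of $T_\Theta$, and the $\sqrt{d_1/r}$ factor is exactly cancelled by $D/d_1=d_2$.

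\textbf{Support and comparison.} The measure $\mu_t$ is the conditioning of $\gamma_\sigma$ on $\calK_t$, so it is supported on $\calK_t$. Its density equals $\mathbf 1_{\calK_t}/\gamma_\sigma(\calK_t)$ times that of $\gamma_\sigma$, and this is at most $2$ by \cref{eq:prior-mass}.

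\textbf{Log-Sobolev.} $\calK_t$ is convex, as the intersection of a subspace with two norm balls, one of them pulled back under the linear map $\Theta\mapsto T_\Theta$. It has nonempty interior in $\R^k$, since it contains a small ball around $0$. Its boundary is Lebesgue-null because $\calK_t$ is convex, so $\mu_t$ is the probability measure on the open convex set $\Omega_t$ with density proportional to $e^{-\Phi}$, where $\Phi(\theta)=\|\theta\|_2^2/(2\sigma^2)$. This $\Phi$ has $\nabla^2\Phi=\sigma^{-2}I_k$ and $Z_\Phi\in(0,\infty)$. \cref{fact:lsi} with $\varkappa=\sigma^{-2}$ then yields \cref{eq:lsi}.
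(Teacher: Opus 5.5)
Your proposal is correct and follows the paper's proof step for step. Both arguments apply \cref{lem:gauss-net} twice: with $v=\sigma$, $(p,q)=(D,r)$ for $\Theta$, and with $v^2=(d_1/r)\sigma^2$, $(p,q)=(rd_2,d_1)$ for $T_\Theta$, using $d_1\le rd_2$ and $D/d_1=d_2$ to reach the exponent $16rd_2$. Both then deduce $\mu_t\le 2\gamma_\sigma$ from $\gamma_\sigma(\calK_t)\ge\tfrac12$, and apply \cref{fact:lsi} with $\varkappa=\sigma^{-2}$ on the open convex set $\Omega_t$ after discarding the null boundary of $\calK_t$. Two small fixes: the representing element of $\Re\bra{x}\Theta\ket{y}$ should read $\langle\ketbra{x}{y},\Theta\rangle_\R$, projected onto $\calV$. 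You should also say explicitly that the event in \cref{eq:prior-mass} lies in $\calK_t$, and that the Euclidean ball of radius $t\min\{1,\sqrt{r/d_1}\}$ is contained in $\calK_t$, as the paper does.
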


\begin{proof}
For unit vectors $\ket{x}\in\Co^{D}$, $\ket{y}\in\Co^{r}$ we have
$\Re(\bra{x}\Theta \ket{y})=\langle \ket{x}\bra{y},\Theta\rangle_\R$, whose coefficient in the
coordinates of $\calV$ is the orthogonal projection of $\ket{x}\bra{y}$ onto $\calV$, of
Hilbert--Schmidt norm at most $\hsnorm{\ket{x}\bra{y}}=1$; the same holds for the
imaginary part with $i\ket{x}\bra{y}$.  So $\Theta$ satisfies the hypothesis of
\cref{lem:gauss-net} with $v=\sigma$, $p=D$, $q=r$.  Since $r\le D$,
$p+q\le2D$, and $u=t/2$ with \cref{eq:sigma} gives
$u^{2}/(16v^{2})=t^{2}/(64\sigma^{2})=16D$, hence
\[
 \Pr\bigl\{\opnorm\Theta>t/2\bigr\}\le4\exp\bigl(4D\log9-16D\bigr)\le4e^{-7D},
\]
as $4\log9<9$.  For $T_\Theta$: by \cref{eq:Ttheta} the corresponding
coefficient has norm at most $\sqrt{d_1/r}$, so the hypothesis holds with
$v^{2}=(d_1/r)\sigma^{2}$, $p=rd_2$, $q=d_1$.  \emph{Here the second
feasibility condition enters:} $d_1\le rd_2$ gives $p+q\le2rd_2$, and
\[
 \frac{u^{2}}{16v^{2}}=\frac{t^{2}r}{64d_1\sigma^{2}}
 =\frac{r}{d_1}\cdot16D=16rd_2 ,
\]
so $\Pr\bigl\{\opnorm{T_\Theta}>t/2\bigr\}\le4\exp(4rd_2\log9-16rd_2)
\le4e^{-7rd_2}$.
Both $D\ge2$ and $rd_2\ge2$, so the two exceptional probabilities sum to at
most $8e^{-14}<\tfrac12$, proving \cref{eq:prior-mass}.  The event in
\cref{eq:prior-mass} is contained in $\calK_t$, and $\mu_t\le2\gamma_\sigma$
follows from $\gamma_\sigma(\calK_t)\ge\tfrac12$.

 The set $\calK_t$ is a full-dimensional convex body in $\R^k$. Indeed, under
the chosen orthonormal coordinates,
$\|\theta\|_2=\|\Theta\|_2$, and the Euclidean ball centered at the origin
with radius
$
t\min\left\{1,\sqrt{\frac{r}{d_1}}\right\}$
is contained in $\calK_t$, because
$
\|\Theta\|_\infty\le\|\Theta\|_2,
\|T_\Theta\|_\infty
\le\|T_\Theta\|_2
=
\sqrt{\frac{d_1}{r}}\|\Theta\|_2.
$
Consequently, $\Omega_t$ is open and convex. Moreover, the boundary of the
full-dimensional convex body $\calK_t$ has Lebesgue measure zero and hence
also $\gamma_\sigma$-measure zero. Therefore, $\mu_t$ agrees almost
everywhere with the probability measure on $\Omega_t$ having density
proportional to
\begin{equation}
\exp\left(-\frac{\|\theta\|_2^2}{2\sigma^2}\right).
\end{equation}
The corresponding potential has Hessian $\sigma^{-2}I_k$. Applying
\cref{fact:lsi} on $\Omega_t$ with
$\varkappa=\sigma^{-2}$ proves \cref{eq:lsi}.
\end{proof}

\begin{proposition}[Mutual information of an adaptive transcript]\label{prop:mi}
Let $0<t\le t_0=2^{-5}$.  Let
$\theta\sim\mu_t$, where $\mu_t$ is the conditional Gaussian prior
in \cref{eq:sigma} on the parameter domain $\calK_t$ of
\cref{eq:Kt}.  The vector $\theta$ gives orthonormal coordinates for
the real Hilbert--Schmidt inner product on $\calV$ from
\cref{eq:Esubspace}.  Let $Z$ be the transcript of any
$n$-query adaptive incoherent protocol applied to the channel
$\calE_\Theta$ defined in \cref{eq:flagged}.  Then its mutual information
with the parameter satisfies
\begin{equation*}
 I(\theta;Z)\ \le\ \frac{n\,t^{2}}{128\,r} .
\end{equation*}
\end{proposition}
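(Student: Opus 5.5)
The plan is to bound the mutual information by an expected relative entropy against a reference mixture, and then convert that relative entropy into Fisher information through the log-Sobolev inequality of \cref{lem:prior}. Fix the parameter-independent reference measure $\lambda$ on transcript space described in the paragraph on adaptive transcripts. Let $p_\theta(z)$ be the transcript likelihood relative to $\lambda$; by construction it is the product of the conditional one-query densities $q^{(j)}_\theta(y_j\mid w,y_{<j})$. Let $\bar p(z)=\int p_\theta(z)\,d\mu_t(\theta)$ be the marginal. Then
\[
I(\theta;Z)=\int\!\!\int p_\theta(z)\log\frac{p_\theta(z)}{\bar p(z)}\,d\lambda(z)\,d\mu_t(\theta)=\int \Ent_{\mu_t}\bigl(p_\cdot(z)\bigr)\,d\lambda(z),
\]
where the second equality regroups the integrand for each fixed $z$. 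For fixed $z$ we view $\theta\mapsto p_\theta(z)$ as a function $g_z$ on $\Omega_t$.

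For each fixed $z$, apply \cref{eq:lsi} to $f=\sqrt{g_z}$. This gives
\[
\Ent_{\mu_t}(g_z)\le 2\sigma^2\int\|\nabla\sqrt{g_z}\|^2d\mu_t=\frac{\sigma^2}{2}\int\frac{\|\nabla_\theta g_z\|^2}{g_z}\,d\mu_t,
\]
with the integrand read as $0$ where $g_z=0$. Integrating over $z$ with respect to $\lambda$ and using Fubini, the right-hand side becomes $\frac{\sigma^2}{2}\int \tr I_Z(\theta)\,d\mu_t(\theta)$. By \cref{eq:adaptive-fisher} this is at most $\frac{\sigma^2}{2}\cdot\frac{16nD}{r}$. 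Substituting $\sigma^2=t^2/(1024D)$ from \cref{eq:sigma} yields $\frac{8nD}{r}\cdot\frac{t^2}{1024D}=\frac{nt^2}{128r}$, which is the stated bound.

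The main obstacle is regularity. $\sqrt{g_z}$ need not be $C^1$ where $g_z$ vanishes, but $g_z$ is a product of terms $\tr(T_yC_\Theta)$, which are smooth in $\theta$. We therefore apply the inequality to $f_\varepsilon=\sqrt{g_z+\varepsilon}$, which is $C^1$ on $\Omega_t$. Its gradient satisfies $\|\nabla f_\varepsilon\|^2=\|\nabla g_z\|^2/(4(g_z+\varepsilon))\le \|\nabla g_z\|^2/(4g_z)$, with the convention that this is $0$ when $g_z=0$ (justified by \cref{eq:score-gradient-bound}, which forces $\nabla g_z=0$ there). We then let $\varepsilon\downarrow0$, using lower semicontinuity of the entropy (Fatou plus continuity of $x\log x$).

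Two further points need care. First, $\mu_t$ gives zero mass to $\partial\calK_t$, so it suffices to integrate over $\Omega_t$. Second, we must justify identifying $\int \|\nabla p_\theta\|^2/p_\theta\,d\lambda$ with $\tr I_Z(\theta)$, i.e.\ that the transcript Fisher information equals the sum of the conditional ones. This is exactly the content of \cref{eq:adaptive-fisher}: the product structure of the likelihood and the mean-zero conditional scores of \cref{lem:fisher} make the cross terms vanish. Measurability in $(\theta,z)$ for Fubini follows from the joint measurability of the Radon--Nikodym densities chosen there.
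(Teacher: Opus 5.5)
Your proposal is correct and follows essentially the same route as the paper. You write $I(\theta;Z)$ as the $\lambda$-average of $\Ent_{\mu_t}$ of the transcript likelihood, apply \cref{eq:lsi} to $\sqrt{q_\theta(z)+\varepsilon}$, let $\varepsilon\downarrow0$, and conclude via Tonelli, \cref{eq:adaptive-fisher}, and $\sigma^2=t^2/(1024D)$. The differences are cosmetic: you fold the seed $W$ into the reference measure rather than conditioning on $W=w$, you use Fatou instead of dominated convergence for the entropy limit, and you obtain $\nabla g_z=0$ on $\{g_z=0\}$ from \cref{eq:score-gradient-bound} plus the product rule rather than from the paper's local-minimum observation.
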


\begin{proof}Fix a seed value $W=w$ outside a null set, and let $\lambda_w$ be the
parameter-independent reference probability measure on transcripts obtained
from the control kernels. Write
\begin{equation}
q_\theta(z)
:=
\frac{dP_{Z\mid\theta,W=w}}{d\lambda_w}(z).
\end{equation}
For $\lambda_w$-almost every $z$, the successive tester densities are fixed.
Hence $q_\theta(z)$ is a finite product of functions of the form
\cref{eq:qtheta}. It is therefore nonnegative and real analytic in $\theta$,
and is bounded on the compact set $\calK_t$.

For $\eta>0$, define
\begin{equation}
f_{\eta,z}(\theta)
:=
\sqrt{q_\theta(z)+\eta}.
\end{equation}
Then $f_{\eta,z}\in C^1(\Omega_t)$, so \cref{eq:lsi} gives
\begin{equation}\label{eq:regularized-lsi}
\Ent_{\mu_t}\bigl(q_\cdot(z)+\eta\bigr)
\le
\frac{\sigma^2}{2}
\int_{\Omega_t}
\frac{\|\nabla_\theta q_\theta(z)\|_2^2}
     {q_\theta(z)+\eta}
\,d\mu_t(\theta).
\end{equation}

Because $q_\theta(z)$ is nonnegative and differentiable on the open set
$\Omega_t$, every point at which $q_\theta(z)=0$ is a local minimum.
Therefore,
\begin{equation}
q_\theta(z)=0
\quad\Longrightarrow\quad
\nabla_\theta q_\theta(z)=0.
\end{equation}

Because $q_\theta(z)$ is bounded on the compact set $\calK_t$, dominated
convergence applies to the entropy term on the left-hand side of
\cref{eq:regularized-lsi}. Moreover,
\begin{equation}
\frac{\|\nabla_\theta q_\theta(z)\|_2^2}
     {q_\theta(z)+\eta}
\uparrow
\frac{\|\nabla_\theta q_\theta(z)\|_2^2}
     {q_\theta(z)}
\mathbf 1_{\{q_\theta(z)>0\}}
\end{equation}
as $\eta\downarrow0$, because
$q_\theta(z)=0$ implies $\nabla_\theta q_\theta(z)=0$. Hence monotone
convergence applies to the right-hand side.
Hence 
\begin{equation}\label{eq:entropy-fisher-one-transcript}
\Ent_{\mu_t}\bigl(q_\cdot(z)\bigr)
\le
\frac{\sigma^2}{2}
\int_{\{\theta:q_\theta(z)>0\}}
\frac{\|\nabla_\theta q_\theta(z)\|_2^2}
     {q_\theta(z)}
\,d\mu_t(\theta).
\end{equation}

Integrating with respect to $\lambda_w$ and applying Tonelli's theorem gives
\begin{equation}\label{eq:mi-fisher}
\begin{aligned}
I(\theta;Z\mid W=w)
&=
\int
\Ent_{\mu_t}\bigl(q_\cdot(z)\bigr)
\,d\lambda_w(z)\\
&\le
\frac{\sigma^2}{2}
\int_{\Omega_t}
\tr I_{Z\mid W=w}(\theta)
\,d\mu_t(\theta)\\
&\le
\frac{\sigma^2}{2}\frac{16nD}{r}.
\end{aligned}
\end{equation}
Since $W$ is independent of $\theta$ and is included in the transcript,
\begin{equation}
I(\theta;Z)=I(\theta;Z\mid W).
\end{equation}
Averaging over $W$ and using
$\sigma^2=t^2/(1024D)$ therefore gives
\begin{equation}
I(\theta;Z)
\le
\frac{\sigma^2}{2}\frac{16nD}{r}
=
\frac{nt^2}{128r}.
\end{equation}
\end{proof}

\begin{lemma}[Prior mass of trace-norm balls]\label{lem:smallball}
Assume \cref{eq:feasible}.  Let $0<t\le t_0=2^{-5}$ and
$0<\epsilon\le t/M$ with $M:=2^{21}$.  Use the Choi operator
$C_\Theta=C_{\calE_\Theta}$, parameter domain $\calK_t$, and conditional
Gaussian prior $\mu_t$ defined in \cref{eq:flagged,eq:Kt,eq:sigma}.
Then for every
$\varsigma\in\Co^{D\times D}$,
\begin{equation*}
 \mu_t\bigl\{\Theta\in\calK_t:\ \frac1{d_1}\trnorm{C_\Theta-\varsigma}\le\epsilon\bigr\}
 \ \le\ 2\,e^{-Dr/2} .
\end{equation*}
\end{lemma}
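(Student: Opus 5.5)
The plan is to reduce the trace-norm ball to a Hilbert--Schmidt ball in parameter space, and then bound the Gaussian mass of that small ball.

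\textbf{Step 1 (trace norm to parameter distance).} Fix $\varsigma$ and suppose $\Theta,\Xi\in\calK_t$ both lie in the set. The triangle inequality gives $d_1^{-1}\trnorm{C_\Theta-C_\Xi}\le2\epsilon$. Next, \cref{lem:spectral-control}(iv) gives the operator-norm bound $d_1^{-1}\opnorm{C_\Theta-C_\Xi}\le16t/r$. Combining the two through $\hsnorm{M}^2\le\trnorm M\opnorm M$ yields
\[
\tfrac1{d_1^2}\hsnorm{C_\Theta-C_\Xi}^2\le 32\epsilon t/r.
\]
Part (iii) of the same lemma then gives $\hsnorm{\Theta-\Xi}^2\le4r^2\cdot32\epsilon t/r=128r\epsilon t$. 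So the whole set lies in a Euclidean ball in $\R^k$ of radius $\rho:=\sqrt{128r\epsilon t}\le t\sqrt{128r/M}$, centered at any one of its points $\Xi$. This uses $\epsilon\le t/M$.

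\textbf{Step 2 (Gaussian mass of a small ball).} By \cref{lem:prior}, $\mu_t\le2\gamma_\sigma$. It therefore suffices to show $\gamma_\sigma(B(\Xi,\rho))\le e^{-Dr/2}$. Anderson's inequality (or the symmetry and log-concavity of the Gaussian) gives $\gamma_\sigma(B(\Xi,\rho))\le\gamma_\sigma(B(0,\rho))=\Pr\{\sigma^2\chi^2_k\le\rho^2\}$. With $\sigma^2=t^2/(1024D)$ this is $\Pr\{\chi^2_k\le1024D\rho^2/t^2\}\le\Pr\{\chi^2_k\le 2^{17}Dr/M\}=\Pr\{\chi^2_k\le Dr/16\}$ for $M=2^{21}$. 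By \cref{lem:dim}, $k\ge\tfrac34Dr$, so the threshold is at most $k/12$. A standard lower-tail Chernoff bound then finishes the argument: for $a<1$, $\Pr\{\chi^2_k\le ak\}\le (ae^{1-a})^{k/2}$. With $a=1/12$ the per-coordinate factor is $\log(12)-11/12\approx1.57$, so the bound is $\le e^{-0.78k}\le e^{-0.58Dr}\le e^{-Dr/2}$. Multiplying by 2 from $\mu_t\le2\gamma_\sigma$ gives the claim.

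\textbf{Main obstacle.} The only delicate point is Step 1. A direct trace-to-Hilbert--Schmidt conversion would cost a factor $\sqrt{\rank}$ with rank up to $2r$, giving $\hsnorm{\cdot}^2\lesssim r\epsilon^2$. That route is also fine and would in fact be stronger, since $C_\Theta-C_\Xi$ has rank at most $2r$ and so $\hsnorm{C_\Theta-C_\Xi}\le\trnorm{C_\Theta-C_\Xi}$ trivially. The real requirement is that the resulting radius be small compared to $\sigma\sqrt k\approx t\sqrt r/32$. Either route gives $\rho^2\ll\sigma^2 k$ by a large constant factor, and the constant $M$ is chosen precisely so that the $\chi^2$ lower tail beats $Dr/2$. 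Because the ball must be centered at a feasible point $\Xi$ rather than at $\varsigma$, I would handle the empty-set case trivially and otherwise fix such a $\Xi$ before applying Anderson's inequality.
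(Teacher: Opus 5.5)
Your Steps 1 and 2 form a correct proof. Step 1 is exactly the paper's argument: the triangle inequality, then $\|H\|_2^2\le\|H\|_\infty\|H\|_1$ with parts (iv) and (iii) of \cref{lem:spectral-control}. This gives a ball of radius $\rho=\sqrt{128rt\epsilon}$ around a feasible $\Xi$.

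The only difference is the small-ball estimate. The paper bounds the Gaussian mass of a ball with arbitrary center by the maximal density times the volume, $\gamma_\sigma(\calB(\xi,\rho))\le(\rho\sqrt e/(\sigma\sqrt k))^k\le 2^{-k}$, so no recentering is needed. You instead recenter via Anderson's inequality and use the $\chi^2_k$ lower-tail Chernoff bound. Your numbers check out: $\rho^2/\sigma^2\le Dr/16\le k/12$, hence $e^{-0.78k}\le e^{-Dr/2}$. This gives a slightly better exponent at the cost of invoking Anderson's inequality.

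Your closing remark, however, is wrong, and it misplaces the crux of the lemma.

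\begin{itemize}
\item The inequality $\|H\|_2\le\|H\|_1$ holds for every matrix with no rank factor; the $\sqrt{\rank}$ appears only in the reverse direction.
\item Combined with (iii), it gives $\|\Theta-\Xi\|_2\le 4r\epsilon$, i.e.\ $\rho^2=16r^2\epsilon^2$, not $O(r\epsilon^2)$.
\item Then $\rho^2/\sigma^2=2^{14}r^2D\epsilon^2/t^2$. At the allowed value $\epsilon=t/M$ this equals $2^{-28}r\cdot Dr$, which exceeds $k\le 2Dr$ once $r>2^{29}$.
\item So the ball is no longer small compared with $\sigma\sqrt k$, and the direct route gives no exponential bound.
\end{itemize}

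The operator-norm bound (iv), of order $t/r$, is exactly what supplies the missing factor $1/r$. It is what lets \cref{thm-9151403} take $t\asymp\epsilon$. The direct route would force $t\gtrsim\sqrt r\,\epsilon$, and through $I(\theta;Z)\le nt^2/(128r)$ it would only yield $n\gtrsim Dr/\epsilon^2$, losing a factor of $r$. Keep your proof as written, but drop the claim that either route works.
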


\begin{proof}
Write $\calS$ for the set in question and assume it is nonempty, else there is
nothing to prove; fix $\Xi\in\calS$.  For $\Theta\in\calS$ the triangle inequality
gives $\frac1{d_1}\trnorm{C_\Theta-C_\Xi}\le2\epsilon$.  Combining
\cref{lem:spectral-control}\,(iii), the singular-value inequality
$\hsnorm H^{2}\le\opnorm H\trnorm H$, and
\cref{lem:spectral-control}\,(iv),
\begin{equation*}
 \frac{\hsnorm{\Theta-\Xi}^{2}}{4r^{2}}
 \ \le\ \frac1{d_1^2}\hsnorm{C_\Theta-C_\Xi}^{2}
 \ \le\ \frac1{d_1}\opnorm{C_\Theta-C_\Xi}\,\frac1{d_1}\trnorm{C_\Theta-C_\Xi}
 \ \le\ \frac{16t}{r}\cdot2\epsilon ,
\end{equation*}
so $\hsnorm{\Theta-\Xi}^{2}\le128\,rt\epsilon$.  Thus $\calS$ is contained in the
Euclidean ball $\calB(\xi,\rho)\subset\R^{k}$ of radius
$\rho=\sqrt{128\,rt\epsilon}$.

The Gaussian density is everywhere at most $(2\pi\sigma^2)^{-k/2}$.
Consequently, for any center $\xi$,
\[
 \gamma_\sigma(\calB(\xi,\rho))
 \le\frac{\rho^k}{2^{k/2}\sigma^k\Gamma(k/2+1)}
 \le\left(\frac{\rho\sqrt e}{\sigma\sqrt k}\right)^k.
\]
Here we used the volume of a Euclidean ball \cite[Sec.~2.4]{EG15} and
$\Gamma(k/2+1)\ge(k/(2e))^{k/2}$.
By \cref{lem:dim} and \cref{eq:sigma},
$\sigma\sqrt k\ge\tfrac{t}{32\sqrt D}\sqrt{\tfrac34Dr}=\tfrac{t\sqrt{3r}}{64}
\ge t\sqrt r/38$, hence
\[
 \frac{\rho\sqrt e}{\sigma\sqrt k}
 \ \le\ \frac{38\sqrt e\,\sqrt{128\,rt\epsilon}}{t\sqrt r}
 \ =\ 38\sqrt{128e}\,\sqrt{\epsilon/t}
 \ \le\ 709\,M^{-1/2}\ \le\ \tfrac12 ,
\]
since $M=2^{21}$ and $709\cdot2^{-21/2}<\tfrac12$.  Using
$\mu_t\le2\gamma_\sigma$ from \cref{lem:prior} and $k\ge\tfrac34Dr$ from
\cref{lem:dim},
\[
 \mu_t(\calS)\ \le\ 2\cdot2^{-k}\ \le\ 2e^{-\frac34(\log2)Dr}
 \ \le\ 2e^{-Dr/2}. \qedhere
\]
\end{proof}

\begin{lemma}[Fano's inequality for reconstruction sets]\label{lem:fano}
Let $\theta\sim\mu$ be a random parameter, let $Y$ be data with conditional
law $P_{Y\mid\theta}$, and let $\widehat\theta=\widehat\theta(Y)$ be an
estimator.  Fix $\epsilon>0$ and $0\le\delta<1$.  For each possible estimate
$\vartheta$, let $\calB(\vartheta,\epsilon)$ be the measurable set of
parameter values reconstructed to accuracy $\epsilon$ by that estimate.
Assume that the success probability $\Pr\{\theta\in\calB(\widehat\theta(Y),\epsilon)\}\ge1-\delta$.
If $\mu\bigl(\calB(\vartheta,\epsilon)\bigr)\le\alpha_0$ for every
$\vartheta$, where $0<\alpha_0\le1$, then
\begin{equation}\label{eq:fano}
 I(\theta;Y)\ \ge\ (1-\delta)\log\frac1{\alpha_0}-\log2 .
\end{equation}
\end{lemma}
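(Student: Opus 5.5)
We will argue by the data-processing inequality applied to the binary indicator of success. Define the random variable $S:=\mathbf 1\{\theta\in\calB(\widehat\theta(Y),\epsilon)\}$. Since $S$ is a function of $(\theta,Y)$, the plan is to compare the joint law $P_{\theta Y}$ with the product law $\mu\otimes P_Y$. By definition, $I(\theta;Y)=D(P_{\theta Y}\,\|\,\mu\otimes P_Y)$. Data processing under the measurable map $(\theta,y)\mapsto\mathbf 1\{\theta\in\calB(\widehat\theta(y),\epsilon)\}$ then gives
\[
 I(\theta;Y)\ge d\bigl(p\,\|\,q\bigr),
\]
where $d(\cdot\|\cdot)$ is the binary relative entropy, $p:=\Pr_{P_{\theta Y}}\{S=1\}\ge1-\delta$, and $q:=\Pr_{\mu\otimes P_Y}\{S=1\}$.

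Next we bound $q$. Under the product law, $\theta$ is independent of $Y$, so by Tonelli
\[
 q=\int\mu\bigl(\calB(\widehat\theta(y),\epsilon)\bigr)\,dP_Y(y)\le\alpha_0 .
\]
This step needs joint measurability of $\{(\theta,y):\theta\in\calB(\widehat\theta(y),\epsilon)\}$, which we take as part of the hypothesis that the reconstruction sets are measurable.

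It then remains to lower bound the binary divergence. Expanding,
\[
 d(p\|q)=p\log\frac pq+(1-p)\log\frac{1-p}{1-q}\ge p\log\frac1q-h(p)\ge p\log\frac1{\alpha_0}-\log2 .
\]
Here we drop the nonnegative term $-(1-p)\log(1-q)$, use that the binary entropy satisfies $h(p)\le\log2$, and use $q\le\alpha_0$. Finally, $p\log(1/\alpha_0)$ is nondecreasing in $p$ because $\alpha_0\le1$, and $p\ge1-\delta$, which yields \cref{eq:fano}. The edge case $q=0$ makes the divergence infinite, or gives $p=0$ where the claim is trivial, so it causes no trouble.

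The only delicate point is the measure-theoretic justification, namely the joint measurability needed for data processing on general standard Borel spaces. Everything else is elementary.
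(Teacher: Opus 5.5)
Your proposal is correct and matches the paper's proof: both apply data processing to $D(P_{\theta Y}\,\|\,\mu\otimes P_Y)$ under the success indicator, bound $q=\E_Y\,\mu(\calB(\widehat\theta(Y),\epsilon))\le\alpha_0$, and use $h(p,q)\ge p\log(1/q)-\log 2$. Your explicit expansion of the binary divergence and your remark on joint measurability, which the paper uses implicitly, are sound additions.
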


\begin{proof}
Let $\calT=\mathbf{1}_{\{\theta\in\calB(\widehat\theta(Y),\epsilon)\}}$, a binary function of
$(\theta,Y)$.  By the data-processing inequality for relative entropy applied
to the joint law $P_{\theta Y}$ and the product $\mu\otimes P_Y$,
\[
 I(\theta;Y)=D_{\mathrm{KL}}\bigl(P_{\theta Y}\,\|\,\mu\otimes P_Y\bigr)
 \ \ge\ h\bigl(P_{\theta Y}\{\calT=1\},\ (\mu\otimes P_Y)\{\calT=1\}\bigr),
\]
where $h(p,q)=p\log\frac pq+(1-p)\log\frac{1-p}{1-q}$.  By hypothesis
$P_{\theta Y}\{\calT=1\}\ge1-\delta$, while
$(\mu\otimes P_Y)\{\calT=1\}=\E_Y\mu(\calB(\widehat\theta(Y),\epsilon))\le\alpha_0$.
For $p=P_{\theta Y}\{\calT=1\}$ and
$q=(\mu\otimes P_Y)\{\calT=1\}$, use
$h(p,q)\ge p\log(1/q)-\log2$ to obtain \cref{eq:fano}.
\end{proof}

\section*{Statement on the use of AI}
The upper-bound proof for channels with gapped Choi spectrum was developed
by the human authors. After being provided with this proof as context,
GPT-6 Astra generated the proof strategy for the general upper bound.
GPT-5.6 Sol generated the lower-bound proof after being provided with the
earlier construction from \cite{CGOYZ26} and the suggestion to pursue a
van Trees-type argument. The new hard family of channels used in the
lower bound was generated by GPT-6 Astra. The authors subsequently checked,
revised, and completed all AI-generated arguments and take full
responsibility for the correctness and presentation of the results.

\bibliographystyle{alpha}
\bibliography{bib}

\end{document}